\documentclass[11pt]{article}
\usepackage[utf8]{inputenc}

\usepackage{amsmath,amsfonts,amssymb,amsthm}
\usepackage[letterpaper,margin=1in]{geometry}
\usepackage{graphicx,xcolor}
\usepackage{float}
\usepackage{physics}
\usepackage[colorlinks]{hyperref}
\usepackage[capitalize]{cleveref}

\usepackage{changes}
\usepackage{appendix}
\usepackage{hyperref}
\usepackage{hhline}

\usepackage{mathrsfs}
\usepackage{mathtools}
\mathtoolsset{centercolon}
\usepackage[numbers,comma,sort&compress]{natbib}
\usepackage{algorithm}
\usepackage{algorithmic}
\newcommand{\algname}[1]{\textup{\texttt{#1}}}
\usepackage{bm}
\usepackage{multirow}
\usepackage{makecell}
\usepackage{bbm}
\usepackage{subcaption}

\usepackage{enumitem} % 提供列表参数设置功能
\usepackage{environ}  % 提供 \NewEnviron 和 \BODY 功能

\usepackage{tikz} 
\usetikzlibrary{arrows, shapes.gates.logic.US, calc}
\usetikzlibrary{shapes}
\usetikzlibrary{plotmarks}
\usetikzlibrary{quantikz2}
\usetikzlibrary{patterns}
\usetikzlibrary{shapes.geometric, arrows.meta, positioning, calc, shadows}
\usepackage{tabularx}
\usepackage{booktabs}
\usepackage{subcaption}
\usepackage{array}
\usepackage{colortbl}

\usepackage[final]{pdfpages}

\newcommand{\E}{{\mathbb{E}}}

\def\E{{\mathbb E}}

\def\le{\leqslant}

\renewcommand{\d}{\,\mathrm{d}}

\newtheorem{theorem}{Theorem}
\newtheorem{lemma}{Lemma}
\newtheorem{corollary}{Corollary}
\newtheorem{proposition}{Proposition}
\newtheorem{assumption}{Assumption}
\newtheorem*{assumption*}{Assumption}
\newtheorem*{theorem*}{Theorem}

\theoremstyle{definition}
\newtheorem{definition}{Definition}

\newtheorem{remark}{Remark}

\allowdisplaybreaks

\newenvironment{tightitemize}{
  \begin{itemize}[topsep=0pt, partopsep=0pt, itemsep=0pt, parsep=0pt]
}{\end{itemize}}

\NewEnviron{tightalign*}{
  \begingroup
  \setlength{\abovedisplayskip}{1pt}%
  \setlength{\belowdisplayskip}{3pt}%
  \begin{align*}
    \BODY
  \end{align*}
  \endgroup
}

\renewcommand{\sec}[1]{\hyperref[sec:#1]{Section~\ref*{sec:#1}}}
\newcommand{\app}[1]{\hyperref[app:#1]{Appendix~\ref*{app:#1}}}
\newcommand{\thm}[1]{\hyperref[thm:#1]{Theorem~\ref*{thm:#1}}}
\newcommand{\lem}[1]{\hyperref[lem:#1]{Lemma~\ref*{lem:#1}}}
\newcommand{\cor}[1]{\hyperref[cor:#1]{Corollary~\ref*{cor:#1}}}
\newcommand{\prb}[1]{\hyperref[prb:#1]{Problem~\ref*{prb:#1}}}
\newcommand{\fgr}[1]{\hyperref[fgr:#1]{Figure~\ref*{fgr:#1}}}
\newcommand{\tab}[1]{\hyperref[tab:#1]{Table~\ref*{tab:#1}}}

\newcommand{\beq}{\begin{equation}}
\newcommand{\eeq}{\end{equation}}
\newcommand{\beqa}{\begin{eqnarray}}
\newcommand{\eeqa}{\end{eqnarray}}
\title{Quantum Algorithms for Gibbs Expectation of Non-log-concave and Heavy-tailed Distributions}
\author{Xinmiao Li$^{1,2}$, Jin-Peng Liu$^{1,3,\thanks{Corresponding author: liujinpeng@tsinghua.edu.cn}}$ \\ 
\footnotesize $^{1}$ Yau Mathematical Sciences Center, Tsinghua University\\
\footnotesize $^{2}$ Qiuzhen College, Tsinghua University\\
\footnotesize $^{3}$ Beijing Institute of Mathematical Sciences and Applications\\
}
\date{}

\begin{document}

\maketitle

\begin{abstract}
    We establish a systematic framework of unbiased quantum sampling and estimation protocols for the classical Gibbs expectation. This framework generalizes existing approaches to the partition function estimation and has broader applications in various fields. We consider sampling and estimation for a wide class of non-log-concave distributions, particularly heavy-tailed ones, under relaxed assumptions beyond strong convexity, such as dissipativity.
    We develop an unbiased extension of quantum-accelerated multilevel Monte Carlo (QA-MLMC) to eliminate all biases from discretization and time truncation, together with introducing a change-of-measure approach and the Girsanov theorem via Radon–Nikodym derivatives. As a result, our approach achieves quantum complexity $\widetilde{\mathcal{O}}(\epsilon^{-1})$ within error $\epsilon$, whereas the classical MLMC requires $\widetilde{\mathcal{O}}(\epsilon^{-2})$ and existing quantum algorithms yield biased estimators under stronger assumptions. Furthermore, our unified framework enables unbiased quantum sampling and estimation for certain heavy-tailed distributions after transformation. We provide several concrete applications of our approach in statistics, machine learning, and finance, towards more practical scenarios of the quantum acceleration of stochastic processes.
    
\end{abstract}

\tableofcontents

\newpage

\section{Introduction}\label{sec:introduction}
With the emerging breakthroughs 
in quantum information theory and quantum technologies, quantum computation has attracted growing 
attention from a wide range of scientific fields.
   Quantum computing offers significant potential for
    accelerating computational tasks in many ranges,
    such as quantum many-body simulation, numerical linear algebra, dynamical simulation, and stochastic simulation.

One widely explored direction is the quantum acceleration of stochastic processes,  
which are central to many applications in physics~\cite{Lindblad1976OnTG}, finance~\cite{Rebentrost2018QuantumCF,An2020QuantumacceleratedMM}, and machine learning~\cite{Alonso2025QuantumML}.  
While classical algorithms often suffer from high computational complexity, 
especially in high-dimensional or finely discretized settings,  
quantum algorithms based on techniques such as quantum walks~\cite{Aharonov2000QuantumWO,Szegedy2004QuantumSO,Childs2002ExponentialAS} 
and quantum fast-forwarding~\cite{apers2019quantumfastforwardingmarkovchains} can offer quadratic or even exponential speedups 
for these problems.

In recent years, quantum algorithms for stochastic processes 
have seen remarkable progress, particularly through 
the lens of quantum walks and their applications in search, 
simulation, and sampling. 
For instance, both discrete-time and continuous-time quantum walks 
have demonstrated provable quadratic speedups in spatial search 
problems compared to classical random walks, 
even under general settings with multiple marked states and arbitrary 
Markov chains~\cite{Chakraborty2018FindingAM,Apers2021QuadraticSF,Ambainis2019QuadraticSF}. 
While various quantum walk-based search algorithms have been developed~\cite{Szegedy2004QuantumSO,MNRS,Electric,Dohotaru2017ControlledQA}, 
a unified framework was proposed in~\cite{Apers2019AUF} to encompass 
and generalize these approaches, 
offering clearer insights into their performance guarantees 
and the structural conditions required for achieving speedups.

Meanwhile, in the domain of quantum-enhanced sampling and estimation,  
quantum amplitude estimation has led to substantial advances.  
A number of works~\cite{Brassard2000QuantumAA,Heinrich2001QuantumSW,Montanaro2015QuantumSO,Kothari2022MeanEW}  
have developed quantum mean estimation methods that achieve quadratic improvements in sample complexity  
compared to classical Monte Carlo methods.  
Building on this foundation, \cite{An2020QuantumacceleratedMM} proposed a quantum multilevel mean estimation method  
— also referred to as quantum-accelerated multilevel Monte Carlo (QA-MLMC) —  
which combines quantum mean estimation with multilevel variance reduction  
to further reduce the overall computational cost by leveraging hierarchical couplings.  
Recent advancements, such as those in~\cite{Blanchet2025NonlinearQM}, 
continue to extend these ideas in nonlinear and high-dimensional settings.

In the study of sampling and expectation estimation with respect to 
complex probability distributions, underdamped Langevin diffusion (ULD)~\cite{Dalalyan2017UserfriendlyGF,Durmus2018AnalysisOL}, 
the randomized midpoint method for ULD (ULD-RMM)~\cite{Roy2022StochasticZD}, 
and the Metropolis-adjusted Langevin algorithm (MALA)~\cite{Chen2019FastMO,Chewi2020OptimalDD} 
are several commonly used techniques. 
Quantum algorithms have also shown strong potential for accelerating these methods.  
In~\cite{Childs2022QuantumAF}, the authors implemented quantum speedups 
for fundamental problems such as log-concave sampling and estimating normalizing constants.  
They developed Multilevel Quantum Inexact ULD, Multilevel Quantum Inexact ULD-RMM, 
and a quantum annealing approach combined with Quantum MALA, 
achieving quadratic speedups in complexity.  
Moreover,~\cite{Ozgul2025QuantumSF} also demonstrates quantum speedups 
for gradient estimation via stochastic evaluation oracles and for zeroth-order sampling.

\paragraph{\textbf{\textsf{Main Task}}}
This work focuses on the design and analysis of quantum algorithms 
for sampling from probability distributions and estimating expectations arising in stochastic processes using quantum multilevel mean estimation methods. 
   More precisely, the goal is to sample from the (classical) \emph{Gibbs distribution}
$$ \pi(x) = \frac{e^{-f(x)}}{\int e^{-f(x)} \mathrm{d}x}$$
and to estimate the \emph{Gibbs expectation}
$$
    \mathbb{E}_{\pi}[\varphi] = \int_{\mathbb{R}^d} \varphi(x) \pi(x) \mathrm{d} x,
$$
where $f : \mathbb{R}^d \to \mathbb{R}$ is the potential function and can be non-convex.
Throughout this work, unless otherwise stated, we assume that the observable $\varphi$ is bounded and $K$-globally Lipschitz, i.e.,
$$
\|\varphi(x) - \varphi(y)\| \leq K \|x - y\| \quad \text{for all } x, y \in \mathbb{R}^d.
$$

This task generalizes the partition function estimation and has broader applications in various fields. In statistical physics, $\mathbb{E}_{\pi}[\varphi]$ describes an ensemble average of a stationary physical system. Particularly, when $\varphi = f$, $\mathbb{E}_{\pi}[f]$ is known as the entropy~\cite{Jaynes1957InformationTA}. In machine learning and generative artificial intelligence, $\mathbb{E}_{\pi}[\varphi]$ estimates an expectation of an energy-based model~\cite{LeCun2006ATO,Song2019GenerativeMB,Andrieu2004AnIT}. In Bayesian statistics, $\mathbb{E}_{\pi}[\varphi]$ characterizes a posterior expectation of a log-likelihood distribution~\cite{GelmanEtAl2013,RobertCasella2004}. 

This is approached by simulating the overdamped Langevin diffusion process,
which is described by the stochastic differential equation (SDE)
\begin{align}\label{SDE0}
        \mathrm{d} X_t = -\nabla f(X_t) \mathrm{d} t + \sqrt{2} \mathrm{d} W_t,
\end{align}
where $(W_t)_{t\geq 0}$ denotes a standard $d$-dimensional Brownian motion with $W_0 = 0$.

In most mean estimation methods for the stochastic process~\eqref{SDE0}, 
a sufficiently large terminal time $T$ is chosen so that the distribution of $X_T$ 
is close to the target distribution $\pi$, 
and then $\mathbb{E}[\varphi(X_T)]$ is used as an approximation.
 Although this approach can effectively control the error, 
 it results in a biased estimator.
 The bias may accumulate and significantly impact downstream computations, 
 particularly in iterative algorithms or when used as input to further estimations.

To deal with this issue, an \emph{unbiased} estimation method is developed in this work.
 We construct a suitable coupling such that the partial sum $\sum\limits_{i=0}^\ell P_i$ over 
 the first $\ell$ levels serves as an estimator of $X_{T_\ell}$, 
 where the terminal time $T_\ell$ tends to infinity as $\ell$ increases.
By adopting a multilevel approach, we can obtain an unbiased estimation of 
$\mathbb{E}_\pi[\varphi]$ with a cost of 
$\widetilde{\mathcal{O}}(\epsilon^{-1})$.

On the other hand, most quantum estimation algorithms impose strong assumptions 
on the convexity and smoothness of the potential function $f$~\cite{Childs2022QuantumAF}, 
typically requiring that for any $x\neq  y\in\mathbb{R}^d$,
\begin{align*}
2\cdot \frac{f(y) - f(x) - \langle \nabla f(x), y - x \rangle}{\|x - y\|_2^2} \in [\mu, L].
\end{align*}

In this work, 
we relax this condition by introducing a spring term into the evolution
of the stochastic process~\eqref{SDE0},
which effectively controls the coupling distance between different paths. 
As a result, multilevel mean estimation can be performed under dissipativity and twice smoothness assumptions only, i.e.,
  \begin{align*}
    \|\nabla f(x)-\nabla f(y)\|\leq L\|x-y\|,\quad
     &\|\nabla^2 f(x)-\nabla^2 f(y)\|\leq L\|x-y\|\\
     \text{ and }
    \langle x,\nabla f(x)\rangle&\geq a\|x\|^2-\beta.
  \end{align*}
 Under our improved quantum approach for Gibbs expectation,
$\mathbb{E}_\pi[\varphi]$ can be estimated unbiasedly within additive error
 $\epsilon$,
 at a computational cost of  $\widetilde{\mathcal{O}}(\epsilon^{-1})$.

Moreover, \emph{heavy-tailed} scenarios have received limited attention
 in current quantum sampling methods.
Since a class of heavy-tailed distributions can be effectively
 addressed by transforming them into distributions with more favorable analytical
  properties,
we leverage this transformation in the second method introduced 
in Section~\ref{sec:heavy-tailed}  to construct a quantum multilevel mean estimation 
algorithm specifically adapted to the heavy-tailed setting, 
enabling efficient estimation while preserving theoretical guarantees.

\paragraph{\textbf{\textsf{Contributions}}}
This work extends QA-MLMC to the unbiased setting and develops a unified framework 
for Gibbs expectation estimation under relaxed structural assumptions, 
with extensions to non-log-concave and heavy-tailed regimes.

 \begin{itemize}
    \item \textbf{General construction of unbiased estimators}\\
$\quad$ We extend the unbiased estimator construction framework  and provide a more general
approach for converting biased estimation procedures into unbiased
estimators. This construction will later be used to obtain an unbiased
QA-MLMC algorithm.
\begin{theorem*}[Informal of Theorem~\ref{thm:unbiased estimator construct}]
Let $P$ be a random variable.
If there exists an algorithm that estimates $\mathbb{E}[P]$
with mean-squared error at most $\sigma^2$
using cost $\widetilde{\mathcal{O}}(C\sigma^{-p})$
for some $p\in(0,2)$,
then one can construct an unbiased estimator of $P$
with variance at most $\sigma^2$
and expected cost $\widetilde{\mathcal{O}}(C\sigma^{-p})$.
\end{theorem*}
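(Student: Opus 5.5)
The plan is a randomized telescoping (Rhee--Glynn / Blanchet--Glynn type) construction applied to a sequence of increasingly accurate biased estimators. First, I will use the hypothesized algorithm to define, for each level $\ell\ge 0$, an estimator $Y_\ell$ of $\mathbb{E}[P]$ with mean-squared error at most $\sigma_\ell^2 := 2^{-2\ell}\sigma_0^2$ (with $\sigma_0$ a constant multiple of $\sigma$), at cost $\widetilde{\mathcal{O}}(C\sigma_\ell^{-p}) = \widetilde{\mathcal{O}}(C\sigma_0^{-p}2^{p\ell})$. Writing $\Delta_\ell := Y_\ell - Y_{\ell-1}$ (with $Y_{-1}:=0$), where $Y_\ell$ and $Y_{\ell-1}$ are generated independently, I get $\mathbb{E}[\sum_{\ell\le L}\Delta_\ell] = \mathbb{E}[Y_L]$. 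Since the MSE is at least the squared bias, $|\mathbb{E}[Y_L]-\mathbb{E}[P]|\le\sigma_L\to 0$, so the telescoping sum has the correct limit in expectation. Also $\mathbb{E}[\Delta_\ell^2]\le 2(\sigma_\ell^2+\sigma_{\ell-1}^2) = \mathcal{O}(\sigma_0^2 2^{-2\ell})$, because each term's MSE around $\mathbb{E}[P]$ is bounded.

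Next, I will draw a random level $N$ with $\Pr(N=\ell)=p_\ell\propto 2^{-r\ell}$ and set $Z := \Delta_N/p_N$ (the single-term estimator; the coupled-sum variant works equally well). Unbiasedness follows from $\mathbb{E}[Z]=\sum_\ell \mathbb{E}[\Delta_\ell]=\lim_L\mathbb{E}[Y_L]=\mathbb{E}[P]$, where the interchange is justified by $\sum_\ell \mathbb{E}|\Delta_\ell|<\infty$. The second moment is $\mathbb{E}[Z^2]=\sum_\ell \mathbb{E}[\Delta_\ell^2]/p_\ell \lesssim \sigma_0^2\sum_\ell 2^{(r-2)\ell}$, which is finite iff $r<2$. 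The expected cost is $\sum_\ell p_\ell\,\mathrm{cost}_\ell \lesssim C\sigma_0^{-p}\sum_\ell 2^{(p-r)\ell}$, which is finite iff $r>p$. Since $p<2$, I will choose $r=(p+2)/2$. Both geometric series are then $\mathcal{O}(1/(2-p))$, i.e., constants depending only on $p$. Rescaling $\sigma_0=c_p\sigma$ gives variance at most $\sigma^2$ and expected cost $\widetilde{\mathcal{O}}(C\sigma^{-p})$.

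The main obstacle is the hidden polylogarithmic factors in $\widetilde{\mathcal{O}}(C\sigma^{-p})$. At level $\ell$ these contribute $\mathrm{polylog}(2^{\ell}/\sigma)$, which grows like $\mathrm{poly}(\ell)$. That growth is still summable against $2^{(p-r)\ell}$, but it pushes the total to $\mathrm{polylog}(1/\sigma)$ times a $p$-dependent constant. I would absorb this by bounding $\log(2^\ell/\sigma)^k\le 2^{k}(\ell^k+\log^k(1/\sigma))$ and summing termwise. A second subtlety arises if the underlying quantum estimator only guarantees its accuracy with high probability rather than in mean square. In that case I would first apply a median-of-means boost with failure probability $\delta_\ell = 2^{-4\ell}\sigma^2$, and also clip outputs to a bounded range (recall $\varphi$ is bounded), so that an MSE bound holds at each level at only logarithmic extra cost.
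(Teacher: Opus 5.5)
Your construction is the paper's, parametrized differently. The paper keeps $\mathbb{P}[J=j]=2^{-j}$ and requests accuracy $2^{-\rho j}\tilde\sigma/M$ with $\rho\in(\tfrac12,\tfrac1p)$. You request accuracy $2^{-\ell}\sigma_0$ and draw levels with probability $\propto 2^{-r\ell}$, $r\in(p,2)$. Under $r=1/\rho$ the two convergence conditions coincide, and your handling of the hidden polylogarithmic factors is more careful than the paper's.

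There is, however, a concrete error at the base level. You set $Y_{-1}:=0$ and let the random level $N$ take the value $0$, so $\Delta_0=Y_0$. Then $\mathbb{E}[\Delta_0^2]=\mathbb{E}[Y_0^2]$ is of order $(\mathbb{E}[P])^2$, not $\mathcal{O}(\sigma_0^2)$. The bound $\mathbb{E}[\Delta_\ell^2]\le 2(\sigma_\ell^2+\sigma_{\ell-1}^2)$ needs both $Y_\ell$ and $Y_{\ell-1}$ to be accurate estimators of $\mathbb{E}[P]$, and $Y_{-1}=0$ is not. In fact your claim $\mathbb{E}[Z^2]\lesssim\sigma_0^2$ is impossible for any unbiased $Z$ once $|\mathbb{E}[P]|>\sigma_0$, since $\mathbb{E}[Z^2]\ge(\mathbb{E}[P])^2$: you bounded the second moment where the variance is required. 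For the single-term estimator as written, and for $|\mathbb{E}[P]|\ge\sigma_0$,
\[
\operatorname{Var}(Z)\ \ge\ \frac{\mathbb{E}[Y_0^2]}{p_0}-(\mathbb{E}[P])^2\ \ge\ \frac{(|\mathbb{E}[P]|-\sigma_0)^2}{p_0}-(\mathbb{E}[P])^2\ \longrightarrow\ (\mathbb{E}[P])^2\Bigl(\frac{1}{p_0}-1\Bigr)\quad(\sigma\to0).
\]
With $r=(p+2)/2$ one has $1/p_0-1=1/(2^r-1)>1/3$. So the variance does not shrink with $\sigma$, and the target $\operatorname{Var}(Z)\le\sigma^2$ fails.

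The repair is what the paper does: compute the base estimate with probability one and randomize only the corrections. That is, take $Z=Y_0+\Delta_N/p_N$ with $N\ge1$, which is the paper's $\tilde\mu_0+2^{j}(\tilde\mu_j-\tilde\mu_{j-1})$. Then bound the mean-squared error about $\mathbb{E}[P]$ rather than the raw second moment:
\[
\mathbb{E}\bigl[(Z-\mathbb{E}[P])^2\bigr]\ \le\ 2\sigma_0^2+2\sum_{\ell\ge1}\frac{\mathbb{E}[\Delta_\ell^2]}{p_\ell}\ \lesssim\ \sigma_0^2\Bigl(1+\sum_{\ell\ge1}2^{(r-2)\ell}\Bigr).
\]
Your coupled-sum alternative also avoids the problem, because there $\mathbb{P}[N\ge0]=1$ gives the base term weight one. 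With this change, the rest of your argument goes through: absolute summability for unbiasedness, $r<2$ for the variance, $r>p$ for the expected cost, and the polylog bookkeeping.
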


    \item \textbf{Unbiased quantum-accelerated multilevel Monte Carlo}\\
    $\quad$ We strengthen the original QA-MLMC framework introduced in \cite{An2020QuantumacceleratedMM} by achieving an unbiased estimator, establishing variance control in the quantum setting, 
    and explicitly characterizing the computational complexity in terms of the dimension 
$r$ and other parameters of interest.
\begin{theorem*}[Informal of Theorem~\ref{thm:QMLMC2}]
Let $P_\ell$ be multilevel approximations of a $r$-dimensional random variable $P$, 
whose bias, variance decay, and sampling cost satisfy
$$|\E[P_\ell-P]| =\widetilde{\mathcal{O}} \left( K_1 2^{-\alpha \ell}\right),
\quad \operatorname{Var}(P_\ell-P_{\ell-1}) =\widetilde{\mathcal{O}}\left( K_2 2^{-\beta \ell}\right),
\quad \mathcal{C}_\ell  =\widetilde{\mathcal{O}} \left(K_3 2^{\gamma \ell}\right).$$
Then quantum multilevel mean estimation produces an unbiased estimator of $\mathbb{E}[P]$ 
with variance at most $\hat{\sigma}^2$ and  expected total cost 
    \begin{align*}
        \begin{cases} 
       \widetilde{\mathcal{O}}\left(r^{\frac{1}{2}}K_2^{\frac{1}{2}}K_3 \cdot\hat{\sigma}^{-1} \right), & \beta \geq2\gamma, \\
    \widetilde{\mathcal{O}}\left(r^{\frac{1}{2}}K_1^{(\gamma-\frac{1}{2}\beta)/\alpha}K_2^{\frac{1}{2}}K_3\cdot \hat{\sigma}^{-1-(\gamma-\frac{1}{2}\beta)/\alpha} \right)  , & \beta < 2\gamma.
        \end{cases}
    \end{align*}
\end{theorem*}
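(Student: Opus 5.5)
The plan is to combine two ingredients. The first is a biased quantum multilevel estimator in the style of \cite{An2020QuantumacceleratedMM}, made explicit in the dimension $r$. The second is the debiasing construction of Theorem~\ref{thm:unbiased estimator construct} (the informal version stated just above), which turns a biased estimator with mean-squared error $\sigma^2$ and cost $\widetilde{\mathcal{O}}(C\sigma^{-p})$, $p\in(0,2)$, into an unbiased one with variance $\sigma^2$ and expected cost $\widetilde{\mathcal{O}}(C\sigma^{-p})$.

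So the core task is to show the following. For every target mean-squared error $\sigma^2$, the biased quantum MLMC estimator $\widehat{Y}_L=\sum_{\ell=0}^{L}\widehat{\mu}_\ell$ reaches mean-squared error $\sigma^2$ with cost $\widetilde{\mathcal{O}}(C\sigma^{-p})$. Here $\widehat{\mu}_\ell$ is a quantum mean estimate of $\E[P_\ell-P_{\ell-1}]$, and we need $p=1$ when $\beta\ge 2\gamma$ and $p=1+(\gamma-\beta/2)/\alpha<2$ otherwise. Feeding this into Theorem~\ref{thm:unbiased estimator construct} with $\sigma=\hat\sigma$ then gives the claim directly.

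For the biased step I split the error as bias$^2$ plus variance. First choose $L=\lceil \alpha^{-1}\log_2(2K_1/\sigma)\rceil$, so that $|\E[P_L-P]|\le\sigma/2$. Next allocate per-level accuracies $\epsilon_\ell$ with $\sum_\ell\epsilon_\ell^2\le\sigma^2/2$, and for each level apply a vector-valued quantum mean estimator (a coordinatewise or multivariate version of the quantum Chebyshev/median-of-means estimator, cf.\ \cite{Montanaro2015QuantumSO,Kothari2022MeanEW}). On an $r$-dimensional variable with variance $V_\ell=\widetilde{\mathcal{O}}(K_22^{-\beta\ell})$, this estimator returns an estimate with mean-squared error $\epsilon_\ell^2$ using $\widetilde{\mathcal{O}}(r^{1/2}V_\ell^{1/2}/\epsilon_\ell)$ samples, each costing $\mathcal{C}_\ell$. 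Obtaining the $r^{1/2}$ factor, rather than $r$, is where care is needed. I would take $\epsilon_\ell\propto\sigma\, 2^{-(\gamma-\beta/2)\ell/2}$-type weights, chosen by Lagrange optimization, and the failure probability is boosted by medians at a logarithmic cost. The level-$\ell$ cost is then $\widetilde{\mathcal{O}}(r^{1/2}K_2^{1/2}K_3\,2^{(\gamma-\beta/2)\ell}/\epsilon_\ell)$.

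Summing over levels gives two cases:
\begin{itemize}
\item If $\beta\ge2\gamma$, the sum is $\widetilde{\mathcal{O}}(r^{1/2}K_2^{1/2}K_3\sigma^{-1})$. The equality case contributes only an extra factor $L=\mathcal{O}(\log)$, which is absorbed into the $\widetilde{\mathcal{O}}$.
\item If $\beta<2\gamma$, the sum is dominated by level $L$, giving $\sigma^{-1}2^{(\gamma-\beta/2)L}=\widetilde{\mathcal{O}}(K_1^{(\gamma-\beta/2)/\alpha}\sigma^{-1-(\gamma-\beta/2)/\alpha})$.
\end{itemize}

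The main obstacle is applying Theorem~\ref{thm:unbiased estimator construct}, which requires $p<2$, i.e.\ $(\gamma-\beta/2)/\alpha<1$. I would either note that this is implicit in the hypotheses, or add it as an assumption (e.g.\ $\alpha\ge\gamma-\beta/2$). A second delicate point is that the quantum mean estimator has heavy-tailed failure events, so its mean-squared error, not just a high-probability error bound, must be controlled. To handle this I would truncate outputs at a known bound, or use the median trick with an $\mathcal{O}(\log(1/\sigma))$ overhead, so that the MSE guarantee needed by the debiasing theorem holds.
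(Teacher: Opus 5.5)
Your architecture matches the paper's. You pick $L$ from the bias bound, estimate each $\E[P_\ell-P_{\ell-1}]$ by quantum mean estimation at cost $\widetilde{\mathcal{O}}(r^{1/2}\sqrt{V_\ell}\,\mathcal{C}_\ell/\hat\sigma_\ell)$, allocate the per-level accuracies geometrically, sum over levels to get the two regimes, and then debias with Theorem~\ref{thm:unbiased estimator construct}. You are also right that debiasing needs $p<2$, i.e.\ $(\gamma-\tfrac12\beta)/\alpha<1$. The formal Theorem~\ref{thm:QMLMC2} imposes exactly this as $\beta+2\alpha>2\gamma$, which is automatic when $\beta\ge 2\gamma$ and is dropped from the informal statement.

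The gap is the per-level primitive. You leave it open, and both remedies you suggest fail as stated.
First, running a univariate estimator (Montanaro, Kothari--O'Donnell) coordinatewise costs a factor $r$, not $r^{1/2}$. If $P_\ell-P_{\ell-1}$ is isotropic, each coordinate has variance $V_\ell/r$ and must be estimated to accuracy $\epsilon_\ell/\sqrt{r}$. That costs $\sqrt{V_\ell}/\epsilon_\ell$ per coordinate and $r\sqrt{V_\ell}/\epsilon_\ell$ in total. The $r^{1/2}$ rate needs a genuinely multivariate quantum estimator, e.g.\ that of Cornelissen, Hamoudi and Jerbi.
Second, the median trick only lowers the failure probability. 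It bounds the mean-squared error only if you also control the size of the error on the failure event, which high-probability guarantees do not give. Truncation needs an a priori range, which the general statement does not supply.

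The paper avoids both issues by using Theorem~\ref{thm:Unbiased Quantum Mean Estimation} \cite{Sidford2023QuantumSF} as the primitive. It directly returns an \emph{unbiased} $\hat\mu_\ell$ with $\operatorname{Var}(\hat\mu_\ell)\le\hat\sigma_\ell^2$ at cost $\widetilde{\mathcal{O}}(r^{1/2}\sqrt{V_\ell}/\hat\sigma_\ell)$. Then $\sum_\ell\hat\mu_\ell$ is exactly unbiased for $\E[P_L]$, with root-mean-square error at most $\sum_\ell\hat\sigma_\ell$ by Minkowski, so no conversion from high-probability bounds to mean-squared error is needed.

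Two smaller points.
\begin{itemize}
\item Your budget $\sum_\ell\epsilon_\ell^2\le\sigma^2/2$ is valid only for unbiased, independent per-level estimators. With biased ones you need $(\sum_\ell\epsilon_\ell)^2$, which costs only a harmless $\sqrt{L+1}$ factor.
\item Your weights have the sign of the exponent reversed. The paper takes $\hat\sigma_\ell\propto 2^{(\gamma-\frac12\beta)\ell/2}$ in all cases; $2^{(\gamma-\frac12\beta)\ell/3}$ is the optimum for your squared budget. Your $2^{-(\gamma-\frac12\beta)\ell/2}$ gives a strictly worse total whenever $\beta\neq2\gamma$. For example, when $\beta<2\gamma$ the total becomes $2^{3(\gamma-\frac12\beta)L/2}/\sigma$ instead of $2^{(\gamma-\frac12\beta)L}/\sigma$.
\end{itemize}
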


    \item \textbf{Gibbs expectation under dissipativity}\\
$\quad$ We first give a straightforward (biased) estimation for the Gibbs expectation within the QA-MLMC framework.
Extending \cite{Fang2018MultilevelMC}, we introduce a change-of-measure approach by incorporating a spring term 
into the dynamics~\eqref{SDE0} and applying the Girsanov theorem via Radon–Nikodym derivatives, 
which enables effective control of the coupling distance between sample paths and relaxes the original one-sided Lipschitz condition to dissipativity.
\begin{theorem*}[Informal of Theorem~\ref{thm:QA-MLMC under weaker one-sided Lipschitz}]
Assume that $ f $ is $ L $-smooth, $L$-Hessian-smooth, dissipative,  
and satisfies the weaker one-sided Lipschitz condition with constant $ \lambda $.  
Then we can estimate $ \mathbb{E}_\pi[\varphi] $  
with additive error $ \epsilon $ and success probability at least 0.99.  
The expected total complexity is $$  \widetilde{\mathcal{O}}\left(\frac{SLd(L+\sqrt{d})}{2S-\lambda}\epsilon^{-1} \right). $$
\end{theorem*}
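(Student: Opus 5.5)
The plan is to reduce the statement to the unbiased QA-MLMC theorem (Theorem~\ref{thm:QMLMC2}), applied with $r=1$ and $P=\varphi(X_\infty)$, $X_\infty\sim\pi$. After that, a median-of-means step raises the success probability to $0.99$. Everything therefore comes down to building a coupled multilevel family $P_\ell$ and proving the three rate bounds for it.

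\textbf{Construction.} Following \cite{Fang2018MultilevelMC}, level $\ell$ uses an Euler--Maruyama discretization of \eqref{SDE0} with step $h_\ell=h_0 2^{-\ell}$ up to time $T_\ell=T_0+c\ell$. The fine and coarse paths share the same Brownian increments, and each is driven by the modified drift $-\nabla f(x)-S(x-\tilde x)$. Here $\tilde x$ is the partner path, so the spring term $S(x-\tilde x)$ pulls the two paths together. The effect of this change of measure is undone by the Radon--Nikodym weight $R_\ell$ from Girsanov's theorem, a stochastic exponential of $S(X-\tilde X)/\sqrt2$ against $dW$. Setting $P_\ell=\varphi(X^{\ell}_{T_\ell})R_\ell-\varphi(X^{\ell-1}_{T_{\ell-1}})R_{\ell-1}$ gives
\[
\E[P_\ell]=\E[\varphi(X^{h_\ell}_{T_\ell})]-\E[\varphi(X^{h_{\ell-1}}_{T_{\ell-1}})],
\]
so the telescoping sum is unchanged.

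\textbf{The three bounds.}

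(i) \emph{Bias.} Dissipativity implies exponential ergodicity of \eqref{SDE0}, with Wasserstein or weighted-TV contraction at some rate $\rho$. Together with weak order one of Euler--Maruyama under $L$-smoothness and $L$-Hessian-smoothness, this gives
\[
|\E\varphi(X^{h_\ell}_{T_\ell})-\E_\pi\varphi|\lesssim K\bigl(e^{-\rho T_\ell}+ d\,h_\ell\bigr).
\]
Choosing $c$ so that $e^{-\rho c}\le 1/2$ makes this decay like $2^{-\ell}$, so $\alpha=1$ with $K_1=\widetilde{\mathcal O}(Ld)$.

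(ii) \emph{Coupling distance.} With the spring term, the difference $e_t=X_t-\tilde X_t$ satisfies
\[
\tfrac{d}{dt}\|e\|^2\le -2(2S-\lambda)\|e\|^2+\text{(local discretization error)},
\]
using the weaker one-sided Lipschitz condition with constant $\lambda$. Since $2S>\lambda$, the strong error stays uniformly bounded in time:
\[
\E\|e_{T}\|^2\lesssim \frac{L^2 d\,(L+\sqrt d)^2 h_\ell^2}{(2S-\lambda)^2}.
\]
This follows from strong order one for additive noise, which needs Hessian smoothness; the $\sqrt d$ comes from the Itô correction term. Combined with the Lipschitz property of $\varphi$, this gives variance decay $\beta=2$.

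(iii) \emph{Cost.} The cost is $\mathcal C_\ell=\widetilde{\mathcal O}(d\,2^{\ell})$, so $\gamma=1$ and $\beta\ge2\gamma$. Theorem~\ref{thm:QMLMC2} then yields cost $\widetilde{\mathcal O}(K_2^{1/2}K_3\hat\sigma^{-1})$. Substituting $K_2^{1/2}\sim SLd^{1/2}(L+\sqrt d)/(2S-\lambda)$ and $K_3\sim d^{1/2}$, up to logarithmic factors, and setting $\hat\sigma=\epsilon/10$ with Chebyshev's inequality gives the stated complexity.

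\textbf{Main obstacle.} The hardest step is controlling the Girsanov weights. Both the variance of $R_\ell$ and the variance of $R_\ell-R_{\ell-1}$ grow like
\[
\exp\Bigl(\tfrac{S^2}{2}\int_0^{T_\ell}\|e_t\|^2\,dt\Bigr),
\]
and $T_\ell$ grows with $\ell$. My first attempt would apply the exponential moment bound $\E\exp(\theta\int\|e\|^2)$ from the contraction estimate, with $\theta$ kept below the threshold set by $2S-\lambda$. This exploits the fact that $\|e_t\|^2=\mathcal O(h_\ell^2)$, so $S^2T_\ell h_\ell^2\to0$. Failing that, I would use a localization and stopping argument on $\sup_t\|e_t\|$ together with a bounded-$\varphi$ truncation. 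That route produces the factor $S$ in the numerator, and the logarithmic factors are absorbed into $\widetilde{\mathcal O}$.
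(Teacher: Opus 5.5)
\textbf{Gap: the growing horizon is incompatible with the spring coupling.} You let the horizon grow, $T_\ell=T_0+c\ell$, so the fine member of a level pair targets time $T_\ell$ while the coarse member targets $T_{\ell-1}\neq T_\ell$. Whichever way you align the two horizons, a synchronous coupling then cannot start both paths at the same point. Either the coarse path starts from $x_0$ after the fine path has already run for time $c$ (as in Algorithm~\ref{alg:Sample for MLMC-unbiased}), or one path runs uncoupled for the extra time. In both cases the coupled phase begins with $\|e_0\|=\Theta(\sqrt d)$, where $e=Y^f-Y^c$.

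The spring makes $e_t$ contract, but the Girsanov weights do not forget this initial separation. To first order, $R^f-R^c\approx\sqrt2\,S\int_0^{T}\langle e_t,\d W_t\rangle$, whose variance is $2S^2\E\int_0^T\|e_t\|^2\,\d t$. Since $\frac{\d}{\d t}\|e_t\|^2\ge-(4S+2L)\|e_t\|^2$, this is at least of order $S^2\|e_0\|^2/(S+L)=\Theta(S^2d/(S+L))$, uniformly in $\ell$. So for, e.g., $\varphi\equiv1$ the level variance stays bounded below, giving $\beta=0$ rather than $2$, and Theorem~\ref{thm:QMLMC2} gives nothing close to $\epsilon^{-1}$.

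Your bound in (ii), and your treatment of the weights in the last paragraph (``$\|e_t\|^2=\mathcal O(h_\ell^2)$, so $S^2T_\ell h_\ell^2\to0$''), silently assume $e_0=0$, which your construction violates. You also cannot remove the initial separation with spring-free synchronous coupling: under dissipativity plus the weaker one-sided Lipschitz condition, that coupling need not contract. The paper flags exactly this obstruction at the start of the subsection on unbiased estimation under the weaker one-sided Lipschitz condition in Section~\ref{sec:unbiased}.

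\textbf{How the paper avoids it.} The statement only asks for an $\epsilon$-accurate estimate, so the paper fixes a single horizon. It takes $T=\mathcal O(\log\epsilon^{-1})$, using geometric ergodicity (Lemma~\ref{lem:ergodic under dissipative}) so the truncation bias is at most $\epsilon/2$. It then builds the multilevel hierarchy only over step sizes $h_\ell=h_{(4)}2^{-\ell}$, with both paths started at the same $X_0$. With $e_0=0$:
\begin{itemize}
\item Lemma~\ref{lem:p-bounds of distance of two paths} gives $\|e\|=\mathcal O(C_{(2)}h)$ uniformly in time.
\item Lemma~\ref{lem:bound of R-D} bounds the weight moments via the exponential series; this is essentially your ``first attempt''.
\item Lemma~\ref{lem:bound of two path with R-D} gives $\|\Delta_\ell\|_{L^2}=\mathcal O(\sqrt T\,S\sqrt d\,C_{(2)}h_\ell)$.
\end{itemize}
The factors $\sqrt T$ and $T$ (the latter in $\mathcal C_\ell=\mathcal O(T2^\ell/h_{(4)})$) are only polylogarithmic, and $h_{(4)}$ cancels between $K_2^{1/2}$ and $K_3$.

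\textbf{Dimension bookkeeping.} The factor $S$ and one factor $\sqrt d$ in the final rate come from $\E|R-1|^2$. They do not come from the coupling distance, which alone gives $L\sqrt d(L+\sqrt d)/(2S-\lambda)$. Cost is counted in oracle queries, so there is no per-step factor $d$. Your $\mathcal C_\ell=\widetilde{\mathcal O}(d2^\ell)$ and $K_3\sim d^{1/2}$ are inconsistent with each other and only reproduce the target rate by coincidence.

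\textbf{If you want unbiasedness.} If you want an unbiased estimator of $\E_\pi[\varphi]$ in this setting, the paper gets it in Theorem~\ref{thm:unbiased A-MLMC under weaker one-sided Lipschitz}. It wraps the fixed-horizon estimator in the geometric randomization of Theorem~\ref{thm:unbiased estimator construct}, rather than growing $T_\ell$ inside the level coupling.
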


    \item \textbf{Unbiased Gibbs expectation}\\
     $\quad$ By eliminating the time truncation bias via a suitable coupling strategy, 
we construct two unbiased estimators for the Gibbs expectation: a structured extension of \cite{Giles2016MultilevelMC} 
under the one-sided Lipschitz condition, and a more general approach requiring only dissipativity.
Combined with the unbiased QA-MLMC framework, 
this approach further removes the discretization bias and yields a fully unbiased estimator.
% \jpl{Add a sentence to highlight generalized unbiased QA-MLMC.}
% \lxm{I added a sentence to highlight the generalized unbiased QA-MLMC framework and clarify the two constructions.}
  
  \begin{theorem*}[Informal of Theorem~\ref{thm:unbiased QA-MLMC}]
Assume that $ f $ is  $ L $-smooth, $L$-Hessian-smooth, and satisfies the one-sided Lipschitz condition with constant $m$.
Then we can obtain an unbiased estimator of $ \mathbb{E}_\pi[\varphi] $
with additive error $\epsilon$ and success probability at least 0.99. The  expected total complexity is
$$\widetilde{\mathcal{O}}\left(\frac{\,L^2\sqrt{d}+Ld\,}{m^2}\epsilon^{-1}\right).$$
\end{theorem*}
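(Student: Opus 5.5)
The plan has three stages. First, build a classical multilevel hierarchy that removes the time-truncation bias. Second, verify the hypotheses of the unbiased QA-MLMC result (Theorem~\ref{thm:QMLMC2}) with explicit constants. Third, boost the variance bound to an $\epsilon$-accuracy guarantee with probability $0.99$ by a median-of-means argument.

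\textbf{Hierarchy and coupling.} For level $\ell$, let the terminal time be $T_\ell \asymp \ell/m$ (or $2^{\ell}$-scaled) and the step size $h_\ell = h_0 2^{-\ell}$. Define $P_\ell = \varphi(X^{\ell}_{T_\ell})$, where $X^\ell$ is the Euler--Maruyama discretization of \eqref{SDE0} started at a fixed $x_0$. This is the structured extension of \cite{Giles2016MultilevelMC} advertised in the contributions.

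Couple $P_\ell$ and $P_{\ell-1}$ as follows. Run the coarse chain on $[0,T_{\ell-1}]$ and the fine chain on $[0,T_\ell]$, and align their ends. The fine chain first runs alone on the extra window $[0,T_\ell-T_{\ell-1}]$. After that, both chains share the same Brownian increments, with coarse increments obtained by summing pairs of fine ones.

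The one-sided Lipschitz condition $\langle \nabla f(x)-\nabla f(y),x-y\rangle\ge m\|x-y\|^2$ is what makes this coupling work. It gives synchronous-coupling contraction $\E\|X_t-Y_t\|^2\le e^{-2mt}\E\|X_0-Y_0\|^2$, and the same contraction holds for the discretized chains once $h\lesssim m/L^2$.

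The initial mismatch is controlled by uniform moment bounds, $\E\|X_t-x^*\|^2 = O(d/m)$. Combining these gives
\[
\E|P_\ell-P_{\ell-1}|^2 \lesssim K^2\Big(e^{-2mT_{\ell-1}}\tfrac{d}{m} + \text{discretization error}\Big).
\]
The strong discretization error is $O(h_\ell^2(L^2 d + \cdots))$. Here I would use the Hessian-smoothness to get strong order one for Euler applied to additive noise: the Itô--Taylor remainder involves $\nabla^2 f$ only through Lipschitz terms. That error is propagated through the contraction, which gives an $m^{-1}$ factor per unit accumulation.

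Choosing $T_\ell = c\,\ell/m$ makes $e^{-2mT_\ell}$ decay like $2^{-2\ell}$. The bias $|\E P_\ell - \E_\pi\varphi|$ is likewise $O(2^{-\ell})$, using contraction together with the weak error of the invariant measure of the Euler scheme. The cost is $\mathcal C_\ell = T_\ell/h_\ell \asymp \ell 2^{\ell}/(m h_0)$ gradient evaluations. So, up to logarithmic factors, $\alpha=1$, $\beta=2$ and $\gamma=1$, and we are in the regime $\beta\ge 2\gamma$.

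\textbf{Applying Theorem~\ref{thm:QMLMC2}.} With $r=1$, the cost is $\widetilde{\mathcal O}(K_2^{1/2}K_3\hat\sigma^{-1})$. Taking $h_0\asymp m/L^2$ gives $K_3\asymp L^2/m^2$ up to a $\log$ factor. For $K_2^{1/2}$, I expect about $\sqrt{d}+d/L$ after normalizing $h_0$, so that $K_2^{1/2}K_3 \asymp (L^2\sqrt d+Ld)/m^2$.

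Unbiasedness is inherited from Theorem~\ref{thm:QMLMC2}, which internally uses the randomized-level construction of Theorem~\ref{thm:unbiased estimator construct}. Its telescoping sum has limit $\E_\pi\varphi$ because the bias $|\E P_\ell - \E_\pi\varphi|\to 0$: both the truncation and the discretization bias vanish as $\ell\to\infty$.

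\textbf{Success probability.} Setting $\hat\sigma=\epsilon/10$, Chebyshev's inequality gives an error of at most $\epsilon$ with probability $0.99$. A median of $O(1)$ copies is an alternative, but the median would not be unbiased. I would therefore keep a single run and rely on Chebyshev, which costs only a constant factor.

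\textbf{Main obstacle.} The hard step is the careful variance bound with the right $d,L,m$ dependence for the strong error of the discretized chain, uniformly in time. In particular it must interact correctly with the extra window at the start of each level. I would first prove a discrete contraction lemma for Euler under one-sided Lipschitz with $h\le m/(2L^2)$, then apply a local-error-plus-contraction (Lady Windermere's fan) argument to get a uniform-in-time $O(h(L\sqrt d + \ldots)/m)$ strong error.
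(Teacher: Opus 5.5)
Your proposal follows the paper's proof essentially step for step. It uses the same Giles-type coupling, with the fine path running alone on $[0,T_\ell-T_{\ell-1}]$. It uses the same balancing of contraction against $h_\ell^2$, giving $\alpha=1$, $\beta=2$, $\gamma=1$ and $K_2^{1/2}K_3\asymp(L^2\sqrt d+Ld)/m^2$ in Theorem~\ref{thm:QMLMC2}, and the same Chebyshev step via Lemma~\ref{lem:variance and epsilon}. The only difference is that the paper reads the coupled fine--coarse bound directly off Corollary~\ref{cor: p moment distance of one-sided}; your stability-plus-local-error route would also work, provided the $O(h^{3/2})$ martingale part of the local error is summed in quadrature to reach order one.

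One detail needs fixing. With $T_\ell=c\ell/m$, so $T_0=0$, the mismatch term $e^{-2mT_{\ell-1}}d/m$ carries a $d/m$ prefactor that is not dominated by $(\sqrt d+d/L)^2$, and this inflates $K_2^{1/2}K_3$ by up to $m^{-1/2}$. You need the additive $m^{-1}\log(\cdot)$ offset the paper uses, $T_{\ell-1}=\frac{4}{m}\bigl(\ell\log 2+\log\frac{C_4M}{C_{(3)}h_{(2)}}\bigr)$, so that this term equals $C_{(3)}^2h_\ell^2$ at every level.
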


\begin{theorem*}[Informal of Theorem~\ref{thm:unbiased A-MLMC under weaker one-sided Lipschitz}]
     Assume that $ f $ is $ L $-smooth, $L$-Hessian-smooth, 
    dissipative,  and satisfies the weaker one-sided Lipschitz 
    condition with constant $ \lambda $.  
Then we can obtain an unbiased estimator of $ \mathbb{E}_\pi[\varphi] $
with additive error $\epsilon$ and success probability at least 0.99. The  expected total complexity is
$$ \widetilde{\mathcal{O}}\left(\frac{SLd(L+\sqrt{d})}{2S-\lambda}\epsilon^{-1} \right).$$

\end{theorem*}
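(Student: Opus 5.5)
The plan has three steps. First, remove the time-truncation bias by a randomized telescoping over increasing horizons under the spring-modified, change-of-measure dynamics. Second, show that the resulting level differences have geometrically decaying bias and variance. Third, feed the levels into the unbiased QA-MLMC of Theorem~\ref{thm:QMLMC2}, with Theorem~\ref{thm:unbiased estimator construct} absorbing the remaining discretization bias.

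\textbf{Step 1: the telescoping construction.} Fix horizons $T_\ell = T_0 + c\,\ell$, so that $T_\ell$ grows linearly in $\ell$ and the contraction factor decays geometrically. Define $P_\ell$ as $\varphi$ of the spring-modified Langevin process at time $T_\ell$, reweighted by the Radon--Nikodym (Girsanov) density. We use the same construction as in the biased result, Theorem~\ref{thm:QA-MLMC under weaker one-sided Lipschitz}: add a spring term $-S(X_t - Y_t)$ coupling the fine and coarse paths. The spring makes the drift of the difference process contract at rate $2S-\lambda$ under the weaker one-sided Lipschitz condition, while dissipativity gives uniform moment bounds. Consecutive levels are coupled by running a shared path to time $T_{\ell-1}$ and then continuing the two processes with synchronized Brownian increments. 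The reweighting ensures that $\E[P_\ell]$ equals the unmodified Langevin expectation at time $T_\ell$, which converges to $\E_\pi[\varphi]$. Hence $\E_\pi[\varphi] = \sum_\ell \E[P_\ell - P_{\ell-1}]$.

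\textbf{Step 2: decay of bias and variance.} We need $|\E[P_\ell]-\E_\pi[\varphi]| \lesssim K e^{-(2S-\lambda)T_\ell}$ and $\operatorname{Var}(P_\ell-P_{\ell-1}) \lesssim K^2 e^{-(2S-\lambda)T_{\ell-1}}\cdot \mathrm{poly}(d,L)$. The argument has three parts.
\begin{itemize}
\item Apply It\^o's formula to $\|X_t-Y_t\|^2$ under the spring drift and use Gr\"onwall's inequality to get the exponential contraction.
\item Apply Cauchy--Schwarz to split the weighted difference into a path-distance term, controlled by the $K$-Lipschitz property of $\varphi$, and a term $\|\text{weight}-1\|_{L^2}$.
\item Control the weight moments through the Novikov-type exponential moment $\E\exp\bigl(c\int_0^t S^2\|X_s-Y_s\|^2\,ds\bigr)$, which stays bounded because $\|X_s-Y_s\|$ decays exponentially.
\end{itemize}
Next, discretize each time-$T_\ell$ process with a step size tied to an inner level index, using Euler--Maruyama or Milstein schemes. $L$-Hessian smoothness gives the strong order needed for $\beta\ge 2\gamma$, at cost $\widetilde{\mathcal{O}}(T_\ell\, d\, h^{-1})$ per path. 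Because $T_\ell$ is only linear in $\ell$, the horizon contributes merely logarithmic factors.

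\textbf{Step 3: unbiased assembly and complexity.} Treat the time levels with a randomized (single-term or coupled-sum) unbiased estimator, sampling the level $N$ with probabilities $p_\ell \propto 2^{-\ell}$. Geometric variance decay then gives finite variance and finite expected cost. The remaining discretization bias is removed by Theorem~\ref{thm:unbiased estimator construct}: plug the quantum multilevel estimator, with $p=1$, into that theorem, which converts it into an unbiased estimator of variance $\hat\sigma^2$ at cost $\widetilde{\mathcal{O}}(\hat\sigma^{-1})$.

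Finally, track the constants. $K_2$ carries $d(L+\sqrt d)^2$-type factors from the moment bounds, and $K_3 \sim d$. The $SL/(2S-\lambda)$ factor comes from the mixing time $1/(2S-\lambda)$ multiplied by $L$ and $S$ through the weight and discretization constants. Setting $\hat\sigma = \epsilon/10$, Chebyshev's inequality gives success probability $0.99$, and a median trick can boost it further if needed.

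\textbf{Main obstacle.} I expect the hardest step to be the uniform-in-level control of the Girsanov weight. The exponential moment must stay bounded simultaneously as $T_\ell\to\infty$ and as the step size shrinks, in the discrete scheme as well as the continuous one. My first attempt would be to choose $S$ large relative to $\lambda$ and to exploit the exponential decay of the coupling distance, which makes the integrated spring energy summable in time. If that fails, I would fall back on a localization argument combined with the dissipativity-based moment bounds.
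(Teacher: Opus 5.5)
\textbf{There is a genuine gap in Step 2, and Steps 1 and 3 depend on it.} The claimed bound $\operatorname{Var}(P_\ell-P_{\ell-1})\lesssim e^{-(2S-\lambda)T_{\ell-1}}\mathrm{poly}(d,L)$ fails for spring/Girsanov coupling across different time horizons.

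For the level difference to be small, the two processes must end at the same time and share the noise on the last window of length $T_{\ell-1}$. They therefore start from different states: the fine-horizon path from the evolved state at time $T_\ell-T_{\ell-1}$, the other from $x_0$. Their separation $\Delta_0$ is of order $\sqrt d$. (As you literally wrote it, sharing one path up to $T_{\ell-1}$ makes $P_\ell-P_{\ell-1}$ the increment of a single trajectory over $[T_{\ell-1},T_\ell]$, whose variance is $\Theta(1)$.)

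The spring does make $\|\Delta_t\|$ contract, but that does not control the weights.
\begin{itemize}
\item The log-ratio of the two Radon--Nikodym weights is, up to sign, $\sqrt2\,S\int_0^{T_{\ell-1}}\langle\Delta_t,\mathrm{d}W_t\rangle$.
\item Its quadratic variation $2S^2\int_0^{T_{\ell-1}}\|\Delta_t\|^2\,\mathrm{d}t$ is at most $S^2\|\Delta_0\|^2/(2S-\lambda)$, and by $L$-smoothness at least of order $S^2\|\Delta_0\|^2/(2S+L)$.
\item This quantity is dominated by the initial transient $t\lesssim(2S-\lambda)^{-1}$. It does not shrink as $T_{\ell-1}\to\infty$, and it grows with $S$.
\end{itemize}
Hence the two weights stay order-one apart, $\operatorname{Var}(P_\ell-P_{\ell-1})$ does not decay in $\ell$, and the randomized estimators of your Step 3 (with $p_\ell\propto 2^{-\ell}$) have infinite variance.

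Your proposed remedies do not help. Summability of the spring energy gives finiteness, not smallness. Enlarging $S$ makes things worse. Localization cannot help, because this is typical behaviour, not tail behaviour. The paper flags exactly this obstruction at the start of its weaker one-sided Lipschitz subsection. Your telescoping is its strongly convex construction (Theorem~\ref{thm:unbiased QA-MLMC}, with $S=0$ and no weights), and it does not transfer.

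A smaller error: the bias rate $e^{-(2S-\lambda)T_\ell}$ cannot be right. The reweighted expectation equals the unmodified Langevin one, so it relaxes at the geometric-ergodicity rate of the original diffusion, which the spring cannot change. $(2S-\lambda)^{-1}$ is a coupling time, not a mixing time.

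\textbf{The missing idea is that no cross-horizon coupling is needed, and you already invoke the tool that makes it unnecessary.} The paper proceeds as follows.
\begin{itemize}
\item It lets $\mathcal{A}(\sigma)$ be the biased estimator of Theorem~\ref{thm:QA-MLMC under weaker one-sided Lipschitz}, with horizon $T(\sigma)=O(\log\sigma^{-1})$ chosen so that the truncation bias is $O(\sigma)$.
\item Both spring-coupled paths start from the same $X_0$. Lemma~\ref{lem:bound of two path with R-D} then gives level second moments $O\big(T\,(S\sqrt d\,C_{(2)})^2h^2\big)$, with $C_{(2)}=O((L^2\sqrt d+Ld)/(2S-\lambda))$, so that $S\sqrt d\,C_{(2)}=SLd(L+\sqrt d)/(2S-\lambda)$.
\item The per-level cost is $T/h$ queries, with no factor $d$. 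The step size $h_{(4)}$ cancels, and $T$ contributes only polylogarithms.
\item Consequently $\mathcal{A}(\sigma)$ has MSE at most $\sigma^2$ relative to $\mathbb{E}_\pi[\varphi]$, at cost $\widetilde{\mathcal{O}}\big(\frac{SLd(L+\sqrt d)}{2S-\lambda}\sigma^{-1}\big)$.
\end{itemize}
Theorem~\ref{thm:unbiased estimator construct} with $p=1$ then removes truncation and discretization bias together.

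The calls $\mathcal{A}(2^{-\rho j}\tilde\sigma/M)$ and $\mathcal{A}(2^{-\rho(j-1)}\tilde\sigma/M)$ can be independent, because $\mathbb{E}|\tilde\mu_j-\tilde\mu_{j-1}|^2=O(2^{-2\rho j})$ follows from their MSE bounds alone. Since the quantum cost scales like $\sigma^{-1}$ rather than $\sigma^{-2}$, any $\rho\in(\tfrac12,1)$ makes both the variance series $\sum_j 2^{j}2^{-2\rho j}$ and the expected-cost series $\sum_j 2^{-j}2^{\rho j}$ converge. Classically ($p=2$) this window for $\rho$ is empty, and one would be forced into exactly the cross-horizon coupling that fails here.
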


\begin{remark}
In our framework, the dependence on $d$ mainly arises from the upper bound on the variance of the distance between the fine and coarse paths. In addition, when $\beta \geq 2\gamma$ in QA-MLMC (Theorem~\ref{thm:QMLMC2}),  polynomial dependence on $d$ in the bias term $|\mathbb{E}[P_\ell - P]|$ does not affect the overall complexity.
In the one-sided Lipschitz setting, this yields an overall complexity of $\mathcal{O}(d)$. 
In contrast, in the dissipative setting, the change-of-measure technique introduces an additional $d^{1/2}$ factor, resulting in the overall complexity of $\mathcal{O}(d^{3/2})$.
Here we assume that $\|\widehat{Y}^c_0\|$, $\|\widehat{Y}^f_0\|$, and $\|\nabla f(0)\|$ are $\mathcal{O}(1)$; see  Section~\ref{sec:summary} and Appendix~\hyperref[Appendix A: Strong Error Analysis]{A} for details.

\end{remark}

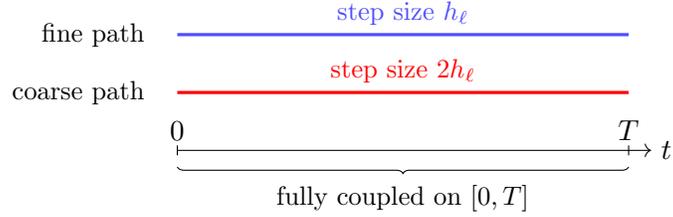
\begin{figure}[ht]
\centering
\begin{tikzpicture}[x=6cm,y=1.1cm]

% time axis
\draw[->] (0,0) -- (1.05,0) node[right] {$t$};
\draw (0,0.06) -- (0,-0.06) node[above=2pt] {$0$};
\draw (1,0.06) -- (1,-0.06) node[above=2pt] {$T$};

% fine path row
\node[left] at (-0.05,1.4) {\small fine path };
\draw[very thick,blue!70] (0,1.4) -- (1,1.4);
\node[above,blue!70] at (0.5,1.4) {\small step size $h_\ell$};

% coarse path row
\node[left] at (-0.05,0.7) {\small coarse path };
\draw[very thick,red] (0,0.7) -- (1,0.7);
\node[above,red] at (0.5,0.7) {\small step size $2h_\ell$};

% coupling 
\draw[decorate,decoration={brace,mirror}] (0,-0.2) -- (1,-0.2)
    node[midway,below=3pt] {\small fully coupled on $[0,T]$};

\end{tikzpicture}
\caption{Level-$\ell$ sampler for Gibbs expectation with fixed terminal time $T$(Theorem~\ref{thm:QA-MLMC of one-sided lipschitz})} \label{fig:1}
\end{figure}

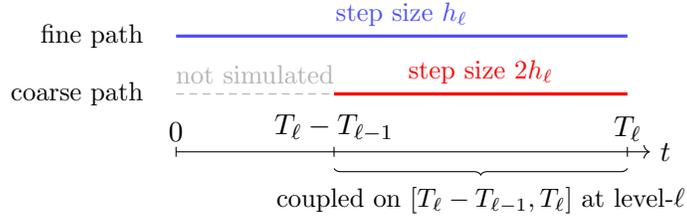
\begin{figure}[ht]
\centering
\begin{tikzpicture}[x=6cm,y=1.1cm]
% time axis
\draw[->] (0,0) -- (1.05,0) node[right] {$t$};
\draw (0,0.06) -- (0,-0.06) node[above=2pt] {$0$};
\draw (0.35,0.06) -- (0.35,-0.06) node[above=2pt] {$T_{\ell}-T_{\ell-1}$};
\draw (1,0.06) -- (1,-0.06) node[above=2pt] {$T_\ell$};

% fine path row
\node[left] at (-0.05,1.4) {\small fine path };
% fine: full 0--T_J
\draw[very thick,blue!70] (0,1.4) -- (1,1.4);
\node[above,blue!70] at (0.5,1.4) {\small step size $h_\ell$};

% coarse path row
\node[left] at (-0.05,0.7) {\small coarse path };
% "not simulated" before T_{l-1}
\draw[densely dashed,gray!60] (0,0.7) -- (0.35,0.7);
\node[above,gray!60] at (0.175,0.7) {\small not simulated};
% coupled tail
\draw[very thick,red] (0.35,0.7) -- (1,0.7);
\node[above,red] at (0.675,0.7) {\small step size $2h_\ell$};

% coupling 
\draw[decorate,decoration={brace,mirror}] (0.35,-0.2) -- (1,-0.2)
    node[midway,below=3pt] {\small coupled on $[T_{\ell}-T_{\ell-1},T_\ell]$ at level-$\ell$};

\end{tikzpicture}
\caption{Level-$\ell$ sampler for unbiased Gibbs expectation with one-sided Lipschitz(Theorem~\ref{thm:unbiased QA-MLMC})} \label{fig:2}
\end{figure}

\begin{figure}[H]
\centering
\begin{tikzpicture}[x=6cm,y=1.1cm]
% time axis
\draw[->] (0,-0.2) -- (1.05,-0.2) node[right] {$t$};
\draw (0,-0.14) -- (0,-0.26) node[above=2pt] {$0$};
\draw (0.33,-0.14) -- (0.33,-0.26) node[above=2pt] {$T_0$};
\draw (0.75,-0.14) -- (0.75,-0.26) node[above=2pt] {$T_{J-1}$};
\draw (1,-0.14) -- (1,-0.26) node[above=2pt] {$T_J$};

% Base term row
%\node[left] at (-0.05,1.9) {base term};

\draw[very thick,blue] (0,2.3) -- (0.33,2.3);
\draw[very thick,red] (0,2.1) -- (0.33,2.1);
\node[left] at (-0.05,2.35) {\scriptsize fine path of $T_0$ };
\node[left] at (-0.05,2.05) {\scriptsize coarse path  of $T_0$};

\node[above=2pt] at (0.23,2.2) { $\varphi(X_{T_0})$ };

% Random increment row
%\node[left] at (-0.05,0.7) {random increment};

% random
\draw[very thick,blue] (0,1.6) -- (0.75,1.6);
\draw[very thick,red] (0,1.4) -- (0.75,1.4);
\node[left] at (-0.05,1.65) {\scriptsize fine path of $T_{j-1}$ };
\node[left] at (-0.05,1.35) {\scriptsize coarse path  of $T_{j-1}$};
\node[above=2pt] at (0.65,1.5) { $\varphi(X_{T_{j-1}})$ };

\draw[very thick,blue] (0,0.9) -- (1,0.9);
\draw[very thick,red] (0,0.7) -- (1,0.7);
\node[left] at (-0.05,0.95) {\scriptsize fine path of $T_{j}$ };
\node[left] at (-0.05,0.65) {\scriptsize coarse path  of $T_{j}$};
\node[above=2pt] at (0.9,0.8) { $\varphi(X_{T_j})$ };

% Extra horizon label
\draw[decorate,decoration={brace,mirror}] (0.75,-0.4) -- (1,-0.4)
node[midway,below=4pt] {\small random increment};
\node[below=2pt] at (0.81,-0.9) {\small $2^j\big(\varphi(X_{T_j})-\varphi(X_{T_{j-1}})\big)$ };

\draw[decorate,decoration={brace,mirror}] (0.0,-0.4) -- (0.33,-0.4)
node[midway,below=4pt] {\small base term};
\node[below=2pt] at (0.16,-0.9) {\small $\varphi(X_{T_0})$ };

\end{tikzpicture}
\caption{Generalized unbiased Gibbs expectation(Theorem~\ref{thm:unbiased A-MLMC under weaker one-sided Lipschitz})} \label{fig:3}

\end{figure}
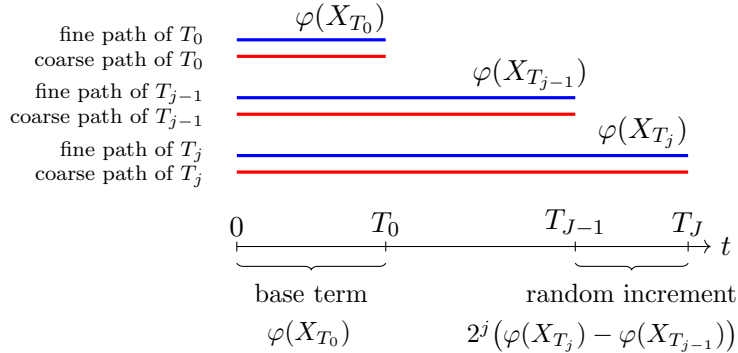

\begin{remark}
Figures~\ref{fig:1}--\ref{fig:3} compare the three constructions for estimating the Gibbs expectation.
Figure~\ref{fig:1} corresponds to a commonly used MLMC-based construction, where the coarse and fine paths are coupled over a fixed interval $[0,T]$.
Figure~\ref{fig:2} illustrates the core idea of Theorem~\ref{thm:unbiased QA-MLMC}, in which the fine and coarse paths target
$\mathbb{E}[\varphi(X_{T_\ell})]$ and $\mathbb{E}[\varphi(X_{T_{\ell-1}})]$, respectively, and are evolved over
$[T_\ell - T_{\ell-1},\, T_\ell]$, thereby enabling access to $X_\infty$.
In the more general setting of Theorem~\ref{thm:unbiased A-MLMC under weaker one-sided Lipschitz}, as illustrated in Figure~\ref{fig:3},
unbiased QA-MLMC is applied at each truncation time, and a geometric random variable
$j \sim \mathrm{Geom}(\frac{1}{2})$ together with the correction term
$2^{j}\big(\varphi(X_{T_j})-\varphi(X_{T_{j-1}})\big)$ removes the truncation bias, yielding $X_\infty$.
\end{remark}

\item \textbf{Heavy-tailed sampling and estimation at light-tailed efficiency}\\
$\quad$ By carefully designing a suitable transformation building on~\cite{He2024AnAO}, we develop a unified framework that enables sampling and estimation for both heavy-tailed and light-tailed distributions with comparable efficiency, 
extending the applicability beyond the original heavy-tailed setting.
\begin{theorem*}[Informal of Theorem~\ref{thm:QA-MLMC of heavy-tailed}]
Suppose that  the potential function $f$ satisfies 
 $ L $-smoothness, $L$-Hessian-smoothness and 
    dissipativity after transformation,
then 
$\mathbb{E}_{\pi}[\varphi]$ can be estimated unbiasedly with additive error $\epsilon$ 
and success probability at least $0.99$. 
The  expected total query complexity is 
$$\widetilde{\mathcal{O}}\left(\frac{SLd(L+\sqrt{d})}{2S-\lambda}\epsilon^{-1} \right).$$

\end{theorem*}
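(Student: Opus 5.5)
The plan is to reduce the heavy-tailed problem to the dissipative light-tailed setting already handled by Theorem~\ref{thm:unbiased A-MLMC under weaker one-sided Lipschitz}, using a change of variables. Following~\cite{He2024AnAO}, I would fix a smooth, radially symmetric diffeomorphism $\Phi:\mathbb{R}^d\to\mathbb{R}^d$, for instance $\Phi(y)=y\,g(\|y\|)$ with $g$ growing exponentially or polynomially at infinity. I would then pull back $\pi$ to a density $\tilde\pi(y)\propto e^{-\tilde f(y)}$, where
\begin{equation*}
\tilde f(y)=f(\Phi(y))-\log\det\nabla\Phi(y).
\end{equation*}
The defining identity is $\mathbb{E}_\pi[\varphi]=\mathbb{E}_{\tilde\pi}[\varphi\circ\Phi]$, which holds exactly because the change of variables formula introduces no approximation. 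Any unbiased estimator of the right-hand side is therefore an unbiased estimator of the left-hand side, with the same variance. The hypothesis "after transformation" in the statement is read as saying that $\tilde f$ is $L$-smooth, $L$-Hessian-smooth, dissipative, and satisfies the weaker one-sided Lipschitz condition with constant $\lambda$.

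The second step is to check the conditions on the observable. The new observable $\tilde\varphi=\varphi\circ\Phi$ is bounded, since $\varphi$ is. It need not be globally Lipschitz, however, because $\nabla\Phi$ grows at infinity, and this is the main obstacle. I would try two remedies in order:
\begin{itemize}
\item First, choose $\Phi$ so that $\|\nabla\Phi(y)\|$ grows only polynomially, and rerun the strong-error estimates in Appendix~A with a local Lipschitz bound $|\tilde\varphi(y)-\tilde\varphi(y')|\le K(1+\|y\|^k+\|y'\|^k)\|y-y'\|$. Hölder's inequality together with uniform moment bounds for the dissipative (spring-augmented) Langevin chain would then give the same level-variance decay $\operatorname{Var}(P_\ell-P_{\ell-1})$, up to polylogarithmic or constant factors.
\item Failing that, use the boundedness of $\varphi$ to truncate. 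Outside a ball of radius $R=\mathcal{O}(\log(1/\epsilon))$, tail moment bounds make the contribution negligible in variance, and inside the ball the Lipschitz constant is polylogarithmic and so is absorbed into the $\widetilde{\mathcal{O}}$.
\end{itemize}

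With these inputs in place, I would run the generalized unbiased construction of Theorem~\ref{thm:unbiased A-MLMC under weaker one-sided Lipschitz} on the transformed diffusion
\begin{equation*}
\mathrm{d}Y_t=-\nabla\tilde f(Y_t)\,\mathrm{d}t+\sqrt2\,\mathrm{d}W_t,
\end{equation*}
with observable $\tilde\varphi$. The ingredients are the spring-term change of measure with Girsanov weights, the geometric randomization over truncation times $T_j$, and the unbiased QA-MLMC of Theorem~\ref{thm:QMLMC2}. Each query to $\nabla\tilde f$ costs $\mathcal{O}(1)$ queries to $\nabla f$ plus explicit computation of $\Phi$ and $\nabla\log\det\nabla\Phi$. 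Since $\Phi$ is radial, the latter reduces to scalar derivatives of $g$, so query complexity is preserved. Finally, I would check that the rates $\alpha,\beta,\gamma$ fall in the case $\beta\ge2\gamma$ exactly as in the dissipative theorem. The bound $\widetilde{\mathcal{O}}\big(\frac{SLd(L+\sqrt d)}{2S-\lambda}\epsilon^{-1}\big)$ then follows directly. The median-of-means boosting used there gives success probability $0.99$ from the variance bound via Chebyshev's inequality.
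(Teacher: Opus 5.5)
Your main line is the paper's: the paper proves Theorem~\ref{thm:QA-MLMC of heavy-tailed} only by invoking Theorem~\ref{thm:unbiased A-MLMC under weaker one-sided Lipschitz} for $f_h$. The hypotheses on $f_h$ (smoothness, dissipativity, Hessian-smoothness) come from Lemmas~\ref{lemma of Lipschitz Gradient}, \ref{lemma of dissipativity} and \ref{lem: Hessian smooth under transform}; $O_{\nabla f_h}$ is built from one call each to $O_h,O_{\nabla f},O_m,O_a$; and samples are pushed back through $h$. You are also right that this tacitly needs $\varphi\circ h$ to be globally Lipschitz. That property enters through the term $K^p\,\mathbb{E}[\|\widehat Y^f_{2N}-\widehat Y^c_{2N}\|^p]$ in Lemma~\ref{lem:bound of two path with R-D}, and the paper never checks it. The gap is that neither of your repairs works. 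As written, your argument gives the stated rate only under the extra hypothesis that $\varphi\circ h$ is globally Lipschitz, i.e.\ that the observable is Lipschitz in the transformed variable.

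\textbf{Remedy (1) is unavailable.} The map $h$ is fixed by the hypotheses, and for the polynomially tailed targets of this section, dissipativity of $f_h$ forces $\log g(r)\gtrsim r^2$. For the Student-$t$ potential, $g(r)=r^{\alpha}$ gives $f_h(y)\approx(\kappa\alpha+d)\log\|y\|$, which is not dissipative. By contrast, $g(r)=e^{br^2}$ gives $f_h(y)\approx\kappa b\|y\|^2$, so $\|\nabla h(y)\|=g'(\|y\|)\approx 2b\|y\|e^{b\|y\|^2}$. Your H\"older step would then need $\mathbb{E}[\|\nabla h(Y)\|^{2q}]<\infty$ for some $q>1$, which are exponential-square moments. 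Lemma~\ref{lem:bounds of p moment} does not provide these. Worse, they are infinite under $\pi_h(y)\propto\|y\|^{2-d}e^{-\kappa b\|y\|^2}$ (for large $\|y\|$) unless $\kappa>2q$, which never holds for the symmetric $a$-stable example, where $\kappa=a\le 2$.

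\textbf{Remedy (2) fails.} The Lipschitz constant on the ball is $\sup_{\|y\|\le R}\|\nabla h(y)\|\approx e^{bR^2}$. For $R=\Theta(\log(1/\epsilon))$ this is $e^{\Theta(\log^2(1/\epsilon))}$. Even at the smallest radius $R^2\asymp\log(1/\epsilon)$ that makes the sub-Gaussian tail negligible, it is still $\epsilon^{-\Theta(1)}$: polynomial, not polylogarithmic. It enters Theorem~\ref{thm:QMLMC2} through $K_2^{1/2}$, or, if $R$ is tied to $h_\ell$, it pushes $\beta$ below $2\gamma$. Either way the $\epsilon^{-1}$ rate is lost.

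\textbf{The obstruction is genuine.} Take an oscillating bounded Lipschitz observable such as $\varphi(x)=\sin(x_1)$. The fine and coarse values decorrelate on $\{g'(\|y\|)h_\ell\gtrsim 1\}$, whose $\pi_h$-mass is about $h_\ell^{\kappa}$. Heuristically, then, $V_\ell$ cannot decay faster than $h_\ell^{\min(2,\kappa)}$, so $\beta<2\gamma$ whenever $\kappa<2$. Closing the argument therefore needs either an added assumption on $\varphi\circ h$, or a new weighted strong-error estimate together with a restriction such as $\kappa>2$.
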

 \end{itemize}

\begin{table}[H]
\centering
\label{tab:comparison}
\renewcommand{\arraystretch}{1.25}
\begin{tabularx}{\textwidth}{c|c|c|c}
\noalign{\global\arrayrulewidth=1.5pt}
\cline{1-4}
\noalign{\global\arrayrulewidth=0.4pt}
\textbf{Method} & \textbf{Assumption}& \textbf{Biased} &  \textbf{Cost}  \\
\noalign{\global\arrayrulewidth=1.5pt}
\cline{1-4}
\noalign{\global\arrayrulewidth=0.4pt}
Multilevel ULD~\cite{Ge2019EstimatingNC}
&One-sided Lipschitz, $L$-Smooth
&biased
&$\widetilde{\mathcal{O}}(\epsilon^{-2})$\\
  \cline{1-4} 
MALA~\cite{Ge2019EstimatingNC} 
&One-sided Lipschitz, $L$-Smooth
& biased
& $\widetilde{\mathcal{O}}(\epsilon^{-2})$\\
\cline{1-4}
\makecell{Multilevel Quantum\\ Inexact ULD~\cite{Childs2022QuantumAF}}
&One-sided Lipschitz, $L$-Smooth& 
biased
 & $\widetilde{\mathcal{O}}(\epsilon^{-1})$\\
  \cline{1-4} 
 Quantum MALA~\cite{Childs2022QuantumAF}
 & One-sided Lipschitz, $L$-Smooth& biased 
 & $\widetilde{\mathcal{O}}(\epsilon^{-1})$\\
\noalign{\global\arrayrulewidth=1.5pt}
\cline{1-4}
\noalign{\global\arrayrulewidth=0.4pt}
\multirow{2}{*}{Theorem~\ref{thm:QA-MLMC under weaker one-sided Lipschitz}}
 & Dissipative, $L$-Smooth
 &\multirow{2}{*}{biased}
 & $\widetilde{\mathcal{O}}(\epsilon^{-1.5})$  \\
   \cline{2-2}   \cline{4-4}
 &Dissipative, $L$-Smooth, $L$-Hessian-Smooth&&$\widetilde{\mathcal{O}}(\epsilon^{-1})$\\
\cline{1-4}
\multirow{2}{*}{Theorem~\ref{thm:unbiased QA-MLMC}}
 & One-sided Lipschitz, $L$-Smooth
 & \multirow{2}{*}{unbiased}
 & $\widetilde{\mathcal{O}}(\epsilon^{-1.5})$\\
   \cline{2-2}   \cline{4-4}
 &One-sided Lipschitz, $L$-Smooth, $L$-Hessian-Smooth&&$\widetilde{\mathcal{O}}(\epsilon^{-1})$\\
  \cline{1-4} 
\multirow{2}{*}{Theorem~\ref{thm:unbiased A-MLMC under weaker one-sided Lipschitz} }
 &Dissipative, $L$-Smooth
 & \multirow{2}{*}{unbiased}
 & $\widetilde{\mathcal{O}}(\epsilon^{-1.5})$\\
   \cline{2-2}   \cline{4-4}
 &Dissipative, $L$-Smooth, $L$-Hessian-Smooth&&$\widetilde{\mathcal{O}}(\epsilon^{-1})$\\
\noalign{\global\arrayrulewidth=1.5pt}
\cline{1-4}
\noalign{\global\arrayrulewidth=0.4pt}
\end{tabularx}
\caption{Comparison of different algorithms for Gibbs expectation.}
\end{table}
\begin{remark}
More precisely, Theorem~\ref{thm:QA-MLMC under weaker one-sided Lipschitz}
and Theorem~\ref{thm:unbiased A-MLMC under weaker one-sided Lipschitz}
also require the weaker one-sided Lipschitz condition.
However, since the weaker one-sided Lipschitz property
is directly implied by the $L$-smooth assumption,
its effect on the conclusions of the theorems
amounts only to a multiplicative constant.
For the sake of brevity, this condition is therefore omitted here.
\end{remark}
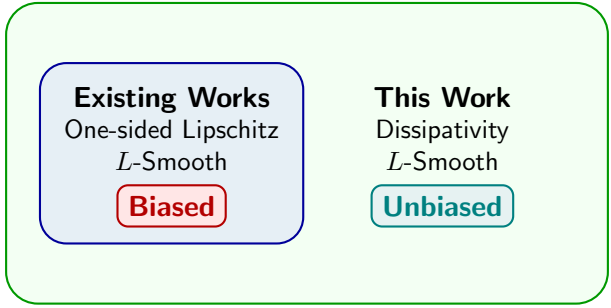
\begin{figure}[ht]
    \centering    
\begin{tikzpicture}[
    font=\sffamily,
    % This work
    set/.style={
        draw, thick, rounded corners=12pt,
        minimum width=8cm, minimum height=4cm,
        fill opacity=0.15
    },
    % Existing works
    subset/.style={
        draw, thick, rounded corners=10pt,
        minimum width=3.5cm, minimum height=2.4cm,
        fill opacity=0.2
    },
]

%This Work
\node[set, fill=green!30, draw=green!60!black] (big) at (0,0.8) {};

\node[align=center, line width=0pt, inner sep=2pt, 
      text depth=0pt, text height=1.5ex,
      execute at begin node=\setlength{\baselineskip}{1.1em}]
at (1.8, 0.7) {
    \textbf{This Work}\\
    \small   Dissipativity\\  
   \small $L$-Smooth
};

\node[draw=teal, fill=teal!10, thick, rounded corners=4pt,
      font=\sffamily\bfseries, text=teal, inner sep=4pt]
      at (1.8,0.1) {Unbiased};

%Existing Work

\node[subset, fill=blue!30, draw=blue!60!black] (small) at (-1.8,0.8) {};

\node[align=center, line width=0pt, inner sep=2pt, 
      text depth=0pt, text height=1.5ex,
      execute at begin node=\setlength{\baselineskip}{1.1em}]
at (-1.8, 0.7) {
 \textbf{Existing Works}\\
    \small One-sided Lipschitz\\
   \small $L$-Smooth
};

\node[draw=red!70!black, fill=red!10, thick, rounded corners=4pt,
      font=\sffamily\bfseries, text=red!70!black, inner sep=4pt]
      at (-1.8,0.1) {Biased};

\end{tikzpicture}
\caption{Extension of Applicable Regimes in This Work}
\end{figure}

\paragraph{\textbf{\textsf{Techniques}}}
The technical ingredients of our quantum algorithms are highlighted below.
\begin{itemize}
    \item \textbf{Unbiased quantum mean estimation}(Theorem~\ref{thm:Unbiased Quantum Mean Estimation})\\
   $\quad$ Unbiased quantum mean estimation forms the foundation of this work,
     providing the quadratic speedup that enables the quantum acceleration of all subsequent algorithmic developments.
    \item \textbf{Transforming biased estimators into unbiased ones}(Theorem~\ref{thm:unbiased estimator construct})\\
   $\quad$ If the original biased estimator satisfies certain conditions, we can convert it into an unbiased estimator with essentially no change in computational complexity.
       \item \textbf{Unbiased quantum-accelerated multilevel Monte Carlo}(Theorem~\ref{thm:QMLMC2})\\
    $\quad$ Quantum multilevel mean estimation combines multilevel Monte Carlo with quantum mean estimation.
    By constructing appropriate fine and coarse path couplings, it effectively accelerates mean estimation in scenarios where generating a single sample is costly.
    \item \textbf{Spring coupling SDE simulation and change-of-measure approach}(Algorithm~\ref{alg:classical level-l-mlmc with change of measure})\\
   $\quad$  By adding a “spring” between the fine path and the coarse path to draw them closer to each other, one can effectively prevent the difference between the two paths from growing exponentially. Moreover, introducing suitable changes of measure and applying the Girsanov theorem via Radon–Nikodym derivatives provides an equivalent representation of the original expectation of interest under these new measures.
     \item \textbf{Transformation of heavy-tailed distribution}(Section~\ref{section Transformation Map})\\
   $\quad$  By constructing a suitable transformation map, heavy-tailed distributions can be transformed into light-tailed ones with only a constant-factor increase in cost. Moreover, under certain conditions on the initial distribution, the transformed light-tailed distribution satisfies the desired properties.
    
\end{itemize}

\paragraph{\textbf{\textsf{Open questions}}}
Building on the results presented in this work, 
several promising directions remain open for further exploration:
\begin{itemize}

    \item \textbf{Extension to broader heavy-tailed distributions.}  
    The current estimation procedures for heavy-tailed distributions 
    still impose nontrivial conditions on the initial potential function~$ f $.  
    A promising direction is to develop more flexible transformation mappings 
    to accommodate a broader class of heavy-tailed distributions, 
    while preserving the quantum speedup.

    \item \textbf{Towards more general and practical settings.}  
    In many practical scenarios, 
    the oracle~$ O_{\nabla f} $ for evaluating the exact gradient 
    of the potential function may not be available.  
    Instead, we may only have access to a stochastic evaluation oracle, 
    which allows simultaneous queries at two input points 
    under the same stochastic realization.  
    Moreover, the potential function~$ f $ itself may exhibit 
    poor smoothness or even be non-differentiable.  
    These challenges motivate the integration of stochastic gradient estimation 
    and gradient approximation techniques to further enhance and generalize 
    the algorithmic framework.
\end{itemize}

\paragraph{\textbf{\textsf{Structure}}}
The structure of this thesis is as follows.

Section~\ref{sec:introduction} provides an introduction to the overall motivation
 and objectives of the work.
 
  Section~\ref{sec:Preliminaries} introduces the notation and definitions
used throughout the paper, and presents the quantum oracles that will
be employed in the quantum algorithms developed later.

Section~\ref{sec:Multilevel Monte Carlo with Quantum Acceleration} 
 provides a brief overview of 
  Monte Carlo and multilevel Monte Carlo methods, 
  together with their quantum accelerations. 
 It further develops an unbiased quantum-accelerated multilevel Monte Carlo framework by extending exisiting constructions of unbiased
estimators to eliminate the bias introduced in multilevel methods. The section also includes a brief introduction to Markov chain Monte Carlo (MCMC) methods.

Section~\ref{sec:dissipativity} develops a quantum approach for estimating Gibbs expectation under the one-sided Lipschitz condition using the QA-MLMC framework. 
Furthermore, by introducing a spring coupling term via a change-of-measure technique, 
the requirement is relaxed to dissipativity.
We also provide illustrative non-log-concave examples of functions $f$ satisfying these conditions.

Section~\ref{sec:unbiased} builds on Section~\ref{sec:dissipativity} to construct an unbiased estimator, 
which allows us to directly estimate $\mathbb{E}[\varphi(X_\infty)]$ 
without truncating at a finite time $T$ and approximating 
$\mathbb{E}[\varphi(X_T)]$. 
In particular, under the one-sided Lipschitz condition, the unbiased 
construction can be incorporated directly into the multilevel Monte 
Carlo path coupling. In contrast, under the weaker one-sided Lipschitz 
condition, an additional application of the MLMC 
technique is required to achieve unbiasedness.

Section~\ref{sec:heavy-tailed} extends the analysis to heavy-tailed distributions. 
It shows how such distributions can be transformed into ones with more
 favorable properties, and applies the most general result 
 from Section~\ref{sec:unbiased}  
 to construct a quantum algorithm for Gibbs expectation tailored to the
  heavy-tailed setting. Illustrative heavy-tailed examples of functions $f$ under assumptions are provided.

 Section~\ref{sec:application} presents several practical applications, 
including Bayesian mixture modeling, robust regression with correntropy loss, 
and the cosine-modulated quadratic well. 
It also includes a heavy-tailed example arising from truncated losses and capped payoffs in finance.

Section~\ref{sec:summary} summarizes the main results of the paper, discusses the dependence on the dimension, and outlines the limitations and future directions.

The appendices provide additional details: Appendix~\hyperref[Appendix A: Strong Error Analysis]{A} presents the strong error analysis, 
while Appendix~\hyperref[Appendix B: Technical Details for heavy tailed]{B} contains technical details for Section~\ref{sec:heavy-tailed}.

\section{Preliminaries}\label{sec:Preliminaries}

\subsection{Notation}
\begin{itemize}
    \item For a vector $v=\left(v_0,v_1,\cdots,v_{N-1}\right)$, we use $\|v\|$ to denote the 2-norm, i.e.,
    $\|v\|=\sqrt{v_0^2+v_1^2+\cdots+v_{N-1}^2}$;
  \item For vectors $u, v \in \mathbb{C}^N$, the standard inner product is defined as $$\langle u | v \rangle = \sum\limits_{i=0}^{N-1} \overline{u_i} v_i;$$
  \item For an operator $A$, $\|A\|=\mathrm{sup}_{\langle\psi|\psi\rangle=1}\|A|\psi\rangle\|$
is the operator norm (or the spectral norm) of $A$;   
  \item    $f=\mathcal{O}(g)$ means that there exists a constant $M > 0$ such that $f(n) \leq M g(n)$ for all $n$;    \item $f = \Omega(g)$ means that $g =\mathcal{O}(f)$; 
\item $f = \Theta(g)$ means that $g =\mathcal{O}(f)$ and $g = \mathcal{O}(f)$; 
\item In particular, in this work, unless otherwise specified, $f = \widetilde{\mathcal{O}}(g)$  means $f=\mathcal{O}(g\cdot\mathrm{polylog}(\epsilon^{-1}))$,
where $\epsilon$ denotes the target accuracy.
When the accuracy is expressed in terms of a variance parameter $\sigma$, 
the $\widetilde{\mathcal{O}}$ notation is understood analogously.
  \end{itemize}

\subsection{Preparatory Concepts}
Before introducing the main framework, 
we first present several important remarks and definitions 
that will be useful for the subsequent development.

In this paper, we use the Euler-Maruyama method to approximate the SDE~\eqref{SDE0}. 
Given a step size $h$, we define the sequence of random variables $\left\{\widehat{X}_n\right\}_{n\geq 0}$ by
\begin{align}\label{SDE:discrete}
    \widehat{X}_0=X_0,\quad \widehat{X}_{n+1}=\widehat{X}_n-\nabla f(\widehat{X}_n)h+\sqrt{2}\Delta W_n,
\end{align}
where $\Delta W_n=W_{(n+1)h}-W_{nh}$ are the Brownian increments. 
However, in fact, since the coefficient in front of the Brownian motion term is 
constant, the Euler-Maruyama method and the Milstein method coincide in this case.

In the multilevel method introduced later, 
we couple the fine and coarse paths by driving them with the same Brownian motion. 
The fine path is simulated with step size $h$, while the coarse path uses step size $2h$. 
More precisely, let $\widehat{X}^f_0 = X^f_0$ and $\widehat{X}^c_0 = X^c_0$. Then
\begin{align}\label{SDE:discrete couple}
 \widehat{X}^f_{2n+1}=&\widehat{X}^f_{2n}-\nabla f(\widehat{X}^f_{2n})h
 +\sqrt{2}\Delta W^f_{2n},\notag\\
  \widehat{X}^f_{2n+2}=&\widehat{X}^f_{2n+1}-\nabla f(\widehat{X}^f_{2n+1})h
 +\sqrt{2}\Delta W^f_{2n+1},\\
  \widehat{X}^c_{2n+2}=&\widehat{X}^c_{2n}-\nabla f(\widehat{X}^c_{2n})2h
 +\sqrt{2}\Delta W^c_{2n},\notag
\end{align}
where $\Delta W^c_{2n} = \Delta W^f_{2n} + \Delta W^f_{2n+1}$.
In the standard setting, we take $\widehat{X}_0^f = \widehat{X}_0^c$.
However, in Subsection~\ref{subsec:Unbiased Estimation under One-sided Lipschitz}, 
we introduce a method with $\widehat{X}_0^f \neq \widehat{X}_0^c$ 
to eliminate the bias arising from time truncation.

Moreover, lowercase letters are used to represent individual realizations obtained in the sampling procedure, i.e.,
\begin{align}\label{SDE:sample}
x_{n+1} = x_n - h \nabla f(x_n) + \sqrt{2h} \xi_n, 
\quad \xi_n \sim \mathcal{N}(0, I_d).    
\end{align}

In the discretization process, 
the final time horizon $T$ may not be an exact multiple of the step size $h$. 
    This issue is easily resolved by taking $N = \left\lfloor T / h \right\rfloor$, 
evolving the process for $N$ steps with step size $h$, 
and then  perform one additional step with step size $T - Nh$. 
This adjustment does not affect the resulting estimations. 
Therefore, in the subsequent discussion, 
we always assume that $T / h = N \in \mathbb{N}$.

 %To distinguish from $X_T$ in (\ref{SDE}), 
 %let $\tilde{\rho}_n$ denote the distribution of $x_n$ that evolves 
 %following  ULA as Algorithm \ref{alg:ULA}.

 %As $\gamma \rightarrow 0$, ULA recovers the Langevin diffusion (\ref{SDE}) in continuous-time.

\begin{definition}[$L$-Smoothness]\label{def:L-Smooth}
A function $ f : \mathbb{R}^d \to \mathbb{R} $ is said to be $ L $-smooth for some constant $ L > 0 $ if its gradient $ \nabla f $ is $ L $-Lipschitz continuous, i.e., for any $ x, y \in \mathbb{R}^d $,
$$
\|\nabla f(x) - \nabla f(y)\| \leq L \|x - y\|.
$$
\end{definition}

It follows from the definition of $ L $-smoothness that
$$
\|\nabla f(x)\| \leq \|\nabla f(0)\| + L\|x\| \quad \Rightarrow \quad \|\nabla f(x)\|^2 \leq 2\|\nabla f(0)\|^2 + 2L^2 \|x\|^2.
$$
The $L$-smoothness condition, 
plays a crucial role in the analysis of Langevin dynamics. 
It ensures the Lipschitz continuity of the drift term, 
which guarantees the existence and uniqueness of strong solutions to the SDE. 

Moreover,  
$L$-smoothness is also essential for establishing error bounds of discretization 
schemes like the Euler–Maruyama method, where it ensures that the approximation 
error remains controlled and help bounds
the discrepancy between fine 
and coarse paths in multilevel Monte Carlo.

\begin{definition}[$L$-Hessian-Smoothness]\label{def:L-Hessian-smooth}
A function $ f : \mathbb{R}^d \to \mathbb{R} $ is called 
$L$-Hessian-smooth if it is twice continuously differentiable and, 
for all $ x, y \in \mathbb{R}^d $,
\begin{align*}
    \|\nabla^2 f(x) - \nabla^2 f(y)\| \leq L \|x - y\|,
\end{align*}
where $\|\cdot\|$ denotes the operator norm of a matrix.
Equivalently, the Hessian of $f$ is $L$-Lipschitz continuous.
\end{definition}

\begin{definition}[One-sided Lipschitz / Strong convexity]\label{def:One-sided Lipschitz}
A differentiable function $ f : \mathbb{R}^d \to \mathbb{R} $ is said to be
one-sided Lipschitz with constant $ m > 0 $ if, for all $ x, y \in \mathbb{R}^d $,
\begin{align*}
\langle \nabla f(x) - \nabla f(y), x - y \rangle \geq m \|x - y\|^2.
\end{align*}
Equivalently, $f$ is $m$-strongly convex,  i.e., for any $ x, y \in \mathbb{R}^d$,
$$
f(x) \leq f(y) + \langle \nabla f(y), x - y \rangle - \frac{m}{2} \|x - y\|^2.
$$
\end{definition}

The one-sided Lipschitz condition there is in fact equivalent to strong convexity.
We refer to it as a “one-sided Lipschitz condition” 
to maintain consistency with the weaker one-sided Lipschitz condition introduced later.

Moreover, Definition~\ref{def:One-sided Lipschitz} also implies that
\begin{align}\label{eq:ing1}
    \langle \nabla f(x), x \rangle \geq m' \|x\|^2 - 2b \|\nabla f(0)\|^2, \quad \forall x \in \mathbb{R}^d,
\end{align}
for some constants $m' > 0$ and $b \geq 0$, which implies that $\nabla f$ is dissipative, as defined below.

\begin{definition}[Weaker One-sided Lipschitz]\label{def:Weaker One-sided Lipschitz}
A function $ f : \mathbb{R}^d \to \mathbb{R} $ is said to satisfy the weaker one-sided Lipschitz condition with parameter $ \lambda \geq 0 $ if, for all $ x, y \in \mathbb{R}^d $,
\begin{align*}
\langle \nabla f(x) - \nabla f(y), x - y \rangle \geq -\lambda \|x - y\|^2.
\end{align*}
\end{definition}

Note that the weaker one-sided Lipschitz condition can be directly implied by $L$-smoothness. Therefore, when $L$-smoothness holds, we may sometimes omit the weaker one-sided Lipschitz condition.

\begin{definition}[Dissipativity]\label{def:dissipative}
A function $ f : \mathbb{R}^d \to \mathbb{R} $ is said to satisfy the dissipativity condition  
if there exist constants $ \alpha, \beta > 0 $ such that for all $ x \in \mathbb{R}^d $,
$$
\langle x, \nabla f(x) \rangle \geq \alpha \|x\|^2 - \beta.
$$
\end{definition}

This condition plays a crucial role in guaranteeing the long-time stability of the associated Langevin diffusion.  
In particular, \cite{MATTINGLY2002185}
proves that the dissipativity condition is sufficient to guarantee 
the existence of a unique stationary distribution 
and ensures that the process  is geometrically ergodic with respect to this invariant measure.
\begin{lemma}[Ergodicity under Dissipativity]\label{lem:ergodic under dissipative}
Assume that $f$ is dissipativity and $L$-smooth, 
then the Langevin diffusion~\eqref{SDE0} is geometrically ergodic
 with invariant measure $\pi \propto e^{-f(x)}$.
\end{lemma}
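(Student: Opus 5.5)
The plan is a Lyapunov drift argument combined with a minorization condition on a compact set. Geometric ergodicity then follows from Harris' theorem, in the form used by \cite{MATTINGLY2002185}.

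\textbf{Step 1: well-posedness and invariance of $\pi$.} Because $f$ is $L$-smooth, the drift $-\nabla f$ is globally Lipschitz. Hence \eqref{SDE0} has a unique strong, non-explosive solution, and its transition semigroup $P_t$ is Feller. The generator is $\mathcal{L}g=\Delta g-\langle\nabla f,\nabla g\rangle$. A direct integration by parts shows that $\int \mathcal{L}g\, e^{-f}\d x=0$ for all test functions $g\in C_c^\infty$. Dissipativity gives integrability of $e^{-f}$: along rays, $\frac{d}{dr}f(r\theta)\ge \alpha r-\beta/r$, so $f(x)\ge \frac{\alpha}{2}\|x\|^2-\beta\log\|x\|-C$. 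Therefore $\pi\propto e^{-f}$ is a well-defined probability measure. It is invariant for $P_t$, which follows from symmetry of $\mathcal{L}$ in $L^2(\pi)$ together with non-explosion.

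\textbf{Step 2: Lyapunov function.} Take $V(x)=1+\|x\|^2$. Then
\begin{align*}
\mathcal{L}V(x)=2d-2\langle x,\nabla f(x)\rangle\le 2d+2\beta-2\alpha\|x\|^2=-2\alpha V(x)+2(d+\beta+\alpha).
\end{align*}
By Itô's formula and Gronwall's inequality (with localization justified by Step 1), this yields
\begin{align*}
P_tV\le e^{-2\alpha t}V+\frac{d+\beta+\alpha}{\alpha}.
\end{align*}
In particular, the sublevel sets $C_R=\{V\le R\}$ are compact and are visited with geometric drift.

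\textbf{Step 3: minorization on compacts.} For a fixed time $t_0>0$, I would show that there exist $\eta>0$ and a probability measure $\nu$ with $P_{t_0}(x,\cdot)\ge \eta\,\nu(\cdot)$ for all $x\in C_R$. Since the noise is nondegenerate ($\sqrt2 I$), the transition kernel has a continuous, strictly positive density $p_{t_0}(x,y)$. This follows from Girsanov's theorem relative to Brownian motion, which applies because the drift is Lipschitz; alternatively, it follows from hypoellipticity plus the support theorem. Taking $\eta\nu(\d y)=\inf_{x\in C_R}p_{t_0}(x,y)\mathbf{1}_{B}(y)\d y$ for a ball $B$ gives the bound, and continuity plus compactness make the infimum positive.

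\textbf{Step 4: conclusion.} Apply Harris' theorem to the skeleton chain $P_{t_0}$, using the drift from Step 2 with $R$ large and the minorization from Step 3. This gives $\|P_{nt_0}(x,\cdot)-\pi\|_{V}\le C V(x)\rho^n$ for some $\rho<1$. Extending to continuous time uses $P_sV\le C'V$ for $s\le t_0$, again from Step 2. Uniqueness of the invariant measure follows from the same contraction estimate. The main obstacle is Step 3, namely the positivity and continuity of the density, uniformly over compacts. I would handle it via the Girsanov density with respect to Wiener measure: a Novikov-type bound holds on short time intervals because $\|\nabla f(x)\|\le\|\nabla f(0)\|+L\|x\|$, and this gives a lower bound on $p_{t_0}$ by a Gaussian kernel times a constant that is uniform for $x\in C_R$.
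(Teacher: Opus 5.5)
Your proposal is correct and follows the route the paper relies on. The paper gives no proof of its own and simply cites \cite{MATTINGLY2002185}, whose argument is exactly this: a Lyapunov drift condition (your $V(x)=1+\|x\|^2$ with $\mathcal{L}V\le -2\alpha V+2(d+\beta+\alpha)$, obtained from dissipativity), plus a minorization on compact sublevel sets from a jointly continuous, strictly positive transition density, combined through a Harris/Meyn--Tweedie theorem.

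The one loosely stated point is the last sentence of Step 3. A Novikov-type bound by itself only gives equivalence of the time-$t_0$ law with a Gaussian; it does not give a uniform pointwise lower bound on $p_{t_0}$. This does not break the argument, because the joint continuity and strict positivity of the density you invoke just before are standard for nondegenerate additive noise with Lipschitz drift, and they suffice for the minorization.
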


Since the one-sided Lipschitz condition also implies dissipativity,
it follows as a corollary that the combination of the one-sided Lipschitz condition and $L$-smoothness
ensures the unique existence of a stationary state and geometric ergodicity.

\begin{figure}[ht]
\centering
    
    \begin{tikzpicture}[x=0.75pt,y=0.75pt,yscale=-1,xscale=1]

    \draw (75,20) -- (215,20) -- (215,45) -- (75,45) -- cycle;
    \draw (82,24) node [anchor=north west, inner sep=0.75pt] {One-sided Lipschitz};
    
    \draw (290,20) -- (410,20) -- (410,45) -- (290,45) -- cycle;
    \draw (295,24) node [anchor=north west, inner sep=0.75pt] {Strong convexity};
    
    \draw (50,85) -- (240,85) -- (240,110) -- (50,110) -- cycle;
    \draw (55,89) node [anchor=north west, inner sep=0.75pt] {Weaker one-sided Lipschitz};
    
    \draw (290,85) -- (410,85) -- (410,110) -- (290,110) -- cycle;
    
    \draw (110,150) -- (180,150) -- (180,175) -- (110,175) -- cycle;
    \draw (115,154) node [anchor=north west, inner sep=0.75pt] {$L$-smooth};
    
    \coordinate (C_osl)  at ($(75,20)!0.5!(215,45)$);
    \coordinate (C_sc)   at ($(290,20)!0.5!(410,45)$);
    \coordinate (C_wosl) at ($(50,85)!0.5!(240,110)$);
    \coordinate (C_diss) at ($(290,85)!0.5!(410,110)$);
    \coordinate (C_ls)   at ($(110,150)!0.5!(180,175)$);
    
    \node at (C_diss) {Dissipativity};

    \coordinate (M_top) at ($($(215,20)!0.5!(215,45)$)!0.5!($(290,20)!0.5!(290,45)$)$);
    \coordinate (M_mid) at ($($(240,85)!0.5!(240,110)$)!0.5!($(290,85)!0.5!(290,110)$)$);
    
    \coordinate (V_left1)  at ($(C_osl)!0.5!(C_wosl)$); % between OSL and WOSL
    \coordinate (V_left2)  at ($(C_wosl)!0.5!(C_ls)$);  % between WOSL and L-smooth
    \coordinate (V_right1) at ($(C_sc)!0.5!(C_diss)$);  % between SC and Diss
    
    \node at (M_top) {$\Longleftrightarrow$};
    
    \node at (M_mid) {$\iff$};
    \node at ($(M_mid)+(0,0pt)$) {$\times$}; 

    \node at (V_left1)  [rotate=90] {$\Longleftarrow$};
    \node at (V_left2)  [rotate=90] {$\Longrightarrow$};
    \node at (V_right1) [rotate=90] {$\Longleftarrow$};
    
    \end{tikzpicture}
    \caption{Relationships among the above several definitions}
\end{figure}
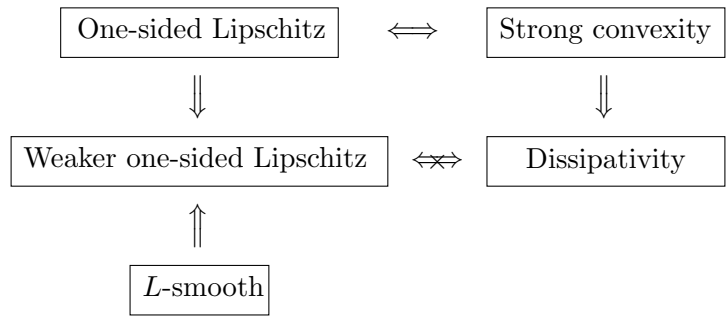

\subsection{Oracles in Quantum Algorithms}~\label{sec:(sub)oracles}
In quantum algorithms, an oracle is a black-box unitary operator that provides query access,
and the computational cost is often measured in terms of the number of oracle queries.

In Algorithms~\ref{alg:Level l sample with change-of-measure QA-MLMC},
\ref{alg:l-sample for QA-MLMC(unbiased)}, and
\ref{alg:Quantum Transformed Langevin Algorithm},
several types of oracles are employed.
For clarity, we introduce and specify these oracles here before presenting the algorithms.

\paragraph{\textbf{\textsf{Common Oracles}}}
The following oracles are shared by all algorithms considered in this paper:
\begin{itemize}
    \item \textbf{Sampling oracle} $O_I : |0\rangle \mapsto \sum\limits_{x} \sqrt{\rho_0(x)} |x\rangle$, which prepares a quantum state encoding the initial distribution $\rho_0$;
    \item \textbf{Gradient oracle} $O_{\nabla f} : |x\rangle |0\rangle \mapsto |x\rangle |\nabla f(x)\rangle$, which outputs the gradient of $f$ at input $x$;
    \item \textbf{Query oracle} $O_\varphi :|x\rangle|0\rangle\mapsto |x\rangle|\varphi(x)\rangle$, which returns $\varphi(x)$, the quantity of interest;
   \item \textbf{Randomness oracle} $O_W : |s\rangle |0\rangle \mapsto |s\rangle |\xi_s\rangle$, where $|\xi_s\rangle$ encodes a pseudo-random Gaussian vector in $\mathbb{R}^d$, generated deterministically from the classical seed $s$.
    Note that in classical MLMC, 
we drive both the fine and coarse paths using the same random variable, 
which induces a strong correlation between them and helps reduce the variance.
However, in quantum algorithms, 
due to the quantum no-cloning theorem, 
we cannot copy the randomly generated variable 
$|\xi\rangle$
for use in both the coarse and fine branches.
Therefore, we use a randomness oracle, where the seed $s$ is classically sampled, 
and the oracle can be used to generate pseudo-random vectors.
\end{itemize}

\paragraph{\textbf{\textsf{Algorithm-Specific Oracles}}}
In addition to the common oracles above, each algorithm employs
certain algorithm-specific oracles:

\begin{itemize}
  \item Algorithm~\ref{alg:Level l sample with change-of-measure QA-MLMC}
  additionally uses the Radon-Nikodym derivative oracle $$O_R : |y_1\rangle|y_2\rangle|v\rangle|h\rangle|0\rangle \mapsto |y_1\rangle|y_2\rangle|v\rangle|h\rangle|R(y_1,y_2,v,h)\rangle,$$
which returns the single-step Radon--Nikodym derivative induced by the change of measure;
  \item Algorithm~\ref{alg:Quantum Transformed Langevin Algorithm}
employs the following oracles:
\begin{itemize}
  \item Query oracle $O_h$ for the transformation function $h$: $O_h|x\rangle|0\rangle = |x\rangle|h(x)\rangle$, which provides query access to the transformation function $h$;
  \item Correction oracle $O_a:O_a|x\rangle|y\rangle = |x\rangle|y - \nabla \mathrm{log}|\det J_h(x)|\rangle;$, which computes the additional correction term arising from the transformation;
  \item Jacobian--vector product oracle $O_m$: $O_m|x\rangle|y\rangle = |x\rangle|J_h(x)^\top y\rangle$, which returns the product of the Jacobian of the transformation with a given vector.
\end{itemize}
The details of the transformation function, the associated correction term,
and the Jacobian are given in Section~\ref{sec:(subsec)quantum sampling-heavy}, where the quantum transformed Langevin
dynamics is introduced.

\end{itemize}

\section{ Multilevel Monte Carlo with Quantum Acceleration}
\label{sec:Multilevel Monte Carlo with Quantum Acceleration}

The main techniques used in this work for estimating Gibbs expectations are the
quantum mean estimation method and the quantum-accelerated multilevel
Monte Carlo method. Therefore, in this section we first review the classical Monte Carlo method and its quantum counterpart, namely quantum mean estimation, as well as the classical multilevel Monte Carlo method and its quantum acceleration.

Beyond this review, we further extend the unbiased estimator construction
framework proposed in \cite{Ozgul2025QuantumSF,Sidford2023QuantumSF}. In
particular, in Subsection~\ref{subsec:Quantum-accelerated unbiased multilevel Monte Carlo method},
we present a more general construction of unbiased estimators and introduce an
unbiased quantum-accelerated multilevel Monte Carlo method.

Finally, to prepare for the Gibbs expectation problem studied later, we
introduce Markov Chain Monte Carlo in
Subsection~\ref{subsec:Markov Chain Monte Carlo}.

\subsection{Quantum Mean Estimation Method}

Monte Carlo method is a broad class of computational algorithms 
that rely on repeated random sampling to obtain numerical results~\cite{Metropolis1949TheMC}.
It is widely used for numerical integration, optimization, and simulating complex systems across physics, finance, statistics, and engineering.

In classical Monte Carlo methods,
our goal is to estimate $\E[P]$, 
where $P$ is a random variable defined on the probability space $(\Omega,\mathcal{F},{\mathbb{P}})$ 
with the assumption that $\operatorname{Var}(P)<\infty$. 
We can use the following Monte Carlo Simulation to estimate it.
 \begin{algorithm}[H]
  \begin{algorithmic}[1]
    \caption{\algname{Monte Carlo Simulation}}
    \label{alg:classical monte carlo}
    \item \textbf{Input}: Random variable $P(\omega)$, number of samples $N$\\
    \textbf{Output}: An estimate of $\mathbb{E}[P]$
    \STATE \textbf{For} $i = 1$ \textbf{to} $N$
     \begin{itemize}
       \item Sample $\omega^{(n)} \sim \mathbb{P}$ from probability space $(\Omega, \mathcal{F}, \mathbb{P})$ independently 
     \item  Compute $P(\omega^{(n)})$
     \end{itemize}   
    \textbf{End for}
    \STATE Compute estimate: $\tilde{\mu} =\frac{1}{N} \sum\limits_{i=1}^{N} P(\omega^{(n)})$
    \STATE \textbf{Return} $\tilde{\mu}$
  \end{algorithmic}
\end{algorithm}
Since the $\omega^{(n)}$ are all independent, we have
\begin{align*}
    \operatorname{Var}\left(\frac{1}{N} \sum\limits_{i=1}^{N} P(\omega^{(n)})\right)=\frac{\operatorname{Var}(P)}{N},
\end{align*}
and by Chebyshev's inequality:
\begin{align*}
    {\mathbb{P}}\left[\left|\frac{1}{N} \sum\limits_{i=1}^{N} P(\omega^{(n)})-\E[P]\right|\geq \epsilon\right]\leq \frac{1}{\epsilon^2}\E\left[\left|\frac{1}{N} \sum\limits_{i=1}^{N} P(\omega^{(n)})-\E[P]\right|^2\right]=\frac{\operatorname{Var}(P)}{\epsilon^2 N}.
\end{align*}
So when $N=\mathcal{O}\left(\frac{\operatorname{Var}(P)}{\epsilon^2}\right)$, 
we can estimate $\E[P]$ with additive error $\epsilon$ and success probability at least $0.99$.

In the above algorithm, the success probability is lower bounded by 0.99. 
Sometimes we desire a higher success probability, 
in which case we can use the powering lemma below to boost it arbitrarily close to 1 by repetition.
\begin{lemma}[Powering Lemma \cite{Jerrum1986RandomGO}]
\label{lem:powering lemma}
Let $\mathcal{A}$ be a (classical or quantum) algorithm that outputs an estimate 
 $\tilde{\mu}$ of a target quantity $\mu$,
and satisfies $|\mu - \tilde{\mu}| \leq \epsilon$ 
 with success probability at least $1-\gamma$, 
 for some fixed $\gamma < \frac{1}{2}$. Then for any $\delta > 0$, 
 it suffices to repeat $\mathcal{A}$ $\mathcal{O}(\mathrm{log} 1/\delta)$ times 
 and take the median to obtain an estimate of $\mu$ 
 with additive error $\epsilon$ and success probability at least $1-\delta$.
\end{lemma}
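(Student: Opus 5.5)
The plan is the standard median-of-means amplification argument. Run $\mathcal{A}$ independently $k=\mathcal{O}(\log(1/\delta))$ times to obtain estimates $\tilde{\mu}_1,\dots,\tilde{\mu}_k$, and output their median $\tilde{\mu}$. For a quantum algorithm, each run is a fresh execution followed by measurement, so the outputs are independent classical random variables. Define indicator variables $Z_i=\mathbbm{1}\{|\tilde{\mu}_i-\mu|>\epsilon\}$. These are i.i.d. Bernoulli with mean $p\le\gamma<\tfrac12$.

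The key deterministic observation is this: if the median lies outside $[\mu-\epsilon,\mu+\epsilon]$, then at least half of the $\tilde{\mu}_i$ lie outside it. Indeed, if the median exceeds $\mu+\epsilon$, then at least $\lceil k/2\rceil$ of the samples are $\ge$ the median, and hence also exceed $\mu+\epsilon$; the case below $\mu-\epsilon$ is symmetric. Consequently
\[
\mathbb{P}\bigl[|\tilde{\mu}-\mu|>\epsilon\bigr]\le \mathbb{P}\Bigl[\textstyle\sum_{i=1}^k Z_i\ge k/2\Bigr].
\]

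Next, apply Hoeffding's inequality, or equivalently a Chernoff bound. Since $\mathbb{E}\sum_i Z_i\le\gamma k$, the event $\sum_i Z_i\ge k/2$ requires a deviation of at least $(\tfrac12-\gamma)k$ above the mean. This gives
\[
\mathbb{P}\Bigl[\textstyle\sum_i Z_i\ge k/2\Bigr]\le \exp\bigl(-2(\tfrac12-\gamma)^2k\bigr).
\]
Choosing $k=\bigl\lceil \log(1/\delta)/(2(\tfrac12-\gamma)^2)\bigr\rceil$ makes this at most $\delta$. Because $\gamma$ is a fixed constant below $\tfrac12$, this is $\mathcal{O}(\log(1/\delta))$ repetitions, as claimed.

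There is no real obstacle here. The only points needing care are the independence of the repeated runs, which holds in the quantum case because each run ends in a measurement, and stating the median-counting step correctly when $k$ is even. Taking $k$ odd removes that ambiguity.
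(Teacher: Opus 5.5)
Your proof is correct and is the standard argument. The paper does not reprove this lemma but only cites Jerrum--Valiant--Vazirani, and their proof is exactly the median trick you give: a failing median forces at least half of the independent runs to fail, and a Chernoff/Hoeffding bound makes that event have probability at most $\exp\bigl(-2(\tfrac12-\gamma)^2k\bigr)\le\delta$ for $k=\mathcal{O}(\log(1/\delta))$. The only nitpicks are that this is the median trick rather than median-of-means, and that $k$ should be taken at least $1$ and odd; neither affects the $\mathcal{O}(\log(1/\delta))$ repetition count.
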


Quantum computing offers a more efficient approach for estimating stochastic quantities. Using a quantum computer, the number of operations can be reduced almost quadratically compared to the classical bound.
\cite{Brassard2000QuantumAA} proposed quantum amplitude estimation, 
laying the foundation for the subsequent development of  quantum mean estimation. 
Early research mainly focused on approximating the mean of a uniform distribution;
for example, Heinrich proposed an asymptotically optimal quantum algorithm for this problem~\cite{Heinrich2001QuantumSW},
which uses $\mathcal{O}(\epsilon^{-1})$ queries to achieve an additive error of $\epsilon$. 
Montanaro extended Heinrich's method, 
 enabling quantum mean estimation to be applied to general variables with finite variance~\cite{Montanaro2015QuantumSO}. 
Furthermore, Kothari and O'Donnell proposed a new quantum mean estimation method~\cite{Kothari2022MeanEW},
which allows for estimation with a cost of $\mathcal{O}(\delta/\epsilon)$
in the case of ``having the code,'' i.e., when we have access to
a unitary circuit ${\mathbb{P}}$ for the distribution $p$ such that
\begin{align*}
{\mathbb{P}}|\vec{0}\rangle
= \sum_{\omega\in\Omega}\sqrt{p(\omega)}\,|\omega\rangle|\text{garbage}_\omega\rangle,
\end{align*}
as well as controlled-$\mathcal{Y}$ and controlled-$\mathcal{Y}^\dagger$, where
$\mathcal{Y}$ is a unitary circuit satisfying
\begin{align*}
\mathcal{Y}|\omega\rangle |0^b\rangle = |\omega\rangle|y(\omega)\rangle
\end{align*}
for all $\omega\in \Omega$.

 In this paper, we primarily adopt the following unbiased mean estimation approach, as it provides an unbiased estimator.

\begin{theorem}[Unbiased Quantum Mean Estimation~\cite{Sidford2023QuantumSF}]
    \label{thm:Unbiased Quantum Mean Estimation}
For a  $r$-dimensional random variable  $X$ with 
$\operatorname{Var}(X) \leq \sigma^2$  and some  $\hat{\sigma} \geq 0$,  
suppose we are given access to its quantum sampling oracle  $O_X$.  
Then, there is a procedure that uses 
 $\widetilde{O}\left(\frac{r^{\frac{1}{2}}\sigma}{\hat{\sigma}}\right)$  
  queries to $O_X$  and outputs an unbiased estimate  $\hat{\mu}$ 
  of the expectation $\mu$  satisfying  
  $\operatorname{Var}(\hat{\mu}) \leq \hat{\sigma}^2.$

\end{theorem}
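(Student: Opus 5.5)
The plan is to build the estimator from a biased quantum mean estimator with good tail behaviour, and then remove its bias by a randomized multilevel (Blanchet--Glynn type) telescoping correction, the same device used for the time-truncation bias later in the paper. The starting point is a known result for ``having the code'': given a unitary $O_X$ preparing $\sum_\omega\sqrt{p(\omega)}|\omega\rangle|X(\omega)\rangle$, the Kothari--O'Donnell estimator (applied coordinatewise, or via a multivariate quantum mean estimator) uses $\widetilde{\mathcal O}(r^{1/2}\sigma/\epsilon\cdot\log(1/\delta))$ queries. It outputs $\tilde\mu$ with $\|\tilde\mu-\mu\|\le\epsilon$ with probability $1-\delta$, and it has sub-Gaussian-type tails. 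Combining the median trick of Lemma~\ref{lem:powering lemma} with a truncation to a ball of radius $\mathrm{poly}(\sigma,\text{scale})$, I would upgrade this to a \emph{mean-squared-error} guarantee: $\E\|\tilde\mu_\epsilon-\mu\|^2\le\epsilon^2$ at cost $C(\epsilon)=\widetilde{\mathcal O}(r^{1/2}\sigma/\epsilon)$.

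\textbf{Debiasing.} Set $\epsilon_j=2^{-j}\epsilon_0$ with $\epsilon_0=\hat\sigma$. Draw $J\sim\mathrm{Geom}(1/2)$, so that $p_j=\mathbb P(J=j)=2^{-j}$ for $j\ge1$. Output
\[
\hat\mu=\tilde\mu_{\epsilon_0}+\frac{\tilde\mu_{\epsilon_J}-\tilde\mu_{\epsilon_{J-1}}}{p_J},
\]
where the two estimates inside the difference are generated independently or coupled. Telescoping gives $\E\hat\mu=\lim_j\E\tilde\mu_{\epsilon_j}=\mu$, since each $\E\tilde\mu_{\epsilon_j}$ is within $\epsilon_j\to0$ of $\mu$. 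For the variance, the base term contributes $\le \hat\sigma^2$ up to a constant. The correction term contributes
\[
\sum_j p_j^{-1}\,\E\|\tilde\mu_{\epsilon_j}-\tilde\mu_{\epsilon_{j-1}}\|^2\lesssim\sum_j 2^{j}\epsilon_j^2=\epsilon_0^2\sum_j2^{-j}=\mathcal O(\hat\sigma^2).
\]
Rescaling $\epsilon_0$ by a constant gives $\operatorname{Var}(\hat\mu)\le\hat\sigma^2$.

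\textbf{Cost.} The expected number of queries is
\[
\sum_j p_j\,C(\epsilon_j)\approx\sum_j2^{-j}\,r^{1/2}\sigma 2^{j}/\hat\sigma\cdot\mathrm{polylog}.
\]
Each term is of equal size, so the naive sum diverges. To fix this I would choose a slightly heavier level distribution, $p_j\propto 2^{-j}j\log^2 j$, or equivalently cap $J$ at $J_{\max}=\mathcal O(\log(\sigma/\hat\sigma))$ and handle the residual bias separately. With the tail-weighted choice, the variance sum becomes $\sum_j 2^{-j}/(j\log^2 j)\cdot\epsilon_0^2 2^{j}\cdot$const, which still converges. The cost sum becomes $\sum_j j\log^2 j\cdot 2^{-j}\cdot 2^{j}$, which grows only logarithmically in the effective number of levels. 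That number is $\mathcal O(\log(\sigma/\hat\sigma))$, because once $\epsilon_j$ falls below machine-level scale the quantum estimator can be taken exact. Both logarithmic losses are absorbed into $\widetilde{\mathcal O}$, so the total cost is $\widetilde{\mathcal O}(r^{1/2}\sigma/\hat\sigma)$.

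\textbf{Main obstacle.} The delicate step is the first one: turning high-probability error bounds into MSE bounds for every $\epsilon_j$ with only polylogarithmic overhead. I would handle the failure event, of probability $\delta_j$, by clipping $\tilde\mu$ to a ball around a cheap classical preliminary estimate of radius $\mathcal O(\sigma)$. This bounds the failure contribution to the second moment by $\delta_j\sigma^2$, and taking $\delta_j=\epsilon_j^2/\sigma^2$ costs only a $\log(\sigma/\epsilon_j)$ factor. Clipping introduces bias, but that bias is also $\mathcal O(\epsilon_j)$, so the telescoping argument still goes through.
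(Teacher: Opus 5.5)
Your architecture is the right one: an MSE-controlled biased quantum estimator followed by a geometric randomized telescoping correction. The paper imports this statement from \cite{Sidford2023QuantumSF}, whose argument has exactly this form, and it reproduces the debiasing step as Theorem~\ref{thm:unbiased estimator construct} (Algorithm~\ref{alg:unbiased estimator}).

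Your cost analysis, however, has a genuine gap. With $p_j=2^{-j}$ and $\epsilon_j=2^{-j}\epsilon_0$ you correctly find that every level costs the same, so the expected cost is infinite. Both of your repairs fail.
\begin{itemize}
\item Making the level law heavier, $p_j\propto 2^{-j}j\log^2 j$, makes the cost series worse, not better. Now $\sum_j p_jC(\epsilon_j)\propto\sum_j j\log^2 j$ diverges faster than before, and truncated at $J$ levels it is of order $J^2\log^2 J$, not logarithmic in $J$. The heavier tail only helps the variance series, which was already convergent ($\sum_j 2^{-j}$, as your own display shows).
\item Capping $J$ at $J_{\max}$ leaves a bias of order $\epsilon_{J_{\max}}$, which destroys exact unbiasedness, the whole content of the statement. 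The appeal to a ``machine-level scale'' below which the quantum estimator is exact is not available: no finite number of queries returns the exact mean when $\sigma>0$. Nothing ties the number of levels to $\log(\sigma/\hat\sigma)$ either.
\end{itemize}

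The missing idea is that you are not at a borderline case. You only need to decouple the accuracy schedule from the sampling law. The level-$j$ difference has second moment $\mathcal{O}(\epsilon_j^2)$ and cost $\widetilde{\mathcal O}(r^{1/2}\sigma/\epsilon_j)$. So with $p_j=2^{-j}$ and $\epsilon_j=2^{-\rho j}\epsilon_0$, the variance series is $\sum_j 2^{(1-2\rho)j}$ and the cost series is $\sum_j 2^{(\rho-1)j}$. Both are geometric and converge exactly when $\tfrac12<\rho<1$, for example $\rho=3/4$.

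This is precisely the choice $\rho\in(\tfrac12,\tfrac1p)$ with $p=1$ in Algorithm~\ref{alg:unbiased estimator}. Equivalently, you could keep $\epsilon_j=2^{-j}\epsilon_0$ and use the lighter law $p_j\propto 2^{-3j/2}$. Even with the $\log(\sigma/\epsilon_j)=\mathcal O(j+\log(\sigma/\hat\sigma))$ overhead from your choice $\delta_j=\epsilon_j^2/\sigma^2$, the cost series $\sum_j 2^{(\rho-1)j}\big(j+\log(\sigma/\hat\sigma)\big)$ still converges.

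With this change, the rest of your argument goes through:
\begin{itemize}
\item the telescoping, justified by $\sum_j\E\|\tilde\mu_{\epsilon_j}-\tilde\mu_{\epsilon_{j-1}}\|<\infty$;
\item your variance bound;
\item your clipping step, which converts high-probability guarantees into MSE guarantees.
\end{itemize}
The resulting query bound holds in expectation, which is also all the cited result provides.

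A smaller point: applying the Kothari--O'Donnell estimator coordinatewise costs order $r\sigma/\epsilon$ in the worst case. The $r^{1/2}$ dependence genuinely requires a multivariate quantum mean estimator.
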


In classical Monte Carlo,  suppose that
$\operatorname{Var}(X)\leq \sigma^2$ and we aim to construct an estimator
$\hat{\mu}_c$ of $\mu=\mathbb{E}[X]$ whose variance is bounded by  $\hat{\sigma}^2$. 
The empirical mean with $N$ independent samples is $$\hat{\mu}_c=\frac{1}{N}\sum_{i=1}^N X_i,$$ and 
$$
\operatorname{Var}(\hat{\mu}_c)=\frac{\operatorname{Var}(X)}{N}
\leq \frac{\sigma^2}{N}.
$$
Therefore, achieving $\operatorname{Var}(\hat{\mu}_c)\leq \hat{\sigma}^2$
requires
$$
N=\Theta\left(\frac{\sigma^2}{\hat{\sigma}^2}\right)
$$
samples.

In contrast to Theorem~\ref{thm:Unbiased Quantum Mean Estimation}, the classical
procedure exhibits a quadratic dependence on the target variance level
$\hat{\sigma}^2$, whereas the quantum estimator achieves the same variance
guarantee using only
$\widetilde{\mathcal{O}}\left(r^{\frac{1}{2}}\sigma/\hat{\sigma}\right)$ oracle queries.

\subsection{Quantum-accelerated Multilevel Monte Carlo Method}

Heinrich first introduce the multilevel variance reduction technique in~\cite{Heinrich1998MonteCC} and then develop it to 
multilevel Monte Carlo (MLMC) method in~\cite{Heinrich2001MultilevelMC}
and appliy to stochastic solution of integral equations.
Then Giles introduced MLMC to simulate SDEs~\cite{41cc5c22-cf88-3432-8b5a-b09edd90ee7a} and 
made further developments~\cite{Heinrich2001MultilevelMC,Lester2014ExtendingTM,Fang2020AdaptiveEM,Giles20192AA} .

The control variate is a classical variance reduction technique 
used to improve the efficiency of Monte Carlo methods~\cite{Glasserman2003MonteCM}.
 Suppose our goal is to estimate $\mathbb{E}[f]$,
  and we are given a control variate $g$ that is well correlated with 
  $f$ and whose expectation $\mathbb{E}[g]$ is known. 
  In this case, we can construct the following unbiased estimator of 
  $\mathbb{E}[f]$ using $N$ independent samples $\omega^{(n)}$:
$$
N^{-1} \sum_{n=1}^N
\left\{ f^{(n)} - \lambda \left( g^{(n)} - \mathbb{E}[g] \right) \right\}.
$$
The optimal value for $\lambda$ is $\rho \sqrt{\operatorname{Var}(f) / \operatorname{Var}(g)}$,
where $\rho$ is the correlation between $f$ and $g$, and the variance 
of the control variate estimator is reduced by factor $1-\rho^2$ 
compared to the standard estimator.

The idea behind MLMC is similar and can be seen as a generalization of the control variate method.
Consider a sequence of random variables $P_0, P_1, \ldots, P_{L-1}$ that provide increasingly accurate approximations of $P_L$,
with computational cost also increasing with the level. Then we have
$$
\mathbb{E}[P_L] = \mathbb{E}[P_0] + \sum_{\ell=1}^L \mathbb{E}[P_\ell-P_{\ell-1}].
$$
This leads to the following unbiased estimator for 
$\mathbb{E}[P_L]$,
$$
N_0^{-1} \sum_{n=1}^{N_0} P_0^{(0,n)} \ + \ 
 \sum_{\ell=1}^L \left[
N_\ell^{-1} \sum_{n=1}^{N_\ell} \left(P_\ell^{(\ell,n)} - P_{\ell-1}^{(\ell,n)}\right)
\right].
$$
Here, the level index $\ell$ in the superscript $(\ell,n)$ 
indicates that the samples used for each level of correction are independent.

By choosing $N_\ell \propto \sqrt{\frac{V_\ell}{\mathcal{C}_\ell}} $,
the total cost is minimized subject to the variance being fixed and bounded by $\epsilon^2$. 
Here $C_0$ and $V_0$ denote the cost and variance of $P_0$, 
while $\mathcal{C}_\ell$ and $V_\ell$ denote the cost and variance of $P_\ell - P_{\ell-1}$.

More precisely, the algorithm and theorem of MLMC can be formally stated as follows.

\begin{algorithm}[H]
  \begin{algorithmic}[1]
    \caption{\algname{Multilevel Monte Carlo (MLMC)}}
    \label{alg:classical MLMC}

    \item \textbf{Input}:  Sequence of random variables $P_0, P_1, \ldots, P_L$,
    number of samples $N_0, N_1, \ldots, N_L$

    \textbf{Output}: An estimate of $\mathbb{E}[P]$

    \STATE \textbf{For} $\ell = 0$ \textbf{to} $L$
    \begin{itemize}
        \item  Generate $N_\ell$ independent samples
   $ \{ P_\ell^{(\ell, n)} - P_{\ell-1}^{(\ell, n)} \}_{n=1}^{N_\ell},$
    with $P_{-1}= 0$
 \item Compute     $$
    Y_\ell = \frac{1}{N_\ell} \sum_{n=1}^{N_\ell} \left( P_\ell^{(\ell,n)} - P_{\ell-1}^{(\ell,n)} \right)
    $$
    \end{itemize}
\textbf{End for}

    \STATE \textbf{Return} $Y=\sum\limits_{\ell=0}^L Y_\ell$
  \end{algorithmic}
\end{algorithm}

\begin{theorem}[Multilevel Monte Carlo]\label{thm:MLMC}
    Let $P$ denote a random variable whose expectation we want to estimate.
    Suppose $P_0, P_1,\ldots$ is a sequence that  approximation $P$.
   
    If there exist independent estimators $Y_\ell$ based on $N_\ell$ 
Monte Carlo samples,  each with expected cost $\mathcal{C}_\ell$, and positive constants 
$\alpha, \beta, \gamma$ such that 
$\alpha\geq{\textstyle \frac{1}{2}}\mathrm{min}(\beta,\gamma)$ and
    \begin{itemize}
        \item $\left| \mathbb{E}[P_\ell - P] \right|\ = \mathcal{O}(2^{-\alpha \ell})$;
        \item $\mathbb{E}[Y_\ell]\ = \left\{ \begin{array}{ll}
\mathbb{E}[P_0],                     & \ell=0, \\
\mathbb{E}[P_\ell - P_{\ell-1}], & \ell>0;
\end{array}\right.
$
        \item $\operatorname{Var}(Y_\ell)\ =\mathcal{O}\left(  2^{-\beta \ell}\right)$;
        \item $\mathcal{C}_\ell\ =\mathcal{O}\left(  2^{\gamma \ell}\right)$.
    \end{itemize}
    Then for any fixed $\epsilon < 1/e$,
    there are values $L$ and $N_\ell$ for which the multilevel estimator
$$
Y = \sum_{\ell=0}^L Y_\ell,
$$
has a mean-square-error with bound
$$
MSE \equiv \mathbb{E}\left[ \left(Y - \mathbb{E}[P]\right)^2\right] < \varepsilon^2
$$
with a computational complexity:
        $$
    \begin{cases}
        \mathcal{O}(\epsilon^{-2}), & \gamma < \beta,\\
        \mathcal{O}(\epsilon^{-2}(\mathrm{log}\epsilon)^2), & \gamma = \beta,\\
        \mathcal{O}(\epsilon^{-2 - (\gamma - \beta)/\alpha}), & \gamma > \beta.
    \end{cases}
    $$
\end{theorem}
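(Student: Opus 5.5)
The plan follows Giles' standard argument. The total MSE splits as squared bias plus variance,
\begin{align*}
\mathbb{E}\left[(Y-\mathbb{E}[P])^2\right] = \left(\mathbb{E}[P_L]-\mathbb{E}[P]\right)^2 + \sum_{\ell=0}^L \operatorname{Var}(Y_\ell).
\end{align*}
This uses the telescoping identity $\mathbb{E}[Y]=\mathbb{E}[P_L]$, which follows from the assumed expectations of the $Y_\ell$, together with the independence of the $Y_\ell$. I will allot $\epsilon^2/2$ to each of the two terms.

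\textbf{Bias.} Using $|\mathbb{E}[P_L-P]|\le c_1 2^{-\alpha L}$, I choose
\begin{align*}
L=\left\lceil \alpha^{-1}\log_2\!\left(\sqrt{2}c_1\epsilon^{-1}\right)\right\rceil,
\end{align*}
so that the squared bias is at most $\epsilon^2/2$ and $2^{\alpha L}\lesssim \epsilon^{-1}$.

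\textbf{Variance.} I interpret the assumption $\operatorname{Var}(Y_\ell)=\mathcal{O}(2^{-\beta\ell})$ in the usual way. That is, $Y_\ell$ is an average of $N_\ell$ i.i.d. samples, each with variance $V_\ell\le c_2 2^{-\beta\ell}$ and cost $C_\ell\le c_3 2^{\gamma\ell}$, so $\operatorname{Var}(Y_\ell)=V_\ell/N_\ell$. I then minimize the total cost $\sum N_\ell C_\ell$ subject to $\sum V_\ell/N_\ell\le\epsilon^2/2$ with a Lagrange multiplier. This gives
\begin{align*}
N_\ell=\left\lceil 2\epsilon^{-2}\sqrt{V_\ell/C_\ell}\,\sum_{k=0}^L\sqrt{V_kC_k}\right\rceil,
\end{align*}
for a total cost of at most
\begin{align*}
2\epsilon^{-2}\left(\sum_{\ell=0}^L\sqrt{V_\ell C_\ell}\right)^2+\sum_{\ell=0}^L C_\ell .
\end{align*}

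\textbf{Three regimes.} Since $\sqrt{V_\ell C_\ell}\lesssim 2^{(\gamma-\beta)\ell/2}$, the geometric sum behaves as follows. If $\beta>\gamma$ it is $\mathcal{O}(1)$, giving cost $\mathcal{O}(\epsilon^{-2})$. If $\beta=\gamma$ it equals $L+1=\mathcal{O}(\log\epsilon^{-1})$, giving $\mathcal{O}(\epsilon^{-2}(\log\epsilon)^2)$; the condition $\epsilon<1/e$ ensures $\log\epsilon^{-1}\ge1$ so constants absorb. If $\beta<\gamma$ the sum is dominated by its last term $2^{(\gamma-\beta)L/2}$, giving $\epsilon^{-2}2^{(\gamma-\beta)L}\lesssim\epsilon^{-2-(\gamma-\beta)/\alpha}$.

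\textbf{Main obstacle: the ceiling term.} The delicate step is controlling the extra cost $\sum_\ell C_\ell\lesssim 2^{\gamma L}\lesssim\epsilon^{-\gamma/\alpha}$ introduced by the ceilings. Here the hypothesis $\alpha\ge\frac12\min(\beta,\gamma)$ is needed, and I check it case by case against the main term.
\begin{itemize}
\item If $\gamma\le\beta$, then $\alpha\ge\gamma/2$, so $\epsilon^{-\gamma/\alpha}\le\epsilon^{-2}$.
\item If $\gamma>\beta$, then $\alpha\ge\beta/2$, so $\gamma/\alpha\le 2+(\gamma-\beta)/\alpha$ because $\beta/\alpha\le 2$.
\end{itemize}
In both cases the ceiling term is dominated by the main cost, which completes the stated bounds.
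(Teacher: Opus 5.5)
Your argument is correct and is the standard Giles proof of this result: bias/variance split with $\epsilon^2/2$ each, Lagrange-optimal $N_\ell$, and the hypothesis $\alpha\ge\frac12\min(\beta,\gamma)$ used only to absorb the rounding cost $\sum_\ell \mathcal{C}_\ell\lesssim\epsilon^{-\gamma/\alpha}$. The paper states this theorem without proof, citing the MLMC literature, so there is no different route to compare against; your per-sample reading of the variance hypothesis ($\operatorname{Var}(Y_\ell)=V_\ell/N_\ell$ with $V_\ell=\mathcal{O}(2^{-\beta\ell})$) is the intended one. The only cosmetic slack is that you obtain MSE $\le\epsilon^2$ rather than the stated strict inequality; running the construction with $\epsilon/2$ and taking $L=\max\{0,\cdot\}$ fixes this at the cost of constants only.
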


MLMC can also be accelerated using quantum techniques. 
In particular,  \cite{An2020QuantumacceleratedMM} proposed a quantum-accelerated multilevel Monte Carlo (QA-MLMC) method and 
applied it to compute expectation values derived from classical solutions of
SDEs.

\begin{theorem}
\label{thm:QA-MLMC}
Let $P$ be a random variable, and let $P_\ell~(\ell=0,1,\ldots,L)$ 
be a sequence of random variables approximating $P$ at level $\ell$.
 Further define $P_{-1}=0$.
Let $\mathcal{C}_\ell$ be the cost of sampling from $P_\ell$, 
and let $V_\ell$ be the variance of $P_\ell-P_{\ell-1}$.
If there exist positive constants $\alpha,
\beta,\gamma$
such that 
\begin{itemize}
\item $|\E[P_\ell-P]| = \mathcal{O}(2^{-\alpha \ell})$,
\item  $V_\ell= \mathcal{O}(2^{-\beta \ell})$,
\item $\mathcal{C}_\ell = \mathcal{O}(2^{\gamma \ell})$.
\end{itemize}
Then for any $\epsilon< 1/e$ 
there is a quantum algorithm that 
estimates $\E[P]$ up to additive error
$\epsilon$ with probability at least 0.99, and with cost
    \begin{align*}
        \begin{cases} 
        \mathcal{O}\left(\epsilon^{-1}(\mathrm{log}\,\epsilon^{-1})^{\frac{3}{2}} (\mathrm{log} \mathrm{log} \,\epsilon^{-1})^2\right), & \beta > 2\gamma, \\
        \mathcal{O}\left(\epsilon^{-1} (\mathrm{log} \,\epsilon^{-1})^{\frac{7}{2}} (\mathrm{log} \mathrm{log} \,\epsilon^{-1})^2\right), & \beta =2 \gamma, \\
        \mathcal{O}\left(\epsilon^{-1 - (\gamma - \frac{1}{2}\beta)/\alpha}(\mathrm{log}\,\epsilon^{-1})^{\frac{3}{2}} ( \mathrm{log} \mathrm{log} \,\epsilon^{-1})^2\right), & \beta < 2\gamma.
        \end{cases}
    \end{align*}
\end{theorem}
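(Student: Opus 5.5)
The plan is to follow the standard QA-MLMC argument of \cite{An2020QuantumacceleratedMM}. We combine the telescoping decomposition of Algorithm~\ref{alg:classical MLMC} with a quantum mean estimator applied independently at each level, and then optimize the per-level accuracies. First, fix the truncation level $L=\lceil \alpha^{-1}\log_2(2C/\epsilon)\rceil$, so that the bias satisfies $|\E[P_L-P]|\le \epsilon/2$; this uses only the first hypothesis. It then suffices to estimate $\E[P_L]=\sum_{\ell=0}^L \E[P_\ell-P_{\ell-1}]$ to accuracy $\epsilon/2$ with probability $0.99$.

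For each level $\ell$, we run a quantum mean estimator on the random variable $P_\ell-P_{\ell-1}$. This variable has variance $V_\ell=\mathcal{O}(2^{-\beta\ell})$, and a sample of it can be coherently prepared at cost $\mathcal{O}(\mathcal{C}_\ell)$, since the fine and coarse paths share randomness via the seed oracle. We use Montanaro's algorithm for finite-variance variables \cite{Montanaro2015QuantumSO} with target accuracy $\epsilon_\ell$ and failure probability $\delta_\ell$. This costs $\mathcal{O}\big(\sqrt{V_\ell}\,\epsilon_\ell^{-1}\log^{3/2}(\sqrt{V_\ell}/\epsilon_\ell)\log\log(\cdot)\cdot \log(1/\delta_\ell)\big)$ queries, each query costing $\mathcal{C}_\ell$. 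Taking $\delta_\ell=0.01/(L+1)$ and applying Lemma~\ref{lem:powering lemma} with a union bound, all $L+1$ estimates succeed simultaneously with probability $0.99$, at an extra factor of $\log L=\mathcal{O}(\log\log\epsilon^{-1})$. Requiring $\sum_\ell \epsilon_\ell\le \epsilon/2$ then gives total error at most $\epsilon$.

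The remaining work is to choose the $\epsilon_\ell$ so as to minimize $\sum_\ell 2^{(\gamma-\beta/2)\ell}\epsilon_\ell^{-1}$ subject to $\sum_\ell\epsilon_\ell\le\epsilon/2$. By Lagrange multipliers or Cauchy--Schwarz, the optimal choice is $\epsilon_\ell\propto 2^{(\gamma-\beta/2)\ell/2}$, which yields cost $\epsilon^{-1}\big(\sum_{\ell\le L}2^{(\gamma-\beta/2)\ell/2}\big)^2$ up to logarithmic factors. We then split into three cases.
\begin{itemize}
\item If $\beta>2\gamma$, the sum is a convergent geometric series, so the cost is $\mathcal{O}(\epsilon^{-1})$.
\item If $\beta=2\gamma$, the sum equals $L+1$, giving an extra factor $L^2=\mathcal{O}(\log^2\epsilon^{-1})$.
\item If $\beta<2\gamma$, the sum is dominated by its last term $2^{(\gamma-\beta/2)L/2}$. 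Its square is $2^{(\gamma-\beta/2)L}=\mathcal{O}(\epsilon^{-(\gamma-\beta/2)/\alpha})$.
\end{itemize}
Collecting the $\log^{3/2}$ factor from Montanaro's estimator (whose argument is polynomial in $\epsilon^{-1}$) and the $(\log\log)^2$ factors from the powering and union bound reproduces the three stated bounds, including $(\log\epsilon^{-1})^{7/2}$ in the critical case.

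The main obstacle is the logarithmic bookkeeping rather than the leading exponents. We must check that the Montanaro logs depend only on $\sqrt{V_\ell}/\epsilon_\ell=\mathrm{poly}(\epsilon^{-1})$ uniformly in $\ell$, and that the failure-probability amplification contributes only $\log\log$ factors. A second point is implementing the coherent sampler for $P_\ell-P_{\ell-1}$ in a way that preserves the coupling. This is handled by the randomness oracle $O_W$, which regenerates identical Gaussian increments for both branches, so the cost per query stays $\mathcal{O}(\mathcal{C}_\ell)$.
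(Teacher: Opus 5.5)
\paragraph{Comparison with the paper.} Your route is the standard one. The paper does not prove this theorem; it cites it from \cite{An2020QuantumacceleratedMM}, and your argument is the one used there. It also matches the allocation in the paper's own proof of Theorem~\ref{thm:QMLMC2}: truncation at $L\approx\alpha^{-1}\log_2\epsilon^{-1}$, per-level accuracies proportional to $2^{(\gamma-\beta/2)\ell/2}$, and a three-way case split. Your logarithmic bookkeeping is correct. Montanaro's estimator contributes $\log^{3/2}$ and one $\log\log$, boosting to failure probability $0.01/(L+1)$ contributes the second $\log\log$, and the critical case picks up an extra $L^2$.

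\paragraph{The gap: a missing per-level floor.} The cost accounting has a real gap. The per-level cost $\mathcal{C}_\ell\sqrt{V_\ell}\,\epsilon_\ell^{-1}\log^{3/2}(\sqrt{V_\ell}/\epsilon_\ell)\cdots$ is only meaningful when $\sqrt{V_\ell}/\epsilon_\ell\gtrsim 1$. Any estimator must draw at least one coherent sample of $P_\ell-P_{\ell-1}$. So the true cost of level $\ell$ is $\mathcal{C}_\ell\max\{1,\sqrt{V_\ell}/\epsilon_\ell\}$ times logarithms, and the logarithm's argument needs the same $\max$.

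With your allocation, the ratio does fall below $1$ at the top levels in many regimes. For example, when $\beta>2\gamma$ it scales like $\epsilon^{-1}2^{-(\beta/4+\gamma/2)\ell}$. The total cost therefore carries an extra term $\sum_{\ell\le L}\mathcal{C}_\ell=\Theta(2^{\gamma L})=\Theta(\epsilon^{-\gamma/\alpha})$. This term is dominated by all three claimed bounds exactly when $\alpha\ge\frac12\min(\beta,2\gamma)$.

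\paragraph{This hypothesis is necessary.} Take deterministic levels $P_\ell$, so $V_\ell=0$ and any $\beta>2\gamma$ is admissible, each with cost exactly $2^{\gamma\ell}$. Build two instances that agree on all levels up to $\ell^*$ but whose limits $P$ differ by $2\cdot 2^{-\alpha\ell^*}$; both satisfy the bias hypothesis with constant $2$. To estimate $\E[P]$ to accuracy $\epsilon<2^{-\alpha\ell^*}$ with probability $0.99$, an algorithm must query a level above $\ell^*$. That costs about $\epsilon^{-\gamma/\alpha}$, which is much larger than $\epsilon^{-1}$ when $\gamma>\alpha$.

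So the statement as written cannot be proved. It dropped the hypothesis $\alpha\ge\frac12\min(\beta,2\gamma)$ from the original theorem in \cite{An2020QuantumacceleratedMM}; that hypothesis is the quantum analogue of $\alpha\ge\frac12\min(\beta,\gamma)$ in Theorem~\ref{thm:MLMC}. Your proof silently relies on it, and the paper's proof of Theorem~\ref{thm:QMLMC2} has the same blind spot.

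\paragraph{How to close it.} Add the hypothesis and count $\max\{1,\sqrt{V_\ell}/\epsilon_\ell\}$ queries per level. Then check that the floor term $2^{\gamma L}=\mathcal{O}(\epsilon^{-\gamma/\alpha})$ sits inside each of the three bounds: $\epsilon^{-1}$ when $\beta\ge 2\gamma$, and $\epsilon^{-1-(\gamma-\beta/2)/\alpha}$ when $\beta<2\gamma$. With that, the rest of your argument goes through.

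A minor point: the abstract theorem implicitly assumes a coherent sampler for $P_\ell-P_{\ell-1}$ at cost $\mathcal{O}(\mathcal{C}_\ell)$. Your appeal to the seed oracle $O_W$ is an SDE-specific instantiation of that assumption, not part of the general proof.
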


\subsection{Unbiased Quantum-accelerated Multilevel Monte Carlo Method}
\label{subsec:Quantum-accelerated unbiased multilevel Monte Carlo method}
Recall the MLMC decomposition
\begin{align*}
\mathbb{E}[P]=\mathbb{E}[P_0]+\sum_{\ell=1}^{\infty}\mathbb{E}[P_\ell - P_{\ell-1}].
\end{align*}
In practical computations, 
one truncates the infinite sum at a suitably large level $L$, yielding the approximation
\begin{align*}
\mathbb{E}[P]\approx\mathbb{E}[P_L]
=\mathbb{E}[P_0]+\sum_{\ell=1}^{L}\mathbb{E}[P_\ell - P_{\ell-1}] .
\end{align*}
Such a truncation inevitably introduces a bias given by
$\mathbb{E}[P_L] - \mathbb{E}[P]$.

Fortunately, existing works have shown that this bias can be eliminated
by introducing an additional random variable together with a suitable construction.
In \cite{McLeish2010AGM}, Don McLeish introduced an additional randomization
scheme that enables the construction of an unbiased estimator of
$x_\infty$ from a deterministic sequence $\{X_i\}$ whose expectations satisfy
$\mathbb{E}[X_i] \to x_\infty$.
Rhee and Glynn have also contributed a number of related 
works~\cite{Rhee2012ANA,Rhee2015UnbiasedEW,glynn2014exactestimationmarkovchain},
in which they proposed exact estimation algorithms that provide
unbiased estimators for equilibrium expectations.

The key idea underlying these works is similar.
We take \cite{glynn2014exactestimationmarkovchain} as a representative example.
Given a sequence of biased approximations $\{Y_k\}$ with
$\mathbb{E}[Y_k] \to \mathbb{E}[Y]$
and increments $\{\Delta_k\}$ satisfying
$\mathbb{E}[\Delta_k] = \mathbb{E}[Y_k - Y_{k-1}]$,
let $N$ be a $\mathbb{Z}_+$-valued random variable
independent of $\{\Delta_k\}$, and define
$$
Z = \sum_{k=0}^{N}
\frac{\Delta_k}{\mathbb{P}[N \geq k]}.
$$
Provided that
\begin{align*}
   \sum_{k=0}^{\infty} \mathbb{E}\left[ |\Delta_k|\right] < \infty,
\end{align*}
we obtain
\begin{align*}
    \mathbb{E}[Z]=\mathbb{E}\left[\sum_{k=0}^{N}
\frac{\Delta_k}{\mathbb{P}[N \geq k]}\right]
=\sum_{n=0}^{\infty}\mathbb{P}[N = n]\sum_{k=0}^{n}\frac{\mathbb{E}[\Delta_k]}{\mathbb{P}[N \geq k]}
=\sum_{k=0}^{\infty} \mathbb{E}\left[ \Delta_k\right]=\mathbb{E}[Y],
\end{align*}
Hence, $Z$ is an unbiased estimator for $\mathbb{E}[Y]$.

Similar techniques have also been employed in quantum algorithms.
Following the bias-reduced methods in~\cite{Asi2021StochasticBG,Blanchet2019UnbiasedMM},
the work of \cite{Sidford2023QuantumSF}
leveraged this idea to develop quantum variance reduction techniques,
while \cite{Ozgul2025QuantumSF} applied it to stochastic optimization,
obtaining unbiased gradient estimators.

Building upon these two works, we generalize the unbiased construction
to a broader class of biased procedures $\mathcal{A}$.
More specifically, let $\mathbb{E}[P]$ denote the target expectation.
For any $\sigma > 0$, suppose there exists an algorithm
$\mathcal{A}(\sigma)$ that returns a (possibly biased) estimator
$\mu$ such that
$\mathbb{E}\left[
\left(\mu - \mathbb{E}[P]\right)^2
\right]\leq\sigma^2,$
with query complexity
$\widetilde{\mathcal{O}}\left(\sigma^{-p}\right).$
While the approaches in
\cite{Ozgul2025QuantumSF,Sidford2023QuantumSF}
are restricted to the case $p=1$,
we generalize the unbiased construction to cover the full range
$p \in (0,2)$.

\begin{theorem}
  \label{thm:unbiased estimator construct}
Let $P$ be a random variable. Suppose that an algorithm $\mathcal{A}({\sigma})$
returns a random vector ${\mu}$ such that 
$\mathbb{E}\left[\left({\mu}-\mathbb{E}[P]\right)^{2}\right] \leq {\sigma}^{2},$
with a cost $\widetilde{\mathcal{O}}\left(C{\sigma}^{-p}\right)$
for any ${\sigma}>0$, where $C$ contains some coefficients we are interested in
and $p\in(0,2)$ is a constant.
Then there exists an algorithm $\widetilde{\mathcal{A}}({\tilde{\sigma}})$
that outputs an unbiased estimator $\tilde{\mu}$ of $P$ satisfying
$\operatorname{Var}(\tilde{\mu}) \leq {\tilde{\sigma}}^{2}$,
with expected cost 
$\widetilde{\mathcal{O}}\left(C{\tilde{\sigma}}^{-p}\right)$.
\end{theorem}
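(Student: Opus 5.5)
We use the randomized multilevel (Blanchet--Glynn / Rhee--Glynn type) construction recalled just before the statement, as in \cite{Sidford2023QuantumSF,Ozgul2025QuantumSF}, with the geometric weights retuned to the exponent $p\in(0,2)$.

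\textbf{Construction.} Fix $\sigma_0=\tilde\sigma$ and set $\sigma_k=2^{-k}\tilde\sigma$ for $k\ge 0$. Draw a random level $J\in\mathbb{Z}_{\ge 0}$ with $\mathbb{P}[J=j]=q_j\propto 2^{-rj}$, where $r\in(p,2)$ is fixed, say $r=(p+2)/2$. Run $\mathcal{A}(\sigma_0)$ once to obtain $\mu_0$. Then run $\mathcal{A}(\sigma_{J+1})$ to obtain $\mu_{J+1}$, and \emph{independently} run $\mathcal{A}(\sigma_J)$ to obtain $\mu_J$. Output
\begin{align*}
\tilde\mu=\mu_0+\frac{\mu_{J+1}-\mu_J}{q_J}.
\end{align*}
No coupling between the two runs is needed, because only second moments about the common target $\mathbb{E}[P]$ enter.

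\textbf{Unbiasedness.} Conditioning on $J$ gives
\begin{align*}
\mathbb{E}[\tilde\mu]=\mathbb{E}[\mu^{(0)}]+\sum_{j\ge0}\left(\mathbb{E}[\mu^{(j+1)}]-\mathbb{E}[\mu^{(j)}]\right),
\end{align*}
where $\mu^{(k)}$ denotes the output of $\mathcal{A}(\sigma_k)$. By Jensen, $|\mathbb{E}[\mu^{(k)}]-\mathbb{E}[P]|\le\sigma_k\to0$, so the sum telescopes to $\mathbb{E}[P]$. The interchange of sum and expectation is justified by absolute convergence: $\sum_j\mathbb{E}|\mu^{(j+1)}-\mu^{(j)}|\le\sum_j 2\sigma_j<\infty$.

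\textbf{Variance.} We have $\operatorname{Var}(\tilde\mu)\le 2\,\mathbb{E}[(\mu_0-\mathbb{E}P)^2]+2\,\mathbb{E}[(\mu_{J+1}-\mu_J)^2/q_J^2]$. The first term is at most $2\tilde\sigma^2$. For the second,
\begin{align*}
\mathbb{E}\left[\frac{(\mu_{J+1}-\mu_J)^2}{q_J^2}\right]=\sum_j\frac{\mathbb{E}[(\mu^{(j+1)}-\mu^{(j)})^2]}{q_j}\le\sum_j\frac{4\sigma_j^2}{q_j}\lesssim\tilde\sigma^2\sum_j2^{-(2-r)j}.
\end{align*}
This series converges because $r<2$. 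Hence $\operatorname{Var}(\tilde\mu)\le c_p\tilde\sigma^2$ for a constant $c_p$. Rescaling $\tilde\sigma$ by $c_p^{-1/2}$ gives variance at most $\tilde\sigma^2$ at a constant-factor cost.

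\textbf{Expected cost.} The expected cost is
\begin{align*}
\widetilde{\mathcal{O}}(C\tilde\sigma^{-p})+\sum_j q_j\,\widetilde{\mathcal{O}}(C\sigma_{j+1}^{-p})\lesssim C\tilde\sigma^{-p}\sum_j2^{-(r-p)j}\cdot\mathrm{polylog}.
\end{align*}
This converges because $r>p$. The choice $r\in(p,2)$ is possible exactly because $p<2$, which is where that hypothesis is used.

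\textbf{Main obstacle.} The delicate point is the polylogarithmic factors hidden in $\widetilde{\mathcal{O}}$. At level $j$ they are $\mathrm{polylog}(2^{j}/\tilde\sigma)\lesssim (j+\log\tilde\sigma^{-1})^{c}$. Since $\sum_j 2^{-(r-p)j}(j+\log\tilde\sigma^{-1})^c=\mathcal{O}((\log\tilde\sigma^{-1})^c)$, the geometric decay absorbs the growth in $j$, and the bound stays $\widetilde{\mathcal{O}}(C\tilde\sigma^{-p})$ in expectation.

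A second subtlety is that the cost is only bounded in expectation, not almost surely: $J$ is unbounded, so the worst-case cost is infinite. This matches the statement, which claims only an expected-cost bound.
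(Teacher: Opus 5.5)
Your proposal is correct and is essentially the paper's construction. The paper keeps $J\sim\mathrm{Geom}(\tfrac12)$ and puts the tuning into the accuracy schedule $\mathcal{A}(2^{-\rho j}\tilde\sigma/M)$ with $\rho\in(\tfrac12,\tfrac1p)$; you instead fix the halving schedule $\sigma_k=2^{-k}\tilde\sigma$ and tune the level law $q_j\propto 2^{-rj}$ with $r\in(p,2)$. Under $r=1/\rho$ these are the same trade-off (the variance needs $r<2$, the expected cost needs $r>p$), and your explicit treatment of the polylogarithmic factors and the final constant rescaling is slightly more careful than the paper's, which absorbs these into $M$ and the $\widetilde{\mathcal{O}}$ notation.
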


\begin{proof}
  Choose $\rho\in\left(\frac{1}{2},\frac{1}{p}\right)$, $M^2>2+\frac{16}{1-2^{1-2\rho}}$ and consider the following algorithm.
\begin{algorithm}[H]
  \begin{algorithmic}[1]
\caption{\algname{$\widetilde{\mathcal{A}}(\cdot)$: Unbiased Estimator Construction}}
\label{alg:unbiased estimator}
\item \textbf{Input}: A biased estimator $\mathcal{A}(\cdot)$,
target variance ${\tilde{\sigma}}^{2}$

\textbf{Output}: An unbiased estimator $\tilde{\mu}$ of $P$ with 
        $\operatorname{Var}(\tilde{\mu}) \leq {\tilde{\sigma}}^{2}$

\STATE  Set $\tilde{\mu}_0 \leftarrow \mathcal{A}({\tilde{\sigma}}/M)$

\STATE Randomly sample $j \sim \mathrm{Geom}\left(\frac12\right)$
\STATE Compute 
      $ \tilde{\mu}_{j} \leftarrow 
      \mathcal{A}\left(2^{-\rho j}\frac{{\tilde{\sigma}}}{M}\right)$
 and
   $   \tilde{\mu}_{j-1} \leftarrow 
      \mathcal{A}\left(2^{-\rho(j-1)}\frac{{\tilde{\sigma}}}{M}\right)$
\STATE \textbf{Return}  $\tilde{\mu}= 
      \tilde{\mu}_{0} + 2^{j}(\tilde{\mu}_{j}-\tilde{\mu}_{j-1})$
\end{algorithmic}
\end{algorithm}

Below, we show that the algorithm constructed in this way satisfies the conclusion.

First, for any $j \geq 1$,
\begin{align*}
\mathbb{E}\!\left[|\tilde{\mu}_j-\tilde{\mu}_{j-1}|\right]
&\leq\left(\mathbb{E}\!\left[|\tilde{\mu}_j-\tilde{\mu}_{j-1}|^{2}\right]\right)^{\frac{1}{2}}  \\
&\leq\left(2\,\mathbb{E}\!\left[(\tilde{\mu}_j-\mathbb{E}[P])^{2}\right]
+2\,\mathbb{E}\!\left[(\tilde{\mu}_{j-1}-\mathbb{E}[P])^{2}\right]\right)^{\frac{1}{2}} \\
&\leq\left(2^{-2\rho j}\frac{\tilde{\sigma}^2}{M^2}+2^{-2\rho (j-1)}\frac{\tilde{\sigma}^2}{M^2}\right)^{\frac{1}{2}}
=\sqrt{1+2^{2\rho}}\,\frac{\tilde{\sigma}}{M}\,2^{-\rho j}.
\end{align*}

Hence,
$$
\sum_{j=1}^{\infty}\mathbb{E}\!\left[|\tilde{\mu}_j-\tilde{\mu}_{j-1}|\right]< \infty.
$$

If $J \sim \mathrm{Geom}\!\left(\frac12\right)$,
then $\mathbb{P}[J=j]=2^{-j}$.
It follows that
\begin{align*}
\mathbb{E}[\tilde{\mu}]
&=\mathbb{E}\!\left[\tilde{\mu}_{0}+2^{J}(\tilde{\mu}_{J}-\tilde{\mu}_{J-1})\right]  \\
&=\mathbb{E}[\tilde{\mu}_{0}]+\sum_{j=1}^\infty\mathbb{P}[J=j]2^j
\left(\mathbb{E}[\tilde{\mu}_{j}]-\mathbb{E}[\tilde{\mu}_{j-1}]\right)  \\
&=\mathbb{E}[\tilde{\mu}_{0}]+\sum_{j=1}^\infty
\left(\mathbb{E}[\tilde{\mu}_{j}]-\mathbb{E}[\tilde{\mu}_{j-1}]\right)\\
&=\lim_{j\to\infty}\mathbb{E}[\tilde{\mu}_j]
=\mathbb{E}[P].
\end{align*}

Therefore, the estimator $\tilde{\mu}$ is unbiased.

As for the variance,
\begin{align*}
  \operatorname{Var}(\tilde{\mu})
  =\mathbb{E}\left[\left(\tilde{\mu}-\mathbb{E}[P]\right)^{2}\right] 
  \leq 2\mathbb{E}\left[\left(\tilde{\mu}-\tilde{\mu}_0\right)^{2}\right]
  +2\mathbb{E}\left[\left(\tilde{\mu}_0-\mathbb{E}[P]\right)^{2}\right].
\end{align*}
By definition,
\begin{align*}
  \mathbb{E}\left[\left(\tilde{\mu}-\tilde{\mu}_0\right)^{2}\right]
  =&\sum_{j=1}^\infty \mathbb{P}[J=j] 2^{2j}\mathbb{E}\left[\left(\tilde{\mu}_j-\tilde{\mu}_{j-1}\right)^{2}\right]\\
\leq&\sum_{j=1}^\infty 2^{j+1}\left(2^{-2\rho j}\frac{{\tilde{\sigma}}^2}{M^2}+2^{-2\rho{(j-1)}}\frac{{\tilde{\sigma}}^2}{M^2}\right)\\
\leq& \frac{4\tilde{\sigma}^2}{M^2}\frac{2^{-2\rho}+1}{1-2^{1-2\rho}}.
\end{align*}
  Hence
     \begin{align*}
       \operatorname{Var}(\tilde{\mu})\leq  \frac{2\tilde{\sigma}^2}{M^2}+ \frac{8\tilde{\sigma}^2}{M^2}\frac{2^{-2\rho}+1}{1-2^{1-2\rho}}\leq \tilde{\sigma}^2.
     \end{align*}
    The expected cost is 
    \begin{align*}
      \widetilde{\mathcal{O}}\left(C{\tilde{\sigma}}^{-p}\cdot
      \left(1+\sum_{j=1}^\infty \mathbb{P}[J=j] \left(2^{\rho p j}+2^{\rho p(j-1)}\right)\right)\right)
      =  \widetilde{\mathcal{O}}\left(C{\tilde{\sigma}}^{-p}\right).
    \end{align*}
\end{proof}

Building on the above construction,
we further strengthen Theorem~\ref{thm:QA-MLMC}
by constructing an unbiased estimator.
In addition, we explicitly characterize the sampling complexity
for $r$-dimensional random variables,
clarify its dependence on relevant parameters,
and derive a bound on the mean square error.

\begin{theorem}
\label{thm:QMLMC2}
Let $P$ be a $r$-dimensional random variable, and let 
$P_\ell~(\ell=0,1,\ldots)$ be a sequence of $r$-dimensional random variables 
approximating $P$ at level $\ell$. 
Further define $P_{-1}=0$.
Assume that obtaining one sample from $P_\ell$ requires $\mathcal{C}_\ell$ queries to $O_P$.
Suppose there exist positive constants $\alpha,\beta,\gamma$ and $K_1,K_2,K_3$
such that
\begin{itemize}
\item $|\E[P_\ell-P]| =\widetilde{\mathcal{O}} \left( K_1 2^{-\alpha \ell}\right)$,
\item  $V_\ell=\operatorname{Var}(P_\ell-P_{\ell-1}) =\widetilde{\mathcal{O}}\left( K_2 2^{-\beta \ell}\right)$,
\item $\mathcal{C}_\ell  =\widetilde{\mathcal{O}} \left(K_3 2^{\gamma \ell}\right)$,
\end{itemize}
where $K_1,K_2,K_3$ are parameters of interest.
Then we can obtain an estimator $\hat{\mu}$ of the expectation $\E[P]$ such that  
 $\E\left[(\hat{\mu}-\E[P])^2\right] \leq \hat{\sigma}^2$
with  expected total cost
    \begin{align*}
        \begin{cases} 
       \widetilde{\mathcal{O}}\left(r^{\frac{1}{2}}K_2^{\frac{1}{2}}K_3 \cdot\hat{\sigma}^{-1} \right), & \beta \geq 2\gamma, \\
    \widetilde{\mathcal{O}}\left(r^{\frac{1}{2}}K_1^{(\gamma-\frac{1}{2}\beta)/\alpha}K_2^{\frac{1}{2}}K_3\cdot \hat{\sigma}^{-1-(\gamma-\frac{1}{2}\beta)/\alpha} \right)  , & \beta < 2\gamma.
        \end{cases}
    \end{align*}

Furthermore, if $\beta + 2\alpha > 2\gamma$, there exists an unbiased estimator
$\tilde{\mu}$ of $P$ satisfying $\operatorname{Var}(\tilde{\mu}) \leq \hat{\sigma}^2$, 
and its expected computational cost remains identical to the one stated above.

In this theorem, the $\widetilde{\mathcal{O}}$ notation may additionally hide polynomial factors in $\ell$.
\end{theorem}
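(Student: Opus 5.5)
The plan is to build the biased estimator first and then obtain the unbiased one from Theorem~\ref{thm:unbiased estimator construct}. For the biased part we use the truncated telescoping sum $\E[P_L]=\sum_{\ell=0}^L\E[P_\ell-P_{\ell-1}]$. Each increment $\E[P_\ell-P_{\ell-1}]$ is estimated independently with the unbiased quantum mean estimator of Theorem~\ref{thm:Unbiased Quantum Mean Estimation}, applied to the $r$-dimensional variable $P_\ell-P_{\ell-1}$. A quantum sampling oracle for this variable costs $\mathcal{O}(\mathcal{C}_\ell)$ queries to $O_P$, since fine and coarse samples are generated coherently from a shared seed. If level $\ell$ is given target variance $\hat\sigma_\ell^2$, its estimator $Y_\ell$ is unbiased and costs $\widetilde{\mathcal{O}}(r^{1/2}V_\ell^{1/2}\mathcal{C}_\ell/\hat\sigma_\ell)$. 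The output $\hat\mu=\sum_{\ell\le L}Y_\ell$ has independent levels, so
\begin{align*}
\E\big[(\hat\mu-\E[P])^2\big]=\sum_{\ell=0}^L\hat\sigma_\ell^2+\big|\E[P_L-P]\big|^2 .
\end{align*}
We choose $L$ so that the squared bias is at most $\hat\sigma^2/2$, which gives $2^{\alpha L}\approx K_1/\hat\sigma$ and $L=\mathcal{O}(\log(K_1/\hat\sigma))$. We then allocate $\sum_\ell\hat\sigma_\ell^2\le\hat\sigma^2/2$.

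For the allocation we minimize $\sum_\ell a_\ell/\hat\sigma_\ell$, with $a_\ell=r^{1/2}V_\ell^{1/2}\mathcal{C}_\ell$, subject to $\sum_\ell\hat\sigma_\ell^2=\hat\sigma^2/2$. Lagrange multipliers give $\hat\sigma_\ell^2\propto a_\ell^{2/3}$, and the resulting cost is $\widetilde{\mathcal{O}}\big(\hat\sigma^{-1}(\sum_\ell a_\ell^{2/3})^{3/2}\big)$. A simpler alternative is the uniform allocation $\hat\sigma_\ell^2=\hat\sigma^2/(2(L+1))$, which loses only a $\mathrm{poly}(L)$ factor that the $\widetilde{\mathcal{O}}$ hides. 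Under either choice the cost is $\widetilde{\mathcal{O}}\big(\hat\sigma^{-1}\sum_{\ell\le L} r^{1/2}K_2^{1/2}K_3 2^{(\gamma-\beta/2)\ell}\big)$.

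The geometric sum gives the two cases. If $\beta\ge2\gamma$, the terms are nonincreasing and the sum is at most $L+1$, which is polylogarithmic. This yields $\widetilde{\mathcal{O}}(r^{1/2}K_2^{1/2}K_3\hat\sigma^{-1})$. If $\beta<2\gamma$, the last term dominates: $2^{(\gamma-\beta/2)L}=(K_1/\hat\sigma)^{(\gamma-\beta/2)/\alpha}$, which gives the second bound. The $\widetilde{\mathcal{O}}$ in the hypotheses may hide $\mathrm{poly}(\ell)$ factors; with $\ell\le L$ these become polylog factors. The constant success probability and the rounding of $L$ are routine.

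For the unbiased statement, the biased procedure $\mathcal{A}(\sigma)$ just built has mean-squared error at most $\sigma^2$ and cost $\widetilde{\mathcal{O}}(C\sigma^{-p})$. In the case $\beta\ge2\gamma$ we have $p=1$ and $C=r^{1/2}K_2^{1/2}K_3$. In the case $\beta<2\gamma$ we have $p=1+(\gamma-\beta/2)/\alpha$ and $C=r^{1/2}K_1^{(\gamma-\beta/2)/\alpha}K_2^{1/2}K_3$. The condition $p<2$ is exactly $\gamma-\beta/2<\alpha$, that is, $\beta+2\alpha>2\gamma$. Under that hypothesis, Theorem~\ref{thm:unbiased estimator construct} gives an unbiased $\tilde\mu$ with $\operatorname{Var}(\tilde\mu)\le\hat\sigma^2$ at the same expected cost.

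The main obstacle is this final application. Theorem~\ref{thm:unbiased estimator construct} requires the cost bound to hold uniformly for all $\sigma$, including the very small $\sigma=2^{-\rho j}\hat\sigma/M$ used at random levels $j$. The hidden polylog factors then grow polynomially in $j$. This is harmless because $\rho p<1$ leaves a geometric margin in $\sum_j 2^{-j}2^{\rho pj}\mathrm{poly}(j)$, so the expected cost stays finite. I would state this explicitly.
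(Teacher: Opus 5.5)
Your proposal is correct and follows the paper's argument. You truncate at $L$ with squared bias at most $\hat\sigma^2/2$, estimate each increment $P_\ell-P_{\ell-1}$ with the unbiased quantum mean estimator, sum the costs as a geometric series in $2^{(\gamma-\beta/2)\ell}$, and then invoke Theorem~\ref{thm:unbiased estimator construct} with $p=1$ or $p=1+(\gamma-\beta/2)/\alpha$, where $p<2$ is exactly $\beta+2\alpha>2\gamma$. The only differences are minor: you add the level variances using independence with a uniform (or Lagrange) allocation, while the paper bounds the standard deviation by $\sum_\ell\hat\sigma_\ell$ with case-by-case geometric allocations, and both lose only polylog factors; your observation that the $\mathrm{poly}(j)$ factors in the randomized-level calls are absorbed by the margin $\rho p<1$ fills in a point the paper leaves implicit.
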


\begin{proof}
Assume that there exists a constant $C(\ell)$, which may depend polynomially on $\ell$, such that
$|\E[P_\ell]-\E[P]| \leq C(\ell) K_1 2^{-\alpha \ell}.$
Then
   \begin{align*}
    \E\left[ (\hat{\mu}-\E[P])^2  \right]
    =\operatorname{Var}(\hat{\mu})+\left(\E[P_L]-\E[P]\right)^2
    \leq \operatorname{Var}(\hat{\mu})+C(\ell)^2K_1^22^{-2\alpha \ell},
   \end{align*}
so we may choose $L=\widetilde{\mathcal{O}}\left(\mathrm{log} \frac{2K_1^2}{\hat{\sigma}^2}\right) $ 
    such that $C(\ell)^2K_1^22^{-2\alpha L}\leq \frac{\hat{\sigma}^2}{2}$.
    Thus, it remains to construct an unbiased estimator $\hat{\mu}$ for
    $P_L$ satisfying $\operatorname{Var}(\hat{\mu})\leq \frac{\hat{\sigma}^2}{2}.$

   For $\ell = 0,1,\dots,L$, by Theorem~\ref{thm:Unbiased Quantum Mean Estimation},
we obtain an unbiased estimator $\hat{\mu}_\ell$ of $P_\ell - P_{\ell-1}$ using 
$\widetilde{\mathcal{O}}\left(r^{\frac{1}{2}}\sqrt{V_\ell}/\hat{\sigma}_\ell\right)$  
queries to $O_P$, and it satisfies
 $\operatorname{Var}(\hat{\mu}_\ell)\leq \hat{\sigma}_\ell^2$.
Then $\hat{\mu} = \sum\limits_{\ell=0}^{L} \hat{\mu}_\ell$ is an unbiased estimator of $P_L$.

Moreover,
\begin{align*}
    \mathbb{E}\left[(\hat{\mu}-\mathbb{E}[P_L])^2\right]^{\frac{1}{2}}
    = 
    \mathbb{E}\left[\left(\sum_{\ell=0}^{L}
    (\hat{\mu}_\ell-\mathbb{E}[P_\ell-P_{\ell-1}])\right)^2\right]^{\frac{1}{2}}
    \leq 
    \sum_{\ell=0}^{L}
    \mathbb{E}\left[(\hat{\mu}_\ell-\mathbb{E}[P_\ell-P_{\ell-1}])^2\right]^{\frac{1}{2}}
    \leq 
    \sum_{\ell=0}^{L} \hat{\sigma}_\ell,
\end{align*}
that is, the variance of $\hat{\mu}$ is bounded by 
$\left(\sum\limits_{\ell=0}^{L} \hat{\sigma}_\ell\right)^2$.

The total number of queries to $O_P$ is

\begin{align*}
    \widetilde{\mathcal{O}}\left(
    \sum_{\ell=0}^{L} \frac{r^{\frac{1}{2}}\sqrt{V_\ell}}{\hat{\sigma}_\ell} \mathcal{C}_\ell
    \right)
    =
    \widetilde{\mathcal{O}}\left(
    r^{\frac{1}{2}} K_2^{\frac{1}{2}} K_3 
    \sum_{\ell=0}^{L} 
    \frac{1}{\hat{\sigma}_\ell} 2^{(\gamma - \beta/2)\ell}
    \right).
\end{align*}

    (1)$\beta\textgreater 2\gamma$
    
    Choose $\hat\sigma_\ell=\frac{\hat\sigma}{2}(1-2^{-(\frac{1}{2}\beta-\gamma)/2} ) 2^{-(\frac{1}{2}\beta-\gamma)\ell/2}$. 
   Then the standard deviation of $\hat{\mu}$ can be bounded as
    \begin{align*}
       \sum_{\ell=0}^{L} \hat{\sigma}_\ell
       =\sum\limits_{\ell=0}^L \frac{\hat\sigma}{2}(1-2^{-(\frac{1}{2}\beta-\gamma)/2} ) 2^{-(\frac{1}{2}\beta-\gamma)\ell/2}
       =\frac{\hat\sigma}{2}(1-2^{-(\frac{1}{2}\beta-\gamma)L/2})\leq\frac{\hat\sigma}{2}.
    \end{align*}
   Hence, the variance of $\hat{\mu}$ is at most $\frac{\sigma^2}{2}$.

    The total cost is
    \begin{align*}
        C=&\widetilde{\mathcal{O}}\left(r^{\frac{1}{2}}K_2^{\frac{1}{2}}K_3\sum_{\ell=0}^L 2^{-(\frac{1}{2}\beta-\gamma) \ell}/\hat{\sigma}_\ell \right)\\
        =&\widetilde{\mathcal{O}}\left(r^{\frac{1}{2}}K_2^{\frac{1}{2}}K_3\sum_{\ell=0}^L 2^{-(\frac{1}{2}\beta-\gamma) \ell} \cdot \sigma^{-1}(1-2^{-(\frac{1}{2}\beta-\gamma)/2} ) ^{-1}2^{(\frac{1}{2}\beta-\gamma)\ell/2} \right)\\
        =&\widetilde{\mathcal{O}}\left(r^{\frac{1}{2}}K_2^{\frac{1}{2}}K_3\hat\sigma^{-1}\sum_{\ell=0}^L  2^{-(\frac{1}{2}\beta-\gamma)\ell/2} \right)\\
        =&\widetilde{\mathcal{O}}\left(r^{\frac{1}{2}}K_2^{\frac{1}{2}}K_3\hat\sigma^{-1}\frac{1-2^{-(\frac{1}{2}\beta-\gamma)L/2}}{1-2^{-(\frac{1}{2}\beta-\gamma)/2}} \right)
        =\widetilde{\mathcal{O}}\left(r^{\frac{1}{2}}K_2^{\frac{1}{2}}K_3 \hat\sigma^{-1} \right).
    \end{align*}
    (2)$\beta =2\gamma$
    
    Choose $\hat{\sigma}_\ell=\frac{\hat{\sigma}}{2(L+1)}$. 
   Then the standard deviation of $\hat{\mu}$ 
  is bounded by
$ \sum\limits_{\ell=0}^L \frac{\hat{\sigma}}{2(L+1)}=\frac{\hat{\sigma}}{2}$
    and the total cost is
    \begin{align*}
        C=&\widetilde{\mathcal{O}}\left(r^{\frac{1}{2}}K_2^{\frac{1}{2}}K_3\sum_{\ell=0}^L 2^{-(\frac{1}{2}\beta-\gamma) \ell}/\hat{\sigma}_\ell \right)\\
        =&\widetilde{\mathcal{O}}\left(r^{\frac{1}{2}}K_2^{\frac{1}{2}}K_3\sum_{\ell=0}^L  2(L+1)/\hat{\sigma} \right)\\
        =&\widetilde{\mathcal{O}}\left(r^{\frac{1}{2}}K_2^{\frac{1}{2}}K_3\hat{\sigma}^{-1} \right).
    \end{align*}
    (3)$\beta \textless 2\gamma$
    
    Choose $\hat{\sigma}_l = \frac{\hat{\sigma}}{2}\, 2^{-(\gamma-\frac{1}{2}\beta)L/2}
      \left(2^{(\gamma-\frac{1}{2}\beta)/2}-1\right)\,2^{(\gamma-\frac{1}{2}\beta)\ell/2}$. 
       Then the standard deviation of $\hat{\mu}$ is controlled by
    \begin{align*}
       &\sum\limits_{\ell=0}^L \frac{\hat{\sigma}}{2}\, 2^{-(\gamma-\frac{1}{2}\beta)L/2}
        \left(2^{(\gamma-\frac{1}{2}\beta)/2}-1\right)\, 2^{(\gamma-\frac{1}{2}\beta)\ell/2}\\
        =&\frac{\hat{\sigma}}{2}\, 2^{-(\gamma-\frac{1}{2}\beta)L/2}
        \, \left(2^{(\gamma-\frac{1}{2}\beta)L/2}-1\right)\\
        =&\frac{\hat{\sigma}}{2}\, 
        \left(1-2^{-(\gamma-\frac{1}{2}\beta)L/2}\right)< \frac{\hat{\sigma}}{2}.
    \end{align*}
    
    The total cost is
    \begin{align*}
        C=&\widetilde{\mathcal{O}}\left(r^{\frac{1}{2}}K_2^{\frac{1}{2}}K_3\sum_{\ell=0}^L 2^{(\gamma-\frac{1}{2}\beta) \ell}/\hat{\sigma}_\ell \right)\\
        =&\widetilde{\mathcal{O}}\left(r^{\frac{1}{2}}K_2^{\frac{1}{2}}K_3\sum_{\ell=0}^L 2^{(\gamma-\frac{1}{2}\beta) \ell} \cdot\hat{\sigma}^{-1} 2^{(\gamma-\frac{1}{2}\beta)L/2}
        \cdot 2^{-(\gamma-\frac{1}{2}\beta)\ell/2} \right)\\
        =&\widetilde{\mathcal{O}}\left(r^{\frac{1}{2}}K_2^{\frac{1}{2}}K_3\hat{\sigma}^{-1}\sum_{\ell=0}^L 2^{(\gamma-\frac{1}{2}\beta) \ell/2} \cdot 2^{(\gamma-\frac{1}{2}\beta)L/2} \right)\\
       =&\widetilde{\mathcal{O}}\left(r^{\frac{1}{2}}K_2^{\frac{1}{2}}K_3\hat{\sigma}^{-1}\cdot2^{(\gamma-\frac{1}{2}\beta)L} \right).
 \end{align*}
 Since $L$ is chosen such that 
$C(L)^2 K_1^2 2^{-2\alpha L} \leq \frac{\hat{\sigma}^2}{2}$, we have
\begin{align*}
    2^{(\gamma-\frac{1}{2}\beta)L}=\widetilde{\mathcal{O}}\left(
    K_1^{(\gamma-\frac{1}{2}\beta)/\alpha}
\cdot\hat{\sigma}^{-(\gamma-\frac{1}{2}\beta)/\alpha}
    \right).
\end{align*}
The total cost
       \begin{align*}
       C =&\widetilde{\mathcal{O}}\left(r^{\frac{1}{2}}K_1^{(\gamma-\frac{1}{2}\beta)/\alpha}K_2^{\frac{1}{2}}K_3\hat{\sigma}^{-1-(\gamma-\frac{1}{2}\beta)/\alpha} \right).
    \end{align*}
   
    By applying Theorem~\ref{thm:unbiased estimator construct}, we can derive the further result.
\end{proof}

Finally, by applying the following lemma, we know that any estimator 
whose variance is bounded by $\hat{\sigma}^{2}$ is also an $\epsilon$-additive-error 
estimator with success probability $0.99$, where $\hat{\sigma}$ and $\epsilon$ 
differ only by a constant factor.

\begin{lemma}\label{lem:variance and epsilon}
    If we obtain an estimator $\hat{\mu}$ of $P$ such that
    \begin{align*}
         \mathbb{E}\big[ (\hat{\mu}-\mathbb{E}[P])^2 \big]\leq \hat{\sigma}^2,
    \end{align*}
    then $\hat{\mu}$ is an estimate of $\mathbb{E}[P]$ with additive error 
    $\epsilon = 10\hat{\sigma}$ and success probability at least $0.99$.
\end{lemma}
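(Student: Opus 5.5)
The plan is to use Chebyshev's inequality applied to the mean-squared error rather than to the variance. Since $\hat{\mu}$ need not be unbiased, I will not center at $\mathbb{E}[\hat{\mu}]$. Instead I will work directly with the nonnegative random variable $(\hat{\mu}-\mathbb{E}[P])^2$, whose expectation is bounded by $\hat{\sigma}^2$ by hypothesis.

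\textbf{Step 1 (Markov's inequality).} For any $\epsilon>0$, Markov's inequality gives
\begin{align*}
\mathbb{P}\left[|\hat{\mu}-\mathbb{E}[P]|\geq \epsilon\right]
=\mathbb{P}\left[(\hat{\mu}-\mathbb{E}[P])^2\geq \epsilon^2\right]
\leq \frac{\mathbb{E}\left[(\hat{\mu}-\mathbb{E}[P])^2\right]}{\epsilon^2}
\leq \frac{\hat{\sigma}^2}{\epsilon^2}.
\end{align*}
This is the same computation as the Chebyshev argument given after Algorithm~\ref{alg:classical monte carlo}, with the empirical mean replaced by a general estimator.

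\textbf{Step 2 (choice of $\epsilon$).} Taking $\epsilon=10\hat{\sigma}$ makes the failure probability at most $1/100$. Hence $|\hat{\mu}-\mathbb{E}[P]|<10\hat{\sigma}$ holds with probability at least $0.99$, which is the claim. If $\hat{\sigma}=0$, then $\hat{\mu}=\mathbb{E}[P]$ almost surely and the claim is trivial.

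\textbf{Step 3 (vector-valued case).} If $P$ is $r$-dimensional, as in Theorem~\ref{thm:QMLMC2}, I interpret the square as $\|\cdot\|^2$ and the additive error in the Euclidean norm. The same Markov argument then goes through verbatim.

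\textbf{Step 4 (higher confidence).} If a confidence level above $0.99$ is needed later, Lemma~\ref{lem:powering lemma} can be invoked on top of this result.

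There is no real obstacle here. The only point needing care is to apply the inequality to the MSE about $\mathbb{E}[P]$ rather than to $\operatorname{Var}(\hat{\mu})$. This way the statement covers both the biased estimator of Theorem~\ref{thm:QMLMC2} and the unbiased one, without any change to the argument.
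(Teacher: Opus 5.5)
Your proposal is correct and follows the paper's own argument: the paper likewise applies Markov's inequality to the squared error about $\mathbb{E}[P]$, i.e.\ $\hat{\sigma}^2\geq \epsilon^2\,\mathbb{P}[|\hat{\mu}-\mathbb{E}[P]|\geq\epsilon]$, and sets $\epsilon=10\hat{\sigma}$. Your remarks on the vector-valued case and on $\hat{\sigma}=0$ are harmless additions that the paper does not include.
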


\begin{proof}
    Since 
   \begin{align*}
        \hat{\sigma}^2\geq \E\left[ (\hat{\mu}-\E[P])^2  \right]\geq \epsilon^2 \mathbb{P}\left[|\hat{\mu}-\E[P]|\geq\epsilon\right],
    \end{align*}
    setting $\epsilon = 10\hat{\sigma}$ yields 
    $ \mathbb{P}\left[|\hat{\mu}-\E[P]|\geq\epsilon\right]\leq 0.01$.
    Therefore, $\hat{\mu}$ is an estimator of $\mathbb{E}[P]$ with additive error
    $\epsilon = 10\hat{\sigma}$ and success probability at least $0.99$.
\end{proof}

\subsection{Markov Chain Monte Carlo}
\label{subsec:Markov Chain Monte Carlo}
Markov Chain Monte Carlo (MCMC) is a class of algorithms for 
sampling from complex probability distributions, 
particularly when direct sampling is infeasible. 
It has become a fundamental tool in Bayesian inference, 
statistical physics, and machine learning due to its flexibility and asymptotic correctness.

The key idea of MCMC is to construct a Markov chain whose stationary distribution
coincides with the target. 
By simulating the chain over time, 
we can generate approximate samples 
and use them to estimate expectations, probabilities, or other statistical quantities.
There are various ways to construct such a chain, 
including the Metropolis-Hastings algorithm~\cite{Borkar1953EquationOS,Hastings1970MonteCS} 
and the Gibbs sampler~\cite{Geman1984StochasticRG}.

Markov Chain Monte Carlo  methods can also be accelerated using quantum algorithms.
\cite{Temme2009QuantumMS}, extended the classical Metropolis sampling algorithm to the quantum setting, 
and there has also been extensive research on quantum Gibbs sampling~\cite{Bouland2023QuantumSF,Hwang2024GibbsSP}.

Another important class of sampling methods is based on stochastic differential equations. 
In particular, the overdamped Langevin diffusion, described by the SDE~\eqref{SDE0}, is widely used for sampling from log-concave distributions. 
Algorithm~\ref{alg:Unadjusted Langevin Algorithm} approximates the solution to \eqref{SDE0}.
\begin{algorithm}[H]
    \begin{algorithmic}[1]
      \caption{\algname{Overdamped Langevin Diffusion}}
      \label{alg:Unadjusted Langevin Algorithm}
      \item \textbf{Input}: {Step size $h$, initial sample $x_0 \sim \rho_0$}\\
       \textbf{Output}: {Sequence $x_1,x_2,\cdots$}\\
   \STATE \textbf{For} $n=1,2,\cdots$
   \begin{itemize}
   \item Sample $\xi_n\sim \mathcal{N}(0,I)$
   \item $x_{n+1}=x_n-h\nabla f(x_n)+\sqrt{2h}\xi_{n}$
   \end{itemize}
   \textbf{End for}
  
    \end{algorithmic}
 \end{algorithm}

An alternative method is the underdamped Langevin diffusion, which is defined as follows:
\begin{align*}\label{eqn:underdamped-Langevin}
\mathrm{d} V_t &= - \gamma V_t\mathrm{d} t - u\nabla f(X_t)\mathrm{d} t + \sqrt{2\gamma u}\mathrm{d} W_t, \\
\mathrm{d} X_t &= V_t\mathrm{d} t,
\end{align*}
where $\gamma > 0$ and $u > 0$ are parameters, and $(W_t)_{t \geq 0}$ is a standard Brownian motion.

Moreover, the overdamped Langevin diffusion can be recovered from this system by taking the limit $\gamma \to \infty$ and rescaling time as $t \mapsto t/\gamma$.

%For $(X_t)_{t\geq0}$ which evolves following the Langevin diffusion (\ref{SDE}),
%let $\rho_t$ denote the probability distribution of $X_t$.
%Then $(\rho_t)_{t\geq 0}$ following the Fokker-Planck equation:
%\begin{align*}
%\frac{\partial \rho_t}{\partial t} = \nabla \cdot (\rho_t \nabla f) + \Delta \rho_t = \nabla \cdot \left( \rho_t \nabla \mathrm{log} \frac{\rho_t}{\nu} \right).
%\end{align*}
%where $\nabla\cdot$ is the divergence and $\Delta $ is the Laplacian operator.

\section{Quantum Algorithms for Gibbs Expectation  under Dissipativity}\label{sec:dissipativity}
In this section, we propose a quantum algorithm for Gibbs expectation under relaxed conditions by incorporating a spring coupling SDE simulation and a change-of-measure MLMC.

In previous quantum algorithms for sampling potential functions and estimating normalizing constants, most approaches require the potential function $f$ to be strongly convex.
More specifically, \cite{Childs2022QuantumAF} proposes three estimation methods with complexity 
$\widetilde{\mathcal{O}}(\epsilon^{-1})$, all of which require 
$f$ to be $L$-smooth and $\mu$-strongly convex, i.e.,
$\mu, L > 0$ and for any $x\neq y \in \mathbb{R}^d$,
$$\frac{f(y) - f(x) - \langle \nabla f(x), y - x \rangle}{\|x - y\|_2^2 / 2} \in [\mu, L].$$
This strong convexity assumption is crucial for controlling the long-time error, since without it the error may grow exponentially in time.

A natural approach to relax this requirement is to use multilevel Monte Carlo, 
When the variance between the fine and coarse paths is controlled by $\mathcal{O}(2^{-\beta \ell})$  
and the computational cost per sample scales as $\mathcal{O}(2^{\gamma \ell})$ for level $\ell$,which reduces computational cost by coupling discretizations at different levels. 
In the classical setting, if the variance between the fine and coarse paths decays as $\mathcal{O}(2^{-\beta \ell})$ and the computational cost per sample scales as $\mathcal{O}(2^{\gamma \ell})$ at level $\ell$, then MLMC achieves the optimal complexity $\mathcal{O}(\epsilon^{-2})$ provided that $\beta \geq \gamma$.

However, the situation is different in the quantum setting.  
Since quantum mean estimation already provides a quadratic improvement over the classical Monte Carlo method,  
an additional quadratic acceleration through the multilevel technique demands a stronger condition, namely $\beta \geq 2\gamma$.  

In this case, the Euler-Maruyama method, which yields only a strong error of order $\frac{1}{2}$, is no longer sufficient.  
To achieve 1-order strong convergence, one need further assume that $f$ is $L$-Hessian smooth as defined in Definition~\ref{def:L-Hessian-smooth}.  
Under these conditions, the overall computational complexity can be reduced to $\widetilde{\mathcal{O}}(\epsilon^{-1})$.

To begin with, we consider the one-sided Lipschitz setting and introduce a QA-MLMC framework for estimating Gibbs expectations.

\begin{theorem}\label{thm:QA-MLMC of one-sided lipschitz}
Assume that $ f $ is $ L $-smooth, $L$-Hessian-smooth, and satisfies the one-sided Lipschitz condition with constant $ m $.  
Then, by applying the unbiased QA-MLMC with Theorem~\ref{thm:QMLMC2},
we obtain an unbiased estimator of $\E\left[\varphi(X_T)\right]$ whose variance is bounded by $\left(\epsilon/20\right)^2$,
where $T=\mathcal{O}(\log \epsilon^{-1})$.
Equivalently, this provides an estimator of $ \mathbb{E}_\pi[\varphi] $  
with additive error at most $ \epsilon $ and success probability at least $0.99$.  
The total computational complexity is 
$$\widetilde{\mathcal{O}}\left( \frac{L^2\sqrt{d}+Ld}{m }\epsilon^{-1} \right).$$
\end{theorem}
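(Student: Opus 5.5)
We will reduce the statement to Theorem~\ref{thm:QMLMC2} applied to $P=\varphi(X_T)$, with $P_\ell=\varphi(\widehat{X}^{(\ell)}_{T/h_\ell})$ the Euler--Maruyama approximation~\eqref{SDE:discrete} at step size $h_\ell=h_0 2^{-\ell}$. The fine and coarse paths are coupled as in~\eqref{SDE:discrete couple}, with common initial point and shared Brownian increments supplied through the randomness oracle $O_W$. The error then splits into two parts. The first is the time-truncation bias $|\E[\varphi(X_T)]-\E_\pi[\varphi]|$. The second is the error of the unbiased estimator of $\E[\varphi(X_T)]$. We make each at most $\epsilon/2$ in the appropriate sense and combine them through Lemma~\ref{lem:variance and epsilon}. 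The target variance $(\epsilon/20)^2$ gives, by Chebyshev, deviation at most $\epsilon/2$ with probability $0.99$.

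\textbf{Step 1 (time truncation).} Run a synchronous coupling of~\eqref{SDE0} started at $X_0$ and at a stationary copy $Y_0\sim\pi$. The one-sided Lipschitz condition gives $\frac{d}{dt}\|X_t-Y_t\|^2\le -2m\|X_t-Y_t\|^2$. Hence $\E\|X_T-Y_T\|\le e^{-mT}\E\|X_0-Y_0\|$, and since $\varphi$ is $K$-Lipschitz,
$$|\E[\varphi(X_T)]-\E_\pi[\varphi]|\le K e^{-mT}\E\|X_0-Y_0\|.$$
The second moment of $\pi$ is $\mathcal{O}(d/m+\|\nabla f(0)\|^2/m^2)$ by the dissipativity inequality~\eqref{eq:ing1}. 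Thus $T=\mathcal{O}(m^{-1}\log(d/\epsilon))=\widetilde{\mathcal{O}}(1)$ in $\epsilon$ suffices.

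\textbf{Step 2 (level parameters).} We verify the three hypotheses of Theorem~\ref{thm:QMLMC2} with $r=1$, $\gamma=1$, $K_3=T/h_0$, and cost measured in gradient queries.
\begin{itemize}
\item \emph{Weak error.} The weak error of Euler--Maruyama gives $\alpha=1$, with $K_1$ polynomial in $L,d$.
\item \emph{Strong error.} Because the diffusion coefficient is additive, Euler--Maruyama coincides with Milstein. Under $L$-smoothness and $L$-Hessian-smoothness, a strong error of order $1$ should hold:
$$\E\|\widehat{X}^f_T-\widehat{X}^c_T\|^2=\mathcal{O}(K_2' h_\ell^2).$$
This yields $V_\ell\le K^2\,\E\|\widehat{X}^f-\widehat{X}^c\|^2$, so $\beta=2=2\gamma$. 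This is the critical case $\beta\ge2\gamma$ of Theorem~\ref{thm:QMLMC2}. The cost is then $\widetilde{\mathcal{O}}(K_2^{1/2}K_3\epsilon^{-1})$, and $\beta+2\alpha>2\gamma$ holds, so the unbiased version applies.
\end{itemize}

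\textbf{Step 3 (the main obstacle: a uniform-in-$T$ strong error bound).} We must show that the constant $K_2$ does not grow exponentially in $T$. We write the one-step local error of the fine path over two substeps against the coarse step. By Taylor expansion using the Hessian-Lipschitz property, the leading deterministic part is $h^2\nabla^2 f\,\nabla f$, which is of size $\mathcal{O}(L^2 h^2\|x\|)$. The stochastic part is $h\nabla^2 f(\cdot)\sqrt{2}\Delta W$, which has mean zero and second moment $\mathcal{O}(L^2 d h^3)$. Finally, the Hessian-Lipschitz remainder contributes $\mathcal{O}(L h\|\Delta W\|^2)$, whose second moment is $\mathcal{O}(L^2d^2h^4)$.

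To propagate these errors, we take $e_n=\widehat{X}^f_{2n}-\widehat{X}^c_{2n}$. Strong convexity contracts $\|e_n\|^2$ by the factor $(1-2mh+\mathcal{O}(L^2h^2))$ per step when $h\le m/L^2$. The mean-zero martingale terms enter only through their variance, while the biased terms enter via Young's inequality at the cost of a factor $1/(mh)$. Summing the resulting geometric series yields
$$\E\|e_N\|^2\lesssim \frac{(L^2\sqrt{d}+Ld)^2}{m^2}\,h^2,$$
independently of $T$. This step also needs uniform moment bounds $\E\|\widehat{X}_n\|^2=\mathcal{O}(d/m)$, which we obtain from~\eqref{eq:ing1}, assuming $\|X_0\|$ and $\|\nabla f(0)\|$ are $\mathcal{O}(1)$.

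The bound gives $K_2^{1/2}=K(L^2\sqrt d+Ld)/m$. With $K_3=T/h_0=\widetilde{\mathcal{O}}(1)$ and $h_0$ a constant multiple of $m/L^2$ absorbed into the bound, the cost becomes $\widetilde{\mathcal{O}}\bigl((L^2\sqrt d+Ld)m^{-1}\epsilon^{-1}\bigr)$. The precise tracking of the $d$- and $L$-dependence in this contraction argument, and in particular ensuring that the martingale term costs $\sqrt d$ rather than $d$, is where I expect the most care to be needed.
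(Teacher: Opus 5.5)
Your proposal follows the paper's proof. You fix $T=\mathcal{O}(\log\epsilon^{-1})$ so the truncation bias is at most $\epsilon/2$; your synchronous coupling with a stationary copy is the argument of Proposition~\ref{prop: different initial but same Brownian}, where the paper instead cites geometric ergodicity. You then apply Theorem~\ref{thm:QMLMC2} with $\alpha=1$, $\beta=2=2\gamma$ and $\hat\sigma=\epsilon/20$, and the $T$-uniform order-one strong error you sketch in Step~3 is exactly Corollary~\ref{cor: p moment distance of one-sided}, which the paper simply invokes.

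One bookkeeping fix: $h_0$ is not ``absorbed'' but cancels. Since $V_\ell=\mathcal{O}(K^2C_{(3)}^2h_0^2\,2^{-2\ell})$, you have $K_2^{1/2}\propto C_{(3)}h_0$, while $K_3=T/h_0$, so $K_2^{1/2}K_3\propto C_{(3)}T$. Taking $K_2^{1/2}$ without the factor $h_0$ and $K_3=T/h_0$ literally would leave a spurious factor $L^2/m$.
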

\begin{proof}
The dissipativity of $f$ guarantees that the Markov chain is geometrically ergodic (Lemma~\ref{lem:ergodic under dissipative}), 
so there exists $T=\mathcal{O}\left(\log(\epsilon^{-1})\right)$ such that 
$$\left|\E_{\pi}[\varphi]-\E[\varphi(X_T)]\right|\leq\frac{\epsilon}{2}.$$

By Corollary~\ref{cor: p moment distance of one-sided}, for $0<h\leq h_{(2)}$, we have
$$
\E\left[\|\widehat{X}_{2N}^f- \widehat{X}_{2N}^c\|^2 \right]
\leq
4 C_{(3)}^2 h^{2}.
$$
Therefore, in Theorem~\ref{thm:QMLMC2}, we take 
$P = \varphi(X_T)$, and let $P_\ell$ denote $\varphi(X_T)$ 
computed using the Euler--Maruyama method with step size $h_{(2)}2^{-\ell}$. Then,
\begin{align*}
  \left|\E\left[P_\ell-P\right]\right|=\widetilde{\mathcal{O}}(C(L,m,d,T)2^{-\ell}),\quad
  \operatorname{Var}(P_\ell-P_{\ell-1})=\widetilde{\mathcal{O}}\left(\frac{(L^2\sqrt{d}+Ld)^2}{m^2}h_{(2)}^22^{-2\ell}\right),\quad
  \mathcal{C}_\ell=\widetilde{\mathcal{O}}\left(\frac{T}{h_{(2)}}2^{\ell}\right).
\end{align*}
Here, $C(L,m,d,T)$ denotes a constant that depends at most polynomially on $L$, $m$, $d$, and $T$. 
Noting that in Theorem~\ref{thm:QMLMC2}, when $\beta \geq 2\gamma$, 
the contribution of $C(L,m,d,T)$ to the overall complexity is at most logarithmic, 
i.e., of order $\log C(L,m,d,T)$, 
we conclude that this dependence is negligible.

Let $\hat{\sigma}=\frac{\epsilon}{20}$,
we obtain an unbiased estimator of $\E\left[\varphi(X_T)\right]$ 
whose variance is bounded by $\left(\epsilon/20\right)^2$.
Using Lemma~\ref{lem:variance and epsilon},
this provides an estimator of $ \mathbb{E}_\pi[\varphi] $  
with additive error at most $ \epsilon $ and success probability at least $0.99$.  
The expected total complexity is 
$$\widetilde{\mathcal{O}}\left( \frac{L^2\sqrt{d}+Ld}{m }\epsilon^{-1} \right).$$
\end{proof}

The one-sided Lipschitz condition 
still excludes a broad class of SDEs that only satisfy the dissipativity condition.  In such cases, even when the fine and coarse trajectories are driven by the same Brownian motion,  
their variance may still grow exponentially with respect to $T$.  

In what follows, we consider algorithms under the weaker one-sided Lipschitz and dissipativity conditions.  
Since the standard one-sided Lipschitz condition can be viewed as a special case of the weaker one-sided Lipschitz assumption,  
we will not discuss the algorithm under the one-sided Lipschitz case in detail.

\subsection{Quantum-Accelerated Multilevel Monte Carlo with Change of Measure}
\label{subsec:Quantum-Accelerated Multilevel Monte Carlo with Change of Measure}
Building upon the change-of-measure approach with a spring coefficient $S>0$
introduced in~\cite{Fang2018MultilevelMC}, we adapt this framework
to the QA-MLMC for Gibbs expectation.

The motivation for adding a spring term stems from the weaker one-sided Lipschitz assumption as defined in Definition~\ref{def:Weaker One-sided Lipschitz}.

Note that if $f$ is $L$-smooth, it can be shown that it also satisfies the weaker one-sided Lipschitz condition, since
\begin{align*}
\langle \nabla f(x) - \nabla f(y), x - y \rangle 
&\geq -\|\nabla f(x) - \nabla f(y)\| \cdot \|x - y\| \\
&\geq -L \|x - y\|^2.
\end{align*}

Consider the coupling under the SDEs with an added "spring" term:
\begin{align*}
\mathrm{d} Y_t^f &= S(Y_t^c-Y_t^f)\mathrm{d} t  -\nabla f(Y_t^f)\mathrm{d} t + \sqrt{2}\mathrm{d} W_t^{{\mathbb{P}}} ,\\
\mathrm{d} Y_t^c &= S(Y_t^f-Y_t^c)\mathrm{d} t -\nabla f(Y_t^c)\mathrm{d} t +  \sqrt{2}\mathrm{d} W_t^{{\mathbb{P}}} .
\end{align*}
where the spring coefficient  satisfies $S>\frac{\lambda}{2}$ 
for both fine path and coarse path for all $\ell>1$. 

Under these modified SDEs, the difference between the two paths evolves according to:
\begin{align*}
\mathrm{d} (Y_t^f- Y_t^c)= 2S(Y_t^c-Y_t^f)\mathrm{d} t - (\nabla f(Y_t^f)-\nabla f(Y_t^c))\mathrm{d} t.
\end{align*}
Provided $S>\frac{\lambda}{2}$, 
applying Ito's formula and the one-sided Lipschitz condition  yields:
\begin{align*}
\mathrm{d} \|Y_t^f- Y_t^c\|^2 \leq 2(\lambda-2S)\|Y_t^f- Y_t^c\|^2\mathrm{d} t,
\end{align*} 
which indicates that the distance between the fine and coarse paths contracts exponentially over time.

On the other hand, define the following Brownian motions:
\begin{align*}
\sqrt{2}\mathrm{d} W_t^{\mathbb{Q}^f} &= S(Y_t^c-Y_t^f)\mathrm{d} t+  \sqrt{2}\mathrm{d} W_t^{\mathbb{P}},\\
\sqrt{2}\mathrm{d} W_t^{\mathbb{Q}^c} &= S(Y_t^f-Y_t^c)\mathrm{d} t + \sqrt{2} \mathrm{d} W_t^{\mathbb{P}},
\end{align*}
 so that $W_t^{\mathbb{Q}^f}$ and $W_t^{\mathbb{Q}^c}$ 
become standard Brownian motions for the following original SDEs:
\begin{align*}
\mathbb{Q}^f: \mathrm{d} X_t^f &=-\nabla f(X_t^f)\mathrm{d} t + \sqrt{2}\mathrm{d} W_t^{\mathbb{Q}^f},\\
\mathbb{Q}^c:\mathrm{d} X_t^c &=-\nabla f(X_t^c)\mathrm{d} t +\sqrt{2} \mathrm{d} W_t^{\mathbb{Q}^c}.
\end{align*}
Note that under measure $P$, the distribution of $X^f_t$ is the same as $Y^f_t$ under measure ${\mathbb{Q}}^f$. 
Combined with Girsanov's theorem, we have:
\begin{align*}
    \E^{\mathbb{P}}\left[ \varphi(X^f_{t}) \right] = \E^{{\mathbb{Q}}^f}\left[\varphi(Y^f_t)\right] = \E^{\mathbb{P}} \left[ \varphi(Y^f_t) \frac{\mathrm{d} {\mathbb{Q}}^f}{\mathrm{d}\mathbb{P}} \right].
\end{align*}
Similarly,
\begin{align*}
    \E^{\mathbb{P}}\left[ \varphi(X^c_{t}) \right] = \E^{{\mathbb{Q}}^c}\left[\varphi(Y^c_t)\right] = \E^{\mathbb{P}} \left[ \varphi(Y^c_t) \frac{\mathrm{d} {\mathbb{Q}}^c}{\mathrm{d}\mathbb{P}} \right].
\end{align*}
where $\frac{\mathrm{d} \mathbb{Q}^f}{\mathrm{d} \mathbb{P}}$ and 
$\frac{\mathrm{d} \mathbb{Q}^c}{\mathrm{d} \mathbb{P}}$ 
are the corresponding Radon-Nikodym derivatives of the measure 
respectively, with respect to the common measure $\mathbb{P}$ 
under which both paths are simulated.

Furthermore, we consider the discretized setting.

For level 0, the numerical estimator is the same as the standard MLMC $\varphi(\widehat{X}_T^0)$.

For level $\ell>1,$ we simulate the SDE with the additional spring terms using 
timestep $h =  2^{-\ell}h_0$ for the fine path and $2h$ for the coarse path.
\begin{itemize}
\item
$\widehat{Y}_{0}^f = \widehat{Y}_{0}^c = X_0$.
\item
At odd timesteps for $n\geq 0,$ we update both paths:
\begin{align*}
 \widehat{Y}_{{2n+1}}^c &=\widehat{Y}_{2n}^c+ S(\widehat{Y}_{2n}^f-\widehat{Y}_{2n}^c)h - \nabla f(\widehat{Y}_{2n}^c) h + \sqrt{2}\Delta W_{2n}^{\mathbb{P}}, \\
\widehat{Y}_{{2n+1}}^f &=\widehat{Y}_{2n}^f+ S(\widehat{Y}_{2n}^c-\widehat{Y}_{2n}^f)h - \nabla f(\widehat{Y}_{2n}^f) h + \sqrt{2}\Delta W_{2n}^{\mathbb{P}}.
\end{align*} 
\item
At even timesteps for $n\geq 0,$ we update 
the spring term and drift term of the fine path, 
but keep both the same for the coarse path:
\begin{align*}
\widehat{Y}_{2n+2}^c &=\widehat{Y}_{{2n+1}}^c+S(\widehat{Y}_{{2n}}^f-\widehat{Y}_{{2n}}^c)h - \nabla f(\widehat{Y}_{{2n}}^c) h +\sqrt{2}\Delta W_{2n+1}^{\mathbb{P}},\\
 \widehat{Y}_{2n+2}^f &=\widehat{Y}_{{2n+1}}^f+ S(\widehat{Y}_{{2n+1}}^c-\widehat{Y}_{{2n+1}}^f)h - \nabla f(\widehat{Y}_{{2n+1}}^f) h +\sqrt{2}\Delta W_{2n+1}^{\mathbb{P}}.
\end{align*} 
\end{itemize}
Equivalently, the coarse path updates can be combined as
\begin{align*}
    \widehat{Y}_{{2n+2}}^c\ =\ \widehat{Y}_{{2n}}^c+  S(\widehat{Y}_{{2n}}^f-\widehat{Y}_{{2n}}^c)2h- \nabla f(\widehat{Y}_{{2n}}^c) 2h+\sqrt{2}\Delta W_{2n}^{\mathbb{P}}+\sqrt{2}\Delta W_{2n+1}^{\mathbb{P}}.
\end{align*}

We now discuss how to perform the change of measure. 
For notational convenience, we temporarily do not distinguish between the fine and coarse paths, and denote the spring term by $\widehat{S}$.
More specifically, consider modifying each timestep of the entire path by introducing a new measure $\widehat{\mathbb{Q}}$. 
Under this measure, the Brownian increments satisfy
\begin{align*}
    \sqrt{2}\Delta W^{\widehat{\mathbb{Q}}}_n = \widehat{S}h + \sqrt{2}\Delta W^{\mathbb{P}}_n
\end{align*}
for $n = 0, 1, \dots, N-1.$

Under measure $\mathbb{P}$,
\begin{align*}
    \widehat{Y}_{{n+1}} &=\widehat{Y}_{n}+ \widehat S h - \nabla f(\widehat{Y}_{n}) h + \sqrt{2}\Delta W_{n}^{\mathbb{P}},
\end{align*}
we obtain
\begin{align*}
    \widehat{Y}_{{n+1}}\sim \mathcal{N}^{\mathbb{P}}(\widehat{Y}_{n}+ \widehat S h - \nabla f(\widehat{Y}_{n}) h,2hI).
\end{align*}
Under measure $\widehat{\mathbb{Q}}_n$,
\begin{align*}
    \widehat{Y}_{{n+1}} &=\widehat{Y}_{n}- \nabla f(\widehat{Y}_{n}) h + \sqrt{2}\Delta W_{n}^{\mathbb{P}},
\end{align*}
and
\begin{align*}
    \widehat{Y}_{{n+1}}\sim \mathcal{N}^{\widehat{\mathbb{Q}}_n}(\widehat{Y}_{n} - \nabla f(\widehat{Y}_{n}) h,2hI).
\end{align*}
The  exact Radon-Nikodym derivative for this single step is
\begin{align*}
    R(\widehat{Y}_{{n+1}},\widehat{Y}_{n}, \widehat{S},h)
    =\frac{\d \widehat{\mathbb{Q}}_n}{\d \mathbb{P}}=\frac{\rho (\widehat{Y}_{{n+1}}|\widehat{Y}_{n} -\nabla f(\widehat{Y}_{n})h, 2h I )}{\rho(\widehat{Y}_{{n+1}}
| \widehat{Y}_{n}+ \widehat{S}h -\nabla f(\widehat{Y}_{n})h , 2h I )},
\end{align*}
where $\rho(x|\mu,\Sigma)$ is the probability density function of $\mathcal N(\mu,\Sigma)$. 
Denote $\mu=\widehat{Y}_{n} -\nabla f(\widehat{Y}_{n})h$,
then
\begin{align*}
    \log R(\widehat{Y}_{{n+1}},\widehat{Y}_{n}, \widehat{S},h)
    =&-\frac{1}{2}\left[ (\widehat{Y}_{n+1}-\mu)^{\top} (2hI)^{-1}(\widehat{Y}_{n+1}-\mu)- (\widehat{Y}_{n+1}-\mu- \widehat{S}h)^{\top} (2hI)^{-1}(\widehat{Y}_{n+1}-\mu-\widehat{S}h)\right]\\
    =&-\frac{1}{2}\langle \widehat{Y}_{n+1}-\mu,  \widehat{S}\rangle+\frac{1}{4h}\| \widehat{S}h\|^2\\
    =&-\frac{1}{2}\langle \widehat{Y}_{n+1}-\widehat{Y}_{n} +\nabla f(\widehat{Y}_{n})h,  \widehat{S}\rangle+\frac{1}{4}\| \widehat{S}\|^2h.
\end{align*}
Hence
\begin{align*}
   R(\widehat{Y}_{{n+1}},\widehat{Y}_{n}, \widehat{S},h)
=& \exp\left(-\frac{1}{2}\left\langle \widehat{Y}_{{n+1}}-\widehat{Y}_{n}+\nabla f(\widehat{Y}_{n})h,\widehat{S}\right\rangle +\frac{1}{4}\|\widehat{S}\|^2 h \right).
\end{align*}
If express it in terms of $\Delta W_n$, we obtain
\begin{align*}
 R(\widehat{Y}_{{n+1}},\widehat{Y}_{n}, \widehat{S},h)
 =& \exp\left(-\frac{1}{2}\left\langle \widehat{S}h+\sqrt{2}\Delta W_n^\mathbb{P} ,\widehat{S}\right\rangle +\frac{1}{4}\|\widehat{S}\|^2 h \right)\\
 =&\exp\left(-\frac{\sqrt{2}}{2}\langle \Delta W_n^\mathbb{P},\widehat{S}\rangle-\frac{1}{4}\|\widehat{S}\|^2h\right).
\end{align*}

Since $\{\Delta W^{\mathbb{P}}_n\}$ and $\{\Delta W^{\widehat{\mathbb{Q}}}_n\}$ 
are two sets of independent Brownian increments under measures $\mathbb{P}$ and $\widehat{\mathbb{Q}}$ respectively, 
the exact Radon-Nikodym derivative is given by
\begin{align*}
    \frac{\d\widehat{\mathbb{Q}}}{\d\mathbb{P}}
    = \prod_{n=0}^{N-1} R(\widehat{Y}_{n+1}, \widehat{Y}_n, \widehat{S}, h).
\end{align*}

Returning to the coupling between the fine and coarse paths, we introduce two new measures, 
$\widehat{\mathbb{Q}}^f$ and $\widehat{\mathbb{Q}}^c$, corresponding to the fine and coarse paths respectively. 
Their Brownian increments are defined as
\begin{align*}
   \sqrt{2} \Delta W_n^{\widehat{\mathbb{Q}}^f} = \widehat{S}^f_n h +\sqrt{2} \Delta W_n^{\mathbb{P}}, 
    \quad 
   \sqrt{2} \Delta W_n^{\widehat{\mathbb{Q}}^c} = \widehat{S}^c_n h + \sqrt{2}\Delta W_n^{\mathbb{P}},
\end{align*}
where $\widehat{S}^f_n$ and $\widehat{S}^c_n$ denote the spring terms 
at the $n$-th step for the fine and coarse levels. 
The corresponding Radon-Nikodym derivatives can then be computed recursively, 
step by step at the same time as updating the paths:
\begin{itemize}
\item At $t_0,$ we set $R_{0}^f = R_{0}^c = 1$;
\item At odd timesteps for $n\geq 0,$ we only update $R^f$:
\begin{align*}
R_{{2n+1}}^f= R_{{2n}}^f\  R\left(\widehat{Y}_{{2n+1}}^f, \widehat{Y}_{{2n}}^f, S(\widehat{Y}_{{2n}}^c-\widehat{Y}_{{2n}}^f), h\right);
\end{align*}
\item At even timesteps  for $n\geq 0,$ we update both $R^f$ and $R^c$:
\end{itemize}
\begin{align*}
 R_{{2n+2}}^f  =  &  R_{{2n+1}}^f\,  R\left( \widehat{Y}_{{2n+2}}^f ,\widehat{Y}_{{2n+1}}^f,S(\widehat{Y}_{{2n+1}}^c-\widehat{Y}_{{2n+1}}^f),h  \right),\\
R_{{2n+2}}^c   = &  R_{{2n}}^c \,R \left( \widehat{Y}_{{2n+2}}^c ,\widehat{Y}_{{2n}}^c, S(\widehat{Y}_{{2n}}^f-\widehat{Y}_{{2n}}^c), 2h\right).
\end{align*}
Then, after $2N$ steps, we obtain the exact Radon-Nikodym derivatives for the whole path:
\begin{align*}
\frac{\d \widehat{\mathbb{Q}}^f}{\d \mathbb{P}} &= R^f_T = \prod_{n=0}^{2N-1}  R\left(\widehat{Y}_{{n+1}}^f, \widehat{Y}_{{n}}^f,  S(\widehat{Y}_{{n}}^c-\widehat{Y}_{{n}}^f),  h\right),\\
\frac{\d \widehat{\mathbb{Q}}^c}{\d \mathbb{P}}  &= R^c_T = \prod_{n=0}^{N-1}  R \left( \widehat{Y}_{{2n+2}}^c ,\widehat{Y}_{{2n}}^c, S(\widehat{Y}_{{2n}}^f-\widehat{Y}_{{2n}}^c), 2h\right).
\end{align*}
Finally, the multilevel correction estimator becomes 
\begin{align*}
 \varphi(\widehat{Y}_T^f)R^f_T-\varphi(\widehat{Y}_T^c)R^c_T,
\end{align*}
and the identity we use in the new MLMC is 
\begin{align*}
    \E^{{\mathbb{P}}}\left[\varphi(X_T^L)\right] = \E^{{\mathbb{P}}}\left[\varphi(\widehat{X}_T^0)\right] + \sum_{\ell=1}^L \E^{{\mathbb{P}}}\left[\varphi(\widehat{Y}_T^{f,\ell})R^{f,\ell}_T -\varphi(\widehat{Y}_T^{c,\ell})R^{c,\ell}_T\right],  
\end{align*}
where at level 0, the numerical estimator is the same as the standard MLMC.
 The new MLMC estimator becomes
\begin{align*}
\widehat{\varphi}_{new} = & N_0^{-1} \sum_{n=1}^{N_0} \varphi(\widehat{X}_T^{0,(n)})  \\
&+ \sum_{\ell=1}^L N_\ell^{-1} \sum_{n=1}^{N_\ell}\left(  \varphi(\widehat{Y}_T^{f,\ell,(n)})R^{f,\ell,(n)}_T- \varphi(\widehat{Y}_T^{c,\ell,(n)}) R^{c,\ell,(n)}_T\right). 
\end{align*}

As a summary, the level-$\ell$ sampling method under the change of measure
can be written as Algorithm~\ref{alg:classical level-l-mlmc with change of measure}.
\begin{algorithm}[H]
\small{
  \begin{algorithmic}[1]
    \caption{\algname{Level-$\ell$ Sample with Change-of-Measure MLMC}}
    \label{alg:classical level-l-mlmc with change of measure}
    \item \textbf{Input:} Initial state $x_0$,
          step size $h$ and number $N$, spring coefficient $S$

     \textbf{Output:} A level-$\ell$ sample $\Delta_\ell$
 
\STATE Set $y_0^f = y_0^c = x_0$ and $R_0^f = R_0^c = 1$

\STATE \textbf{For} $n=0$ \textbf{to} $N-1$
\begin{tightitemize}
    \item At odd timesteps, update both the coarse and fine paths of the SDE with the spring terms:
        \begin{tightalign*}
        y_{2n+1}^c &= y_{2n}^c + S(y_{2n}^f - y_{2n}^c)h - \nabla f(y_{2n}^c)h + \sqrt{2h} \,\xi_{2n} \\
        y_{2n+1}^f &=y_{2n}^f + S(y_{2n}^c - y_{2n}^f)h - \nabla f(y_{2n}^f)h + \sqrt{2h}\, \xi_{2n}
        \end{tightalign*}
        and update the accumulated Radon–Nikodym derivative at each step:
        \begin{tightalign*}
       R_{2n+1}^f = R_{2n}^f \cdot R(y_{2n+1}^f, y_{2n}^f, S(y_{2n}^c - y_{2n}^f), h)
        \end{tightalign*}
    \item At even timesteps, update the spring term and drift term of the fine
path, but keep both the same for the coarse path:
\begin{tightalign*}
    y_{2n+2}^c &= y_{2n+1}^c + S(y_{2n}^f - y_{2n}^c)h - \nabla f(y_{2n}^c)h + \sqrt{2h} \,\xi_{2n+1} \\
    y_{2n+2}^f &= y_{2n+1}^f + S(y_{2n+1}^c - y_{2n+1}^f)h - \nabla f(y_{2n+1}^f)h + \sqrt{2h} \,\xi_{2n+1}
\end{tightalign*}
and update the Radon–Nikodym derivative accordingly:
\begin{tightalign*}
   R_{2n+2}^f &= R_{2n+1}^f \cdot R(y_{2n+2}^f, y_{2n+1}^f, S(y_{2n+1}^c - y_{2n+1}^f), h) \\
   R_{2n+2}^c &= R_{2n}^c \cdot R(y_{2n+2}^c, y_{2n}^c, S(y_{2n}^f - y_{2n}^c), 2h)
\end{tightalign*}
\end{tightitemize} 
\textbf{End for}

\STATE  \textbf{Return} $\Delta_\ell = \varphi(y_{2N}^f) R_{2N}^f - \varphi(y_{2N}^c) R_{2N}^c$
  \end{algorithmic}
}
\end{algorithm}

\begin{proposition}[Strong Order Error of Discretization]
\label{prop: strong order error of discretization}
Assume that $ f $ is dissipative, $ L $-smooth, $L$-Hessian-smooth, and satisfies the weaker one-sided Lipschitz condition with constant $ \lambda $.
Then
as notation above
with $S>\lambda/2$, any
$T>0$ and $p\geq 1$, we have
\begin{align*}
        &\E\left[\left|\varphi(\widehat{Y}_{2N}^f)\frac{\d\widehat{\mathbb{Q}}^f}{\d \mathbb{P}}-\varphi(\widehat{Y}_{2N}^c)\frac{\d\widehat{\mathbb{Q}}^c}{\d \mathbb{P}}\right|^p\right]^{1/p}
    =\mathcal{O}\left(\sqrt{T}\cdot\frac{SLd(L+\sqrt{d})}{2S-\lambda}h\right)
\end{align*}
for $0<h\leq h_{(4)}$, where
   $    h_{(4)}=\mathcal{O}\left(\min\left\{\frac{1}{S},\,\frac{\tilde{\alpha}}{S^2},\,\frac{\tilde{\alpha}}{L^2},\,\frac{\tilde{\alpha}}{2S-\lambda},\,\frac{\min\{1,\tilde\alpha\}}{\|\nabla f(0)\|^2},\,\left(1-\frac{\lambda}{S}\right)^2,\,\frac{(2S-\lambda)^2}{T^2S^2L^2d\,},\,\frac{2S-\lambda}{\sqrt{T}S(L^2+L\sqrt{d})}\,\right\} \right)   .$
\normalsize
\end{proposition}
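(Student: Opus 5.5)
I would begin by splitting the level-$\ell$ correction into a path-difference part and a weight-difference part:
\begin{align*}
\varphi(\widehat{Y}_{2N}^f)R^f_T-\varphi(\widehat{Y}_{2N}^c)R^c_T
=\bigl(\varphi(\widehat{Y}_{2N}^f)-\varphi(\widehat{Y}_{2N}^c)\bigr)R^f_T+\varphi(\widehat{Y}_{2N}^c)\bigl(R^f_T-R^c_T\bigr).
\end{align*}
Since $\varphi$ is bounded and $K$-Lipschitz, H\"older's inequality reduces the $L^p$ norm to two pieces. The first is bounded by $K\,\E[\|\widehat{Y}^f_{2N}-\widehat{Y}^c_{2N}\|^{2p}]^{1/2p}\,\E[(R^f_T)^{2p}]^{1/2p}$. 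The second is bounded by $\|\varphi\|_\infty\,\E[|R^f_T-R^c_T|^p]^{1/p}$.

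The preparatory step is uniform moment control. Using dissipativity and $L$-smoothness, I would bound $\E[\|\widehat{Y}^{f}_n\|^q]$ and $\E[\|\widehat{Y}^c_n\|^q]$ uniformly in $n$ and $T$. The step-size restrictions $h\lesssim \tilde\alpha/L^2,\ \tilde\alpha/S^2,\ \min\{1,\tilde\alpha\}/\|\nabla f(0)\|^2$ in $h_{(4)}$ are exactly the ones this discrete Lyapunov argument needs. The spring term is harmless here because $\langle y^f,S(y^c-y^f)\rangle+\langle y^c,S(y^f-y^c)\rangle=-S\|y^f-y^c\|^2\le 0$.

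Next I would control the path gap $e_n=\widehat{Y}^f_n-\widehat{Y}^c_n$ over a double step. I would expand the fine path's two Euler steps against the coarse single step, and Taylor-expand $\nabla f(\widehat{Y}^f_{2n+1})$ around $\widehat{Y}^f_{2n}$ using $L$-Hessian-smoothness. The local defect is then $-h\nabla^2 f(\widehat{Y}^f_{2n})\sqrt{2}\Delta W_{2n}+O(h^2(L^2\|y\|+\ldots))$. Its leading part is a martingale increment of size $hL\sqrt{dh}$, and the Hessian remainder contributes $L d h^2$. Combined with the contraction
\begin{align*}
\langle e,-2Se-(\nabla f(y^f)-\nabla f(y^c))\rangle\le-(2S-\lambda)\|e\|^2,
\end{align*}
a discrete Gr\"onwall/BDG argument on $\E\|e_{2n}\|^{2p}$ gives $\sup_n\E[\|e_n\|^{2p}]^{1/2p}=O\bigl(\tfrac{L^2+L\sqrt d}{2S-\lambda}\,h\bigr)$, uniformly in $T$. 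This is the same mechanism as Corollary~\ref{cor: p moment distance of one-sided}, with $m$ replaced by $2S-\lambda$; the condition $h\lesssim 1/(2S-\lambda)$ and $(1-\lambda/S)^2$ keep the discrete contraction factor below $1$.

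For the weights, I would write $R^f_T=\exp(M^f-\tfrac12\langle M^f\rangle)$ with $M^f=-\tfrac{\sqrt2}{2}\sum\langle\Delta W_n,\widehat S^f_n\rangle$ and $\widehat S^f_n=-Se_n$; the coarse weight has the same form. Since $\|\widehat S\|\le S\|e\|=O(Sh)$, the quadratic variation is $O(TS^2\|e\|^2)$. The exponential moments $\E[(R^f_T)^{2p}]$ are then $O(1)$ once $TS^2\sup\|e\|^2$ is small, which holds under the constraint $h\lesssim(2S-\lambda)/(TSL\sqrt d)$ appearing in $h_{(4)}$ (I would make this rigorous by a stopping/bootstrap argument on $\|e_n\|$). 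For $R^f_T-R^c_T$, I would use $|e^a-e^b|\le|a-b|(e^a+e^b)$ together with H\"older. The difference of the log-weights is a sum of $N$ terms of the form $\langle\Delta W,\ S(e_{2n}+e_{2n+1})-2Se_{2n}\rangle$ plus quadratic-variation differences. Its martingale part has $L^p$ norm $\sqrt{T}\,S\cdot O(\sqrt{d}\,\sup\|e\|)$ by BDG, because $e_{2n+1}-e_{2n}$ carries an extra factor but the leading term is the full $e$. This yields $O\bigl(\sqrt T\,\tfrac{S\sqrt d(L^2+L\sqrt d)}{2S-\lambda}h\bigr)=O\bigl(\sqrt T\,\tfrac{SLd(L+\sqrt d)}{2S-\lambda}h\bigr)$, which dominates the first piece and gives the claim.

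The main obstacle is the joint control of the exponential weights and the gap. The moments of $R$ need $e$ small pathwise-in-$L^q$ over a horizon $T$, while the gap estimate itself must be uniform in $T$. I would first try to close this with a high-moment bound on $\sup_n\|e_n\|$ via BDG, and then use Novikov-type estimates $\E[\exp(cM-c'\langle M\rangle)]\le\E[\exp(c''\langle M\rangle)]^{1/2}$. The resulting smallness condition is exactly the $T$-dependent entries of $h_{(4)}$.
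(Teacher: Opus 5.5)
Your plan follows the paper's route. It uses uniform moments from dissipativity (Lemma~\ref{lem:bounds of p moment}), a gap estimate contracting at rate $2S-\lambda$ via Hessian smoothness and BDG (Lemma~\ref{lem:p-bounds of distance of two paths}), weight moments via the Cauchy--Schwarz/Novikov split (Lemma~\ref{lem:bound of R-D}), and BDG on the log-weights (Lemma~\ref{lem:bound of two path with R-D}). The paper compares each weight with $1$ rather than $R^f_T$ with $R^c_T$, which makes no real difference.

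\textbf{The gap.} The step that is not yet a proof is the one you flag as the crux. $\|e_n\|=O(h)$ holds only in $L^q$, not pathwise. So \emph{once $TS^2\sup\|e\|^2$ is small} has to mean $\mathbb{E}\exp(cS^2h\sum_n\|e_n\|^2)=O(1)$. That requires knowing how the moment constants grow in $q$, not just a bound for each fixed $p$.

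\textbf{How the paper closes it.} The paper expands the exponential in its Taylor series and feeds in the cruder order-$1/2$ estimate $\mathbb{E}\|e_{2n}\|^q\le C_{(1)}^q q^{q/2}h^{q/2}$, with $C_{(1)}=O(L\sqrt d/(2S-\lambda))$. The sub-Gaussian growth $q^{q/2}$, together with Stirling, makes the series geometric with ratio $\propto p^2TS^2C_{(1)}^2h$. That is the source of the $(2S-\lambda)^2/(S^2L^2d)$-type entry of $h_{(4)}$.

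\textbf{Why the order-$1$ bound fails here.} The order-$1$ estimate has $q^q$ growth, because the Hessian-remainder term $L\|\widehat Y^f_{2k+1}-\widehat Y^f_{2k}\|^2$ is chi-square-like. With it the series bound diverges for every $h>0$. A stopping-time bootstrap does not by itself fix this, since on the exceptional event you still need moments of the unstopped weight.

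So you need both gap estimates with explicit $p$-dependence: the weak one controls the weights and the strong one gives the rate. The condition $h\lesssim(2S-\lambda)/(TSL\sqrt d)$ you quote comes out of neither.

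\textbf{Spring signs.} The springs have opposite signs: $\widehat S^f_n=-Se_n$ but $\widehat S^c=+Se_{2k}$. So the martingale part of $\log R^f_T-\log R^c_T$ is $\tfrac{\sqrt2}{2}S\sum_k\bigl(\langle\Delta W_{2k},2e_{2k}\rangle+\langle\Delta W_{2k+1},e_{2k}+e_{2k+1}\rangle\bigr)$. There is no cancellation of the kind $S(e_{2n}+e_{2n+1})-2Se_{2n}$ suggests. Since you bound this term by the full $e$ anyway, nothing is lost.

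\textbf{Dimension bookkeeping.} Your $d$-accounting errs twice in opposite directions.
\begin{itemize}
\item With $\|\widehat Y_n\|\sim\sqrt{d/\tilde\alpha}$, your own defect terms $L^2\|y\|h^2$ and $Ldh^2$ give $C_{(2)}=O((L^2\sqrt d+Ld)/(2S-\lambda))$, not $(L^2+L\sqrt d)/(2S-\lambda)$.
\item Conversely, $\sqrt d(L^2+L\sqrt d)=L\sqrt d(L+\sqrt d)$, not $Ld(L+\sqrt d)$.
\end{itemize}
BDG in fact adds no $\sqrt d$: the quadratic variation of $\sum_n\langle v_n,\Delta W_n\rangle$ is of order $\sum_n\|v_n\|^2h$. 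The paper inserts the same unnecessary factor. The correct outcome is therefore $O(\sqrt T\,S\,C_{(2)}h)=O(\sqrt T\,SL\sqrt d(L+\sqrt d)h/(2S-\lambda))$, which implies the stated bound.
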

\begin{proof}
It follows directly from Lemma~\ref{lem:bounds of p moment} - Lemma~\ref{lem:bound of two path with R-D}, which can be found in the Appendix~\hyperref[Appendix A: Strong Error Analysis]{A} .
\end{proof}
%\begin{remark}
%    Note that the weaker one-sided Lipschitz condition can
%    be implied by the $L$-smoothness, so we may omit it in the following.
%\end{remark}

In the following, we aim to accelerate the multilevel Monte Carlo method
in this setting using quantum algorithms and the required oracles have been introduced in Section~\ref{sec:(sub)oracles}.

\begin{algorithm}[H]
\small{
\caption{\algname{Level-$\ell$ Quantum Sample with Change-of-Measure MLMC}}
\label{alg:Level l sample with change-of-measure QA-MLMC}
\begin{algorithmic}[1]

\item \textbf{Inputs}: Oracles $O_I, O_{\nabla f}, O_W, O_R, O_\varphi$,
 step size $h$ and number $N$, spring coefficient $S$

\textbf{Output}: A level-$\ell$ sample state $|\Delta_\ell\rangle$

\STATE Prepare initial states:
$|y_0^f\rangle = |y_0^c\rangle = O_I|0\rangle,\quad |R_0^f\rangle = |R_0^c\rangle = |1\rangle$
\STATE \textbf{For} $n = 0$ \textbf{to} $N-1$

\begin{tightitemize}
    \item Sample seed $s_{2n}$ and query $O_W$:
$|s_{2n}\rangle|0\rangle \mapsto |s_{2n}\rangle |\xi_{s_{2n}}\rangle$
to prepare the stochastic term
\item Query gradients:
$
O_{\nabla f} |y_{2n}^c\rangle|0\rangle = |y_{2n}^c\rangle|\nabla f(y_{2n}^c)\rangle,
 \quad O_{\nabla f} |y_{2n}^f\rangle|0\rangle = |y_{2n}^f\rangle|\nabla f(y_{2n}^f)\rangle
$
\item Apply arithmetic unitary to update the paths of SDEs:
\begin{tightalign*}
|y_{2n+1}^c\rangle &= |y_{2n}^c + S(y_{2n}^f - y_{2n}^c)h - \nabla f(y_{2n}^c)h + \sqrt{2h} \xi_{s_{2n}}\rangle\\
|y_{2n+1}^f\rangle &= |y_{2n}^f + S(y_{2n}^c - y_{2n}^f)h - \nabla f(y_{2n}^f)h + \sqrt{2h} \xi_{s_{2n}}\rangle
\end{tightalign*}
\item Apply $O_R$ to update the Radon–Nikodym derivative of fine path: 
\begin{tightalign*}
    |R_{2n+1}^f\rangle=|R_{2n}^f R(y_{2n+1}^f, y_{2n}^f, S(y_{2n}^c - y_{2n}^f), h)\rangle
\end{tightalign*}
\item Sample seed $s_{2n+1}$ and apply $O_W$:
$|s_{2n+1}\rangle|0\rangle \to |s_{2n+1}\rangle |\xi_{s_{2n+1}}\rangle$
\item Query gradients:
$
O_{\nabla f} |y_{2n+1}^c\rangle|0\rangle = |y_{2n+1}^c\rangle|\nabla f(y_{2n+1}^c)\rangle,
\quad  O_{\nabla f} |y_{2n+1}^f\rangle|0\rangle = |y_{2n+1}^f\rangle |\nabla f(y_{2n+1}^f)\rangle
$
\item Apply arithmetic unitary:
\begin{tightalign*}
|y_{2n+2}^c\rangle &= |y_{2n+1}^c + S(y_{2n}^f - y_{2n}^c)h - \nabla f(y_{2n+1}^c)h + \sqrt{2h} \xi_{s_{2n+1}}\rangle\\
|y_{2n+2}^f\rangle &= |y_{2n+1}^f + S(y_{2n+1}^c - y_{2n+1}^f)h - \nabla f(y_{2n+1}^f)h + \sqrt{2h} \xi_{s_{2n+1}}\rangle
\end{tightalign*}
\item Apply $O_R$ to get the accumulated Radon–Nikodym derivative of both paths:
\begin{tightalign*}
|R_{2n+1}^f R(y_{2n+2}^f, y_{2n+1}^f, S(y_{2n+1}^c - y_{2n+1}^f), h)\rangle,
\quad |R_{2n}^c R(y_{2n+2}^c, y_{2n}^c, S(y_{2n}^f - y_{2n}^c), 2h)\rangle
\end{tightalign*}
\end{tightitemize}
\textbf{End for}

\STATE Apply $O_\varphi$ to get:
$|y_{2N}^f\rangle|0\rangle \mapsto |y_{2N}^f\rangle|\varphi(y_{2N}^f)\rangle,
\quad |y_{2N}^c\rangle|0\rangle \mapsto |y_{2N}^c\rangle|\varphi(y_{2N}^c)\rangle$

\STATE \textbf{Return}: $ |\Delta_\ell\rangle = |\varphi(y_{2N}^f) \cdot R_{2N}^f - \varphi(y_{2N}^c) \cdot R_{2N}^c\rangle $
\end{algorithmic}
}
\end{algorithm}

\begin{theorem}\label{thm:QA-MLMC under weaker one-sided Lipschitz}
Assume that $ f $ is $ L $-smooth, $L$-Hessian-smooth, dissipative,  
and satisfies the weaker one-sided Lipschitz condition with constant $ \lambda $.  

Then, based on the level$-\ell$ samples generated in Algorithm~\ref{alg:Level l sample with change-of-measure QA-MLMC}, 
by applying the unbiased QA-MLMC method with Theorem~\ref{thm:QMLMC2},
we obtain an unbiased estimator of $\E\left[\varphi(X_T)\right]$ whose variance is bounded by $\left(\epsilon/20\right)^2$,
where $T=\mathcal{O}(\log \epsilon^{-1})$.
Equivalently, this provides an estimator of $ \mathbb{E}_\pi[\varphi] $  
with additive error at most $ \epsilon $ and success probability at least $0.99$.
The expected total complexity is $$ \widetilde{\mathcal{O}}\left(\frac{SLd(L+\sqrt{d})}{2S-\lambda}\epsilon^{-1} \right),$$
 where $S$ denotes the spring coefficient with $S > \lambda/2$.

\end{theorem}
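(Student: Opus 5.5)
The plan mirrors the proof of Theorem~\ref{thm:QA-MLMC of one-sided lipschitz}, replacing the strong-error bound of Corollary~\ref{cor: p moment distance of one-sided} with Proposition~\ref{prop: strong order error of discretization}.

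\textbf{Step 1: reduce to a finite horizon.} By Lemma~\ref{lem:ergodic under dissipative}, dissipativity together with $L$-smoothness makes the Langevin diffusion~\eqref{SDE0} geometrically ergodic. Since $\varphi$ is bounded, we can choose $T=\mathcal{O}(\log\epsilon^{-1})$ such that $|\E_\pi[\varphi]-\E[\varphi(X_T)]|\le\epsilon/2$. It then suffices to estimate $\E[\varphi(X_T)]$ within $\epsilon/2$.

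\textbf{Step 2: set up the multilevel hierarchy.} Let $P=\varphi(X_T)$, and let $P_\ell$ be the Euler--Maruyama approximation of $\varphi(X_T)$ with step size $h_\ell=h_{(4)}2^{-\ell}$, where $h_{(4)}$ is as in Proposition~\ref{prop: strong order error of discretization}. The level-$\ell$ correction is not $P_\ell-P_{\ell-1}$ itself but the change-of-measure increment
\[
\Delta_\ell=\varphi(\widehat{Y}^f_T)R^f_T-\varphi(\widehat{Y}^c_T)R^c_T,
\]
produced by Algorithm~\ref{alg:Level l sample with change-of-measure QA-MLMC}. The Radon--Nikodym derivatives are computed exactly, step by step, via the Gaussian density ratio. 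Hence $\E^{\mathbb P}[\varphi(\widehat{Y}^f_T)R^f_T]$ equals the fine Euler--Maruyama expectation $\E[P_\ell]$, and the coarse term gives $\E[P_{\ell-1}]$. Therefore $\E[\Delta_\ell]=\E[P_\ell-P_{\ell-1}]$, which is all that the proof of Theorem~\ref{thm:QMLMC2} uses: it only requires unbiased level estimates with controlled second moments.

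\textbf{Step 3: verify the three rates of Theorem~\ref{thm:QMLMC2}.}
\begin{itemize}
\item \emph{Bias.} The weak error $|\E[P_\ell-P]|$ is at most $C(L,\lambda,d,T)\,2^{-\ell}$, using first-order weak convergence of Euler--Maruyama for a Lipschitz $\varphi$ (strong order one implies it). So $\alpha=1$, with a constant that is polynomial in the parameters.
\item \emph{Variance.} Proposition~\ref{prop: strong order error of discretization} with $p=2$ gives
\[
\operatorname{Var}(\Delta_\ell)\le \E[\Delta_\ell^2]=\mathcal{O}\!\left(T\Big(\tfrac{SLd(L+\sqrt d)}{2S-\lambda}\Big)^2h_{(4)}^2 2^{-2\ell}\right),
\]
so $\beta=2$ and $K_2^{1/2}=\widetilde{\mathcal{O}}\big(\tfrac{SLd(L+\sqrt d)}{2S-\lambda}h_{(4)}\big)$, since $T$ is only logarithmic.
\item \emph{Cost.} Each level needs $\mathcal{C}_\ell=\mathcal{O}(T/h_{(4)}\cdot 2^\ell)$ oracle queries, so $\gamma=1$ and $K_3=T/h_{(4)}$.
\end{itemize}

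\textbf{Step 4: apply Theorem~\ref{thm:QMLMC2} and conclude.} We are in the case $\beta=2\gamma$, and $\beta+2\alpha>2\gamma$ holds. Theorem~\ref{thm:QMLMC2} therefore yields an unbiased estimator with variance at most $\hat\sigma^2=(\epsilon/20)^2$ and expected cost
\[
\widetilde{\mathcal{O}}\big(K_2^{1/2}K_3\,\hat\sigma^{-1}\big)=\widetilde{\mathcal{O}}\!\left(\frac{SLd(L+\sqrt d)}{2S-\lambda}\epsilon^{-1}\right),
\]
with $r=1$. In this product the factors $h_{(4)}$ cancel, and $K_1$ enters only through the number of levels $L=\mathcal{O}(\log(K_1/\epsilon))$, so it contributes only logarithmically. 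Lemma~\ref{lem:variance and epsilon} then converts the variance bound into additive error $\epsilon/2$ with probability at least $0.99$. Combined with Step~1, this gives the claim.

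\textbf{Main obstacle.} The delicate step is the variance bound of Step~3. The Radon--Nikodym weights are products of exponentials and can have heavy moments. Proposition~\ref{prop: strong order error of discretization} controls them only for $h\le h_{(4)}$, and $h_{(4)}$ depends on $T$ through factors such as $(2S-\lambda)^2/(T^2S^2L^2d)$. One must check that this constraint only makes $h_{(4)}^{-1}$ polylogarithmic in $\epsilon^{-1}$ and polynomial in the parameters, so that it does not worsen the stated dependence. This works because the factor $h_{(4)}$ cancels between $K_2^{1/2}$ and $K_3$.
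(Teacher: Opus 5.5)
Your proposal follows the paper's proof essentially step for step. You truncate at $T=\mathcal{O}(\log\epsilon^{-1})$ via geometric ergodicity and take $h_\ell=h_{(4)}2^{-\ell}$. You then read off $\beta=2$ from Proposition~\ref{prop: strong order error of discretization}, $\alpha=1$ from the Euler weak error and $\gamma=1$ from the cost, and invoke Theorem~\ref{thm:QMLMC2} in the case $\beta\ge 2\gamma$. Keeping the extra factor $T$ in the variance and making explicit that $h_{(4)}$ cancels between $K_2^{1/2}$ and $K_3$ only spells out what the paper leaves implicit.

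One caveat applies equally to the paper's argument: that cancellation holds only for $\ell\ge1$. The level-$0$ term $\varphi(\widehat X^0_T)$ has $\mathcal{O}(1)$ variance but costs $T/h_{(4)}$, so it contributes $\widetilde{\mathcal{O}}(h_{(4)}^{-1}\epsilon^{-1})$. That is still $\widetilde{\mathcal{O}}(\epsilon^{-1})$, but it is not automatically bounded by the stated prefactor, for example through the $\tilde\alpha^{-1}$ and $\|\nabla f(0)\|^2$ terms in $h_{(4)}^{-1}$.
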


\begin{proof}
First, we choose $ T = \mathcal{O}(\mathrm{log}(\epsilon^{-1})) $ such that
$$
\left| \mathbb{E}[\varphi(X_T)] - \E_{\pi}[\varphi] \right| \leq \frac{\epsilon}{2}.
$$

Next, let $ h_\ell = h_{(4)} \cdot 2^{-\ell} $.  
By Proposition~\ref{prop: strong order error of discretization}, we have the variance bound
$$
\mathbb{E}[\|\Delta_\ell\|^2] = \mathcal{O}\left( \frac{S^2L^2d^2(L+\sqrt{d})^2}{(2S-\lambda)^2}h_{(4)}^22^{-2\ell}\right),
$$
 so $ \beta = 2 $.
 Combined with the weak error bound from the Euler-Maruyama method, we have $ \alpha = 1 $.  
The cost per level satisfies $$ \mathcal{C}_\ell = \mathcal{O}\left(\frac{T}{h_{(4)}}\,2^{\ell}\right), $$ hence $ \gamma = 1 $.

Applying Theorem~\ref{thm:QMLMC2},  
we conclude that the expected total complexity is 
$$ \widetilde{\mathcal{O}}\left(\frac{SLd(L+\sqrt{d})}{2S-\lambda}\epsilon^{-1}\right).$$
\end{proof}

\begin{remark}
    If the $L$-Hessian-smoothness condition is not assumed, then $\beta=1$; in this case, the complexity becomes
    $\widetilde{\mathcal{O}}\left(\epsilon^{-1.5}\right)$.
    Although this does not achieve a quadratic speedup, it still provides an acceleration compared with the classical
    $\mathcal{O}\left(\epsilon^{-2}\right)$ complexity and the direct quantum mean estimation with
    $\widetilde{\mathcal{O}}\left(\epsilon^{-2}\right)$ complexity.
\end{remark}

\subsection{Examples of Non-log-concave Distributions}\label{subsec:example of weaker one-sided}
\subsubsection{Oscillatory Nonconvex Perturbation}
Consider 
$$
f(x) = \frac{1}{2}\|x\|^2 - 2\cos(x_1), 
\quad x=(x_1,\dots,x_d)\in\mathbb{R}^d.
$$
The gradient and Hessian of $f$ are given by
\begin{align*}
  \nabla f(x) =
\begin{pmatrix}
x_1 + 2\sin(x_1)\\
x_2\\
\vdots\\
x_d
\end{pmatrix},
\qquad
\nabla^2 f(x) = I + 2\cos(x_1)\, e_1 e_1^\top,  
\end{align*}
where $e_1=(1,0,\dots,0)^\top$.
Then the eigenvalues of $\nabla^2 f(x)$ are
$$
1+2\cos(x_1), \quad 1, \dots, 1.
$$
In particular, when $x_1=\pi$,
$$
1+2\cos(\pi)= -1 < 0,
$$
so $\nabla^2 f(x)$ is not positive semidefinite. Hence $f$ is not convex, and therefore not strongly convex.

We first verify that $f$ 
satisfies the dissipativity condition and the weaker Lipschitz condition.
For any $x\in\mathbb{R}^d$,
\begin{align*}
   \langle x,\nabla f(x)\rangle
=\|x\|^2 + 2x_1\sin(x_1)\geq \|x\|^2-2|x_1|\geq \frac{1}{2}\|x\|^2-2. 
\end{align*}
Thus $f$ satisfies the dissipativity condition.

Note that the eigenvalues of $\nabla^2 f(x)$ are uniformly bounded below by $-1$. 
Therefore, by the integral form of the mean value theorem, for all $x,y \in \mathbb{R}^d$,
$$
\langle x-y, \nabla f(x) - \nabla f(y)\rangle \geq -\|x-y\|^2.
$$
Hence, $f$ satisfies the weaker one-sided Lipschitz condition.

Moreover, the eigenvalues of $\nabla^2 f(x)$ are uniformly bounded in absolute value by $3$, 
which implies that
$$
\|\nabla f(x) - \nabla f(y)\| \leq 3 \|x-y\|, \quad \forall x,y \in \mathbb{R}^d.
$$

For any $x,y \in \mathbb{R}^d$, we have
$$
\nabla^2 f(x) - \nabla^2 f(y)
=
2\left(\cos(x_1)-\cos(y_1)\right) e_1 e_1^\top.
$$
Since $\cos(\cdot)$ is $1$-Lipschitz, it follows that
$$
|\cos(x_1)-\cos(y_1)| \leq |x_1-y_1| \leq \|x-y\|.
$$
Hence,
$$
\|\nabla^2 f(x) - \nabla^2 f(y)\|
\leq 2 \|x-y\|, \quad \forall x,y \in \mathbb{R}^d.
$$

Therefore, $f$ is $3$-smooth and $2$-Hessian smooth.

The function $f$ satisfies the assumptions of 
Theorem~\ref{thm:QA-MLMC under weaker one-sided Lipschitz}. 
Therefore, its Gibbs expectation can be estimated with additive error $\epsilon$ 
using a computational complexity of $\widetilde{\mathcal{O}}(\epsilon^{-1})$.

This example provides a multi-dimensional potential $f$ which is not strongly convex (indeed, not even convex), 
yet can still be efficiently sampled and admits a quadratic quantum speedup.

\subsubsection{Radial Nonconvex Potential}

For
$$
f(x)=\frac{1}{2}\|x\|^2 -  a e^{-\|x\|^2}, \quad x\in\mathbb{R}^d,
$$
 where $a> \frac{e}{2}$ is a constant, we have
$$
\nabla f(x) = x + 2 ae^{-\|x\|^2} x = (1+2a e^{-\|x\|^2})x,
$$
and
$$
\nabla^2 f(x)
=
I + 2a e^{-\|x\|^2}(I - 2xx^\top).
$$

For the  Hessian $\nabla^2f$, it has two types of eigenvalues. 
In directions orthogonal to $x$, the eigenvalues are
$$
\lambda_{\mathrm{tan}}(x)=1+2a e^{-\|x\|^2},
$$
while in the radial direction they are
$$
\lambda_{\mathrm{rad}}(x)=1+2a e^{-\|x\|^2}(1-2\|x\|^2).
$$
At $\|x\|=1$, we have
$$
\lambda_{\mathrm{rad}} = 1 - 2a e^{-1}.
$$
Hence, if $a>\frac{e}{2}$, then $\lambda_{\mathrm{rad}}<0$, so $f$ is not convex and therefore not strongly convex.

Firstly,
$$
\langle x, \nabla f(x)\rangle
=
(1+2a e^{-\|x\|^2})\|x\|^2
\geq \|x\|^2,
$$
which shows that $f$ is dissipative.

Since the eigenvalues of $\nabla^2 f(x)$ are uniformly bounded in absolute value by $1+2a$, there exists $L>0$ such that
$$
\|\nabla f(x)-\nabla f(y)\|
\leq L\|x-y\|,
\quad \forall x,y\in\mathbb{R}^d.
$$
Hence,  $f$ is $L$-smooth.

Moreover, the third derivatives of $f$ are combinations of terms of the form
$$
e^{-\|x\|^2}, \quad \|x\| e^{-\|x\|^2}, \quad \|x\|^3 e^{-\|x\|^2},
$$
which are uniformly bounded on $\mathbb{R}^d$. 
Therefore, $\nabla^2 f$ is globally Lipschitz, i.e., there exists $L_H>0$ such that
$$
\|\nabla^2 f(x)-\nabla^2 f(y)\|
\leq L_H \|x-y\|, \quad \forall x,y\in\mathbb{R}^d.
$$
Hence, $f$ is $L_H$-Hessian smooth.

Therefore, $f$ satisfies all the assumptions of Theorem~\ref{thm:QA-MLMC under weaker one-sided Lipschitz}.
Consequently, the Gibbs expectation associated with $f$ can be estimated with 
additive error $\epsilon$ using a computational complexity of
 $\widetilde{\mathcal{O}}(\epsilon^{-1})$.

\section{Unbiased Quantum Algorithms for Gibbs Expectation }
\label{sec:unbiased}

When estimating the Gibbs expectation $\mathbb{E}_\pi[\varphi]$
(or $\mathbb{E}[\varphi(X_\infty)]$) of an SDE,
there are typically two sources of bias.

The first source of bias arises from the finite-time approximation:
one typically chooses a sufficiently large terminal time $T$
such that the distribution of $X_T$ is close to the target distribution $\pi$,
and then uses $\mathbb{E}[\varphi(X_T)]$
as an approximation to $\mathbb{E}[\varphi(X_\infty)]$.
This leads to the time truncation bias
$$
\mathbb{E}[\varphi(X_T)] - \mathbb{E}[\varphi(X_\infty)].
$$

The second source of bias arises from the discretization
of the continuous-time dynamics:
a stepsize $h>0$ is chosen to approximate the SDE.
Let $\widehat X_n$ denote the time-discretized approximation
with $n = T/h$ time steps
(for simplicity assuming that $T/h \in \mathbb{N}$).
One then uses
$\mathbb{E}[\varphi(\widehat X_{T/h})]$
as an approximation to
$\mathbb{E}[\varphi(X_T)]$.
This induces the discretization bias
$$
\mathbb{E}[\varphi(\widehat X_{T/h})]
-
\mathbb{E}[\varphi(X_T)].
$$

Combining the two sources of bias, the total error incurred by using
$\mathbb{E}[\varphi(\widehat X_{T/h})]$
to approximate
$\mathbb{E}[\varphi(X_\infty)]$
can be decomposed as
\begin{align*}
\underbrace{
\mathbb{E}[\varphi(\widehat{X}_{T/h})]
-
\mathbb{E}[\varphi(X_{\infty})]
}_{\text{total bias}}
=
\underbrace{
\mathbb{E}[\varphi(\widehat{X}_{T/h})]
-
\mathbb{E}[\varphi(X_T)]
}_{\text{discretization bias}}
+
\underbrace{
\mathbb{E}[\varphi(X_T)]
-
\mathbb{E}[\varphi(X_{\infty})]
}_{\text{time truncation bias}}.
\end{align*}

Recalling that in Theorem~\ref{thm:QMLMC2}, suppose we can construct a sequence
$\{P_\ell\}_{\ell \geq 0}$ such that
\begin{itemize}
\item $|\mathbb{E}[P_\ell - P]|
= \widetilde{\mathcal{O}}\!\left( 2^{-\alpha \ell} \right)$,
\item $\operatorname{Var}(P_\ell - P_{\ell-1})
= \widetilde{\mathcal{O}}\!\left( 2^{-\beta \ell} \right)$,
\item $\mathcal{C}_\ell
= \widetilde{\mathcal{O}}\!\left( 2^{\gamma \ell} \right)$,
\end{itemize}
then, whenever $\beta + 2\alpha > 2\gamma$,
we can construct an unbiased estimator of $P$.

In Theorem~\ref{thm:QA-MLMC under weaker one-sided Lipschitz},
and in most MLMC algorithms for estimating the Gibbs expectation
associated with an SDE,
we choose a sufficiently large time $T$
and construct $P_\ell$ as the level-$\ell$ estimator obtained by
discretizing the SDE with stepsize $h_\ell = \mathcal{O}(2^{-\ell})$.
Consequently,
$$
\left|\mathbb{E}[P_\ell]-
\mathbb{E}[\varphi(X_T)]\right|=\widetilde{\mathcal{O}}\!\left( 2^{-\alpha \ell} \right).
$$
By invoking Theorem~\ref{thm:QMLMC2},
one can construct an unbiased estimator for
$\mathbb{E}[\varphi(X_T)]$.
In other words, the bias arising from time discretization
can be eliminated through the multilevel unbiased construction.

To further remove the time truncation bias
$\mathbb{E}[\varphi(X_T)]-\mathbb{E}[\varphi(X_\infty)],$
a natural idea is to construct a sequence
$\{P_\ell\}_{\ell \geq 0}$
such that
$$
h_\ell \to 0
\quad \text{and} \quad
T_\ell \to \infty
\quad \text{as } \ell \to \infty,
$$
where $P_\ell$ is obtained by discretizing the SDE with stepsize $h_\ell$
over a time horizon $T_\ell$.
That is, we require
$$
\left|\mathbb{E}[P_\ell]-
\mathbb{E}[\varphi(X_\infty)]\right|=\widetilde{\mathcal{O}}\left( 2^{-\alpha \ell} \right).
$$
If this holds, 
although the original MLMC framework (either classical or quantum-accelerated) 
inevitably requires a truncation level $L$, which introduces a bias
$ \mathbb{E}[P_L] - \mathbb{E}[\varphi(X_\infty)] 
= \widetilde{\mathcal{O}}\left(2^{-\alpha L}\right)$,
by employing the unbiased estimator construction 
in Algorithm~\ref{alg:unbiased estimator},
which introduces a random variable $j \sim \mathrm{Geom}(\frac{1}{2})$ together with the bias-correction term
$2^{j}\left(\tilde{\mu}_{j}-\tilde{\mu}_{j-1}\right)$,
Theorem~\ref{thm:QMLMC2} can be applied to construct an unbiased estimator for
$\mathbb{E}[\varphi(X_\infty)]$,
so that both the discretization bias and the time truncation bias are eliminated.

Following the above idea, in this section we develop two
quantum multilevel mean estimation methods that yield
unbiased estimators for $\mathbb{E}[\varphi(X_\infty)]$.
Based on the framework proposed in~\cite{Giles2016MultilevelMC},
we first introduce a refined unbiased estimator under the one-sided Lipschitz condition.
In addition, we develop a more general algorithm that achieves unbiased estimation without 
this restriction.

\subsection{Unbiased Estimation under One-sided Lipschitz}
\label{subsec:Unbiased Estimation under One-sided Lipschitz}

To construct an unbiased estimator, we begin by expressing $\mathbb{E}[\varphi(X_\infty)]$ as
\begin{align*}
    \mathbb{E}[\varphi(X_\infty)]
    =\mathbb{E}[\varphi(X_{T_0})]+
    \sum_{\ell=1}^\infty \left(\mathbb{E}[\varphi(X_{T_\ell})]-\mathbb{E}[\varphi(X_{T_{\ell-1}})]\right).
\end{align*}
Following the previous idea, we aim to choose suitable increasing sequences 
of time horizons $\{T_\ell\}_{\ell\geq0}$ and step sizes $\{h_\ell\}_{\ell\geq0}$ 
such that $T_\ell \to \infty$ and $h_\ell \to 0$ as $\ell \to \infty$, 
and define $P_\ell$ as the expectation of $\varphi$ evaluated at 
$X_{T_\ell}$ obtained via time discretization with step size $h_\ell$.
To apply MLMC, it is essential to construct a strong coupling between the level-$\ell$ 
and level-$(\ell-1)$ approximations so that 
$\operatorname{Var}\left(P_\ell-P_{\ell-1}\right)$ is small.

If we temporarily ignore time discretization and work in continuous time,
then $X_{T_\ell}$ can be viewed as evolving the SDE first over a time interval
of length $T_\ell - T_{\ell-1}$, and then over an additional time interval
of length $T_{\ell-1}$.

The following property ensures that, when driven by the same Brownian motion,
the distance between two paths can exhibit exponential contraction in time,
even if they start from different initial conditions,
which is the key to controlling
$\operatorname{Var}\left(P_\ell - P_{\ell-1}\right)$.

\begin{proposition}\label{prop: different initial but same Brownian}  
        Let $ (W_t)_{t \geq 0} $ be a standard Brownian motion in $ \mathbb{R}^d $. 
    Let $ (X_t)_{t \geq 0} $ and $ (Y_t)_{t \geq 0} $ evolve according to the Langevin diffusion \eqref{SDE0},  
    with initial values $ X_0, Y_0 \in \mathbb{R}^d $.  
    If $ f $ is one-sided Lipschitz with constant $ m $, then we have
    \begin{equation*}
        \mathbb{E}\|X_T - Y_T\|^2 \leq e^{-2mT} \mathbb{E}\|X_0 - Y_0\|^2.
    \end{equation*}
\end{proposition}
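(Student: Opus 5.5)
The plan is a synchronous coupling: since $X$ and $Y$ are driven by the same Brownian motion, the noise cancels in their difference, leaving a random ODE to which a Grönwall argument applies pathwise.

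First, set $Z_t = X_t - Y_t$. Subtracting the two copies of \eqref{SDE0} gives
\begin{align*}
\mathrm{d} Z_t = -\left(\nabla f(X_t) - \nabla f(Y_t)\right)\mathrm{d} t .
\end{align*}
This has no martingale part, so $t \mapsto Z_t$ is almost surely absolutely continuous; in fact it is $C^1$, because $\nabla f$ is continuous and the paths are continuous. Existence and uniqueness of strong solutions, and hence path continuity, follow from the $L$-smoothness assumed throughout.

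Second, differentiate the squared norm along each path:
\begin{align*}
\frac{\mathrm{d}}{\mathrm{d} t}\|Z_t\|^2 = -2\left\langle \nabla f(X_t)-\nabla f(Y_t),\, X_t - Y_t\right\rangle \leq -2m\|Z_t\|^2 .
\end{align*}
The inequality is the one-sided Lipschitz condition (Definition~\ref{def:One-sided Lipschitz}). No Itô correction term appears, because the diffusion coefficients are identical constants and cancel.

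Third, apply Grönwall pathwise. Multiplying by $e^{2mt}$ shows that $e^{2mt}\|Z_t\|^2$ is nonincreasing, so
\begin{align*}
\|X_T - Y_T\|^2 \leq e^{-2mT}\|X_0-Y_0\|^2 \quad \text{almost surely}.
\end{align*}
Taking expectations yields the claim. This works equally for deterministic initial values and for random initial values independent of $W$; if $\mathbb{E}\|X_0-Y_0\|^2=\infty$ the bound holds trivially.

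The only point needing care is justifying pathwise differentiation without stochastic calculus, and this is immediate since $Z$ solves a random ODE. There is therefore no real obstacle. If one preferred Itô's formula, the same computation goes through and gives $\mathrm{d}\|Z_t\|^2 \leq -2m\|Z_t\|^2\,\mathrm{d}t$ with zero quadratic variation contribution.
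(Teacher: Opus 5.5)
Your proposal is correct and follows essentially the same route as the paper. The paper applies Itô's lemma to $e^{2mt}\|X_t-Y_t\|^2$; since the Brownian terms cancel, this is exactly your pathwise chain-rule computation. It then uses the one-sided Lipschitz condition to make the integrand nonpositive, obtaining $e^{2mt}\|X_t-Y_t\|^2\leq\|X_0-Y_0\|^2$ almost surely, and takes expectations.
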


\begin{proof}
Applying Itô's lemma, we obtain
\begin{align*}
  \mathrm{d} \left(e^{2mt}\|X_t - Y_t\|^2\right)
  &= 2m e^{2mt} \|X_t - Y_t\|^2  \mathrm{d} t + e^{2mt}  \mathrm{d} \|X_t - Y_t\|^2 \\
  &= 2m e^{2mt} \|X_t - Y_t\|^2  \mathrm{d} t - 2 e^{2mt} \langle X_t - Y_t, \nabla f(X_t) - \nabla f(Y_t) \rangle  \mathrm{d} t.
\end{align*}
Hence,
\begin{align*}
  e^{2mt} \|X_t - Y_t\|^2
  &= \|X_0 - Y_0\|^2 + 2 \int_0^t e^{2ms} \left( m \|X_s - Y_s\|^2 - \langle \nabla f(X_s) - \nabla f(Y_s), X_s - Y_s \rangle \right)  \mathrm{d} s \\
  &\leq \|X_0 - Y_0\|^2.
\end{align*}
Letting $ t = T $ and taking expectations yields
\begin{align*}
  \mathbb{E} \|X_T - Y_T\|^2 \leq e^{-2mT}  \mathbb{E} \|X_0 - Y_0\|^2.
\end{align*}
\end{proof}

To construct a level-$\ell$ unbiased MLMC sample,
the fine path is first evolved over a time interval $\Delta t$.
After this initial stage,
the fine and coarse paths are jointly evolved for a duration
$T_{\ell-1}$ at level $\ell$.
A schematic illustration of this construction is shown in Figure~\ref{fig:2},
and the detailed procedure is presented in
Algorithm~\ref{alg:Sample for MLMC-unbiased}.

\begin{algorithm}[h]
   \small{ \begin{algorithmic}[1]
        \caption{\algname{Level-$\ell$ Sample for Unbiased MLMC}}
        \label{alg:Sample for MLMC-unbiased}
        \item \textbf{Input}: Initial value $x_0$, simulation time $T_{\ell-1}$, $T_\ell$, step size $h$
        
        \textbf{Output}: A level-$\ell$ sample $\Delta_\ell$
        
        \STATE Set $\tilde{x}_0^f = x_0$
        
        \STATE \textbf{For} $k = 1$ \textbf{to} $\frac{T_\ell - T_{\ell-1}}{h}$
        \begin{tightitemize}
            \item Sample $\tilde{\xi}_{k-1} \sim \mathcal{N}(0,I)$
            \item Update the fine path only: $\tilde{x}_k^f = \tilde{x}_{k-1}^f - h \nabla f(\tilde{x}_{k-1}^f) + \sqrt{2h} \,\tilde{\xi}_{k-1}$
        \end{tightitemize}
        \textbf{End for}
        
        \STATE Set $x_0^c = x_0$ as the original initial value, and set
$x_0^f = \tilde{x}_{\frac{T_\ell - T_{\ell-1}}{h}}^f$ as the value obtained in the previous step.

        \STATE \textbf{For} $n = 1$ \textbf{to} $k_\ell = \frac{T_{\ell-1}}{h}$
        \begin{tightitemize}
            \item Sample $\xi_{n-1}, \xi_{n - \frac{1}{2}} \sim \mathcal{N}(0, I)$
            \item Update the fine and coarse paths simultaneously:
            \begin{tightalign*}
            &x_{n - \frac{1}{2}}^f = x_{n-1}^f - h \nabla f(x_{n-1}^f) + \sqrt{2h} \,\xi_{n-1}\\
            &x_n^f = x_{n - \frac{1}{2}}^f - h \nabla f(x_{n - \frac{1}{2}}^f) + \sqrt{2h} \,\xi_{n - \frac{1}{2}}\\
            & x_n^c = x_{n-1}^c - 2h \nabla f(x_{n-1}^c) + \sqrt{4h} \cdot \frac{1}{\sqrt{2}}(\xi_{n-1} + \xi_{n - \frac{1}{2}})
            \end{tightalign*}
        \end{tightitemize}
        \textbf{End for}
        
        \STATE \textbf{Return} $\Delta_\ell = \varphi(x_{k_\ell}^f) - \varphi(x_{k_\ell}^c)$
    \end{algorithmic}}
\end{algorithm}

After presenting the classical version, 
we now turn to a detailed explanation of its quantum version.
Algorithm~\ref{alg:l-sample for QA-MLMC(unbiased)} provides a detailed quantum algorithm of generating a 
level-$\ell$ sample for the unbiased Gibbs expectation method
and the required oracles are introduced in Section~\ref{sec:(sub)oracles}.

\begin{algorithm}[h]
    \caption{  \algname{Level-$\ell$ Quantum Sample for Unbiased Gibbs Expectation}}
    \label{alg:l-sample for QA-MLMC(unbiased)}
 \small{   \begin{algorithmic}[1]
    
    \item\textbf{Inputs}: Oracle $O_I,O_{\nabla f},O_W,O_\varphi$,
   step sizes $ h_{\ell-1},  h_\ell$, simulation times $T_{\ell-1}, T_\ell$
   
\textbf{Output}: A level-$\ell$ sample state $|\Delta_\ell\rangle$

    \STATE Prepare the initial state $|\tilde{x}_0^{f}\rangle=O_I |0\rangle$
    \STATE \textbf{For} $k = 1$ to $\frac{T_\ell - T_{\ell-1}}{h_\ell}$
   \begin{tightitemize}
    \item Sample $\tilde s_{k-1}$ randomly and query randomness oracle: $O_W |\tilde s_{k-1}\rangle |0\rangle = |\tilde s_{k-1}\rangle |\tilde \xi_{\tilde s_{k-1}}\rangle$
      \item Query gradient oracle: $O_{\nabla f} |\tilde{x}_{k-1}^{f}\rangle |0\rangle = |\tilde{x}_{k-1}^{f}\rangle |\nabla f(\tilde{x}_{k-1}^{f})\rangle$ 
   \item Apply arithmetic unitary to update the fine path:
        $|\tilde{x}_k^{f}\rangle =|\tilde{x}_{k-1}^{f} - h_\ell \nabla f(\tilde{x}_{k-1}^f) + \sqrt{2h_\ell}\,\tilde  \xi_{\tilde s_{k-1}} \rangle$
   \end{tightitemize}     
    
    \textbf{End for}

    \STATE Initialize $|x_0^{c}\rangle=O_I|0\rangle$ and $|x_0^{f}\rangle= |\tilde{x}_{(T_\ell-T_{\ell-1})/h_\ell}^{f}\rangle$

    \STATE \textbf{For} $n = 1$ \textbf{to} $k_\ell= \frac{T_{\ell-1}}{h_\ell}$
\begin{tightitemize}
       \item Sample $s_{n-1},s_{n-1/2}$ randomly and query randomness oracle:
\begin{tightalign*}
             O_W |s_{n-1}\rangle|0\rangle = |s_{n-1}\rangle |\xi_{s_{n-1}}\rangle, \quad 
            O_W |s_{n-1/2}\rangle|0\rangle =|s_{n-1/2}\rangle |\xi_{s_{n-1/2}}\rangle
\end{tightalign*}
        \item  Query gradient oracles: $\quad  O_{\nabla f} |x_{n-1}^{c}\rangle |0\rangle = |x_{n-1}^{c}\rangle |\nabla f(x_{n-1}^{c})\rangle,$
\begin{tightalign*}
            O_{\nabla f} |x_{n-1}^{f}\rangle |0\rangle = |x_{n-1}^{f}\rangle |\nabla f(x_{n-1}^{f})\rangle,\quad
            O_{\nabla f} |x_{n-1/2}^{f}\rangle |0\rangle = |x_{n-1/2}^{f}\rangle |\nabla f(x_{n-1/2}^{f})\rangle  
\end{tightalign*}
        \item Apply arithmetic unitary to update  both paths:
\begin{tightalign*}
            &|x_n^c\rangle =|x_{n-1}^c - h_{\ell-1} \nabla f(x_{n-1}^c) + \sqrt{h_{\ell-1}}\,(\xi_{s_{n-1}} + \xi_{s_{n-1/2}})\rangle\\
            &|x_{n-1/2}^{f}\rangle = |x_{n-1}^{f} - h_\ell \nabla f(x_{n-1}^f) + \sqrt{2h_\ell} \,\xi_{s_{n-1}}\rangle\\
            &|x_n^{f}\rangle = |x_{n-1/2}^{f} - h_\ell \nabla f(x_{n-1/2}^f) + \sqrt{2h_\ell}\, \xi_{s_{n-1/2}}\rangle   
\end{tightalign*}
 \end{tightitemize}
\textbf{End for}
    \STATE Apply query oracle $O_\varphi$:
    $\quad|x_{k_\ell}^{f}\rangle|0\rangle \mapsto |x_{k_\ell}^{f}\rangle|\varphi(x_{k_\ell}^f)\rangle,
        |x_{k_\ell}^{c}\rangle|0\rangle \mapsto |x_{k_\ell}^c\rangle|\varphi(x_{k_\ell}^c)\rangle$
    \STATE \textbf{Return} $|\Delta_\ell\rangle = |\varphi(x_{k_\ell}^f) - \varphi(x_{k_\ell}^c)\rangle$
    \end{algorithmic}}
\end{algorithm}

\begin{theorem}\label{thm:unbiased QA-MLMC}
Assume that $ f $ is  $ L $-smooth, $L$-Hessian-smooth , and satisfies the one-sided Lipschitz condition with constant $m$.
Then, based on the level$-\ell$ samples generated in Algorithm~\ref{alg:l-sample for QA-MLMC(unbiased)}, 
by applying the quantum multilevel mean estimation procedure together with Theorem~\ref{thm:QMLMC2},
we can obtain an unbiased estimator of $ \mathbb{E}_\pi[\varphi] $
with additive error $\epsilon$ and success probability at least 0.99. 
The expected total complexity is
$$\widetilde{\mathcal{O}}\left(\frac{L^2\sqrt{d}+Ld}{m^2}\epsilon^{-1}\right).$$
\end{theorem}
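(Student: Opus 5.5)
The plan is to verify the three hypotheses of Theorem~\ref{thm:QMLMC2} for the sequence $P_\ell=\varphi(\widehat X^{(\ell)}_{T_\ell})$. Here $\widehat X^{(\ell)}$ is the Euler--Maruyama chain with step size $h_\ell=h_{(2)}2^{-\ell}$, run up to time $T_\ell$, and the horizons are chosen logarithmically, say $T_\ell = T_0+\frac{c\,\ell}{m}$ with $c$ a fixed constant such as $c=\log 2$. With this choice, $T_\ell-T_{\ell-1}=\mathcal{O}(1/m)$ and $T_\ell=\mathcal{O}(\ell/m)$, which is polynomial in $\ell$ and is absorbed by the $\widetilde{\mathcal{O}}$ convention of Theorem~\ref{thm:QMLMC2}. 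Theorem~\ref{thm:QMLMC2} then yields an unbiased estimator of $\mathbb{E}[\varphi(X_\infty)]=\mathbb{E}_\pi[\varphi]$ with variance at most $(\epsilon/10)^2$, and Lemma~\ref{lem:variance and epsilon} converts this into an $\epsilon$-additive, $0.99$-probability guarantee.

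\textbf{Bias ($\alpha$).} I would split $\mathbb{E}[P_\ell]-\mathbb{E}_\pi[\varphi]$ into a discretization part and a truncation part. The discretization part, $\mathbb{E}[\varphi(\widehat X_{T_\ell})]-\mathbb{E}[\varphi(X_{T_\ell})]$, is $\mathcal{O}(\mathrm{poly}(L,d,T_\ell)\,h_\ell)$ by the Euler weak error. The truncation part follows from Proposition~\ref{prop: different initial but same Brownian} with $Y_0\sim\pi$ and the Lipschitz bound on $\varphi$: it gives $K e^{-mT_\ell}(\mathbb{E}\|X_0-Y_0\|^2)^{1/2}=\mathcal{O}(2^{-c'\ell})$. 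Since $\beta\ge 2\gamma$ will hold, the constant $K_1$ contributes only logarithmically, so any positive $\alpha$ suffices, and $\beta+2\alpha>2\gamma$ is automatic.

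\textbf{Cost ($\gamma$).} The level-$\ell$ sampler of Algorithm~\ref{alg:l-sample for QA-MLMC(unbiased)} makes $\mathcal{O}(T_\ell/h_\ell)=\widetilde{\mathcal{O}}(m^{-1}h_{(2)}^{-1}2^{\ell})$ gradient queries, so $\gamma=1$ and $K_3\sim (m h_{(2)})^{-1}$.

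\textbf{Variance ($\beta$).} This is the key step and the main obstacle. In Algorithm~\ref{alg:l-sample for QA-MLMC(unbiased)}, the fine path first runs alone for time $T_\ell-T_{\ell-1}$ with step $h_\ell$; it is then coupled with the coarse path, which starts from $x_0$ and uses step $2h_\ell=h_{\ell-1}$, for time $T_{\ell-1}$. The coarse output has exactly the law of $P_{\ell-1}$, and the fine output has the law of $P_\ell$, so the telescoping is valid. To bound $\mathbb{E}\|x^f-x^c\|^2$, I would introduce an auxiliary fine path $z$ with step $h_\ell$, started at $x_0$ and driven by the same noise as the coarse path. This gives the decomposition $\|x^f-x^c\|\le\|x^f-z\|+\|z-x^c\|$.

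The term $\|z-x^c\|$ is the standard fine/coarse error. By the discrete contraction behind Corollary~\ref{cor: p moment distance of one-sided}, which uses Hessian-smoothness to obtain strong order one, it is $\mathcal{O}\big(\frac{L^2\sqrt d+Ld}{m}h_\ell\big)$ uniformly in the horizon.

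The term $\|x^f-z\|$ compares two step-$h_\ell$ Euler chains with identical noise but different initial points. The discrete analogue of Proposition~\ref{prop: different initial but same Brownian}, valid for $h\le m/L^2$, gives $\|x^f_n-z_n\|^2\le(1-mh)^{2n}\|x^f_0-x_0\|^2$. Hence this term is bounded by $e^{-mT_{\ell-1}}$ times a uniform second-moment bound, which is $\mathcal{O}(\sqrt d/m)$ under dissipativity. Choosing $c$ large enough, for instance $c\ge 2\log 2$, makes it $\mathcal{O}(2^{-2\ell})$ up to polynomial factors, and this is dominated by the $h_\ell^2$ term once $T_0=\mathcal{O}(\log(d/m))$.

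Combining the two pieces gives $V_\ell=\widetilde{\mathcal{O}}\big(\frac{(L^2\sqrt d+Ld)^2}{m^2}h_{(2)}^2 2^{-2\ell}\big)$, so $\beta=2=2\gamma$. Plugging $K_2^{1/2}K_3\sim\frac{L^2\sqrt d+Ld}{m}h_{(2)}\cdot\frac{1}{mh_{(2)}}$ into the $\beta\ge2\gamma$ case of Theorem~\ref{thm:QMLMC2} gives the claimed cost $\widetilde{\mathcal{O}}\big(\frac{L^2\sqrt d+Ld}{m^2}\epsilon^{-1}\big)$. The extra factor $1/m$ relative to Theorem~\ref{thm:QA-MLMC of one-sided lipschitz} comes from $T_\ell\propto 1/m$. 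Two points need care: ensuring that $h_{(2)}$ meets the discrete-contraction step-size requirement, and checking that the uniform moment bounds hold for the fine path after its uncoupled warm-up stage.
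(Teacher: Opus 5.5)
Your proposal is correct and follows the paper's overall plan: take $h_\ell=h_{(2)}2^{-\ell}$ and $T_\ell$ growing like $\ell/m$, verify $\beta=2=2\gamma$ with some $\alpha>0$, and plug $K_2^{1/2}K_3\sim\frac{L^2\sqrt d+Ld}{m^2}$ into Theorem~\ref{thm:QMLMC2}. Where you differ is the key variance step.

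The paper invokes Corollary~\ref{cor: p moment distance of one-sided} directly in its different-initial-data form, i.e.\ case (2) of Lemma~\ref{lem:p-bounds of distance of two paths} with $S=0$. This gives $\mathbb{E}\|\widehat X^f_{2N}-\widehat X^c_{2N}\|^2\le C_4^2e^{-mT_{\ell-1}/2}M^2+4C_{(3)}^2h_\ell^2$ in a single recursion, and the paper then chooses $T_{\ell-1}$ to balance the two terms.

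You instead introduce an auxiliary step-$h_\ell$ chain $z$ started at $x_0$ and driven by the same increments as the coupled stage. The triangle inequality then splits the distance into two pieces. The first is a pathwise contraction term, valid since $mI\preceq\nabla^2 f\preceq LI$ and $h\le m/L^2\le 1/L$. The second is the equal-start fine/coarse error, i.e.\ case (1).

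Your route buys several things:
\begin{itemize}
\item It uses only the more standard equal-start estimate.
\item It needs no moment bounds for the coupled fine path, only $\mathbb{E}\|x^f_0-x_0\|^2$ for the warm-up endpoint.
\item It therefore avoids relying on the appendix's standing assumption of $\mathcal{O}(1)$ initial norms, which the random warm-up point does not literally satisfy.
\item It makes explicit the law identification $x^f\sim P_\ell$, $x^c\sim P_{\ell-1}$ that justifies the telescoping, which the paper leaves implicit.
\end{itemize}
The paper's route is shorter, since it reuses an estimate already packaged in the appendix.

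One harmless slip: the uniform second moment is $\mathbb{E}\|x\|^2=\mathcal{O}(d/m)$, so the relevant scale is $\sqrt{d/m}$ rather than $\sqrt d/m$. This only enters $T_0$ through a logarithm.
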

\begin{proof}
Firstly, by Lemma~\ref{lem:bounds of p moment},  
we know that $\E\left[\|\widehat{X}^f_{0}-\widehat{X}^c_{0}\|^2\right]$  
can be uniformly bounded by $M^2$
with $M=\mathcal{O}(\sqrt{md})$.

Using Corollary~\ref{cor: p moment distance of one-sided} with $\gamma=\frac{m}{2}$.
For $h_\ell=h_{(2)}2^{-\ell}$ and $N=\frac{T_{\ell-1}}{h_{\ell-1}}$,
we have
       \begin{align*}
    \E\left[\|\widehat X^f_{2N}-\widehat X^c_{2N}\|^2\right] 
    \leq C_4^2 e^{-mT_{\ell-1}/2}  
 M^2 +4C_{(3)}^2h_{\ell}^2.
\end{align*}
Choose $T_\ell$ such that 
$C_4^2e^{-mT_{\ell-1}/2} 
 M^2 =C_{(3)}^2h_{\ell}^2,$
 i.e.,
 \begin{align*}
     T_{\ell-1}= \frac{4}{m}\left(\ell \log 2+\log\left(\frac{C_4M}{C_{(3)}h_{(2)}}\right)\right).
 \end{align*}
 Hence
        \begin{align*}
    \E\left[\|\widehat X^f_{T_{\ell-1}/h^{\ell}}-\widehat X^c_{T_{\ell-1}/h^{\ell}}\|^2\right] 
    \leq 5C_{(3)}^2h_{\ell}^2.
\end{align*}

In Proposition~\ref{prop: different initial but same Brownian}, if we take $Y_0 \sim \pi $, then
$$
\left| \mathbb{E}[X_{T_\ell}] - \mathbb{E}[X_\infty] \right| = \mathcal{O}(e^{-mT_\ell})=\mathcal{O}(2^{-4\ell}).
$$
Combining this with the weak error of the Euler discretization,
we obtain $\alpha = 1$.

Therefore, in Theorem~\ref{thm:QMLMC2}, $\alpha = 1$ and
\begin{align*}
    V_\ell=\widetilde{\mathcal{O}}\left( \frac{(L^2\sqrt{d}+Ld)^2}{m^2}h_{(2)}^2 2^{-2\ell}\right),
    \quad \mathcal{C}_\ell=\mathcal{O}\left(\frac{T_\ell}{h_{(2)}}2^{-\ell}\right)
    =\widetilde{\mathcal{O}}\left(\frac{2^{-\ell}}{mh_{(2)}}\right).
\end{align*}
The total complexity is $$\widetilde{\mathcal{O}}\left(\frac{L^2\sqrt{d}+Ld}{m^2}\epsilon^{-1}\right).$$
\end{proof}

\subsection{Unbiased Estimation under Weaker One-sided Lipschitz}

In Theorem~\ref{thm:QA-MLMC under weaker one-sided Lipschitz}, 
after introducing the spring term, 
the distance between the coarse and fine paths remains uniformly small 
under the new measure. As a consequence, 
when transforming back to the original measure, 
the variance remains well controlled.

In contrast, 
if we attempt to use the unbiased construction in Theorem~\ref{thm:unbiased QA-MLMC}, 
the coarse and fine paths under the changed measure start with a 
non-negligible separation. Although this distance may decrease
 during the subsequent evolution, the variance after reverting to
  the original measure can no longer be effectively controlled.

Although Theorem~\ref{thm:unbiased QA-MLMC} cannot be directly generalized to the unbiased case,
by applying Theorem~\ref{thm:unbiased estimator construct},
any biased estimation procedure can be converted into an unbiased one
by introducing a random variable $j \sim \mathrm{Geom}(\frac{1}{2})$
together with the bias-correction term
$2^{j}\left(\tilde{\mu}_{j}-\tilde{\mu}_{j-1}\right)$.
Therefore, in a more general setting, we can obtain an unbiased version of Theorem~\ref{thm:QA-MLMC under weaker one-sided Lipschitz}.

\begin{algorithm}[H]
\caption{\algname{Unbiased  Quantum Algorithm for Gibbs Expectation}}
\label{alg:Unbiased  Quantum Algorithm for Gibbs Expectation}
\begin{algorithmic}[1]
 \item 
\textbf{Output}: An unbiased estimator $\tilde{\mu}$ of $\E[\varphi(X_\infty)]$ with $\operatorname{Var}(\tilde{\mu})\leq \sigma^2$.

\STATE Use Theorem~\ref{thm:QMLMC2} to define a biased estimator $\mathcal{A}(\cdot)$,
where each level difference is generated by
Algorithm~\ref{alg:Level l sample with change-of-measure QA-MLMC}.

\STATE Sample $j \sim \mathrm{Geom}(\frac{1}{2})$.

\STATE Compute $\tilde{\mu}_0=\mathcal{A}(\sigma/M)$,
$\tilde{\mu}_{j}=\mathcal{A}\!\left(2^{-\rho j}\sigma/M\right)$,
and $\tilde{\mu}_{j-1}=\mathcal{A}\!\left(2^{-\rho (j-1)}\sigma/M\right)$.

\STATE Return $\tilde{\mu}=\tilde{\mu}_0+2^{j}(\tilde{\mu}_{j}-\tilde{\mu}_{j-1})$.
\end{algorithmic}
\end{algorithm}

Algorithm~\ref{alg:Unbiased  Quantum Algorithm for Gibbs Expectation}
presents a  procedure for constructing an unbiased quantum estimator
of the Gibbs expectation $\E[\varphi(X_\infty)]$.
The algorithm combines a biased quantum multilevel estimator with a geometric
randomization technique to remove bias while maintaining a controlled variance.

Specifically, the algorithm constructs a biased estimator $\mathcal{A}(\cdot)$ for $\E[\varphi(X_\infty)]$ via QA-MLMC, following Theorem~\ref{thm:QA-MLMC under weaker one-sided Lipschitz}, where each level difference is generated using Algorithm~\ref{alg:Level l sample with change-of-measure QA-MLMC}.
To obtain an unbiased estimate, the algorithm applies geometric randomization: a random level index is sampled from a geometric distribution, and the estimator $\mathcal{A}(\cdot)$ is evaluated at multiple accuracy levels. These estimates are then combined in a multilevel form to produce the final output $\tilde{\mu}$.  Figure~\ref{fig:3} illustrates the construction.

\begin{theorem}\label{thm:unbiased A-MLMC under weaker one-sided Lipschitz}
Assume that $ f $ is $ L $-smooth, $L$-Hessian-smooth, dissipative,  
and satisfies the weaker one-sided Lipschitz condition with constant $ \lambda $.  
Then we can obtain an unbiased estimator $\tilde{\mu}$ of $\mathbb{E}_\pi[\varphi]$ 
such that $\operatorname{Var}(\tilde{\mu}) \leq \epsilon^{2}$.  
The expected total complexity is $$ \widetilde{\mathcal{O}}\left(\frac{SLd(L+\sqrt{d})}{2S-\lambda}\epsilon^{-1} \right),$$
 where $S$ denotes the spring coefficient with $S > \lambda/2$.
\end{theorem}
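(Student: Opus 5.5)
The plan is to feed Theorem~\ref{thm:unbiased estimator construct} with a biased procedure $\mathcal{A}(\sigma)$ built from Theorem~\ref{thm:QA-MLMC under weaker one-sided Lipschitz}, exactly as in Algorithm~\ref{alg:Unbiased  Quantum Algorithm for Gibbs Expectation}.

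\textbf{Step 1: construct $\mathcal{A}(\sigma)$.} Given $\sigma>0$, use geometric ergodicity (Lemma~\ref{lem:ergodic under dissipative}) to choose $T(\sigma)=\mathcal{O}(\log\sigma^{-1})$ with $|\mathbb{E}[\varphi(X_{T})]-\mathbb{E}_\pi[\varphi]|\le \sigma/2$. Then run the QA-MLMC estimator of Theorem~\ref{thm:QMLMC2}, with level differences generated by Algorithm~\ref{alg:Level l sample with change-of-measure QA-MLMC} on the horizon $T(\sigma)$. Take the target mean-squared error to be $(\sigma/2)^2$ around $\mathbb{E}[\varphi(X_{T})]$. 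Then
\[
\mathbb{E}\big[(\mu-\mathbb{E}_\pi[\varphi])^2\big]\le 2\cdot\tfrac{\sigma^2}{4}+2\cdot\tfrac{\sigma^2}{4}=\sigma^2 .
\]

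\textbf{Step 2: cost of $\mathcal{A}(\sigma)$.} Proposition~\ref{prop: strong order error of discretization} gives $\beta=2$ with $K_2^{1/2}=\sqrt{T}\,\frac{SLd(L+\sqrt d)}{2S-\lambda}h_{(4)}$. The Euler weak error gives $\alpha=1$. The level cost is $\mathcal{C}_\ell=\mathcal{O}(T 2^{\ell}/h_{(4)})$, so $\gamma=1$ and $\beta\ge 2\gamma$. The $h_{(4)}$ factors cancel up to the $T$-dependence that $h_{(4)}$ carries, so the cost is
\[
\widetilde{\mathcal{O}}\Big(\tfrac{SLd(L+\sqrt d)}{2S-\lambda}\,\sigma^{-1}\Big),
\]
with $T=\mathcal{O}(\log \sigma^{-1})$ absorbed into $\widetilde{\mathcal{O}}$. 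This matches the hypothesis of Theorem~\ref{thm:unbiased estimator construct} with $p=1\in(0,2)$ and $C=\frac{SLd(L+\sqrt d)}{2S-\lambda}$.

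\textbf{Step 3: apply Theorem~\ref{thm:unbiased estimator construct} with $\tilde\sigma=\epsilon$.} Pick $\rho\in(1/2,1)$ and $M$ as in its proof. The output $\tilde\mu=\tilde\mu_0+2^{j}(\tilde\mu_j-\tilde\mu_{j-1})$ is then unbiased for $\mathbb{E}_\pi[\varphi]$, since $\lim_j\mathbb{E}[\tilde\mu_j]=\mathbb{E}_\pi[\varphi]$. It also satisfies $\operatorname{Var}(\tilde\mu)\le\epsilon^2$, with expected cost $\widetilde{\mathcal{O}}(C\epsilon^{-1})$.

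\textbf{Main obstacle.} The delicate point is that the hidden polylog factors in $\mathcal{A}(2^{-\rho j}\epsilon/M)$ grow with $j$. These factors come from $T=\mathcal{O}(\rho j+\log\epsilon^{-1})$, from the $T$-dependence in $h_{(4)}$ and $\sqrt T$, and from the number of levels $L$. The expected-cost sum is $\sum_j 2^{-j}\,2^{\rho j}\,\mathrm{poly}(j)$, which still converges because $\rho<1$. The step I would check most carefully is that these polynomial-in-$j$ factors are genuinely only polynomial, so the sum stays finite and the bound remains $\widetilde{\mathcal{O}}(\epsilon^{-1})$.
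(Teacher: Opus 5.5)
Your proposal is correct and follows the paper's route. The paper's proof is the one-line observation that Theorem~\ref{thm:unbiased estimator construct} with $p=1$, applied to the QA-MLMC procedure of Theorem~\ref{thm:QA-MLMC under weaker one-sided Lipschitz} (as in Algorithm~\ref{alg:Unbiased  Quantum Algorithm for Gibbs Expectation}), yields the result. You also supply details the paper leaves implicit: splitting the error budget into a truncation bias $\le\sigma/2$ and a QA-MLMC mean-squared error $\le(\sigma/2)^2$, and checking that the $j$-dependent polylogarithmic factors are summable against $2^{-(1-\rho)j}$; note too that $h_{(4)}$, including its $T$-dependence, cancels exactly in $K_2^{1/2}K_3$, leaving only a polylogarithmic $T^{3/2}$ factor.
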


\begin{proof}
    The conclusion  follows directly by applying 
     Theorem~\ref{thm:unbiased estimator construct} to 
    Theorem~\ref{thm:QA-MLMC under weaker one-sided Lipschitz}.
\end{proof}

\section{Quantum Sampling of Heavy-Tailed Distributions}\label{sec:heavy-tailed}
In recent years, quantum algorithms have shown promising potential to accelerate MCMC~\cite{Ozgul2025QuantumSF,Childs2022QuantumAF},  
particularly for well-behaved target distributions,  
such as those satisfying strong convexity, Log-Sobolev inequalities, or Poincaré inequalities.

However, many practical problems involve sampling from heavy-tailed distributions,  
such as in Bayesian posterior inference with weak priors, financial modeling, or robust statistics.  
In such cases, both classical and quantum methods often suffer from degraded performance due to poor mixing properties.

Therefore, in this section, we focus on sampling and estimation for heavy-tailed distributions.  
More precisely, by "heavy-tailed" we refer to the following definition:

\begin{definition}\label{def:heavy-tailed}
A random variable $X$ is said to be heavy-tailed if, for every $\lambda>0$,
$$
\lim_{x\to\infty} e^{\lambda x}\mathbb{P}(X>x)=\infty.
$$
\end{definition}

One particular instance of a heavy-tailed distribution arises 
when $\pi \propto e^{-f}$ and
\begin{align*}
    \mathop{\mathrm{lim}}\limits_{\|x\| \to \infty} \|\nabla f(x)\| = 0.
\end{align*}
In this case, the driving force $\nabla f(x)$ vanishes at infinity, so the process is no longer strongly pulled back toward regions of high probability density, and the Markov chain induced by the SDE fails to exhibit exponential ergodicity.

\begin{proposition}[Theorem 2.4 in~\cite{Roberts1996ExponentialCO}]
If $  \mathop{\mathrm{lim}}\limits_{\|x\| \to \infty} \|\nabla f(x)\| = 0$, then the process $(X_t)_{t \geq 0}$ defined by
\begin{align*}
    \mathrm{d} X_t = -\nabla f(X_t) \mathrm{d} t + \sqrt{2} \mathrm{d} W_t
\end{align*}
is not exponentially ergodic.
\end{proposition}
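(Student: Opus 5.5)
The plan is to argue by contradiction, using the characterization of exponential (geometric) ergodicity through exponential moments of hitting times of a small set. Suppose the Langevin diffusion \eqref{SDE0} were exponentially ergodic. By the standard Meyn--Tweedie theory, there would then exist a compact (petite) set $C\subset B_r:=\{\|x\|\le r\}$, a constant $\kappa>0$, and a function $V\ge 1$ with $\int V\,\mathrm{d}\pi<\infty$ such that
$$\mathbb{E}_x\big[e^{\kappa\tau_C}\big]\le V(x)\quad\text{for all } x,$$
where $\tau_C$ is the hitting time of $C$. In particular $\mathbb{E}_\pi[e^{\kappa\tau_C}]<\infty$. The goal is to show that this $\pi$-average is in fact infinite when $\|\nabla f(x)\|\to0$ at infinity.

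\textbf{Step 1: escape estimate from far away.} Fix $R\gg r$ and set $\delta_R:=\sup_{\|y\|\ge R/2}\|\nabla f(y)\|$, which tends to $0$ as $R\to\infty$. Take a start point $x$ with $R\le\|x\|\le 2R$ and put $t_R:=R/(4\delta_R)$. On the event $E_R:=\{\sup_{s\le t_R}\sqrt2\|W_s\|\le R/4\}$, a continuity (first-exit) argument shows the path stays in $\{\|y\|\ge R/2\}$ up to time $t_R$. Indeed, as long as it is there, the drift moves it by at most $\delta_R t_R=R/4$ and the noise by at most $R/4$, so it never reaches $B_{R/2}\supset C$. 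Hence $\tau_C>t_R$ on $E_R$. The Brownian small-ball estimate then gives
$$\mathbb{P}(E_R)\ge c_1\exp(-c_2 d\, t_R/R^2).$$
Consequently
$$\mathbb{E}_x\big[e^{\kappa\tau_C}\big]\ge c_1\exp\big(\kappa t_R-c_2 d\,t_R/R^2\big)\ge c_1\exp\big(\kappa R/(8\delta_R)\big)$$
for $R$ large.

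\textbf{Step 2: the tail mass of $\pi$ is not too small.} Since $\|\nabla f\|\to0$, we have $|f(y)-f(0)|\le \int_0^1\|\nabla f(sy)\|\|y\|\,\mathrm{d}s=o(\|y\|)$. Therefore
$$\pi(R\le\|x\|\le2R)\ge Z^{-1}\mathrm{vol}(\text{annulus})\,e^{-f(0)-o(R)}=e^{-o(R)}.$$

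\textbf{Step 3: combine.} From Steps 1 and 2,
$$\mathbb{E}_\pi\big[e^{\kappa\tau_C}\big]\ge e^{-o(R)}\,c_1 e^{\kappa R/(8\delta_R)}.$$
Because $\delta_R\to0$, the exponent $\kappa R/(8\delta_R)$ grows faster than any $o(R)$ term, so the right-hand side tends to $\infty$ as $R\to\infty$. This contradicts $\int V\,\mathrm{d}\pi<\infty$.

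The main obstacle I expect is the first step, namely invoking the precise equivalence between exponential ergodicity and $\pi$-integrable exponential hitting-time moments for a continuous-time diffusion. I would handle it by passing to a skeleton chain $X_{n\Delta}$, which is geometrically ergodic exactly when the diffusion is, and using the discrete Meyn--Tweedie drift characterization there. The path-confinement argument must then be checked only at skeleton times, which is harmless because the confinement bound holds for the whole path.
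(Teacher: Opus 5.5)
The paper does not prove this proposition. It simply quotes Theorem~2.4 of Roberts--Tweedie, so there is no internal argument to compare against. Your proof is correct and self-contained, modulo standard Meyn--Tweedie theory.

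To make the first step precise: the equivalence theorem for geometric ergodicity only produces \emph{some} petite set, and it may be unbounded. To take $C$ a closed ball, use the geometric-regularity statement of Meyn--Tweedie (Ch.~15). For the drift function $V$ of the skeleton, every set $B$ with $\pi(B)>0$ satisfies $\mathbb{E}_x[r_B^{\tau_B}]\le c_B V(x)$, and $\pi(V)<\infty$. Then $\tau_C\le\Delta\,\tau_C^{\mathrm{skel}}$ transfers this to continuous time, as you say. The small-ball exponent is really of order $d^2t/a^2$ rather than $d\,t/a^2$, which is immaterial for fixed $d$.

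Steps 2--3 are needed only because your confinement event controls the path up to time $t_R$. From a fixed starting point it therefore gives only a finite lower bound, and you must trade $e^{\kappa R/(8\delta_R)}$ against the sub-exponential tail of $\pi$. There is a leaner route that avoids this trade-off:
\begin{itemize}
\item For $\|X_t\|\ge R$, It\^o's formula gives $\mathrm{d}\|X_t\|=\big(-\langle\nabla f(X_t),X_t/\|X_t\|\rangle+(d-1)/\|X_t\|\big)\mathrm{d}t+\sqrt2\,\mathrm{d}\widetilde W_t$, with $\widetilde W$ a scalar Brownian motion.
\item Hence, until $X$ hits $\{\|y\|\le R\}$, one has $\|X_t\|-\|X_0\|\ge-\delta_R t+\sqrt2\,\widetilde W_t$.
\item So $\tau_C$ dominates the first-passage time $T$ of a Brownian motion with drift $-\delta_R$. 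Its density decays like $t^{-3/2}e^{-\delta_R^2t/4}$, so $\mathbb{E}[e^{\kappa T}]=\infty$ once $\delta_R^2<4\kappa$.
\item For large $R$ this makes $V=\infty$ on all of $\{\|x\|>R\}$, a set of positive $\pi$-measure. That contradicts $\pi(V)<\infty$ without any tail estimate on $\pi$.
\end{itemize}
Your version trades this one-dimensional first-passage computation for elementary small-ball and sub-exponential-tail estimates. Both arguments work.
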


As a consequence, traditional variance and complexity bounds for QA-MLMC,  
which rely on exponential convergence rates and functional inequalities such as the Poincaré or Log-Sobolev inequality,  
may no longer apply.

To address the heavy-tailed case, a natural approach is to transform it into a light-tailed one for analysis. 
\cite{He2024AnAO} introduced a transformation map to convert heavy-tailed distributions into light-tailed ones. However, this map is generally effective only for heavy-tailed cases and is not applicable to light-tailed distributions. Moreover, it still imposes stringent requirements on the behavior of the original function $f$ near the origin. To further generalize the framework, we modify the transformation map by incorporating a factor of $r^\alpha$ and by adjusting its behavior near the origin, thereby removing the restrictions on the original function in that region. Building on this generalized framework, we then apply quantum algorithms to accelerate the estimation process.
%Even more, in the heavy-tailed setting, the function 
%$f$ typically does not satisfy the one-sided Lipschitz or dissipativity conditions.

\subsection{Transformation Map}\label{section Transformation Map}

%\subsection{Construction of Transformation Map}
Consider a smooth invertible transformation map $h:\mathbb{R}^d\rightarrow \mathbb{R}^d$.
If a random vector $X$ has density $\mu$, 
we denote the density of random vevtor $Y=h^{-1}(X)$ as $\mu_h$, where $\mu_h$ is given by
\begin{align*}
    \mu_h(x)=\mu(h(x))\det(\nabla h(x)).
\end{align*}
In particular, when $X\sim \pi\propto e^{-f}$, we have
\begin{align*}
    \pi_h(y)\propto e^{-f_h(y)} \text{ with }f_h(y)=f(h(y))-\mathrm{log}\det(\nabla h(y)).
\end{align*}

\begin{assumption}\label{ass:f}
The initial potential function  $f$ is isotropic, 
i.e. $f(x) = f(\|x\|) $, and $f : \mathbb{R} \rightarrow \mathbb{R} $ is three times continuously differentiable.
\end{assumption}

Consider function $g$ defined as:
\begin{align}\label{g}
    g(r)=
        \begin{cases}
           r,& r\in [0, R_1),\\
           \chi(r)r+(1-\chi(r))r^\alpha e^{br^\beta},&r\in[R_1,R_2], \\
           r^\alpha e^{br^\beta},&r\in(R_2,\infty).
        \end{cases}
\end{align}
where $\alpha,b\geq0$, $\alpha\geq 1$ when $b=0$, $\beta\in (1,2]$, and
$\chi:[0,\infty)\rightarrow[0,1]$ satisfies 
\begin{itemize}
    \item $\chi(r)=1$ for $r\in[0,R_1]$;
    \item $\chi(r)=0$ for $r\in[R_2,\infty)$;
 \item $\chi^{(k)}(R_1)=\chi^{(k)}(R_2)=0$ for $k=1,2,3$ and $\chi'(r)\leq 0$ on $(R_1,R_2)$.
\end{itemize}
\begin{remark}
     Such a $\chi(r)$ is easy to construct, and a concrete construction can be found in Appendix~\hyperref[Appendix B: Technical Details for heavy tailed]{B}, 
Lemma~\ref{lem:construct of chi}.
\end{remark}

\begin{lemma}
Assume that $\alpha$ and $b$ are not both zero (i.e., $g$ is nontrivial in the tail).
By choosing suitable $0<R_1<R_2$, the function $g$ can be made to satisfy 
$$
g \in C^3\left((0,\infty)\right),
$$
and it is onto and strictly monotonically increasing. Hence, $g$ is invertible.
\end{lemma}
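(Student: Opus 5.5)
We need to show three things: $g\in C^3((0,\infty))$, $g$ is strictly increasing, and $g$ is onto $[0,\infty)$ (equivalently onto $(0,\infty)$ on the open domain). The argument splits along the three pieces $[0,R_1)$, $[R_1,R_2]$, $(R_2,\infty)$, with the choice of $R_1$ made last.

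\textbf{Regularity.} On $(0,R_1)$ we have $g(r)=r$, and on $(R_2,\infty)$ we have $g(r)=r^\alpha e^{br^\beta}$; both are smooth for $r>0$. On $[R_1,R_2]$, write $g=\chi\, r+(1-\chi)G$ with $G(r)=r^\alpha e^{br^\beta}$. This is $C^3$ because $\chi$ is $C^3$ and $R_1>0$.

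At $R_1$, the one-sided derivatives of order $k\le 3$ come from Leibniz's rule. Every term containing some $\chi^{(j)}(R_1)$ with $j\ge1$ vanishes, and $\chi(R_1)=1$, so only the derivatives of $r$ survive. They therefore match those of the left piece. At $R_2$ the same argument applies with $\chi(R_2)=0$, so the derivatives match those of $G$. Hence $g\in C^3((0,\infty))$.

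\textbf{Monotonicity.} On $(0,R_1)$, $g'=1>0$. On $(R_2,\infty)$,
\[
G'(r)=r^{\alpha-1}e^{br^\beta}\bigl(\alpha+b\beta r^\beta\bigr)>0,
\]
since $\alpha,b\ge0$ are not both zero. On the middle interval,
\[
g'(r)=\chi(r)+(1-\chi(r))G'(r)+\chi'(r)\bigl(r-G(r)\bigr).
\]
Since $\chi'\le0$, it suffices to arrange $G(r)\ge r$ on $[R_1,R_2]$. Then the last term is nonnegative, and the first two terms form a convex combination of $1$ and $G'>0$, hence are positive.

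\textbf{Choice of $R_1$ (the main obstacle).} We need $G(r)\ge r$, i.e.\ $r^{\alpha-1}e^{br^\beta}\ge1$, on the whole middle interval, and this depends on the parameters.
\begin{itemize}
\item If $b>0$ and $\alpha\ge1$, then $G(r)\ge r$ for all $r\ge 0$, so any $R_1>0$ works.
\item If $b=0$, then $\alpha\ge1$ is assumed, so $G(r)=r^\alpha\ge r$ for $r\ge1$. Choosing $R_1\ge1$ suffices.
\item If $b>0$ and $0\le\alpha<1$, the function $\log\bigl(r^{\alpha-1}e^{br^\beta}\bigr)=(\alpha-1)\log r+br^\beta$ tends to $+\infty$ as $r\to\infty$. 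We can therefore pick $R_1$ beyond the last zero of this function, and then $G\ge r$ on $[R_1,\infty)$.
\end{itemize}
The condition $\chi'\le0$ is exactly what makes the cross term harmless once this domination holds.

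\textbf{Surjectivity and invertibility.} $g$ is continuous and strictly increasing, $g(r)\to0$ as $r\to0^+$ (indeed $g(r)=r$ there), and $g(r)\to\infty$ as $r\to\infty$ since $G\to\infty$ when $\alpha,b$ are not both zero. By the intermediate value theorem, $g$ maps $(0,\infty)$ onto $(0,\infty)$, and $[0,\infty)$ onto $[0,\infty)$ with $g(0)=0$. A continuous, strictly increasing surjection is a bijection, so $g$ is invertible.
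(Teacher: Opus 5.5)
Your proof follows the paper's argument essentially step for step. You use the same three-piece decomposition and the same derivative matching at $R_1,R_2$ via $\chi^{(k)}(R_i)=0$. On the middle interval you use the same formula $g'=\chi+(1-\chi)G'+\chi'(r-G)$, where $\chi'\le 0$ and $G\ge r$ make every term nonnegative. Surjectivity comes from the same continuity and monotonicity argument.

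One sub-claim in your choice of $R_1$ is false. For $b>0$ and $\alpha>1$ you have $G(r)/r=r^{\alpha-1}e^{br^\beta}\to 0$ as $r\to 0^+$, so $G(r)\ge r$ fails near the origin. Hence ``any $R_1>0$ works'' is wrong. For small $R_1<R_2$, $|\chi'|$ is of order $1/(R_2-R_1)$ and the cross term $\chi'(r-G)$ can really make $g'<0$. For example, take $\alpha=3$, $b=10^{-3}$, $\beta=2$, $R_1=0.1$, $R_2=0.2$. At $r=0.15$ you get $\chi=0.5$, $\chi'\approx-21.9$ and $r-G\approx 0.147$, so $g'\approx-2.7$.

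The fix is immediate. In that subcase take $R_1\ge 1$, so that $r^{\alpha-1}\ge 1$ and $e^{br^\beta}\ge 1$. Alternatively, use your third-case argument for all $b>0$, since $(\alpha-1)\log r+br^\beta\to+\infty$; this single limit argument is how the paper chooses $R_1$. Only the subcase $\alpha=1$ actually allows every $R_1>0$. Since the lemma asserts only the existence of suitable $R_1<R_2$, the rest of your proof stands once this is corrected.
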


\begin{proof}
First, we show that $g\in C^3\left((0,\infty)\right)$.

It is clear that $g$ is three times continuously differentiable on 
$(0,R_1)$, $(R_1,R_2)$, and $(R_2,\infty)$. 
Moreover, in a small neighborhood of $R_1$ we have 
$$
g(r)=\chi(r)r+(1-\chi(r))r^\alpha e^{br^\beta},
$$
and since $\chi^{(k)}(R_1)=0$ for $k=1,2,3$, 
all derivatives of $g$ up to order $3$ match on both sides of $R_1$.
The situation at $R_2$ is analogous, and hence $g$ is three times continuously differentiable.

Next, we prove that $g$ is strictly increasing.
For $0<r<R_1$ we have $g'(r)=1>0$, and for $r>R_2$,
$$
g'(r)= \alpha r^{\alpha-1}e^{br^\beta}+b\beta r^{\beta-1}r^\alpha e^{br^\beta}
=r^{\alpha-1}e^{b r^\beta}\left(\alpha + b\beta r^\beta\right).
$$
Since $\alpha \geq 0$, $b \geq 0$, $\beta>0$, $r>0$, and $\alpha$ and $b$ are not both zero, 
we obtain $\alpha + b\beta r^\beta > 0$, and hence $g'(r) > 0$ on $(R_2,\infty)$.

 For $r\in(R_1,R_2)$ we have
$$
g(r)=\chi(r) r + (1-\chi(r)) r^\alpha e^{br^\beta}.
$$
Denote $\psi(r)=r^\alpha e^{b r^\beta}$ for brevity, then 
$$g'(r) = \chi(r)\cdot 1 + (1-\chi(r))\psi'(r) + \chi'(r)\left(r-\psi(r)\right),$$
where  $\psi'(r)=r^{\alpha-1} e^{b r^\beta}\big(\alpha + b\beta r^\beta\big).$

Since $\psi(r)/r = r^{\alpha-1} e^{b r^\beta}$ and $\alpha\geq 1$ when $b=0$,
 there exists $R_1>0$ such that $\psi(r)\geq r$ for all $r\geq R_1$. 
Choose any $R_2>R_1$. On $(R_1,R_2)$, all three terms of 
$g'(r)$ are nonnegative, and the first two terms are not simultaneously zero. 
Therefore, $g'(r)>0$, which implies that $g$ is strictly increasing.

Finally,
since $g(0)=0$, $\lim\limits_{r\to\infty} g(r)=\infty$ and $g$ is continuous and strictly increasing on $[0,\infty)$, 
it follows that the range of $g$ is $(0,\infty)$ .
Therefore, by continuity, strict monotonicity, and surjectivity, 
we conclude that $g$ is invertible on $(0,\infty)$.
\end{proof}

 Define the isotropic transformation $h:\mathbb{R}^d\rightarrow\mathbb{R}^d$ as
  \begin{align}
    h(x)=\begin{cases}
        g(\|x\|)\frac{x}{\|x\|},&x\neq 0,\\
        0,&x=0.
    \end{cases}
  \end{align}
It's straightforward to see that
\begin{align*}
    h^{-1}(x)=\begin{cases}
        g^{-1}(\|x\|)\frac{x}{\|x\|},&x\neq 0,\\
        0,&x=0.
    \end{cases}
\end{align*}

Let us now introduce the definition of the transformed function $f_h$.
\begin{definition}\label{def:transformed density}
Let $ X $ be a random vector with density $ \mu $.  
We define the density of the transformed random vector $ Y = h^{-1}(X) $ as $ \mu_h $,  
which we refer to as the transformed density of $ \mu $.

In particular, if $ X $ admits a density $ \pi \propto e^{-f} $,  
then the random variable $ Y = h^{-1}(X) $ has density
$$
\pi_h(y) \propto e^{-f_h(y)} \quad \text{with} \quad f_h(y) = f(h(y)) - \mathrm{log} \det(\nabla h(y)).
$$
\end{definition}

Furthermore, the transformed potential $f_h$ can be represented as
\begin{align}
    \label{eq:formula of f_h}
    f_h(x) = f(g(\|x\|)) - \mathrm{log}  \,\det(\nabla h(x)) 
= f(g(\|x\|)) - \mathrm{log} \,g'(\|x\|) - (d-1)\mathrm{log}  \,g(\|x\|) + (d-1)\mathrm{log}  \|x\|.
\end{align}

The following lemma ensures that $\pi_h$ and $f_h$ can be well-defined.

\begin{lemma}
For the mapping $h$ defined above, the Jacobian determinant satisfies
$$
\det(\nabla h(x)) > 0 \quad \text{for all } x \in \mathbb{R}^d.
$$
Moreover, 
$$
\log\det(\nabla h(x))
$$
is well-defined and continuous on $\mathbb{R}^d$.
\end{lemma}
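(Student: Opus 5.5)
We will compute the Jacobian of the radial map $h$ explicitly and read off its determinant. For $x\neq 0$, write $r=\|x\|$ and $u=x/r$. Differentiating $h(x)=g(r)\,u$ and using $\nabla r=u^\top$ and $\nabla u=(I-uu^\top)/r$ gives
\begin{align*}
\nabla h(x)=g'(r)\,uu^\top+\frac{g(r)}{r}\left(I-uu^\top\right).
\end{align*}
This matrix is symmetric. It has eigenvalue $g'(r)$ in the radial direction $u$, and eigenvalue $g(r)/r$ with multiplicity $d-1$ on $u^\perp$. Hence
\begin{align*}
\det(\nabla h(x))=g'(r)\left(\frac{g(r)}{r}\right)^{d-1},
\end{align*}
which is consistent with formula~\eqref{eq:formula of f_h}.

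Positivity for $x\neq0$ then follows from the preceding lemma. There $g'(r)>0$ for every $r>0$, with $g'(r)=1$ on $(0,R_1)$, and $g$ is strictly increasing with $g(0)=0$, so $g(r)>0$ for $r>0$ and $g(r)/r>0$. At $x=0$ we use that $g(r)=r$ on $[0,R_1)$. Thus $h$ is the identity on the ball $\{\|x\|<R_1\}$, so $\nabla h(0)=I$ and $\det(\nabla h(0))=1>0$. This local-identity observation also settles the differentiability of $h$ at the origin, which is the only point where the radial formula is not directly available.

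For continuity of $\log\det(\nabla h(x))$, we check the two regions separately. On $\{\|x\|<R_1\}$ the function is identically $0$. On $\mathbb{R}^d\setminus\{0\}$ it equals
\begin{align*}
\log g'(r)+(d-1)\left(\log g(r)-\log r\right),
\end{align*}
which is a composition of continuous functions, since $g\in C^3((0,\infty))$ and $g,g'$ are strictly positive there. These two open sets cover $\mathbb{R}^d$ and the formulas agree on their overlap. Therefore $\log\det(\nabla h)$ is well-defined and continuous (indeed $C^2$) everywhere.

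The only delicate point is the origin, where $x/\|x\|$ is singular. It is handled entirely by the design choice $g(r)=r$ near $0$, which is exactly the modification to the map of~\cite{He2024AnAO} that removes conditions on $f$ near the origin. No other step requires more than the monotonicity and positivity already established for $g$.
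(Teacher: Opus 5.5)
Your proof is correct and follows the paper's route. You use the same radial/tangential eigen-decomposition of $\nabla h(x)$, giving $\det(\nabla h(x))=g'(r)\,(g(r)/r)^{d-1}$, and you get positivity from $g,g'>0$ on $(0,\infty)$. The only difference is at the origin: the paper takes the limit $r\to0^+$ to obtain $g'(0)^d$, while you use that $h$ is the identity on $\{\|x\|<R_1\}$. Your version is slightly cleaner, because it also directly justifies differentiability of $h$ and continuity of $\nabla h$ at $0$, which the paper's limit argument tacitly assumes.
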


\begin{proof}
Let $r=\|x\|$. For $x\neq0$, denotee $\varphi(r)=g(r)/r$. Then $h(x)=\varphi(r)x$ and
$$
\nabla h(x)=\varphi(r)I + \frac{\varphi'(r)}{r}x x^\top.
$$
The eigenvalues of $\nabla h(x)$ are $\varphi(r)$ with multiplicity $n-1$ 
on the orthogonal complement of $x$ and $\varphi(r)+\varphi'(r)r$ 
in the radial direction. Hence for $x\neq0$,
$$
\det(\nabla h(x))=\varphi(r)^{d-1}\left(\varphi(r)+\varphi'(r)r\right)
=\left(\frac{g(r)}{r}\right)^{d-1} g'(r).
$$
Since $g(r)>0$ and $g'(r)>0$ for all $r>0$,
each factor  is strictly positive,
and therefore $\det(\nabla h(x))>0$ for every $x\neq0$. 

Finally we check the origin. 
Note that $g\in C^1$ at $0$ with $g(0)=0$ and $g'(0)>0$, we have
$$
\lim_{r\to 0+}\frac{g(r)}{r}=g'(0),
\quad
\lim_{r\to 0+}g'(r)=g'(0).
$$
Hence
$$
\det(\nabla h(0))=\lim_{r\to 0+}\left(\frac{g(r)}{r}\right)^{d-1} g'(r)=g'(0)^d>0.
$$
Thus $\det(\nabla h(x))>0$ also holds at $x=0$.

Moreover, since $g \in C^3((0,\infty))$, 
we know that $\nabla h(x)$ is continuous on $\mathbb{R}^d$ 
and that the determinant is a continuous positive function. 
Therefore, $\log\det(\nabla h(x))$ is well-defined and continuous on $\mathbb{R}^d$.
\end{proof}

We have now defined the transformation $ h $ and the transformed potential $ f_h $.  
For simplicity, in the remainder of this section, unless otherwise specified,  
we assume that the functions $ g $, $ h $, and $ f_h $ are as defined above,  
and that the function $ f $ satisfies Assumption~\ref{ass:f}.

The following two lemmas provide an intuitive description of the properties of the gradient and the Hessian matrix of $f_h$.

\begin{lemma}[Gradient of Transformed Potential]\label{lemma:gradient}
The gradient of $f_h$ can be explicitly written as
$$
 \nabla f_h(x) = \left[g'(r) f'(g(r)) -\frac{g''(r)}{g'(r)}- (d-1)\frac{g'(r)}{g(r)}
   + (d-1)\frac{1}{r}\right]\frac{x}{r},\quad x\neq0,
$$
where $r = \|x\|$.
\end{lemma}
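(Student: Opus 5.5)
The plan is to differentiate the closed form \eqref{eq:formula of f_h} term by term with the chain rule, treating each summand as a function of $r=\|x\|$ alone. The one fact needed is that for $x\neq 0$ we have $\nabla \|x\| = x/r$. Hence for any differentiable scalar $\psi$ on $(0,\infty)$,
\begin{align*}
\nabla\,\psi(\|x\|) = \psi'(r)\,\frac{x}{r}.
\end{align*}
Since $g\in C^3((0,\infty))$ and $f$ is $C^3$ by Assumption~\ref{ass:f}, every summand is differentiable away from the origin. Also $g(r)>0$ and $g'(r)>0$ for $r>0$, as established in the lemma on the Jacobian determinant, so all the logarithms are smooth there.

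The derivatives in $r$ of the four summands are as follows.
\begin{itemize}
\item For $f(g(r))$, with $f$ regarded as a function of the radius, the derivative is $g'(r)f'(g(r))$.
\item For $-\log g'(r)$ it is $-g''(r)/g'(r)$; this is well defined because $g'>0$ and $g\in C^2$.
\item For $-(d-1)\log g(r)$ it is $-(d-1)g'(r)/g(r)$.
\item For $(d-1)\log r$ it is $(d-1)/r$.
\end{itemize}
Multiplying the sum of these by the common factor $x/r$ gives exactly the stated formula.

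The only point needing care is to justify \eqref{eq:formula of f_h} itself. It follows from $\det(\nabla h(x)) = (g(r)/r)^{d-1}g'(r)$, which was computed in the preceding lemma, by taking logarithms. There is no real obstacle. I would just remark that the formula is asserted only for $x\neq0$, so nothing has to be said about the origin, where the factor $x/r$ is undefined.
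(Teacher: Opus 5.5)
Your proposal is correct and takes the same route as the paper: it differentiates the closed form \eqref{eq:formula of f_h} term by term in $r$, using $\nabla\|x\| = x/r$ for $x\neq 0$. Your additional justifications are details the paper leaves implicit: you derive that closed form by taking logarithms of $\det(\nabla h(x)) = (g(r)/r)^{d-1}g'(r)$, and you note $g, g' > 0$ on $(0,\infty)$ so the logarithms are smooth.
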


\begin{proof}

Denote $r = \|x\|$, \eqref{eq:formula of f_h} gives that
\begin{align*}
   \nabla f_h(x) = \left[g'(r) f'(g(r)) -\frac{g''(r)}{g'(r)}- (d-1)\frac{g'(r)}{g(r)}
   + (d-1)\frac{1}{r}\right]\frac{x}{r}
\end{align*}
for $x\neq 0$.

\end{proof}

\begin{lemma}[Eigenvalues of Hessian Matrix]\label{lemma:eigenvalues}
The Hessian matrix $\nabla^2 f_h(x)$ has  two distinct eigenvalues:
\begin{align*}
    \lambda_1(r)=&(g'(r))^2 f''(g(r))+g''(r) f'(g(r)) 
   -\frac{g'''(r)g'(r)-(g''(r))^2}{(g'(r))^2}- (d-1)\frac{g''(r)g(r)-(g'(r))^2}{(g(r))^2}
   - (d-1)\frac{1}{r^2},\\
   \lambda_2(r)=& \frac{1}{r}\left[g'(r) f'(g(r)) -\frac{g''(r)}{g'(r)}- (d-1)\frac{g'(r)}{g(r)}
   + (d-1)\frac{1}{r}\right].
\end{align*}
\end{lemma}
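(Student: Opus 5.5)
The plan is to compute the Hessian of a radial function directly. By Lemma~\ref{lemma:gradient}, for $x\neq 0$ we can write $\nabla f_h(x)=\Phi(r)\,\frac{x}{r}$ with $r=\|x\|$ and
$$
\Phi(r)=g'(r) f'(g(r)) -\frac{g''(r)}{g'(r)}- (d-1)\frac{g'(r)}{g(r)} + (d-1)\frac{1}{r},
$$
so that $f_h(x)=F(r)$ with $F'=\Phi$, where $F(r)=f(g(r))-\log g'(r)-(d-1)\log g(r)+(d-1)\log r$ by \eqref{eq:formula of f_h}. For any $C^2$ radial function $F(\|x\|)$, differentiating $\Phi(r)x/r$ with $\nabla r = x/r$ and $\nabla (x/r)=\frac{1}{r}\bigl(I-\frac{xx^\top}{r^2}\bigr)$ gives
$$
\nabla^2 f_h(x)=\Phi'(r)\,\frac{xx^\top}{r^2}+\frac{\Phi(r)}{r}\Bigl(I-\frac{xx^\top}{r^2}\Bigr).
$$
This matrix is diagonal in the decomposition $\mathbb{R}^d=\mathrm{span}\{x\}\oplus x^\perp$. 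Its eigenvalue is $\Phi'(r)$ in the radial direction, with multiplicity one, and $\Phi(r)/r$ on $x^\perp$, with multiplicity $d-1$.

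The second eigenvalue $\lambda_2(r)=\Phi(r)/r$ is immediately the stated expression. For $\lambda_1(r)=\Phi'(r)$, I will differentiate $\Phi$ term by term:
\begin{align*}
\frac{d}{dr}\bigl[g'f'(g)\bigr] &= g''f'(g)+(g')^2f''(g),\\
\frac{d}{dr}\Bigl[\frac{g''}{g'}\Bigr] &= \frac{g'''g'-(g'')^2}{(g')^2},\\
\frac{d}{dr}\Bigl[\frac{g'}{g}\Bigr] &= \frac{g''g-(g')^2}{g^2},\\
\frac{d}{dr}\Bigl[\frac{1}{r}\Bigr] &= -\frac{1}{r^2}.
\end{align*}
Assembling these with the signs and $(d-1)$ factors in $\Phi$ reproduces $\lambda_1$ exactly. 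Each step is legitimate because $g\in C^3((0,\infty))$ and $g,g'>0$, which makes every quotient well defined, and because $f$ is $C^3$ by Assumption~\ref{ass:f}.

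The one point needing care is the origin, and strictly, whether the two eigenvalues are distinct. At $x=0$ the formula in $r$ is undefined. However, $g(r)=r$ on $[0,R_1)$, so $h$ is the identity near the origin and $f_h=f$ there. The statement is therefore naturally read for $x\neq0$, or, at the origin, as the limits $r\to0^+$, where both expressions tend to $f''(0)$ for smooth isotropic $f$. As for distinctness, "two distinct eigenvalues" should be read as two eigenvalue formulas, one radial and one tangential. They may coincide at particular $r$, for example on $(0,R_1)$ when $f$ is quadratic, and I will note this rather than attempt to prove strict inequality. Beyond these interpretive points the proof is routine calculus.
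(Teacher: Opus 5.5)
Your proposal is correct and follows essentially the same route as the paper. The paper cites the standard fact that a radial function $u(\|x\|)$ has Hessian eigenvalues $u''(r)$ (radial) and $u'(r)/r$ (tangential, multiplicity $d-1$) and applies it to the radial profile of $f_h$ from \eqref{eq:formula of f_h}; you derive that fact directly from $\nabla f_h(x)=\Phi(r)\,x/r$ and then differentiate $\Phi$ term by term, and your remarks about the origin and about ``distinct'' meaning radial versus tangential (rather than strictly unequal) are sensible clarifications the paper leaves implicit.
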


\begin{proof}

 A classical fact is that 
for any radial function $U(x)=u(r)$ with 
$r=\|x\|$, the Hessian matrix $\nabla^2 U(x)$ has two distinct eigenvalues,
$$
\lambda_{\mathrm{rad}}(r)=u''(r), \quad 
\lambda_{\mathrm{tan}}(r)=\frac{u'(r)}{r},
$$
where the radial eigenvalue $\lambda_{\mathrm{rad}}$ has multiplicity $1$, 
and the tangential eigenvalue $\lambda_{\mathrm{tan}}$ has multiplicity $d-1$.

Therefore, 
\begin{align*}
    \lambda_1(r)=&(g'(r))^2 f''(g(r))+g''(r) f'(g(r)) 
   -\frac{g'''(r)g'(r)-(g''(r))^2}{(g'(r))^2}- (d-1)\frac{g''(r)g(r)-(g'(r))^2}{(g(r))^2}
   - (d-1)\frac{1}{r^2},\\
   \lambda_2(r)=& \frac{1}{r}\left[g'(r) f'(g(r)) -\frac{g''(r)}{g'(r)}- (d-1)\frac{g'(r)}{g(r)}
   + (d-1)\frac{1}{r}\right].
\end{align*}
\end{proof}

\begin{lemma}
   $f_h$ is three times continuously differentiable on $\mathbb{R}^d$.
\end{lemma}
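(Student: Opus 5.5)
We need to show that $f_h$ is $C^3$ on $\mathbb{R}^d$, using the representation \eqref{eq:formula of f_h}. The argument splits into the region $\|x\|>0$ and a neighbourhood of the origin. The key observation is that $g(r)=r$ on $[0,R_1)$. Hence on the open ball $\{\|x\|<R_1\}$ the map $h$ is the identity, $\det\nabla h\equiv 1$, and so $f_h(x)=f(\|x\|)$ there. Away from the origin everything is a composition of $C^3$ functions.

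\textbf{Away from the origin.} On $\mathbb{R}^d\setminus\{0\}$ the map $x\mapsto r=\|x\|$ is $C^\infty$, and $g\in C^3((0,\infty))$ by the preceding lemma. The term $f(g(\|x\|))$ is $C^3$ by the chain rule, since $f\in C^3$ by Assumption~\ref{ass:f}. The terms $(d-1)\log g(\|x\|)$ and $(d-1)\log\|x\|$ are $C^3$ because $g>0$ and $r>0$. The term $\log g'(\|x\|)$ is the delicate one: $g'$ is only known to be $C^2$, so $\log g'$ is a priori only $C^2$. I would handle it region by region:
\begin{itemize}
\item on $(0,R_1)$ we have $g'\equiv1$;
\item on $(R_2,\infty)$ the function $g'=r^{\alpha-1}e^{br^\beta}(\alpha+b\beta r^\beta)$ is smooth and positive;
\item on $[R_1,R_2]$ we have $g=\chi r+(1-\chi)\psi$, so its regularity is limited by that of $\chi$.
\end{itemize}
So I would use the explicit construction of $\chi$ in Lemma~\ref{lem:construct of chi}. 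That construction can be taken $C^\infty$ (or at least $C^4$) on $(R_1,R_2)$, with all derivatives vanishing at $R_1,R_2$ to the needed order, which makes $g\in C^4$ and hence $\log g'\in C^3$. This matching at $R_1,R_2$ is the main obstacle. If the stated conditions only give $\chi^{(k)}=0$ for $k\le3$, I would strengthen the construction to $k\le4$, which costs nothing, and note this.

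\textbf{Near the origin.} On the ball $\|x\|<R_1$, $f_h(x)=f(\|x\|)$, and here we must show that the radial function $U(x)=f(\|x\|)$ is $C^3$ at $0$. For this I would interpret Assumption~\ref{ass:f} as saying that $f$ is a $C^3$ function on $\mathbb{R}^d$ that happens to be isotropic; equivalently, the even extension of the profile is $C^3$, so $f'(0)=f'''(0)=0$. Then $U$ is $C^3$ directly. If instead only the one-variable profile is given, I would verify the standard criterion for radial functions: the radial and tangential Hessian eigenvalues $f''(r)$ and $f'(r)/r$ from Lemma~\ref{lemma:eigenvalues} extend continuously with a common limit $f''(0)$, and analogously for third derivatives using $f'(0)=f'''(0)=0$.

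\textbf{Conclusion.} The two open sets $\{\|x\|<R_1\}$ and $\mathbb{R}^d\setminus\{0\}$ cover $\mathbb{R}^d$, and $f_h$ is $C^3$ on each, so $f_h\in C^3(\mathbb{R}^d)$.
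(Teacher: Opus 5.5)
Your decomposition is the same as the paper's. On the ball $\{\|x\|<R_1\}$ the map $h$ is the identity, so $f_h=f$ there, and the rest is regularity on $\mathbb{R}^d\setminus\{0\}$. The paper disposes of the punctured space in one line (``it follows directly from the definition of $f_h$''). Your objection to that line is correct, and it identifies an actual error rather than being mere caution.

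In the radial profile $u(r)=f(g(r))-\log g'(r)-(d-1)\log g(r)+(d-1)\log r$, the fourth derivative $g^{(4)}$ enters $u'''$ only through the term $-g^{(4)}/g'$ coming from $-\log g'$. The other terms involve at most $g'''$, which is continuous.

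Now take the explicit cutoff of Lemma~\ref{lem:construct of chi}, $p(t)=1-35t^4+84t^5-70t^6+20t^7$. One computes $p^{(4)}(0)=-840$ and $p^{(4)}(1)=840$. Hence $g^{(4)}$ jumps by $840\,(\psi(R_i)-R_i)/(R_2-R_1)^4$ at $r=R_i$. It follows that $\nabla^3 f_h[e_r,e_r,e_r]=u'''(r)$ is discontinuous across the sphere $\|x\|=R_i$ whenever $\psi(R_i)\neq R_i$.

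So with the conditions as stated ($k\le 3$) the lemma can fail, and your strengthening to $\chi^{(k)}(R_i)=0$ for $k\le 4$ is required, not optional. A concrete choice that keeps $\chi'\le 0$ is $p(t)=1-630\int_0^t s^4(1-s)^4\,\mathrm{d}s$; note that $\int_0^1 s^4(1-s)^4\,\mathrm{d}s=1/630$. The proof that $g$ is strictly increasing and invertible is unaffected, since it only uses $0\le\chi\le1$, $\chi'\le0$ and $\psi(r)\ge r$ on $[R_1,\infty)$.

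Near the origin your argument coincides with the paper's. Only your first reading of Assumption~\ref{ass:f} actually suffices: an isotropic $C^3$ function on $\mathbb{R}^d$, equivalently an even $C^3$ profile. The paper implicitly adopts this reading too. Your fallback for the case where ``only the one-variable profile is given'' quietly uses $f'(0)=f'''(0)=0$. That does not follow from a $C^3$ profile on $[0,\infty)$: for example, the profile $r^3$ gives $\|x\|^3$, whose third derivative is discontinuous at $0$.
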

\begin{proof}
    It follows directly from the definition of $f_h$ that
    $f_h \in C^3(\mathbb{R}^d \setminus \{0\})$. 
   Therefore, it remains to check the behavior as $x \to 0$. 
   
    Since $g(r)=r$ when $r<R_1$, we have $h(x) = x$ and $\det(\nabla h(x)) = 1$ in a neighborhood of $0$. 
Hence, $f_h(x) = f(x)$ for $\|x\| < R_1$.
    The three tiems differentiability of $f$ directly implies that $f_h$
is three times differentiable at $0$.
\end{proof}

\subsection{Transformation Assumptions}
We aim for the transformed function $ f_h $ to satisfy properties 
such as $ L $-smoothness and dissipativity.  

Therefore, we will impose suitable assumptions on the potential function 
$ f $, and prove that under these assumptions, 
$ f_h $ inherits the desired properties.

\begin{assumption}[Smooth Under Transformation]\label{assumption:smooth}
There exists constants $N, L > 0$  such that for all  $r > \max\{N,R_2\}$, it holds that
\begin{align*}
&(\psi'(r))^2 f''(\psi(r))+\psi''(r) f'(\psi(r)) 
   -\frac{\psi'''(r)\psi'(r)-(\psi''(r))^2}{(\psi'(r))^2}- (d-1)\frac{\psi''(r)\psi(r)-(\psi'(r))^2}{(\psi(r))^2}
   - (d-1)\frac{1}{r^2} <L, \\
 & \frac{1}{r}\left[\psi'(r) f'(\psi(r)) -\frac{\psi''(r)}{\psi'(r)}- (d-1)\frac{\psi'(r)}{\psi(r)}
   + (d-1)\frac{1}{r}\right]<L,
\end{align*}
where $\psi(r) = r^\alpha e^{br^\beta} $ for all $r \geq R_2$.
\end{assumption}
\begin{lemma}\label{lemma of Lipschitz Gradient}
If the initial function $f$ satisfies Assumption~\ref{assumption:smooth}, 
then there exists a constant $L_h > 0$ such that the transformed function $f_h(x)$ is $L_h$-smooth, 
i.e., for all $x, y \in \mathbb{R}^d$, we have
$$
\|\nabla f_h(x) - \nabla f_h(y)\| \leq L_h \| x - y \|.
$$
\end{lemma}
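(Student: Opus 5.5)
The plan is to show that the Hessian $\nabla^2 f_h$ is bounded in operator norm on all of $\mathbb{R}^d$; the mean value theorem $\nabla f_h(x)-\nabla f_h(y)=\int_0^1 \nabla^2 f_h(y+t(x-y))(x-y)\,\mathrm{d}t$ then gives $L_h=\sup_x\|\nabla^2 f_h(x)\|$. Since $f_h\in C^3(\mathbb{R}^d)$ by the preceding lemma, $\nabla^2 f_h$ is continuous, and by Lemma~\ref{lemma:eigenvalues} its operator norm equals $\max\{|\lambda_1(r)|,|\lambda_2(r)|\}$ with $r=\|x\|$. So it suffices to bound $|\lambda_1|$ and $|\lambda_2|$ uniformly in $r\in[0,\infty)$. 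I split the half-line into a compact part $[0,\max\{N,R_2\}]$ and a tail $r>\max\{N,R_2\}$.

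\emph{Compact part.} On the closed ball $\{\|x\|\le \max\{N,R_2\}\}$, continuity of $\nabla^2 f_h$ (from $f_h\in C^3$, using $f_h=f$ near the origin so there is no singularity at $x=0$) gives a finite bound $L_0$ on $\|\nabla^2 f_h\|$ by compactness.

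\emph{Tail.} For $r>R_2$ we have $g=\psi$, so $\lambda_1,\lambda_2$ are exactly the two expressions appearing in Assumption~\ref{assumption:smooth}, which gives $\lambda_1,\lambda_2<L$. This is only an upper bound, so the main obstacle is a matching lower bound, i.e.\ $\lambda_i>-L'$, for large $r$.

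For the lower bound I would first try to use the dissipativity-type behaviour of $f_h$ that the section is aiming for. If the bracket in $\lambda_2$, which is the radial derivative $u'(r)$ of $f_h$, is eventually nonnegative, then $\lambda_2\ge 0$ directly. For $\lambda_1=u''(r)$, I would argue that if it were unbounded below along a sequence $r_k\to\infty$ while bounded above by $L$, then $u'$ would eventually become negative and $f_h$ would decrease without bound, contradicting integrability of $e^{-f_h}$. That argument only yields averaged control, not a pointwise bound, so if it fails I would read the assumption as a two-sided bound, $|\cdot|<L$, which is clearly the intent for ``smoothness''. Under that reading the tail bound is immediate: $|\lambda_i(r)|<L$ for $r>\max\{N,R_2\}$.

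\emph{Conclusion.} Combining the two parts gives $L_h=\max\{L_0,L\}$ (or $\max\{L_0,L,L'\}$ in the first version), and the integral form of the mean value theorem completes the proof.
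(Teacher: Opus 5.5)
Your decomposition is the paper's: bound $\nabla^2 f_h$ by continuity and compactness on the ball $\{\|x\|\le\max\{N,R_2\}\}$, use Assumption~\ref{assumption:smooth} on the tail via Lemma~\ref{lemma:eigenvalues}, and take $L_h=\max\{L,\max_{\|x\|\le\max\{N,R_2\}}\|\nabla^2 f_h(x)\|\}$. The obstacle you flag is real, and the paper's own proof has the same gap. It concludes $\|\nabla^2 f_h(x)\|\le L_h$ on the tail from the one-sided bounds $\lambda_1,\lambda_2\le L$ alone.

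Your first remedy cannot close this gap.
\begin{itemize}
\item The $\lambda_2$ step uses dissipativity (Assumption~\ref{assumption:dissipativity}), which is not a hypothesis of this lemma.
\item For $\lambda_1$ no argument can work. For a $C^2$ function, a globally Lipschitz gradient is equivalent to $\sup_x\|\nabla^2 f_h(x)\|<\infty$, so averaged control is useless.
\item The implication ``$u''$ unbounded below $\Rightarrow$ $u'$ eventually negative'' is false, because $u'$ can fall steeply by a bounded amount over a short interval and then recover.
\end{itemize}

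Here is a concrete counterexample. Fix $0<A<L/4$, and let $u'$ be a smoothed sawtooth that rises from the line $Ar$ with slope $L/2$ over unit intervals, then drops back to that line over intervals of length $\delta_k\to 0$. Then on the tail:
\begin{itemize}
\item $u''\le L/2$ and $u'/r<L$, so both one-sided bounds hold;
\item $u'(r)\ge Ar-A$, so $f_h$ is even dissipative and $e^{-f_h}$ is integrable;
\item $u''$ is of order $-1/\delta_k$ on the $k$-th drop, so $\inf u''=-\infty$.
\end{itemize}
Since $\psi$ is a smooth increasing bijection of $[R_2,\infty)$ onto $[\psi(R_2),\infty)$, every such tail profile comes from an isotropic $C^3$ potential, after smooth gluing on a compact set. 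Explicitly, for large $s$ take $f(s)=u(\psi^{-1}(s))+\log\psi'(\psi^{-1}(s))+(d-1)\log s-(d-1)\log\psi^{-1}(s)$.

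So under the hypothesis as literally written the lemma fails. The two-sided reading $|\lambda_1(r)|,|\lambda_2(r)|<L$ for $r>\max\{N,R_2\}$ is therefore necessary, not merely the intended meaning; it is also what the stable and Student-$t$ examples actually verify. Drop the integrability detour and state the two-sided bound explicitly as the hypothesis. With that correction your proof is complete and coincides with the paper's.
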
 

\begin{proof}
By combining Lemma~\ref{lemma:eigenvalues} and Assumption~\ref{assumption:smooth}, 
we know that 
when $\|x\| \geq \max\{N,R_2\}$,
$$
\lambda_i(\|x\|) \leq L, \quad \text{for } i = 1, 2.
$$

Denote $\widetilde{N}=\mathrm{max}\{N, R_2\}$.
On the other hand, when $\|x\| \leq \widetilde{N}$, 
since $f_h \in C^2(\mathbb{R}^d)$,
the Hessian $\nabla^2 f_h(x)$ is continuous on the compact set 
$\{x \in \mathbb{R}^d : \|x\| \leq \widetilde{N}\}$, and hence bounded. Therefore,
$$
\mathop{\mathrm{lim}}\limits_{\|x\| \leq \widetilde{N}} \|\nabla^2 f_h(x)\| < \infty.
$$

Define
$$
L_h= \mathrm{max}\left\{L, \mathrm{max}_{\|x\| \leq \widetilde{N}} \|\nabla^2 f_h(x)\| \right\}.
$$
Then for all $x \in \mathbb{R}^d$, we have $\|\nabla^2 f_h(x)\| \leq L_h$, 
and hence $f_h$ is $L_h$-smooth.
\end{proof}

\begin{assumption}[Dissipativity under Transformation]\label{assumption:dissipativity}
There exist constants $A, B, N > 0$ such that for all $r > \max\{N,R_2\}$:
\begin{align*}
   \left[ \psi'(r)f'(\psi(r))- \frac{\psi''(r)}{\psi'(r)}
- (d-1)\frac{\psi'(r)}{\psi(r)} +\frac{d-1}{r}\right]r > Ar^2 - B,
\end{align*}
where $\psi(r) = r^\alpha e^{br^\beta}$ for all $r \geq R_2$.
\end{assumption}

\begin{lemma}\label{lemma of dissipativity}
If the initial function $f$ satisfies Assumption~\ref{assumption:dissipativity}, 
then the transformed potential $f_h$ is dissipative, i.e., 
there exist constants $A_h, B_h > 0$ such that
\begin{align*}
   \langle \nabla f_h(x), x \rangle\geq A_h \|x\|^2 - B_h.
\end{align*}
\end{lemma}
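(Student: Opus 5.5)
We split $\mathbb{R}^d$ into a large ball and its complement, mirroring the proof of Lemma~\ref{lemma of Lipschitz Gradient}. Set $\widetilde{N}=\max\{N,R_2\}$ and write $r=\|x\|$.

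\textbf{Radial reduction.} By Lemma~\ref{lemma:gradient}, for $x\neq 0$,
\begin{align*}
\langle \nabla f_h(x),x\rangle = \left[g'(r) f'(g(r)) -\frac{g''(r)}{g'(r)}- (d-1)\frac{g'(r)}{g(r)} + \frac{d-1}{r}\right] r .
\end{align*}
So the inner product depends only on $r$. At $x=0$ it vanishes trivially.

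\textbf{Outer region.} For $r>\widetilde{N}\geq R_2$ we have $g=\psi$ identically on $(R_2,\infty)$, and hence the derivatives $g',g''$ agree with $\psi',\psi''$ there. Assumption~\ref{assumption:dissipativity} then gives directly
\begin{align*}
\langle \nabla f_h(x),x\rangle > A r^2 - B .
\end{align*}

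\textbf{Inner region.} For $r\le \widetilde{N}$, the function $x\mapsto \langle \nabla f_h(x),x\rangle$ is continuous on the compact ball $\{\|x\|\le\widetilde{N}\}$, because $f_h\in C^3(\mathbb{R}^d)$ by the preceding lemma. It is therefore bounded below by some finite $-B_0$, where $B_0\ge 0$. This yields
\begin{align*}
\langle \nabla f_h(x),x\rangle \ge -B_0 \ge A r^2 - \bigl(B_0 + A\widetilde{N}^2\bigr), \qquad r\le\widetilde{N}.
\end{align*}

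\textbf{Combining.} Take $A_h=A$ and $B_h=\max\{B,\; B_0+A\widetilde{N}^2\}$. Then $\langle\nabla f_h(x),x\rangle\ge A_h\|x\|^2-B_h$ holds on all of $\mathbb{R}^d$.

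The only delicate point is justifying continuity and finiteness near the origin and across $R_1,R_2$, where the bracket contains terms like $g''/g'$ and $g'/g$. This is handled as follows:
\begin{itemize}
\item Near $0$, $f_h=f$ on $\|x\|<R_1$, so $\nabla f_h$ is continuous there.
\item On $[R_1,R_2]$, we have $g>0$ and $g'>0$, and $g\in C^3$, so none of the denominators vanish.
\end{itemize}
Hence the bounded-on-compacts argument goes through, and the lower bound on the ball is a genuine finite constant.
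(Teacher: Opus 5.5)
Your proof is correct and follows the same route as the paper: the radial formula from Lemma~\ref{lemma:gradient}, Assumption~\ref{assumption:dissipativity} outside a ball, and a compactness lower bound inside it. Your version is in fact slightly more careful than the paper's in two places. The paper splits at $R_2$ rather than $\max\{N,R_2\}$. It also sets $B_h=\max\{0,B,-\min_{\|x\|\le R_2}\langle\nabla f_h(x),x\rangle\}$, which only gives $\langle\nabla f_h(x),x\rangle\ge -B_h$ on the ball; your extra term $A\widetilde{N}^2$ is exactly what is needed to get $\ge A\|x\|^2-B_h$ there.
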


\begin{proof}
For $r=\|x\|>R_2$,
$$
\nabla f_h(x) = \left[ \psi'(r)f'(\psi(r))- \frac{\psi''(r)}{\psi'(r)}
- (d-1)\frac{\psi'(r)}{\psi(r)} +\frac{d-1}{r}\right]\frac{x}{r},
$$
we have
$$
 \langle \nabla f_h(x), x \rangle  =
 \left[ \psi'(r)f'(\psi(r))- \frac{\psi''(r)}{\psi'(r)}
- (d-1)\frac{\psi'(r)}{\psi(r)} +\frac{d-1}{r}\right]r.
$$
By Assumption~\ref{assumption:dissipativity}, the right-hand side is lower bounded as
$$
\langle \nabla f_h(x), x \rangle \geq A \|x\|^2 - B.
$$

Since $f_h \in C^2(\mathbb{R}^d)$, the function $\langle \nabla f_h(x), x \rangle$ is continuous. Therefore, the minimum over the compact set 
$\{x \in \mathbb{R}^d : \|x\| \leq R_2 \}$
exists and is finite.
Define
$$
B_h = \mathrm{max}\left\{ 0, B, -\mathop{\mathrm{min}}\limits_{\|x\| \leq R_2} \left\{\langle \nabla f_h(x), x \rangle \right\}\right\}.
$$
Then for all $x \in \mathbb{R}^d$, we have
$$
\langle \nabla f_h(x), x \rangle \geq A \|x\|^2 - B_h.
$$
Hence, $f_h$ satisfies a dissipativity condition with parameters $A_h = A$ and $B_h$ as defined above.
\end{proof}

\begin{assumption}\label{assumption:KL}
There exists constants $N, \rho > 0$ such that for all $r >\max\{N,R_2\}$,
\begin{align*}
&(\psi'(r))^2 f''(\psi(r))+\psi''(r) f'(\psi(r)) 
   -\frac{\psi'''(r)\psi'(r)-(\psi''(r))^2}{(\psi'(r))^2}- (d-1)\frac{\psi''(r)\psi(r)-(\psi'(r))^2}{(\psi(r))^2}
   - (d-1)\frac{1}{r^2} >\rho, \\
 & \frac{1}{r}\left[\psi'(r) f'(\psi(r)) -\frac{\psi''(r)}{\psi'(r)}- (d-1)\frac{\psi'(r)}{\psi(r)}
   + (d-1)\frac{1}{r}\right]>\rho,
\end{align*}
where $\psi(r) = r^\alpha e^{br^\beta} $ for all $r \geq R_2$.
\end{assumption}

\begin{definition}[KL divergence]
   The Kullback-Leibler (KL) divergence of a probability distribution $\rho$ 
   with respect to another distribution $\nu$ is given by: 
$$ \mathrm{KL}(\rho\| \nu)=\int \rho(x) \mathrm{log} \frac{\rho(x)}{\nu(x)} \mathrm{d} x.$$
\end{definition}

Notice that the KL divergence has properties that for any probability distribution 
$\rho$, $ \mathrm{KL}(\rho\| \nu)\geq 0 $, and $\mathrm{KL}(\rho\| \nu) = 0 $
 if and only if $\rho = \nu$.
So based on such properties, 
although the KL divergence is not symmetric, 
it can be considered as a measure of "distance" between 
a probability distribution $ \rho $ and a reference or base distribution $ \nu $. 
\begin{definition}[Logarithmic Sobolev Inequality (LSI)]
A probability measure $\nu$ on $\mathbb{R}^d$ is said to satisfy a logarithmic Sobolev inequality with constant $a> 0$,
 if for all probability densities $\rho$ absolutely continuous with respect to $\nu$, the following holds:
$$
\mathrm{KL}(\rho \| \nu) \leq \frac{1}{2a} J_\nu(\rho),
$$
where
$$
J_\nu(\rho) = \int_{\mathbb{R}^d} \rho(x) \left\| \nabla \mathrm{log} \frac{\rho(x)}{\nu(x)} \right\|^2  \mathrm{d}x
$$
denotes the Fisher information of $\rho$ relative to $\nu$.
\end{definition}

\begin{lemma}
If the potential function $f$ satisfies Assumption~\ref{assumption:KL}, 
then there exists a constant $a_f > 0$, depending on $f$, such that 
the transformed distribution $\pi_h$ satisfies a logarithmic Sobolev inequality with constant $a_f$.
\end{lemma}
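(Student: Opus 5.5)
The plan is to combine strong convexity outside a ball with the Holley--Stroock perturbation lemma; this is the standard route from "convex at infinity" to a logarithmic Sobolev inequality.

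\textbf{Step 1: uniform convexity outside a large ball.} Set $\widetilde N=\max\{N,R_2\}$. For $\|x\|>\widetilde N$ we have $g=\psi$. Lemma~\ref{lemma:eigenvalues} then shows that the two eigenvalues of $\nabla^2 f_h(x)$ are exactly the two expressions in Assumption~\ref{assumption:KL}. Hence $\nabla^2 f_h(x)\succeq \rho I$ for all $\|x\|>\widetilde N$.

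\textbf{Step 2: a globally strongly convex comparison potential.} On the compact ball $\{\|x\|\le \widetilde N+1\}$, the Hessian $\nabla^2 f_h$ is continuous, since $f_h\in C^3$. It is therefore bounded below by some $-K I$ with $K\ge 0$. I would build a potential $V$ with three properties:
\begin{itemize}[topsep=0pt, itemsep=0pt]
\item $V$ is $\tfrac{\rho}{2}$-strongly convex on all of $\mathbb{R}^d$;
\item $V=f_h$ outside a slightly larger ball;
\item $\sup|f_h-V|<\infty$.
\end{itemize}

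The construction I would try first is a radial correction $V=f_h+\eta(\|x\|)$. Because $f_h$ is isotropic, all Hessians stay diagonal in the radial/tangential frame, which keeps the eigenvalue check one-dimensional. I would take $\eta$ convex and supported near the ball, with $\eta''$ and $\eta'(r)/r$ at least $K+\rho$ on $[0,\widetilde N]$. In the transition region, the bound $\nabla^2 f_h\succeq\rho I$ from Step 1 has to absorb whatever negativity appears.

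A cleaner alternative avoids the cutoff bookkeeping. Take $V=f_h+\tfrac{c}{2}\,(\|x\|^2-R^2)_-$-type corrections, or simply use the Bakry--\'Emery criterion after adding $\tfrac{c}{2}\|x\|^2\mathbbm{1}$-smoothed, and then invoke the standard result that a potential which is strongly convex outside a ball and bounded below inside differs from a strongly convex one by a bounded function. Either way, Step 2 yields a bounded perturbation $\Phi=f_h-V$ with $\operatorname{osc}\Phi<\infty$.

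\textbf{Step 3: apply Bakry--\'Emery and Holley--Stroock.} Bakry--\'Emery applied to $e^{-V}$ gives an LSI with constant $\rho/2$, in the normalization $\mathrm{KL}\le\frac1{2a}J$ used here. Holley--Stroock then transfers this to $\pi_h\propto e^{-V-\Phi}$ with constant
\[
a_f=\tfrac{\rho}{2}\,e^{-\operatorname{osc}\Phi},
\]
which depends only on $f$ through $\rho$, $\widetilde N$ and the Hessian bound on the ball.

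\textbf{Main obstacle.} I expect Step 2 to be the hardest part: the convex modification near the ball. In particular, one must keep the tangential eigenvalue $\eta'(r)/r$ controlled near $r=0$, and must ensure that the glued function is still convex in the transition annulus. Any gap there would be closed by choosing the transition region inside $\{\|x\|>\widetilde N\}$, where the $\rho$ margin from Step 1 is available.
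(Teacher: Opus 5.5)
Your proposal follows the paper's route. Assumption~\ref{assumption:KL} together with Lemma~\ref{lemma:eigenvalues} gives strong convexity of $f_h$ outside $\{\|x\|\le\widetilde N\}$; a compact modification then yields a globally strongly convex potential, and Bakry--\'Emery plus Holley--Stroock finish the argument. The paper replaces $f_h$ inside the ball by a $C^2$-matched $\rho$-convex $g_h$ where you add a radial bump, and your constant $a_f=\tfrac{\rho}{2}e^{-\operatorname{osc}\Phi}$ is correctly oriented for the normalization $\mathrm{KL}\le\frac{1}{2a}J$.

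One fix is needed in Step 2: $\eta$ cannot be both globally convex and compactly supported, since that forces $\eta\equiv 0$. It should be convex on the ball and then unwind its slope $\eta'(\widetilde N)\ge(K+\rho)\widetilde N$ with $\eta''\ge-\rho/2$ over an exterior annulus of width at least $2(K+\rho)\widetilde N/\rho$, so the modified ball is not ``slightly larger''. This is exactly what your final remark allows, because the $\rho$-margin holds on the whole exterior.
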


\begin{proof}
If $f$ satisfies Assumption~\ref{assumption:KL}, 
then for all $\|x\| \geq \widetilde{N}= \mathrm{max}\{N, R_2\}$,
 we have $\lambda_i(x) > \rho$ for $i = 1,2$.

Construct 
$$
\tilde{f}_h(x) = 
\begin{cases}
f_h(x), & \|x\| > \widetilde{N}, \\
g_h(x), & \|x\| \leq \widetilde{N},
\end{cases}\text{and}
\qquad
\bar{f}_h(x) =
\begin{cases}
0, & \|x\| > \widetilde{N}, \\
f_h(x) - g_h(x), & \|x\| \leq \widetilde{N},
\end{cases}
$$
where $g_h : \{\|x\| \leq \widetilde{N}\} \to \mathbb{R}$ is chosen such that 
$\tilde{f}_h \in C^2(\mathbb{R}^d)$ and 
$\nabla^2 g_h(x) \succeq \rho I_d$ for all $\|x\| \leq \widetilde{N}$.

Then $\nabla^2 \tilde{f}_h(x) \succeq \rho I_d$ globally, 
so $\tilde{f}_h$ is $\rho$-strongly convex.
Thus $\tilde{\pi}\propto e^{-\tilde{f}}$  
satisfies a logarithmic Sobolev inequality with constant $2/\rho$.

On the other hand, since $\bar{f}_h$ is compactly supported in $\{\|x\| \leq \widetilde{N}\}$ 
and both $f_h$ and $g_h$ are $C^2$, 
so $\operatorname{Osc}(\bar{f}_h) < \infty$,
where $\mathrm{Osc}(x)= \frac{1}{|x|^2 + (1 + d^2 |x|^4)^{\frac{1}{2}}}$

By the Holley–Stroock perturbation theorem and the factorization:
$$
\pi_h \propto e^{-f_h} = e^{-\tilde{f}_h} \cdot e^{-\bar{f}_h},
$$
we conclude that $\pi_h$ satisfies a logarithmic Sobolev inequality with constant
$$
a_f = \frac{2}{\rho} \cdot e^{\operatorname{Osc}(\bar{f}_h)}.
$$
\end{proof}

\begin{assumption}[Hessian Smoothness Under Transformation]\label{assumption:hessian-smooth}
There exist constants $N, L > 0$ such that for all $r > \max\{N, R_2\}$, it holds that
\begin{align*}
u'''(r) \leq L,
\quad
\left|\frac{u''(r)}{r}-\frac{u'(r)}{r^{2}}\right|\leq L,
\end{align*}
where $u(r)=f(\psi(r)) - \mathrm{log} \psi'(r) - (d-1)\mathrm{log} \psi(r) + (d-1)\mathrm{log} r$
and 
$\psi(r) = r^\alpha e^{br^\beta} $ for all $r \geq R_2$.
\end{assumption}

\begin{lemma}\label{lem: Hessian smooth under transform}
If the initial potential $f$ satisfies Assumption~\ref{assumption:hessian-smooth}, 
then there exists a constant $L_h > 0$ such that the transformed potential 
$f_h(x) = f(h(x)) - \log \det(\nabla h(x))$ is $L$-hessian smoothness, 
that is, for all $x, y \in \mathbb{R}^d$, 
\begin{align*}
  \|\nabla^2 f_h(x) - \nabla^2 f_h(y)\| \leq L_h \|x - y\|.  
\end{align*}
\end{lemma}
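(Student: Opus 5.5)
The plan mirrors the proof of Lemma~\ref{lemma of Lipschitz Gradient}: split $\mathbb{R}^d$ into the far region $\{\|x\|>\widetilde N\}$, with $\widetilde N=\max\{N,R_2\}$, and the ball $\{\|x\|\le \widetilde N+1\}$. Then bound the third derivative tensor $\nabla^3 f_h$ uniformly on each piece. Once $\sup_x\|\nabla^3 f_h(x)\|\le L_h$ is known, the mean value theorem along the segment $[x,y]$ gives
\begin{align*}
\|\nabla^2 f_h(x)-\nabla^2 f_h(y)\| \le \sup_{z\in[x,y]}\|\nabla^3 f_h(z)\|\,\|x-y\| \le L_h\|x-y\|,
\end{align*}
which is the claim.

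\textbf{Far region.} Here $g=\psi$, so $f_h(x)=u(\|x\|)$ with $u$ exactly as in Assumption~\ref{assumption:hessian-smooth}. For a radial function, $\nabla^2 f_h(x)=u''(r)\,\hat x\hat x^\top+\frac{u'(r)}{r}(I-\hat x\hat x^\top)$, where $\hat x=x/r$. Differentiating once more in a direction $v$, the derivative splits into three terms.
\begin{itemize}
\item A radial term $u'''(r)\langle \hat x,v\rangle\,\hat x\hat x^\top$.
\item A term $\big(\frac{u'}{r}\big)'\langle\hat x,v\rangle(I-\hat x\hat x^\top)$, where $\big(\frac{u'}{r}\big)'=\frac{u''}{r}-\frac{u'}{r^2}$.
\item A term from differentiating the projector $\hat x\hat x^\top$, whose derivative is $\frac1r\big(P_\perp v\,\hat x^\top+\hat x\,(P_\perp v)^\top\big)$ with $P_\perp=I-\hat x\hat x^\top$. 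This term comes multiplied by the coefficient $u''-\frac{u'}{r}$.
\end{itemize}
The third term therefore has size $\big|\frac{u''}{r}-\frac{u'}{r^2}\big|\,\|v\|$. All three coefficients are controlled by the two quantities in Assumption~\ref{assumption:hessian-smooth}, so $\|\nabla^3 f_h(x)\|\le C L$ for an absolute constant $C$.

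\textbf{Compact region.} Here I use that $f_h\in C^3(\mathbb{R}^d)$, which was established in the lemma above. Then $\nabla^3 f_h$ is continuous on the compact ball and hence bounded. Set $L_h=\max\{CL,\ \max_{\|x\|\le\widetilde N+1}\|\nabla^3 f_h(x)\|\}$.

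\textbf{Main obstacle.} The delicate point is that Assumption~\ref{assumption:hessian-smooth} gives only the one-sided bound $u'''\le L$, while a Lipschitz Hessian needs $|u'''|\le L$. I would read the assumption as a two-sided bound, which is consistent with how the second condition is written with absolute values. Alternatively, if only the upper bound is available, I would derive the lower bound from the explicit tail behaviour of $\psi$ and $f$, as in the examples. The tangential coefficient $\frac{u''}{r}-\frac{u'}{r^2}$ is exactly what is assumed, so that part of the estimate is immediate.
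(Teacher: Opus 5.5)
Your argument is essentially the paper's. The paper also differentiates the radial Hessian $u''\,e_r\otimes e_r+\frac{u'}{r}(I-e_r\otimes e_r)$ for $r>\max\{N,R_2\}$. It finds that $\nabla^3 f_h$ has only the purely radial component $u'''$ and the mixed components $\frac{u''}{r}-\frac{u'}{r^2}$, which is your three-term decomposition.

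Your extra steps fill genuine omissions in the paper's proof. The paper never treats the ball or states the mean-value step. It also argues from the one-sided bound $u'''\le L_h$, which cannot control $\|\nabla^3 f_h\|$, so your two-sided reading of Assumption~\ref{assumption:hessian-smooth} is necessary.

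One small caveat concerns your compact-region step. The cutoff $\chi$ is only $C^3$, so $\log g'$ is in general only $C^2$ across $\|x\|=R_1$ and $\|x\|=R_2$. Hence $\nabla^3 f_h$ is bounded and piecewise continuous on the ball, not continuous as the earlier lemma you cite claims. This still suffices if you apply the mean-value bound piecewise along each segment.
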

\begin{proof}
%    By directly differentiating and computing the two eigenvalues of 
%    $\nabla^2 f_h$ in Lemma~\ref{lemma:eigenvalues}, the desired conclusion of the lemma follows immediately.

Note that for radial function $f_h(x)=u(r)$, with 
$r=\|x\|$, $e_r=\frac{x}{r}$, we have
\begin{align*}
    \nabla f_h(x)=u'(r)e_r,\quad \nabla^2f_h(x)=u''(r)e_r\otimes e_r+\frac{u'(r)}{r}(I-e_r\otimes e_r),
\end{align*}
or we can write
\begin{align*}
    (\nabla^2f_h(x))_{ij}=u''(r)\frac{x_ix_j}{r^2}+\frac{u'(r)}{r}\left(\delta_{ij}-\frac{x_ix_j}{r^2}\right).
\end{align*}
So
\begin{align*}
    \frac{\partial }{\partial x_k} (\nabla^2f_h(x))_{ij}
    &=\left(u'''(r)-\frac{3u''(r)}{r}+\frac{3u'(r)}{r^2}\right)\frac{x_ix_jx_k}{r^3}
    +\left(\frac{u''(r)}{r}-\frac{u'(r)}{r^2}\right)
    \left(\delta_{ik}\frac{x_j}{r}+\delta_{jk}\frac{x_i}{r}+\delta_{ij}\frac{x_k}{r}\right)\\
    &=u'''(r) \frac{x_ix_jx_k}{r^3}+\left(\frac{u''(r)}{r}-\frac{u'(r)}{r^2}\right)
    \left[\frac{x_i}{r}\left(\delta_{jk}-\frac{x_jx_k}{r^2}\right)
    +\frac{x_j}{r}\left(\delta_{ki}-\frac{x_kx_i}{r^2}\right)
    +\frac{x_k}{r}\left(\delta_{ij}-\frac{x_ix_j}{r^2}\right)\right].
\end{align*}
Since
\begin{align*}
    \sum_{i,j,k}(x_ix_jx_k)^2=r^6,\quad
    \sum_k \left(\delta_{jk}-\frac{x_jx_k}{r^2}\right)x_k
    = \sum_i \left(\delta_{ki}-\frac{x_kx_i}{r^2}\right)x_i
    = \sum_j \left(\delta_{ij}-\frac{x_ix_j}{r^2}\right)x_j=0,
\end{align*}
for any $v,w,s\in e_r^{\bot }$ with $\|v\|=\|w\|=\|s\|=1$, we have
\begin{align*}
    \nabla^3 f_h(x)[e_r,e_r,e_r]=&\sum_{i,j,k} \frac{\partial }{\partial x_k} (\nabla^2f_h(x))_{ij}\frac{x_ix_jx_k}{r^3} =u'''(r),\\
    \nabla^3 f_h(x)[e_r,e_r,v]=&\sum_{i,j,k} \frac{\partial }{\partial x_k} (\nabla^2f_h(x))_{ij}\frac{x_ix_j}{r^2}v_k=0,\\
    \nabla^3 f_h(x)[e_r,v,w]=&\sum_{i,j,k} \frac{\partial }{\partial x_k} (\nabla^2f_h(x))_{ij}\frac{x_i}{r}v_jw_k
    =\left(\frac{u''(r)}{r}-\frac{u'(r)}{r^2}\right)\langle v,w\rangle,\\
    \nabla^3 f_h(x)[v,w,s]=&\sum_{i,j,k} \frac{\partial }{\partial x_k} (\nabla^2f_h(x))_{ij}v_iw_js_k=0.
\end{align*}
Therefore, in order to guarantee that 
$$
\|\nabla^{2} f_h(x) - \nabla^{2} f_h(y)\| \leq L_h\|x-y\|, \quad \forall x,y\in\mathbb{R}^d,
$$
it suffices to ensure that the third–order derivative tensor is uniformly 
bounded on the only two nonzero families of directions, i.e.,
$$
u'''(r)\leq L_h,
\quad
\left|\frac{u''(r)}{r}-\frac{u'(r)}{r^{2}}\right|\leq L_h,
$$
for $r> \max\{N, R_2\}$.

\end{proof}

\subsection{Quantum Transformed Langevin Algorithm}\label{sec:(subsec)quantum sampling-heavy}
We first introduce how to perform sampling from the distribution 
$\pi \propto e^{-f}$ in a quantum algorithm.

By querying the gradient oracle $\mathcal{O}_{\nabla f}$,
the query oracle $\mathcal{O}_h$ for the transformation function,
the correction oracle $\mathcal{O}_a$,
and the Jacobian--vector product oracle $\mathcal{O}_m$
introduced in Section~\ref{sec:(sub)oracles}, each exactly once,
we can construct the oracle $\mathcal{O}_{\nabla f_h}$ for $\nabla f_h$ as follows:
\begin{align*}
    |y\rangle|0\rangle|0\rangle
    \xrightarrow{O_h}&|y\rangle|h(y)\rangle|0\rangle \\
    \xrightarrow{O_{\nabla f}}&|y\rangle|h(y)\rangle|\nabla f(h(y))\rangle \\
    \xrightarrow{O_m}&|y\rangle|h(y)\rangle|J_h(y)^\top \nabla f(h(y))\rangle \\
    \xrightarrow{O_a}&|y\rangle|h(y)\rangle|J_h(y)^\top \nabla f(h(y)) - \nabla \mathrm{log}|\det J_h(y)|\rangle \\
    =&|y\rangle|h(y)\rangle|\nabla f_h(y)\rangle.
\end{align*}

Accordingly, the Quantum Transformed Langevin Algorithm can be formulated as follows.
\begin{algorithm}[H]
\caption{\algname{Quantum Transformed Langevin Algorithm}}
\label{alg:Quantum Transformed Langevin Algorithm}
\begin{algorithmic}[1]
\item  \textbf{Input}: Oracle $O_{\nabla f_h}$, initial state $ \ket{y_0} $, step size $ h $\\
 \textbf{Output}: Quantum sampling states $ \ket{x_1},\ket{x_2},\cdots$
\STATE \textbf{For} $n = 1,2,\cdots$
\begin{tightitemize}
        \item Sample seed $s_{n}$ and query $O_W$:
$|s_{n}\rangle|0\rangle \mapsto |s_{n}\rangle |\xi_{s_{n}}\rangle$
    \item Query $O_{\nabla f_h}$: $\ket{y_n}\ket{0} \mapsto \ket{y_n}\ket{\nabla f_h(y_n)}$
    \item Apply arithmetic unitary:
$       |y_{n+1}\rangle = |y_n - h \nabla f_h(y_n) + \sqrt{2h} \xi_{s_{n}}\rangle$
    \item Query $O_h$ to obtain final state:  $|x_{n+1}\rangle=|h(y_{n+1})\rangle$
\end{tightitemize}
    \textbf{End for}
\STATE \textbf{Return}: $ \ket{x_1},\ket{x_2},\cdots$
\end{algorithmic}
\end{algorithm}

\begin{theorem}
    Suppose the initial function $f$ satisfies Assumptions~\ref{assumption:KL},
    If $x_0 \sim \mu_0$ with $\mathrm{KL}(\mu_0\|\pi)<\infty$ and $x_k \sim \mu_k$, 
    then the  Quantum Transformed Langevin as in Algorithm~\ref{alg:Quantum Transformed Langevin Algorithm}  
    with step size $0<h\leq\frac{a_f}{4L_h^2}$ will converge and satisfies
 $$\mathrm{KL}(\mu_k\|\pi)\leq e^{-a_f h k}\mathrm{KL}(\mu_0\|\pi)+\frac{8h d L_h^2}{a_f}.$$
In particular, when $h \leq\frac{a_f }{16 d L_h^2}\epsilon$ and 
$k \geq \frac{1}{a_f h} \mathrm{log}\left( \frac{\mathrm{KL}(\mu_0 \| \pi)}{\epsilon/2} \right)$,
we have 
$$\mathrm{KL}(\mu_k \| \pi) \leq \epsilon.$$
\end{theorem}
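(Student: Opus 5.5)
The plan is to reduce the statement to the standard KL-convergence analysis of the unadjusted Langevin algorithm under a log-Sobolev inequality, in the style of Vempala–Wibisono, applied to the transformed target $\pi_h \propto e^{-f_h}$. The key observation is that Algorithm~\ref{alg:Quantum Transformed Langevin Algorithm} produces, on each computational branch, exactly the classical ULA iterates $y_{n+1}=y_n-h\nabla f_h(y_n)+\sqrt{2h}\,\xi_n$ for the potential $f_h$. The outputs $x_n=h(y_n)$ are their images under the invertible map $h$. Since $h$ is a smooth bijection, KL divergence is invariant under pushforward:
\[
\mathrm{KL}(\mu_k\|\pi)=\mathrm{KL}(\mu_{h,k}\|\pi_h),
\]
where $\mu_{h,k}$ is the law of $y_k=h^{-1}(x_k)$. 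This holds by the change-of-variables formula in Definition~\ref{def:transformed density}, because the Jacobian factors cancel in the density ratio. It therefore suffices to prove the bound for ULA on $\pi_h$ with initialization $y_0=h^{-1}(x_0)$, whose initial KL equals $\mathrm{KL}(\mu_0\|\pi)<\infty$.

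Next I collect the two properties of $f_h$ that ULA needs. First, by the preceding lemma (Assumption~\ref{assumption:KL}), $\pi_h$ satisfies an LSI with constant $a_f$. Second, $f_h$ is $L_h$-smooth by Lemma~\ref{lemma of Lipschitz Gradient}. This needs Assumption~\ref{assumption:smooth}, or at least a two-sided Hessian bound; I would note that Assumption~\ref{assumption:KL} only gives the lower bound, so the upper bound must be taken from Assumption~\ref{assumption:smooth}, which is implicitly assumed through the constant $L_h$.

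With these in hand, I run the one-step interpolation argument. Fix $y_0$, write $y_t=y_0-t\nabla f_h(y_0)+\sqrt{2}W_t$ for $t\in[0,h]$, and derive the Fokker–Planck equation for the conditional law. Averaging over $y_0$ gives
\[
\frac{d}{dt}\mathrm{KL}(\rho_t\|\pi_h)=-J_{\pi_h}(\rho_t)+\mathbb{E}\Bigl\langle \nabla f_h(y_t)-\nabla f_h(y_0),\nabla\log\tfrac{\rho_t}{\pi_h}(y_t)\Bigr\rangle .
\]
By Young's inequality the cross term is at most $\tfrac14 J+\mathbb{E}\|\nabla f_h(y_t)-\nabla f_h(y_0)\|^2$. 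The last expectation is at most $L_h^2\,\mathbb{E}\|y_t-y_0\|^2\le L_h^2(t^2\mathbb{E}\|\nabla f_h(y_0)\|^2+2td)$.

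The main obstacle is bounding $\mathbb{E}\|\nabla f_h(y_0)\|^2$ under the current law rather than under $\pi_h$. I would first try the standard route: use $\mathbb{E}_\rho\|\nabla f_h\|^2\le 2\mathbb{E}_{\pi_h}\|\nabla f_h\|^2+\tfrac{8L_h^2}{a_f}\mathrm{KL}(\rho\|\pi_h)$, which follows from Talagrand's inequality (a consequence of LSI) and Lipschitzness of $\nabla f_h$. Combined with $\mathbb{E}_{\pi_h}\|\nabla f_h\|^2\le dL_h$ and $h\le a_f/(4L_h^2)$, this lets the KL-dependent error term be absorbed. Applying LSI ($\mathrm{KL}\le J/(2a_f)$) then yields a differential inequality $\frac{d}{dt}\mathrm{KL}\le -\tfrac{3a_f}{2}\mathrm{KL}+c\,dL_h^2 h$. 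Integrating over $[0,h]$ gives the one-step contraction
\[
\mathrm{KL}(\mu_{h,k+1}\|\pi_h)\le e^{-a_fh}\mathrm{KL}(\mu_{h,k}\|\pi_h)+8dL_h^2h^2 .
\]
Unrolling the recursion and summing the geometric series, using $1-e^{-a_fh}\ge \tfrac34 a_fh$, produces the stated bound $e^{-a_fhk}\mathrm{KL}(\mu_0\|\pi)+8hdL_h^2/a_f$.

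Finally, for the $\epsilon$-statement, the condition on $h$ makes the bias term at most $\epsilon/2$, and the condition on $k$ makes the contraction term at most $\epsilon/2$. The constants may need adjusting; I would track them loosely, and the factor $16$ in the step-size condition leaves room for this.
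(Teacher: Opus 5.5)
This is essentially the paper's proof, which just combines the KL invariance of Lemma~\ref{equivalent} with Theorem~\ref{lsi converge} (Vempala--Wibisono) applied to $f_h$. You are right that the $L_h$-smoothness used there tacitly requires Assumption~\ref{assumption:smooth}, which the statement omits.

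Because you re-derive the one-step bound instead of citing it, your constants do not close: a one-step error of $8dL_h^2h^2$ sums to $\tfrac{32}{3}hdL_h^2/a_f$ rather than $8hdL_h^2/a_f$. To land on the stated constant, use the sharp Talagrand bound $\mathbb{E}_\rho\|\nabla f_h\|^2\le \tfrac{4L_h^2}{a_f}\mathrm{KL}(\rho\|\pi_h)+2dL_h$, and absorb the resulting start-of-step $\mathrm{KL}$ term into the factor $e^{-a_fh}$ after integrating over $[0,h]$ (not inside the differential inequality); this is where $h\le a_f/(4L_h^2)$ is used, and it gives Vempala--Wibisono's one-step error $6dL_h^2h^2$.
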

\begin{proof}
    This result follows directly from Theorem~\ref{lsi converge} and Lemma~\ref{equivalent} presented below.
\end{proof}

 \begin{theorem}[Theorem 1 of~\cite{Vempala2019RapidCO}]\label{lsi converge}
Suppose the function $l$ is $L$-smooth and 
satisfies a logarithmic Sobolev inequality with constant $a$, and let $\nu \propto e^{-l}$. 
Assume that $x_0 \sim \mu_0$ with $\mathrm{KL}(\mu_0 \| \nu) < \infty$, 
and define the sequence $(x_k)_{k\geq0}$ by
$$
x_{k+1} = x_k - h \nabla l(x_k) + \sqrt{2h} \xi_k,
$$
where $(\xi_k)_{(k\geq0)}$ is a sequence of i.i.d. 
standard Gaussian random vectors in $\mathbb{R}^d$. 

Then, for any step size $0 < h \leq \frac{a}{4L^2}$,
 the KL divergence $\mathrm{KL}(\mu_k \| \nu)$ converges and satisfies
$$
\mathrm{KL}(\mu_k \| \nu) \leq e^{-ah k}  \mathrm{KL}(\mu_0 \| \nu) + \frac{8h d L^2}{a}.
$$
\end{theorem}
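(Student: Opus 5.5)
The plan is the standard one-step interpolation argument of Vempala--Wibisono: compare one step of the discrete chain with a continuous-time process whose drift is frozen at the starting point, and track the KL divergence along that process. Fix $k$ and write $x_0=x_k\sim\mu_k$. Consider the process $\mathrm{d}x_t=-\nabla l(x_0)\,\mathrm{d}t+\sqrt{2}\,\mathrm{d}W_t$ for $t\in[0,h]$. Then $x_h$ has exactly the law $\mu_{k+1}$. Let $\rho_t$ denote the law of $x_t$. Conditioning on $x_0$ gives a Fokker--Planck equation with drift $-\mathbb{E}[\nabla l(x_0)\mid x_t=x]$. Differentiating $\mathrm{KL}(\rho_t\|\nu)$ in time then yields
\begin{align*}
\frac{\mathrm{d}}{\mathrm{d}t}\mathrm{KL}(\rho_t\|\nu)
=-J_\nu(\rho_t)+\mathbb{E}\left\langle \nabla l(x_t)-\nabla l(x_0),\,\nabla\log\tfrac{\rho_t}{\nu}(x_t)\right\rangle .
\end{align*}
Applying Young's inequality $\langle a,b\rangle\le\|a\|^2+\frac14\|b\|^2$ bounds the right-hand side by $-\frac34 J_\nu(\rho_t)+\mathbb{E}\|\nabla l(x_t)-\nabla l(x_0)\|^2$.

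Next I will control the discretization term. By $L$-smoothness it is at most $L^2\,\mathbb{E}\|x_t-x_0\|^2$. Since $x_t-x_0=-t\nabla l(x_0)+\sqrt{2t}\,\xi$, this gives
\begin{align*}
L^2\,\mathbb{E}\|x_t-x_0\|^2\le 2L^2t^2\,\mathbb{E}_{\mu_k}\|\nabla l\|^2+4L^2td .
\end{align*}
The main obstacle is bounding $\mathbb{E}_{\mu_k}\|\nabla l\|^2$ in terms of the current state, since nothing a priori controls the moments of $\mu_k$. My first attempt will be the auxiliary lemma
\begin{align*}
\mathbb{E}_\rho\|\nabla l\|^2\le \frac{4L^2}{a}\mathrm{KL}(\rho\|\nu)+2dL .
\end{align*}
I plan to prove it by coupling $\rho$ and $\nu$ optimally in $W_2$. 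Three ingredients then combine:
\begin{itemize}
\item $\mathbb{E}_\nu\|\nabla l\|^2=\mathbb{E}_\nu\Delta l\le dL$, by integration by parts;
\item Lipschitzness of $\nabla l$, which bounds the transport error by $L^2W_2^2(\rho,\nu)$;
\item Talagrand's inequality $W_2^2\le\frac{2}{a}\mathrm{KL}$, which follows from LSI by Otto--Villani.
\end{itemize}
One subtlety is that the lemma bounds the gradient at $x_0$ using $\mathrm{KL}(\mu_k\|\nu)$, not $\mathrm{KL}(\rho_t\|\nu)$, so the differential inequality will carry both quantities.

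With this in hand, LSI gives $J_\nu(\rho_t)\ge 2a\,\mathrm{KL}(\rho_t\|\nu)$. The time derivative of the KL is then at most
\begin{align*}
-\tfrac32 a\,\mathrm{KL}(\rho_t\|\nu)+\tfrac{8L^4t^2}{a}\mathrm{KL}(\mu_k\|\nu)+4L^2t^2dL+4L^2td .
\end{align*}
Integrating over $[0,h]$ with integrating factor $e^{3at/2}$, and using $h\le a/(4L^2)$ (so $L^2h^2\lesssim a h/16$ and $hL\le\frac14$), I will absorb the $\mathrm{KL}(\mu_k\|\nu)$ term. This should give the one-step contraction
\begin{align*}
\mathrm{KL}(\mu_{k+1}\|\nu)\le e^{-ah}\mathrm{KL}(\mu_k\|\nu)+6h^2dL^2 ,
\end{align*}
with a constant of order $6$--$8$; the exact value will come out of the integration.

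Finally, I iterate and sum the geometric series to obtain the additive term $\frac{Ch^2dL^2}{1-e^{-ah}}$. Since $ah\le \frac14$, we have $1-e^{-ah}\ge\frac34 ah$, which produces the stated bound $e^{-ahk}\mathrm{KL}(\mu_0\|\nu)+\frac{8hdL^2}{a}$. I expect the delicate parts to be the justification of the KL derivative formula (regularity of $\rho_t$, which can be handled by the Gaussian smoothing at $t>0$) and the bookkeeping of constants in the absorption step. The Talagrand-based second-moment lemma is the key conceptual ingredient.
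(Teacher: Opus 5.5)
Your plan is the Vempala--Wibisono argument itself (frozen-drift interpolation, the Talagrand-based bound $\mathbb{E}_\rho\|\nabla l\|^2\le\frac{4L^2}{a}\mathrm{KL}(\rho\|\nu)+2dL$, a one-step bound $e^{-ah}\mathrm{KL}+6h^2dL^2$, and geometric summation), which is exactly the proof behind this result---the paper gives none of its own and simply cites it---so the approach is correct. Two bookkeeping points: first, the cross term vanishes, so $\mathbb{E}\|x_t-x_0\|^2=t^2\,\mathbb{E}_{\mu_k}\|\nabla l\|^2+2td$ exactly, and your factor-$2$ loss from $\|u+v\|^2\le 2\|u\|^2+2\|v\|^2$ can spoil the absorption into $e^{-ah}$ at $h=a/(4L^2)$ unless you keep the $t^2$ dependence when integrating; second, $hL\le\frac14$ and $ah\le\frac14$ require $a\le L$, which follows by comparing the Poincar\'e inequality implied by LSI with the Cram\'er--Rao bound $\mathrm{Cov}_\nu\succeq L^{-1}I$.
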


 \begin{lemma}\label{equivalent}
For any two probability densities $\nu$ and $\mu$ with full support on $\mathbb{R}^d$, 
let $\nu_h$ and $\mu_h$ be the transformed densities under the map $h$. 
Then 
$$
\mathrm{KL}(\nu \| \mu) = \mathrm{KL}(\nu_h \| \mu_h).
$$
\end{lemma}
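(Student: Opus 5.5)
The proof will be a change of variables in the KL integral. Here the transformed density is $\nu_h(y)=\nu(h(y))\det(\nabla h(y))$, and likewise for $\mu_h$. Write
\begin{align*}
\mathrm{KL}(\nu_h\|\mu_h)=\int_{\mathbb{R}^d}\nu(h(y))\det(\nabla h(y))\,\log\frac{\nu(h(y))\det(\nabla h(y))}{\mu(h(y))\det(\nabla h(y))}\,\mathrm{d}y .
\end{align*}
The first step is to observe that the Jacobian factors inside the logarithm cancel. By the lemma above, $\det(\nabla h(y))>0$ everywhere, so no sign or zero issues arise. The log-ratio therefore reduces to $\log\frac{\nu(h(y))}{\mu(h(y))}$.

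Next I will apply the substitution $x=h(y)$. The map $h$ is a bijection of $\mathbb{R}^d$ because $g$ is a strictly increasing bijection of $[0,\infty)$ and $h$ preserves directions. It is $C^1$, since $g\in C^3((0,\infty))$ and $g(r)=r$ near the origin, so $h$ is the identity near $0$. Its Jacobian determinant is positive. Hence the change-of-variables formula $\mathrm{d}x=\det(\nabla h(y))\,\mathrm{d}y$ holds, and it turns the integral into $\int\nu(x)\log\frac{\nu(x)}{\mu(x)}\,\mathrm{d}x=\mathrm{KL}(\nu\|\mu)$. One could also argue abstractly: $h^{-1}$ is a measurable bijection pushing $\nu$ to $\nu_h$ and $\mu$ to $\mu_h$, and KL divergence is invariant under measurable bijections. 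The explicit computation is more in keeping with the paper.

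The only delicate point, and the expected obstacle, is integrability. The integrand need not be nonnegative, so I need the change of variables to be valid for extended-real integrals. I would handle this by splitting $t\log t$ relative to $\mu$. Write $\mathrm{KL}=\int\mu\,\phi(\nu/\mu)$ with $\phi(t)=t\log t-t+1\ge 0$. This is legitimate because $\int\nu=\int\mu=1$, and both densities have full support, so the ratio is well defined. The substitution applies to nonnegative measurable integrands without any finiteness assumption, so both sides agree in $[0,\infty]$, including the case where both are infinite. Finally, the ratio $\nu_h/\mu_h$ at $y$ equals $\nu/\mu$ at $h(y)$, for the same cancellation reason as in the first step, which completes the identification.
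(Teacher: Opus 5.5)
Your proposal is correct and follows the same route as the paper: substitute $x=h(y)$, cancel the positive Jacobian factors inside the logarithm, and identify the result with $\mathrm{KL}(\nu\|\mu)$. Your version is more careful than the paper's one-line computation in two respects. The paper weights the integrand by $\mu_h$ rather than $\nu_h$, which is inconsistent with its own definition of KL, and it ignores integrability; your rewriting via $\phi(t)=t\log t-t+1\ge 0$ makes the substitution valid in $[0,\infty]$.
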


\begin{proof}
The result follows from a straightforward calculation:
\begin{align*}
   \mathrm{KL}(\nu_h \| \mu_h) = &\int_{\mathbb{R}^d} \mathrm{log}\left( \frac{\nu_h(y)}{\mu_h(y)} \right) \mu_h(y) \mathrm{d}y\\
   % \nu_h(y) = &\nu(h(y)) \det(\nabla h(y)), \quad \mu_h(y) = \mu(h(y)) \det(\nabla h(y))\\
    = &\int_{\mathbb{R}^d}  \mathrm{log}\left( \frac{\nu(h(y))}{\mu(h(y))} \right) \mu(h(y)) \det(\nabla h(y)) \mathrm{d}y\\
    = &\int_{\mathbb{R}^d}  \mathrm{log}\left( \frac{\nu(x)}{\mu(x)} \right) \mu(x) \mathrm{d}x = \mathrm{KL}(\nu \| \mu). 
\end{align*}

\end{proof}

\subsection{Quantum Algorithm for Gibbs Expectation of Heavy-Tailed distributions}

By applying the quantum multilevel mean estimation method and Theorem~\ref{thm:unbiased A-MLMC under weaker one-sided Lipschitz},
we can obtain an unbiased and accurate estimate of $\mathbb{E}_\pi[\varphi]$.

\begin{theorem}\label{thm:QA-MLMC of heavy-tailed}
Suppose that the potential function $f$ satisfies 
Assumptions~\ref{assumption:smooth},~\ref{assumption:dissipativity}, 
and~\ref{assumption:hessian-smooth}. 
Then, by Theorem~\ref{thm:unbiased A-MLMC under weaker one-sided Lipschitz}, 
$\mathbb{E}_{\pi}[\varphi]$ can be estimated unbiasedly with additive error $\epsilon$ 
and success probability at least $0.99$. 
The expected total query complexity is $$\widetilde{\mathcal{O}}\left(\frac{SLd(L+\sqrt{d})}{2S-\lambda}\epsilon^{-1} \right).$$
\end{theorem}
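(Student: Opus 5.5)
The plan is to reduce the heavy-tailed problem to the light-tailed setting already handled by Theorem~\ref{thm:unbiased A-MLMC under weaker one-sided Lipschitz}. We work with the transformed potential $f_h$ and the pulled-back observable $\varphi\circ h$.

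\textbf{Step 1: rewrite the target expectation.} By Definition~\ref{def:transformed density}, if $Y\sim\pi_h\propto e^{-f_h}$ then $X=h(Y)\sim\pi$. Hence
$$\mathbb{E}_\pi[\varphi]=\mathbb{E}_{\pi_h}[\varphi\circ h],$$
which is the change of variables underlying Lemma~\ref{equivalent}. An unbiased estimator of $\mathbb{E}_{\pi_h}[\varphi\circ h]$ is therefore an unbiased estimator of $\mathbb{E}_\pi[\varphi]$. This means we simulate the Langevin diffusion for $f_h$ and evaluate $\varphi\circ h$, exactly as in Algorithm~\ref{alg:Quantum Transformed Langevin Algorithm}.

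\textbf{Step 2: verify the hypotheses on $f_h$.}
\begin{itemize}
\item $L_h$-smoothness follows from Lemma~\ref{lemma of Lipschitz Gradient} under Assumption~\ref{assumption:smooth}.
\item Dissipativity follows from Lemma~\ref{lemma of dissipativity} under Assumption~\ref{assumption:dissipativity}.
\item Hessian-smoothness follows from Lemma~\ref{lem: Hessian smooth under transform} under Assumption~\ref{assumption:hessian-smooth}.
\item The weaker one-sided Lipschitz condition holds with $\lambda\le L_h$, since it is implied by smoothness.
\end{itemize}
Lemma~\ref{lem:ergodic under dissipative} then gives geometric ergodicity of the transformed diffusion toward $\pi_h$.

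\textbf{Step 3: the observable (main obstacle).} Theorem~\ref{thm:unbiased A-MLMC under weaker one-sided Lipschitz} assumes a bounded, globally Lipschitz observable. Boundedness of $\varphi\circ h$ is immediate. Global Lipschitzness is not, because $g$ grows like $r^\alpha e^{br^\beta}$. I would try three routes, in this order.
\begin{itemize}
\item First, use the alternative estimate $|\varphi(h(x))-\varphi(h(y))|\le \min\{2\|\varphi\|_\infty,\,K\|J_h\|\,\|x-y\|\}$. Combine it with the uniform moment bounds of Lemma~\ref{lem:bounds of p moment} for the coupled paths. The goal is to show that the strong-error estimate of Proposition~\ref{prop: strong order error of discretization} still holds up to polylogarithmic factors.
\item If that fails, restrict to observables for which $\varphi\circ h$ is Lipschitz, for example $\varphi$ constant outside a ball, as in the capped-payoff applications.
\item Alternatively, absorb the growth of $J_h$ into the constants. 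The level variances change only by constants (or polynomial factors in $\ell$, hidden in $\widetilde{\mathcal O}$ by Theorem~\ref{thm:QMLMC2}), so the rates stay $\beta=2$, $\gamma=1$, $\alpha=1$.
\end{itemize}

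\textbf{Step 4: oracle cost.} Each query to $O_{\nabla f_h}$ uses $O_h,O_{\nabla f},O_m,O_a$ once each, by the construction in Section~\ref{sec:(subsec)quantum sampling-heavy}. The final evaluation of $\varphi\circ h$ costs one $O_h$ and one $O_\varphi$. So the query count is a constant multiple of that in Algorithm~\ref{alg:Level l sample with change-of-measure QA-MLMC} run on $f_h$.

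\textbf{Step 5: conclude.} Applying Theorem~\ref{thm:unbiased A-MLMC under weaker one-sided Lipschitz} to $f_h$ with $L=L_h$ gives an unbiased $\tilde\mu$ with $\operatorname{Var}(\tilde\mu)\le(\epsilon/10)^2$. Its expected cost is
$$\widetilde{\mathcal O}\!\left(\frac{SLd(L+\sqrt d)}{2S-\lambda}\epsilon^{-1}\right).$$
Lemma~\ref{lem:variance and epsilon} converts the variance bound into additive error $\epsilon$ with success probability at least $0.99$.
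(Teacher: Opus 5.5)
Your Steps 1, 2, 4 and 5 reproduce the paper's argument exactly. The paper gives no proof beyond three ingredients: Lemmas~\ref{lemma of Lipschitz Gradient}, \ref{lemma of dissipativity} and~\ref{lem: Hessian smooth under transform}; the construction of $O_{\nabla f_h}$ from $O_h,O_{\nabla f},O_m,O_a$; and a direct appeal to Theorem~\ref{thm:unbiased A-MLMC under weaker one-sided Lipschitz} for the transformed chain of Algorithm~\ref{alg:Quantum Transformed Langevin Algorithm}. The gap is Step~3. You correctly observe that the observable actually fed to the sampler is $\varphi\circ h$, and that the cited theorem needs it to be globally Lipschitz. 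It enters through the term $3^{p-1}K^p\mathbb{E}\|\widehat Y^f_{2N}-\widehat Y^c_{2N}\|^p$ in Lemma~\ref{lem:bound of two path with R-D}; the Radon--Nikodym terms only use boundedness. But you leave this as a list of things to try, so the proposal is not a proof. The issue is real. Since $\nabla(\varphi\circ h)(y)=J_h(y)^\top\nabla\varphi(h(y))$ and $J_h$ has tangential singular values $g(\|y\|)/\|y\|$, taking $\varphi(x)=\tanh(x_1)$ gives $\|\nabla(\varphi\circ h)(y)\|=g(\|y\|)/\|y\|$ on the whole hyperplane $\{y_1=0\}$. The paper's one-line reduction passes over the same point, so nothing in the paper can be cited to close it.

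None of your three routes closes it with the stated bound.
\begin{itemize}
\item \emph{Third route.} It is vacuous: $\varphi\circ h$ has no finite Lipschitz constant to absorb. Its growth is in space rather than in the level index, so the allowance in Theorem~\ref{thm:QMLMC2} for factors polynomial in $\ell$ does not help.
\item \emph{First route.} Done with H\"older, it replaces $K$ in the level variance by $K\,\mathbb{E}[\sup_{z\in[\widehat Y^f,\widehat Y^c]}\|J_h(z)\|^{2q}]^{1/(2q)}$. This is an exponential moment of $\|\widehat Y\|^{\beta}$, with $\beta$ the exponent in~\eqref{g}. For $\beta<2$ it is finite, but it contributes a factor absent from $\widetilde{\mathcal O}(SLd(L+\sqrt d)(2S-\lambda)^{-1}\epsilon^{-1})$. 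For $\beta=2$, the case of the Student-$t$ and stable examples, Lemma~\ref{lem:bounds of p moment} controls $\mathbb{E}e^{s\|\widehat Y\|^2}$ only for $s\lesssim\tilde\alpha/d$. The moment can even be infinite under $\pi_h$ itself: for Student-$t$ the radial density of $\pi_h$ is $\asymp re^{-b\kappa r^2}$ while $\|J_h\|\ge e^{br^2}/r$ for large $r$, so $\mathbb{E}_{\pi_h}\|J_h\|^2=\infty$ when $\kappa\le 2$.
\item \emph{First route, truncated variant.} Using $\min\{2\|\varphi\|_\infty,\cdot\}$ with a tail $\mathbb{P}(\|\widehat Y\|>r)\approx e^{-cr^2}$ gives only $V_\ell\lesssim h_\ell^{\min\{2,c/b\}}$, up to logarithms. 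The lemma only guarantees $c\asymp\tilde\alpha/d$, so nothing excludes $c<2b$. In that case you leave the regime where the variance rate is at least $2\gamma$, and the $\epsilon^{-1}$ rate is lost.
\item \emph{Second route.} It proves a different, weaker theorem. It would not even cover the capped payoffs $\min\{\ell^+,M\}$ of Section~\ref{sec:application}, which for linear $\ell$ are not constant outside a ball.
\end{itemize}
A complete argument needs an extra ingredient. One option is to add the hypothesis $K_h:=\sup_y\|J_h(y)^\top\nabla\varphi(h(y))\|<\infty$; the paper's proof then runs verbatim with $K_h$ in place of $K$. The other is a new strong-error estimate for observables with exponentially growing gradients, whose constants would then appear in the complexity.
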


\subsection{Examples of Heavy-tailed Distributions}\label{subsec:example of heavy}
First, we consider two common examples of heavy-tailed distributions: the symmetric stable distribution and the Student’s $t$-distribution. 
In these two cases, we set $\alpha=0$ and $\beta=2$, under which Assumptions~\ref{assumption:smooth}, \ref{assumption:dissipativity} and \ref{assumption:hessian-smooth}  
can be simplified as follows.
\begin{assumption}[Special Case of Assumption~\ref{assumption:smooth}]
    \label{ass:special smooth}
There exists constants $N, L > 0$  such that for all  $r > \max\{N,R_2\}$, it holds that
    \begin{align*}
    &2b\psi(r) f'(\psi(r)) - 2bd + (d-2)r^{-2} < L,\\
    &4b^2r^2\psi^2(r)f''(\psi(r)) + (2b+4b^2r^2)\psi(r) f'(\psi(r))-2 bd- (d-2)r^{-2} < L,
\end{align*}
where $\psi(r) = e^{br^2} $ for all $r \geq R_2$.
\end{assumption}
\begin{assumption}[Special Case of Assumption~\ref{assumption:dissipativity}]
     \label{ass:special dissipative}
There exist constants $A, B, N > 0$ such that for all $r > \max\{N,R_2\}$:
 \begin{align*}
   2br^2\psi(r) f'(\psi(r)) - 2bdr^2 + (d-2) > Ar^2 - B,
\end{align*}
where $\psi(r) =  e^{br^2}$ for all $r \geq R_2$.
\end{assumption}
\begin{assumption}[Special Case of Assumption~\ref{assumption:hessian-smooth}]
     \label{ass:special hessian-smooth}
There exists constants $N, L > 0$  such that for all  $r > \max\{N,R_2\}$, it holds that
    \begin{align*}
    &(12b^2r+8b^3r^3)\psi(r) f'(\psi(r))+(12b^2r+24b^3r^3)\psi^2(r)f''(\psi(r))
    +8b^3r^3\psi^3(r)f'''(\psi(r))+\frac{2(d-2)}{r^3} < L,\\
    &\left|4b^2r\psi(r) f'(\psi(r))+4b^2r\psi^2(r)f''(\psi(r))
    -\frac{2(d-2)}{r^3}\right| < L,
\end{align*}
where $\psi(r) = e^{br^2} $ for all $r \geq R_2$.
\end{assumption}

\subsubsection{Symmetric Stable Distribution}

\begin{definition}
A (non-degenerate) distribution $F$ is stable if, for all $n \in \mathbb{N}$, 
there exist
 constants $c_n > 0$ and dn such that, whenever $X_1,X_2,\cdots,X_n$, 
 and $X$ are independently and
 identically distributed with distribution function $F$, 
 the sum $X_1 +\cdots+X_n$ is distributed as
$ c_nX +d_n$.
\end{definition}

 The coefficient $c_n$ must take the form $n^{1/a}$ for some $a\in (0,2]$.
  The exponent $a$ is called the index or
   characteristic exponent of the distribution.

    The characteristic function $\phi(t)$ of a symmetric stable distribution $F$ 
    has the form
    $$\phi(t)=e^{-c|t|^a}.$$

     In general the characteristic function of a stable distribution takes the form
     \begin{align*}
      \phi(t)=\begin{cases}
        \text{exp}(-\sigma^a|t|^{a}[1+i\beta(\tan \frac{\pi}{2})(\text{sign}t)(|\sigma t|^{1-a}-1)]+i\mu t)&a\neq 1,\\
               \text{exp}(-\sigma|t|[1+i\beta\frac{2}{\pi}(\text{sign}t)\log(\sigma|t|)]+i\mu t)&a= 1.
              \end{cases}
     \end{align*}

Let $X$ be symmetric $a$-stable ($0<a\le2$) with characteristic function
$\phi(t)=e^{-c|t|^a}$, $c>0$. By the inverse Fourier representation, we have
$$
\pi(x)=\frac{1}{\pi}\int_0^\infty e^{-c t^a}\cos(tx) \d t,\qquad
\pi'(x)=-\frac{1}{\pi}\int_0^\infty te^{-c t^a}\sin(tx) \d t.
$$

\begin{proposition}
    As $|x|\to\infty$, we have
    \begin{align*}
        \pi(x)&\sim \frac{c\Gamma(1+a)\sin(\pi a/2)}{\pi}|x|^{-(1+a)},\\
\pi'(x)&\sim -\frac{c\Gamma(2+a)\sin(\pi a/2)}{\pi}
\frac{\operatorname{sign}(x)}{|x|^{2+ a}},\\
\pi''(x)&\sim \frac{c\Gamma(3+a)\sin\frac{\pi a}{2}}{\pi}|x|^{-(3+a)},\\
\pi'''(x)&\sim -\frac{c\Gamma(4+a)\sin\bigl(\tfrac{\pi a}{2}\bigr)}{\pi}
\operatorname{sign}(x)|x|^{-(4+a)}.
    \end{align*}
\end{proposition}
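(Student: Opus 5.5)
The plan is to derive all four asymptotics from the Fourier integrals by analysing the behaviour of the integrand near $t=0$. Only the non-smooth part of $e^{-ct^a}$ there matters.

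Because the density is symmetric, $\pi$ is even, $\pi'$ and $\pi'''$ are odd, and $\pi''$ is even. It therefore suffices to take $x\to+\infty$, with the sign factors restored at the end.

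\textbf{Step 1: reduce to one integral.} Differentiating under the integral gives
\begin{align*}
\pi^{(k)}(x)=\frac{1}{\pi}\int_0^\infty t^k e^{-ct^a}\,\cos\!\left(tx+\tfrac{k\pi}{2}\right)\d t .
\end{align*}
This is justified by dominated convergence, since $t^k e^{-ct^a}$ is integrable. So it is enough to find the leading asymptotics of $I_k(x)=\mathrm{Re}\int_0^\infty t^k e^{-ct^a}e^{i(tx+k\pi/2)}\d t$.

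\textbf{Step 2: split off the singular term.} For $0<a<2$ I would write
\begin{align*}
e^{-ct^a}=1-ct^a+r(t).
\end{align*}
To avoid problems at infinity, multiply everything by a smooth cutoff $\eta(t)$ equal to $1$ near $0$ and supported in $[0,1]$. The piece $(1-\eta)t^ke^{-ct^a}$ is smooth and all its derivatives are integrable, so repeated integration by parts shows its Fourier transform is $O(x^{-N})$ for every $N$.

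The pure power piece $\eta t^k$ is smooth on $[0,\infty)$. Its oscillatory integral therefore only produces boundary terms at $0$ of the form $x^{-j-1}$ times derivatives of $t^k$ at $0$. After multiplying by the phase $e^{ik\pi/2}$ and taking real parts, the terms with $j=k$ have the phase $i^{k}\cdot i^{k+1}$, which is purely imaginary, so their real parts vanish. The remaining terms have $j>k$; for these the derivatives of $t^k$ vanish at $0$, leaving only $O(x^{-N})$ contributions. This cancellation is what makes the Fourier transform of an even smooth function decay rapidly.

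\textbf{Step 3: compute the main term.} The leading contribution comes from $-c\,\eta(t)t^{k+a}$. The classical Erdélyi formula gives
\begin{align*}
\int_0^\infty t^{s-1}e^{itx}\d t=\Gamma(s)e^{i\pi s/2}x^{-s},
\end{align*}
understood in the Abel sense, with the cutoff only adding $O(x^{-N})$. With $s=k+a+1$ this yields
\begin{align*}
\pi^{(k)}(x)\approx-\frac{c}{\pi}\Gamma(k+a+1)\,\mathrm{Re}\!\left(e^{i\pi(k+a+1)/2}e^{ik\pi/2}\right)x^{-(k+a+1)}.
\end{align*}
The phase factor simplifies to $\mathrm{Re}\,e^{i\pi(2k+1+a)/2}=(-1)^{k+1}\sin(\pi a/2)$. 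The constant is then $(-1)^k\frac{c\Gamma(k+1+a)\sin(\pi a/2)}{\pi}x^{-(k+1+a)}$, which matches the stated formulas with signs $+,-,+,-$ for $k=0,1,2,3$.

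\textbf{Step 4: bound the remainder.} The remainder is $r(t)=O(t^{2a})$ near $0$. It is smooth away from $0$ and is a sum of powers $t^{ja}$ with $j\ge2$. Applying the Erdélyi lemma term by term, each contributes $O(x^{-(k+1+ja)})$, which is $o(x^{-(k+1+a)})$. For the rigorous version I would truncate the series after finitely many terms, choosing enough so that the leftover is $C^{k+m}$ with $m$ large, and bound that leftover by integration by parts.

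\textbf{Main obstacle.} The delicate step is justifying the term-by-term asymptotics, that is, the Erdélyi lemma for $\eta(t)t^{\nu}$ with non-integer $\nu$. I would handle it by rotating the contour $t\mapsto it$ for the pure power, which is allowed because $e^{itx}$ decays in the upper half plane. The commutator with the cutoff involves $\eta'$, which is supported away from $0$, and so is $O(x^{-N})$.

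The endpoint $a=2$ is the Gaussian case. There $\sin(\pi a/2)=0$ and the right-hand sides vanish, so the statement should be read as meaningful only for $a<2$; I would note this explicitly in the proof.
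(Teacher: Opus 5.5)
Your proof is correct, but it takes a genuinely different route from the paper's.

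\textbf{The paper's route.} In Appendix B the paper treats each $k$ separately. It integrates by parts so that the $t$-derivative falls on $e^{-ct^a}$, producing a factor $ca\,t^{a-1}$. It then substitutes $u=tx$ and lets $x\to\infty$ under the integral sign, citing dominated convergence. Finally it evaluates and recombines integrals of the form $\int_0^\infty u^{s-1}\sin u\d u=\Gamma(s)\sin(\pi s/2)$ and $\int_0^\infty u^{s-1}\cos u\d u=\Gamma(s)\cos(\pi s/2)$.

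\textbf{Your route.} You localize at $t=0$ with a cutoff. The smooth part $\eta t^k$ contributes only $O(x^{-N})$ once real parts are taken. The singular term $-c\,\eta t^{k+a}$ gives the leading order through Erd\'elyi's formula. The remainder $r(t)=O(t^{2a})$ is of lower order.

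\textbf{What your route buys.} The final Gamma identities are the same, but you gain two things.
\begin{itemize}
\item Your argument is uniform in $k$: the sign pattern comes directly from $\mathrm{Re}\,e^{i\pi(2k+1+a)/2}=(-1)^{k+1}\sin(\pi a/2)$. It also gives an explicit error $O(x^{-(k+1+2a)})$.
\item More importantly, it is rigorous exactly where the paper's argument is not. After $u=tx$, the paper's integrands ($u^{a-1}\sin u$, $u^{a}\cos u$, $u^{a+1}\sin u$, $u^{a+2}\cos u$) have no integrable majorant on $(0,\infty)$, so dominated convergence does not apply. Apart from $u^{a-1}\sin u$ with $a<1$, the limiting integrals do not even converge as improper integrals, so the Gamma values are really Abel (analytic-continuation) values. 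Confining the Abel-sense formula to a compactly supported power singularity, as you do, is precisely what justifies them.
\end{itemize}
Your caveat that the statement is vacuous at $a=2$ is also correct, and the paper does not note it.

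\textbf{Two harmless slips in Step 2.}
\begin{itemize}
\item The phase of the $j=k$ boundary term is $i^k\cdot i^{-(k+1)}=-i$, not $i^k\cdot i^{k+1}$. Both are purely imaginary, so your conclusion stands.
\item The terms with $j<k$ also vanish, not just those with $j>k$, since $(\eta t^k)^{(j)}(0)=0$ for all $j\neq k$.
\end{itemize}
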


The proof of this proposition is provided in Appendix~\hyperref[Appendix B: Technical Details for heavy tailed]{B}; see Lemmas~\ref{lem:pi}--\ref{lem:pi'''} for details.

Denote $f(x)=-\log \pi(x)$, then $\pi(x)=e^{-f}$.
\begin{proposition}
  As $x\rightarrow\infty$, 
$$
f(x)= (1+a)\log x+O(1).
$$
\end{proposition}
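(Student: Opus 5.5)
This follows by taking logarithms in the asymptotic for $\pi(x)$ from the preceding proposition. Write
\[
\pi(x)=C_a\,|x|^{-(1+a)}\bigl(1+o(1)\bigr),\qquad C_a:=\frac{c\,\Gamma(1+a)\sin(\pi a/2)}{\pi},
\]
as $|x|\to\infty$. Then, for $x\to\infty$,
\[
f(x)=-\log\pi(x)=(1+a)\log x-\log C_a-\log\bigl(1+o(1)\bigr).
\]
The term $-\log(1+o(1))$ tends to $0$, so it is bounded for $x$ large enough. The constant $-\log C_a$ is finite, which gives $f(x)=(1+a)\log x+O(1)$.

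The only point needing care is that $C_a$ is strictly positive and finite, so that its logarithm exists.
\begin{itemize}
\item For $0<a<2$ we have $\Gamma(1+a)>0$, $c>0$, and $\sin(\pi a/2)>0$ because $\pi a/2\in(0,\pi)$. Hence $C_a\in(0,\infty)$.
\item For $a=2$ the distribution is Gaussian and $\sin(\pi)=0$, so the power-law asymptotic degenerates. In that case $f$ is quadratic and the claim, which is stated for the heavy-tailed case, does not apply.
\end{itemize}
So I will state explicitly that $0<a<2$. I will also use that $\pi(x)>0$ for all $x$, since stable densities are strictly positive; this makes $f$ well defined everywhere, although only large $x$ matters here.

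The main obstacle is really the preceding proposition, namely justifying the leading-order tail of the Fourier integral $\frac1\pi\int_0^\infty e^{-ct^a}\cos(tx)\,\mathrm{d}t$. That is taken as given and is deferred to Appendix B. Given it, the present claim is the short computation above.
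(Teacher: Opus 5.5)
Your argument is correct and is the same as the paper's: you write $\pi(x)=C_1x^{-(1+a)}(1+o(1))$ from the preceding tail asymptotic, take $-\log$, and note that the $-\log(1+o(1))$ term vanishes. Your extra check that $C_1\in(0,\infty)$ requires $0<a<2$ is a genuine point the paper leaves implicit, since it allows $a\in(0,2]$; at $a=2$ the density is Gaussian and the claim actually fails.
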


\begin{proof}
For convenience, denote $C_1=\frac{c\Gamma(1+a)\sin(\pi a/2)}{\pi}$, which is a constant.
We can write 
$$
\pi(x)=C_1 x^{-(a+1)}+o(x^{-(a+1)}).
$$
Therefore,
\begin{align*}
f(x)=-\log \pi(x)
  &= -\log\Big(C_1 x^{-(a+1)}+o(x^{-(a+1)})\Big)\\
  &= -\log\Big(C_1 x^{-(a+1)}\Big)
     -\log\left(1+\frac{o(x^{-(a+1)})}{C_1 x^{-(a+1)}}\right).
\end{align*}
As $x\to\infty$, we have 
$$
\frac{o(x^{-(a+1)})}{C x^{-(a+1)}}\to 0,
$$
so that 
$$
\log\left(1+\frac{o(x^{-(a+1)})}{C x^{-(a+1)}}\right)\to 0.
$$
It follows that
$$
f(x)=-\log\left(C x^{-(a+1)}\right)+o(1)=(1+a)\log x-\log C+o(1)=(1+a)\log x+O(1).
$$
\end{proof}

\begin{proposition}
As $|x|\to\infty$,
$$
f'(x)\sim\frac{1+a}{x},\quad f''(x)\sim-\frac{1+a}{x^2},
\quad f'''(x)\sim\frac{2(1+a)}{x^3}.
$$
\end{proposition}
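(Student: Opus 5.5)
We would derive these asymptotics directly from the expansions of $\pi,\pi',\pi'',\pi'''$ in the preceding proposition, using $f=-\log\pi$ and the chain rule. Write $\pi(x)=C_1x^{-(1+a)}(1+\epsilon_0(x))$, $\pi'(x)=-(1+a)C_1x^{-(2+a)}(1+\epsilon_1(x))$, $\pi''(x)=(1+a)(2+a)C_1x^{-(3+a)}(1+\epsilon_2(x))$, $\pi'''(x)=-(1+a)(2+a)(3+a)C_1x^{-(4+a)}(1+\epsilon_3(x))$, with $\epsilon_i\to0$. This uses $\Gamma(k+1+a)=(k+a)\cdots(1+a)\Gamma(1+a)$, so every constant is a multiple of $C_1$. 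By symmetry of $\pi$ we can take $x\to+\infty$; for $x\to-\infty$ the oddness of $f'$ and $f'''$ and the evenness of $f''$ give the stated forms with $x$ in place of $|x|$.

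\textbf{First derivative.} Since $f'=-\pi'/\pi$, the ratio of the two expansions gives $f'(x)=\frac{1+a}{x}\cdot\frac{1+\epsilon_1}{1+\epsilon_0}\sim\frac{1+a}{x}$. This step is immediate.

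\textbf{Higher derivatives.} Next we would use $f''=-\pi''/\pi+(\pi'/\pi)^2$ and $f'''=-\pi'''/\pi+3\pi''\pi'/\pi^2-2(\pi'/\pi)^3$. Substituting the leading terms gives
$$f''\approx\frac{-(1+a)(2+a)+(1+a)^2}{x^2}=-\frac{1+a}{x^2},$$
and
$$f'''\approx\frac{(1+a)(2+a)(3+a)-3(1+a)^2(2+a)+2(1+a)^3}{x^3}=\frac{(1+a)\,[(2+a)(3+a)-3(1+a)(2+a)+2(1+a)^2]}{x^3}=\frac{2(1+a)}{x^3}.$$
Both leading coefficients are nonzero, matching the claim.

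\textbf{Main obstacle.} Each leading coefficient arises from a cancellation among terms of the same order. The relative errors $\epsilon_i=o(1)$ only yield $f''=-\frac{1+a}{x^2}+o(x^{-2})$ if the surviving coefficient is nonzero, which it is. So first-order $o(1)$ remainders suffice, but this has to be checked explicitly because the individual terms are larger than their sum. The cleaner route is to prove the expansions with quantitative remainders, $\epsilon_i=O(x^{-a})$ for $a<2$, obtained from the next term of the series expansion of the Fourier integral in Lemmas~\ref{lem:pi}--\ref{lem:pi'''}. We would also note the case $a=2$ (Gaussian), where $\sin(\pi a/2)=0$ and the polynomial asymptotics degenerate, so the statement should be read for $0<a<2$.
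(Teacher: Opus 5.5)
Your proof is correct and follows the paper's route: write $f'=-\pi'/\pi$, $f''=-\pi''/\pi+(\pi'/\pi)^2$ and $f'''=-\pi'''/\pi+3\pi'\pi''/\pi^2-2(\pi'/\pi)^3$, then substitute the tail asymptotics of $\pi,\pi',\pi'',\pi'''$ from the preceding proposition. You also make explicit two points the paper leaves implicit. First, $o(1)$ relative errors suffice because the combined leading coefficients $-(1+a)$ and $2(1+a)$ are nonzero and of the same order as the individual terms. Second, the Gaussian case $a=2$ has to be excluded, since there $\sin(\pi a/2)=0$ and $f$ is quadratic.
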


\begin{proof}
For $f'$, since $f'(x)=-\frac{\pi'(x)}{\pi(x)}$,
\begin{align*}
   \frac{f'(x)}{x^{-1}}
    =&(1+a)\left(-\pi(x)\cdot \frac{\pi}{c\Gamma(2+a)\sin(\pi a/2)}
\frac{|x|^{2+ a}}{\operatorname{sign}(x)}\right)\cdot\left(\frac{1}{\pi(x)} \cdot\frac{c\Gamma(1+a)\sin(\pi a/2)}{\pi}|x|^{-(1+a)}\right)\\
\to & (1+a)  \text{ as } |x|\to \infty.
\end{align*}
So $$f'(x)\sim\frac{1+a}{x}$$
as $|x|\to \infty $.

For $f''$,  since
  \begin{align*}
    f''(x)=-\frac{\pi''(x)}{\pi(x)}+\left(\frac{\pi'(x)}{\pi(x)}\right)^2,
  \end{align*}
similarly,  we have
  \begin{align*}
    f''(x)\sim-\frac{(1+a)(2+a)}{x^2}+\frac{(1+a)^2}{x^2}=-\frac{1+a}{x^2}.
  \end{align*}

  For $f'''$,  since
  \begin{align*}
    f'''(x)=-\frac{\pi'''(x)}{\pi(x)}
  + 3\frac{\pi'(x)\pi''(x)}{\pi(x)^2}
  - 2\frac{\pi'(x)^3}{\pi(x)^3},
  \end{align*}
  we have
  \begin{align*}
    f'''(x)\sim\frac{(1+a)(2+a)(3+a)}{x^3}-3\frac{(1+a)^2(2+a)}{x^3}+2\frac{(1+a)^3}{x^3}
    =\frac{2(1+a)}{x^3}.
  \end{align*}
\end{proof}

Next, we verify that the symmetric stable distribution fulfills the assumptions required by the algorithm.
\begin{proposition}
There exists constants $N, L > 0$  such that for all  $r > \max\{N,R_2\}$, it holds that
    \begin{align*}
    &2b\psi(r) f'(\psi(r)) - 2b -r^{-2} < L,\\
    &4b^2r^2\psi^2(r)f''(\psi(r)) + (2b+4b^2r^2)\psi(r) f'(\psi(r))-2 b+r^{-2} < L,
\end{align*}
where $\psi(r) = e^{br^2} $ for all $r \geq R_2$, i.e.,
Assumption~\ref{ass:special smooth} holds.
\end{proposition}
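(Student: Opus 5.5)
The plan is to substitute the tail asymptotics of $f'$ and $f''$ from the preceding proposition into the two expressions. Here $d=1$, since the symmetric stable density is one-dimensional, and this is why the terms read $-2b-r^{-2}$ and $-2b+r^{-2}$. The key observation is that $\psi(r)=e^{br^2}\to\infty$ as $r\to\infty$, so we may evaluate $f'$ and $f''$ at the large argument $x=\psi(r)$. Writing $f'(x)=\frac{1+a}{x}(1+\varepsilon_1(x))$ and $f''(x)=-\frac{1+a}{x^2}(1+\varepsilon_2(x))$ with $\varepsilon_i(x)\to0$, we get
\begin{align*}
\psi(r)f'(\psi(r))&=(1+a)(1+\varepsilon_1(\psi(r))),\\
\psi^2(r)f''(\psi(r))&=-(1+a)(1+\varepsilon_2(\psi(r))).
\end{align*}

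For the first inequality, the left side becomes $2b(1+a)(1+o(1))-2b-r^{-2}$. This tends to $2ab$, so it is bounded by any $L>2ab$ once $r$ is large; choose $N$ accordingly.

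For the second inequality, the left side equals
\[
-4b^2r^2(1+a)(1+\varepsilon_2)+(2b+4b^2r^2)(1+a)(1+\varepsilon_1)-2b+r^{-2}.
\]
The $r^2$ terms combine to $4b^2r^2(1+a)(\varepsilon_1-\varepsilon_2)$, and the rest tends to $2ab$.

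The main obstacle is this $r^2$ term. Since $\varepsilon_i$ is evaluated at $\psi(r)=e^{br^2}$, we have $r^2=\frac{1}{b}\log\psi(r)$. It therefore suffices to show $\varepsilon_1(x)-\varepsilon_2(x)=O(1/\log x)$, or more simply, that $\limsup(\varepsilon_1-\varepsilon_2)\log x\le 0$, since we only need an upper bound.

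To get this rate I would refine the asymptotics. The expansion of the stable density is classical: $\pi(x)=\sum_k c_k x^{-(ka+1)}$ as a convergent or asymptotic series, with derivatives obtained termwise. From it, $\varepsilon_1,\varepsilon_2=O(x^{-a})$, and $x^{-a}\log x\to0$. Concretely, I would strengthen Lemmas~\ref{lem:pi}--\ref{lem:pi'''} in Appendix~B to give relative error $O(|x|^{-a})$ for $\pi,\pi',\pi''$. These lemmas are proved from the integral representations, so I would rerun that argument with one more term of the expansion of $e^{-ct^a}$ near $t=0$. Substituting into $f'=-\pi'/\pi$ and $f''=-\pi''/\pi+(\pi'/\pi)^2$ then gives $\varepsilon_i=O(x^{-a})$. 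Consequently $r^2(\varepsilon_1-\varepsilon_2)=O(r^2e^{-abr^2})\to0$, and the whole second expression tends to $2ab$, so it is bounded by $L$ for large $r$.

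Taking the larger of the two thresholds as $N$, and $L$ slightly above $2ab$, establishes Assumption~\ref{ass:special smooth}.
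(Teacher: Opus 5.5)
Your argument is correct, and it is more rigorous than the paper's own proof.

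\textbf{What the paper does.} The paper substitutes $\psi f'(\psi)\to 1+a$ and $\psi^2 f''(\psi)\to -(1+a)$ directly into the second expression and writes $-4b^2r^2(1+a)+(2b+4b^2r^2)(1+a)=2b(1+a)$. In effect it treats the $o(1)$ errors of the first-order tail asymptotics as zero, even though they are multiplied by $r^2$.

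\textbf{What you do differently.} You isolate exactly that term,
\[
4b^2r^2(1+a)(\varepsilon_1-\varepsilon_2)=4b(1+a)(\varepsilon_1-\varepsilon_2)\log\psi(r),
\]
and control it with the rate $\varepsilon_i(x)=O(x^{-a})$. This gives $r^2e^{-abr^2}\to 0$; note that $b>0$ is automatic here, since $\alpha=0$. So you fill a real gap. A side benefit is that both expressions actually converge to $2ab$, hence are bounded on both sides, which is what $L$-smoothness of $f_h$ really requires.

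\textbf{The cost.} Lemmas~\ref{lem:pi}--\ref{lem:pi'''} must be upgraded to second order. Be careful if you do this by literally keeping one more term of $e^{-c(u/x)^a}$ in the existing substitution argument. That produces integrals such as $\int_0^\infty u^{2a-1}\sin u\d u$, which diverge for $a\ge 1/2$. The remainder, bounded by a multiple of $u^{3a-1}x^{-2a}$, is also not absolutely integrable. So that step should instead use a contour rotation, or simply cite the classical Bergstr\"om/Feller expansion of the stable density together with its termwise differentiability, as you also suggest.

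\textbf{Range of $a$.} Finally, both you and the paper tacitly need $0<a<2$. At $a=2$ the leading constant $\sin(\pi a/2)$ vanishes, the density is Gaussian, and the statement is false.
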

 
\begin{proof}
    First,
\begin{align*}
    2b\psi(r) f'(\psi(r)) - 2b -r^{-2}
    =2b(1+a)-2b-r^{-2}+o(1).
\end{align*}
This can be bounded by a constant.

For the second term, since
\begin{align*}
        4b^2r^2\psi^2(r)f''(\psi(r)) + (2b+4b^2r^2)\psi(r) f'(\psi(r))
        \to -4b^2r^2(1+a)+(2b+4b^2r^2)(1+a)=2b(1+a)
\end{align*}
as $r\to \infty$,
 it can also be controlled by a constant. 
\end{proof}

\begin{proposition}
There exist constants $A, B, N > 0$ such that for all $r > \max\{N,R_2\}$:
 \begin{align*}
   2br^2\psi(r) f'(\psi(r)) - 2br^2 -1 > Ar^2 - B,
\end{align*}
where $\psi(r) =  e^{br^2}$ for all $r \geq R_2$,
i.e.,
Assumption~\ref{ass:special dissipative} holds.
\end{proposition}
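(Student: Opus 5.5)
The plan is to reduce everything to the asymptotics of $f'$ established just above, namely $f'(x)\sim\frac{1+a}{x}$ as $x\to\infty$, evaluated along $x=\psi(r)=e^{br^2}$. Since $b>0$, $\psi(r)\to\infty$ as $r\to\infty$. Hence
\[
\psi(r)f'(\psi(r)) = (1+a) + \eta(r), \qquad \eta(r)\to 0 \text{ as } r\to\infty .
\]
This is the only analytic input. The statement is the one-dimensional case $d=1$ of Assumption~\ref{ass:special dissipative}, so the term $(d-2)$ becomes $-1$.

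Substituting into the left-hand side gives
\[
2br^2\psi(r)f'(\psi(r)) - 2br^2 - 1 = 2br^2(1+a) + 2br^2\eta(r) - 2br^2 - 1 = 2ab\,r^2 + 2b\,\eta(r)\,r^2 - 1 .
\]
The leading term $2ab\,r^2$ has a strictly positive coefficient, because $a\in(0,2]$ is the stability index and $b>0$ is the transformation parameter. The remaining terms form an error of order $o(r^2)$ plus a constant.

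To make this quantitative, I take $A=ab$. I then choose $N$ large enough that $|\eta(r)|\le a/2$ for all $r>N$; this $N$ exists by the asymptotic equivalence. For $r>\max\{N,R_2\}$ it follows that $2b\,\eta(r)\,r^2\ge -ab\,r^2$, so the left-hand side is at least $ab\,r^2-1$. This is strictly larger than $Ar^2-B$ once $B=2$, say. These choices give the required constants $A,B,N>0$.

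The only delicate point is that the error must be handled multiplicatively. The quantity $\eta(r)$ is merely $o(1)$, not $O(r^{-2})$, so it becomes an $o(r^2)$ term after multiplication by $r^2$. That term must be absorbed into a fraction of the positive leading coefficient $2ab$, rather than into the constant $B$. This is exactly why $A$ is taken strictly smaller than $2ab$. It also requires $a>0$, which holds for a non-degenerate stable law, and $b>0$.
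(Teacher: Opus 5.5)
Your proof is correct and follows the same route as the paper: both use $\psi(r)f'(\psi(r))\to 1+a$, which comes from $f'(x)\sim(1+a)/x$ evaluated along $\psi(r)=e^{br^2}\to\infty$, and your contribution is to make rigorous the paper's informal ``$\to 2abr^2-1$'' by absorbing the $o(r^2)$ error into half of the positive leading coefficient, with $A=ab$, $B=2$. One small caveat, which the paper shares: the imported asymptotic degenerates at $a=2$, since $\sin(\pi a/2)=0$ there, but in that Gaussian case $f'(x)=x/(2c)$ and the inequality holds trivially, so your restriction to $a>0$ and $b>0$ loses nothing.
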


\begin{proof}
  Since 
  \begin{align*}
    2br^2\psi(r) f'(\psi(r)) - 2br^2 -1 \to 2bar^2-1,\quad r\to\infty,
  \end{align*}
  the conclusion holds.
\end{proof}

\begin{proposition}
There exists constants $N, L > 0$  such that for all  $r > \max\{N,R_2\}$, it holds that
    \begin{align*}
    &(12b^2r+8b^3r^3)\psi(r) f'(\psi(r))+(12b^2r+24b^3r^3)\psi^2(r)f''(\psi(r))
    +8b^3r^3\psi^3(r)f'''(\psi(r))-\frac{2}{r^3} < L,\\
    &\left|4b^2r\psi(r) f'(\psi(r))+4b^2r\psi^2(r)f''(\psi(r))
    +\frac{2}{r^3}\right| < L,
\end{align*}
where $\psi(r) = e^{br^2} $ for all $r \geq R_2$,
i.e.,
Assumption~\ref{ass:special hessian-smooth} holds.
\end{proposition}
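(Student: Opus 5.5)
The plan is to substitute the asymptotics of $f'$, $f''$, $f'''$ from the preceding proposition into the two expressions, with argument $s=\psi(r)=e^{br^2}\to\infty$ as $r\to\infty$. Write
\[
f'(s)=\frac{1+a}{s}\bigl(1+\varepsilon_1(s)\bigr),\quad
f''(s)=-\frac{1+a}{s^2}\bigl(1+\varepsilon_2(s)\bigr),\quad
f'''(s)=\frac{2(1+a)}{s^3}\bigl(1+\varepsilon_3(s)\bigr),
\]
with $\varepsilon_i(s)\to0$. Then $\psi f'(\psi)=(1+a)(1+\varepsilon_1)$, $\psi^2f''(\psi)=-(1+a)(1+\varepsilon_2)$ and $\psi^3f'''(\psi)=2(1+a)(1+\varepsilon_3)$.

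Substituting into the first expression, the coefficient of $r^3$ is
\[
8b^3(1+a)\bigl[(1+\varepsilon_1)-3(1+\varepsilon_2)+2(1+\varepsilon_3)\bigr]=8b^3(1+a)\bigl(\varepsilon_1-3\varepsilon_2+2\varepsilon_3\bigr),
\]
since the leading terms cancel. The coefficient of $r$ is
\[
12b^2(1+a)\bigl[(1+\varepsilon_1)-(1+\varepsilon_2)\bigr]=12b^2(1+a)(\varepsilon_1-\varepsilon_2).
\]
The term $-2/r^3$ is bounded. The second expression similarly becomes $4b^2r(1+a)(\varepsilon_1-\varepsilon_2)+2/r^3$.

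So both expressions are bounded provided the leading cancellations are not spoiled by the errors. That requires $\varepsilon_1-3\varepsilon_2+2\varepsilon_3=O(r^{-3})$ and $\varepsilon_1-\varepsilon_2=O(r^{-1})$, where $r=\sqrt{\log s/b}$. Equivalently, it suffices that $\varepsilon_i(s)=O\bigl((\log s)^{-3/2}\bigr)$. This is the main obstacle, because the relations ``$\sim$'' alone only give $o(1)$.

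To get the rate, I would return to the Fourier representation in Lemmas~\ref{lem:pi}--\ref{lem:pi'''} and extract the next term of the asymptotic expansion. For the symmetric stable density there is the classical series
\[
\pi(x)=\frac{1}{\pi}\sum_{k\ge1}\frac{(-1)^{k+1}}{k!}\,c^k\,\Gamma(ak+1)\sin\!\Bigl(\frac{\pi a k}{2}\Bigr)|x|^{-(ak+1)},
\]
obtained by expanding $e^{-ct^a}$ and integrating term by term. It can be differentiated termwise, giving relative corrections of order $|x|^{-a}$ in $\pi,\pi',\pi'',\pi'''$. Hence $\varepsilon_i(s)=O(s^{-a})=O(e^{-abr^2})$, which decays much faster than any power of $r$.

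With exponentially small $\varepsilon_i$, the terms $r^3(\cdots)$ and $r(\cdots)$ tend to $0$. Both quantities therefore converge to finite limits as $r\to\infty$, and one can choose $N$ and $L$ to exceed these limits plus $1$. The case $a=2$ (Gaussian, not heavy-tailed) has vanishing $\sin(\pi a k/2)$ and would be treated separately or excluded.
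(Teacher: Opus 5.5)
Your proof is correct and follows the same route as the paper: substitute the asymptotics of $f',f'',f'''$ at $s=\psi(r)=e^{br^2}$ and observe that the coefficients of $r^3$ and $r$ cancel. The paper, however, stops there. It replaces $\psi f'(\psi)$, $\psi^2 f''(\psi)$, $\psi^3 f'''(\psi)$ by their limits $(1+a)$, $-(1+a)$, $2(1+a)$ inside terms multiplied by $r^3$ and $r$. That is exactly the step you correctly point out cannot be justified from the relations ``$\sim$'' alone.

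Your extra ingredient is what actually closes the argument: the relative corrections $\varepsilon_i(s)=O(s^{-a})=O(e^{-abr^2})$ taken from the classical series for the symmetric stable density. These corrections pass through the quotient formulas for $f''$ and $f'''$ because their leading coefficients $-(1+a)$ and $2(1+a)$ are nonzero.

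Two points to tighten:
\begin{itemize}
\item For $1<a<2$ that series diverges and is only asymptotic, so \emph{termwise differentiation} needs justification. One option is to apply the same expansion with remainder to the Fourier representations $\pi^{(j)}(x)=\frac{1}{\pi}\int_0^\infty t^j e^{-ct^a}\cos(tx+j\pi/2)\,dt$. Another is to cite the classical fact that the asymptotic expansion of stable densities may be differentiated. Either way, you only need the two-term version $\pi^{(j)}(x)=A_j x^{-(1+a+j)}\bigl(1+O(x^{-a})\bigr)$ for $j\le 3$.
\item Your exclusion of $a=2$ is necessary. There the leading constant $\sin(\pi a/2)$ vanishes, so the paper's tail lemmas give no information.
\end{itemize}
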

\begin{proof}
    As $r\to \infty$,
    \begin{align*}
      &  (12b^2r+8b^3r^3)\psi(r) f'(\psi(r))+(12b^2r+24b^3r^3)\psi^2(r)f''(\psi(r))
    +8b^3r^3\psi^3(r)f'''(\psi(r))-\frac{2}{r^3}\\
\to&  (12b^2r+8b^3r^3)(1+a)-(12b^2r+24b^3r^3)(1+a) + 16b^3r^3(1+a)-\frac{2}{r^3}=\frac{2(d-2)}{r^3},\\
&4b^2r\psi(r) f'(\psi(r))+4b^2r\psi^2(r)f''(\psi(r))
    +\frac{2}{r^3}
    \to4b^2r(1+a)-4b^2r(1+a)+\frac{2}{r^3}=\frac{2}{r^3}.
    \end{align*}
    Therefore, the conclusion holds.
\end{proof}

    \begin{theorem}\label{thm:QA-MLMC of heavy-tailed with symmetric stable}
Using the quantum multilevel mean estimation method in Theorem~\ref{thm:unbiased A-MLMC under weaker one-sided Lipschitz}, 
we can estimate the expectation $\mathbb{E}_{\pi}[\varphi]$ unbiasedly
under the symmetric stable distribution up to additive error $\epsilon$ with success probability at least $0.99$. 
The expected total query complexity is $\widetilde{\mathcal{O}}( \epsilon^{-1})$.
\end{theorem}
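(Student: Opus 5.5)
The argument reduces Theorem~\ref{thm:QA-MLMC of heavy-tailed with symmetric stable} to Theorem~\ref{thm:QA-MLMC of heavy-tailed}, applied with the transformation $h$ built from $g$ in~\eqref{g} with $\alpha=0$, $\beta=2$, so that $\psi(r)=e^{br^2}$ for $r\ge R_2$. The three propositions just proved show that the symmetric stable potential $f=-\log\pi$ satisfies Assumptions~\ref{ass:special smooth}, \ref{ass:special dissipative} and~\ref{ass:special hessian-smooth}. These are the specializations of Assumptions~\ref{assumption:smooth}, \ref{assumption:dissipativity} and~\ref{assumption:hessian-smooth} to this $\psi$. What remains is a short chain of inheritances, followed by a bookkeeping step that converts an estimator for the transformed measure into one for $\mathbb{E}_\pi[\varphi]$.

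\textbf{Step 1: the hypotheses of Theorem~\ref{thm:QA-MLMC of heavy-tailed} hold.} First I check Assumption~\ref{ass:f}. The one-dimensional symmetric stable density is even and $C^\infty$, so $f(x)=-\log\pi(\|x\|)$ (read radially) is isotropic and $C^3$; I take the radial extension as the intended $d$-dimensional model, as the paper does implicitly. Then I apply the lemmas on the transformed potential:
\begin{itemize}
\item Lemma~\ref{lemma of Lipschitz Gradient} gives that $f_h$ is $L_h$-smooth;
\item Lemma~\ref{lemma of dissipativity} gives that $f_h$ is dissipative with constants $A_h,B_h$;
\item Lemma~\ref{lem: Hessian smooth under transform} gives that $f_h$ is $L_h$-Hessian-smooth.
\end{itemize}
The weaker one-sided Lipschitz condition with $\lambda=L_h$ follows from smoothness. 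Fix any spring constant $S>\lambda/2$, for example $S=L_h$.

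\textbf{Step 2: pull the estimator back to $\pi$.} Set $\varphi_h=\varphi\circ h$. Because $Y\sim\pi_h$ implies $h(Y)\sim\pi$, we have $\mathbb{E}_\pi[\varphi]=\mathbb{E}_{\pi_h}[\varphi_h]$. I then apply Theorem~\ref{thm:unbiased A-MLMC under weaker one-sided Lipschitz} to the potential $f_h$ with observable $\varphi_h$, using the gradient oracle $O_{\nabla f_h}$ assembled in Section~\ref{sec:(subsec)quantum sampling-heavy}. Each call to it costs one query to each of $O_h,O_{\nabla f},O_m,O_a$, so the query complexity grows only by a constant factor. This yields an unbiased $\tilde\mu$ with $\operatorname{Var}(\tilde\mu)\le(\epsilon/10)^2$, and Lemma~\ref{lem:variance and epsilon} converts that into additive error $\epsilon$ with probability $0.99$.

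\textbf{Step 3: the cost.} The complexity is $\widetilde{\mathcal{O}}\big(\tfrac{SL_hd(L_h+\sqrt d)}{2S-\lambda}\epsilon^{-1}\right)$. The constants $L_h$, $S$, $\lambda$, $d$ and the stable parameters $a,c$ (along with $b,R_1,R_2$) are all independent of $\epsilon$, so this is $\widetilde{\mathcal{O}}(\epsilon^{-1})$.

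\textbf{Main obstacle.} The delicate point is that $\varphi_h=\varphi\circ h$ need not be globally Lipschitz, even though the strong-error analysis behind Proposition~\ref{prop: strong order error of discretization} uses a $K$-Lipschitz bounded observable. Here $h$ grows like $e^{b r^2}$, so the composite can oscillate or grow steeply. My first attempt would avoid the issue by working with observables of the pulled-back form, that is, assuming $\varphi_h$ is itself bounded and Lipschitz, as the heavy-tailed theorem does implicitly. If that is not acceptable, I would redo the variance bound for $\Delta_\ell$ using only boundedness of $\varphi$ together with the local Lipschitz constant of $h$ on high-probability balls. Under the transformed dissipativity, the moment bounds of Lemma~\ref{lem:bounds of p moment} control $\|Y\|$, and $h$ is Lipschitz on $\{\|y\|\le R\}$ with constant $e^{\mathcal{O}(R^2)}$. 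Taking $R=\mathcal{O}(\sqrt{\log\epsilon^{-1}})$ adds only polylogarithmic factors, which $\widetilde{\mathcal{O}}$ absorbs. The tail contribution is handled by boundedness of $\varphi$ and sub-Gaussian concentration of $Y$.
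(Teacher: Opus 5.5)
Your Steps 1--3 are exactly the paper's argument. The paper offers nothing beyond the three stable-law propositions and an appeal to Theorem~\ref{thm:QA-MLMC of heavy-tailed}, i.e.\ to Theorem~\ref{thm:unbiased A-MLMC under weaker one-sided Lipschitz} applied to $f_h$. You are right that this appeal tacitly needs the pulled-back observable $\varphi\circ h$ to be bounded and globally Lipschitz. That is where $\varphi$ enters the term $K\,\mathbb{E}\|\widehat{Y}^f_{2N}-\widehat{Y}^c_{2N}\|^p$ in Lemma~\ref{lem:bound of two path with R-D}, and the paper does not address it.

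The genuine gap is your fallback repair.
\begin{itemize}
\item For $R>R_2$, the Lipschitz constant of $h$ on $\{|y|\le R\}$ is $g'(R)=2bRe^{bR^2}$.
\item The tail satisfies $\pi_h(|Y|>R)=\pi(|X|>e^{bR^2})\asymp e^{-abR^2}$.
\item Pushing the tail below a target $\eta$ therefore forces $e^{bR^2}\gtrsim\eta^{-1/a}$, so the local Lipschitz constant is a negative power of $\eta$.
\end{itemize}
With your choice $R^2\asymp\log\epsilon^{-1}$, the factor $e^{\mathcal{O}(R^2)}$ equals $\epsilon^{-\mathcal{O}(1)}$. That is polynomial, not polylogarithmic, and $\widetilde{\mathcal{O}}$ does not absorb it.

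This is not merely a lossy bound. Take a generic bounded Lipschitz $\varphi$, say $\varphi=\sin$. The level-$\ell$ difference is then of order $\min\{1,\,g'(|Y|)\,|\widehat{Y}^f-\widehat{Y}^c|\}$, where:
\begin{itemize}
\item the coupling error is generically of exact order $h_\ell$, since $f_h''$ is a nonzero constant asymptotically;
\item $g'(|Y|)\asymp|X|\sqrt{\log|X|}$, with $\mathbb{P}(|X|>x)\asymp x^{-a}$.
\end{itemize}
Heuristically, the second moment is then of order $\mathbb{E}[\min\{1,X^2h_\ell^2\}]\asymp h_\ell^{a}$ up to logarithms. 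In the notation of Theorem~\ref{thm:QMLMC2}, this gives $\beta=a<2=2\gamma$ whenever $a<2$. The theorem then yields roughly $\epsilon^{-1-(1-a/2)/\alpha}$, not $\epsilon^{-1}$.

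So along this route the stated rate holds only under your first option. That is the extra hypothesis that $\varphi\circ h$ is bounded and globally Lipschitz, for instance when $\varphi$ is constant outside a compact set. You should state this as an explicit hypothesis rather than present the general case as recoverable by truncation.

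A smaller point concerns dimension. The stable-law propositions are verified with $d=1$: the $(d-2)r^{-2}$ terms appear as $-r^{-2}$. They do not cover a radial model on $\mathbb{R}^d$. Indeed, $\pi(\|x\|)\sim\|x\|^{-1-a}$ is not normalizable on $\mathbb{R}^d$ once $d\ge 1+a$, and the limiting dissipativity coefficient there is $2b(1+a-d)$. You should stay in $d=1$.
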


\subsubsection{Student $t$-distribution}\label{subsub:student-t}

Consider the Student $t$-distribution with the density and potential function
\begin{align}\label{eq:student-t}
  \pi(x)\propto(1+|x|^2)^{-\frac{d+\kappa}{2}},\quad f(x)=\frac{d+\kappa}{2}\log(1+|x|^2).
\end{align}

\begin{proposition}
There exists constants $N, L > 0$  such that for all  $r > \max\{N,R_2\}$, it holds that
    \begin{align*}
    &2b\psi(r) f'(\psi(r)) - 2bd + (d-2)r^{-2} < L,\\
    &4b^2r^2\psi^2(r)f''(\psi(r)) + (2b+4b^2r^2)\psi(r) f'(\psi(r))-2 bd- (d-2)r^{-2} < L,
\end{align*}
where $\psi(r) = e^{br^2} $ for all $r \geq R_2$,
i.e.,
Assumption~\ref{ass:special smooth} holds.
\end{proposition}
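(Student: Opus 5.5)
The plan is to verify the two inequalities of Assumption~\ref{ass:special smooth} by hand. The idea is to substitute the explicit potential $f(s)=\frac{d+\kappa}{2}\log(1+s^2)$, viewed as a function of the radius $s=\|x\|$ as Assumption~\ref{ass:f} allows, and evaluate it at $s=\psi(r)=e^{br^2}$. We then show that every term either converges to a finite constant or is bounded as $r\to\infty$. Unlike the symmetric stable case, we have closed forms for $f'$ and $f''$, so no asymptotic lemmas from the appendix are needed. The required derivatives are
\begin{align*}
f'(s)=\frac{(d+\kappa)s}{1+s^2},\qquad f''(s)=\frac{(d+\kappa)(1-s^2)}{(1+s^2)^2}.
\end{align*}
From these, $s f'(s)=(d+\kappa)\bigl(1-\frac{1}{1+s^2}\bigr)$, which lies in $[0,d+\kappa]$ for all $s\geq0$.

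For the first inequality, we bound $2b\psi(r)f'(\psi(r))\le 2b(d+\kappa)$. We also have $-2bd\le 0$, and for $r\ge 1$, $|(d-2)r^{-2}|\le |d-2|$. Taking any $N\ge1$, the whole expression is then below $L_1:=2b(d+\kappa)+|d-2|+1$. In fact it tends to $2b\kappa$, which shows the bound is not lossy.

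The second inequality is the only delicate step, because the prefactor $4b^2r^2$ grows. The two growing pieces must cancel, and bounding them separately would fail. So we regroup the left-hand side as
\begin{align*}
4b^2r^2\bigl[\psi^2 f''(\psi)+\psi f'(\psi)\bigr]+2b\,\psi f'(\psi)-2bd-(d-2)r^{-2}.
\end{align*}
Then we compute the bracket exactly from the closed forms:
\begin{align*}
s^2f''(s)+sf'(s)=(d+\kappa)\,\frac{s^2\bigl[(1-s^2)+(1+s^2)\bigr]}{(1+s^2)^2}=\frac{2(d+\kappa)s^2}{(1+s^2)^2}\le 2(d+\kappa)s^{-2}.
\end{align*}
With $s=e^{br^2}$, the first term is therefore at most $8b^2(d+\kappa)r^2e^{-2br^2}$. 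This is bounded uniformly in $r$, by $\sup_{r}r^2e^{-2br^2}=1/(2be)$, and it tends to $0$.

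The remaining terms are handled as in the first inequality: $2b\psi f'(\psi)\le 2b(d+\kappa)$, $-2bd\le0$, and $|(d-2)r^{-2}|\le|d-2|$ for $r\ge1$. This gives a constant $L_2$. Setting $L=\max\{L_1,L_2\}$ and $N=1$ establishes both inequalities for all $r>\max\{N,R_2\}$, which is exactly Assumption~\ref{ass:special smooth}.

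The only point requiring care is the exact cancellation in the bracket. The $\Theta(r^2)$ growth must be absorbed by the $e^{-2br^2}$ decay coming from the $s^{-2}$ bound, so we will compute the bracket exactly rather than asymptotically.
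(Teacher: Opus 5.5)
Your proposal is correct and follows the paper's own proof. Both substitute the closed forms of $f'$ and $f''$ at $\psi(r)=e^{br^2}$, combine the two $4b^2r^2$-weighted terms into $8b^2(d+\kappa)\,r^2\psi^2/(1+\psi^2)^2$, and conclude boundedness; your version is slightly more explicit, giving uniform constants (via $s f'(s)\le d+\kappa$ and $\sup_r r^2e^{-2br^2}=1/(2be)$) where the paper only notes that the expressions stay bounded as $r\to\infty$.
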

\begin{proof}
    By direct computation, we have
    \begin{align*}
        &2b\psi(r) f'(\psi(r)) - 2bd + (d-2)r^{-2}
        = 2b(d+\kappa)\frac{\psi^2(r)}{1+\psi^2(r)} - 2bd + (d-2)r^{-2},\\
        &4b^2r^2\psi^2(r)f''(\psi(r)) + (2b+4b^2r^2)\psi(r) f'(\psi(r))-2 bd- (d-2)r^{-2}\\
        =&4b^2r^2(d+\kappa)\frac{(1-\psi^2(r))\psi^2(r)}{(1+\psi^2(r))^2}+(2b+4b^2r^2)(d+\kappa)\frac{\psi^2(r)}{1+\psi^2(r)}
        -2bd-(d-2)r^{-2}\\
        =&8b^2(d+\kappa)\frac{\psi^2(r)r^2}{(1+\psi^2(r))^2}+2b(d+\kappa)\frac{\psi^2(r)}{1+\psi^2(r)}-2bd-(d-2)r^{-2}.
    \end{align*}
    Both formulas are bounded as $r\to\infty$.
\end{proof}

\begin{proposition}
There exist constants $A, B, N > 0$ such that for all $r > \max\{N,R_2\}$:
 \begin{align*}
   2br^2\psi(r) f'(\psi(r)) - 2bdr^2 + (d-2) > Ar^2 - B,
\end{align*}
where $\psi(r) =  e^{br^2}$ for all $r \geq R_2$,
i.e.,
Assumption~\ref{ass:special dissipative} holds.
\end{proposition}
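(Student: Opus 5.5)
The plan is a direct computation of the left-hand side for the Student $t$ potential, followed by a comparison of its leading behavior in $r$ with $Ar^2-B$. I would first write $f$ as a function of the radial variable, $f(s)=\frac{d+\kappa}{2}\log(1+s^2)$, so that
\[
f'(s)=\frac{(d+\kappa)s}{1+s^2}, \qquad \psi(r)f'(\psi(r))=(d+\kappa)\frac{\psi^2(r)}{1+\psi^2(r)},
\]
with $\psi(r)=e^{br^2}$. This is the same identity already used in the proof of the preceding proposition, where Assumption~\ref{ass:special smooth} was verified. Substituting gives the exact expression
\[
2br^2\psi(r)f'(\psi(r))-2bdr^2+(d-2)
=2b\kappa r^2-\frac{2b(d+\kappa)r^2}{1+e^{2br^2}}+(d-2).
\]

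The next step is to compare this with $Ar^2-B$. I would take $b>0$ (needed for the transformation to be nontrivial when $\alpha=0$), $\kappa>0$ (the usual normalizability condition for the Student $t$ density), and $d\ge 2$. Then set $A=b\kappa$. The correction term $\frac{2b(d+\kappa)r^2}{1+e^{2br^2}}$ is positive and tends to $0$ super-exponentially, so there is an $N$ with this term at most $b\kappa r^2$ for all $r>N$; in fact, since $r^2e^{-2br^2}$ is bounded, the term is uniformly bounded by a constant $C$. With $B=C+1$ and $d-2\ge 0$, we get
\[
2b\kappa r^2-\frac{2b(d+\kappa)r^2}{1+e^{2br^2}}+(d-2)\ \ge\ 2b\kappa r^2-C\ >\ b\kappa r^2-B
\]
for all $r>\max\{N,R_2\}$, which is the required inequality.

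I expect the only real subtlety to be keeping track of the parameter conditions. The leading coefficient $2b\kappa$ must be strictly positive, which requires both $\kappa>0$ and $b>0$. This is exactly why the transformation is necessary: with $b=0$, the raw Student $t$ potential is not dissipative. If $d=1$, the constant $d-2=-1$ can simply be absorbed into $B$. Everything else is elementary asymptotics.
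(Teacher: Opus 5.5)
Your proof is correct and follows essentially the same route as the paper. The paper uses the same identity $\psi f'(\psi)=(d+\kappa)\psi^2/(1+\psi^2)$ and concludes from the informal limit ``$\to 2b\kappa r^2+(d-2)$''. Your exact rewriting with the correction term $2b(d+\kappa)r^2/(1+e^{2br^2})$, its uniform bound, and the explicit choices $A=b\kappa$, $B=C+1$ (with $b,\kappa>0$ made explicit and the $d=1$ case handled) make that step rigorous.
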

\begin{proof}
  Since
  \begin{align*}
     2br^2\psi(r) f'(\psi(r)) - 2bdr^2 + (d-2)
        = 2b(d+\kappa)r^2\frac{\psi^2(r)}{1+\psi^2(r)} - 2bdr^2 + (d-2)
        \to 2b\kappa r^2 +(d-2)
  \end{align*}
as $r\to\infty$, the conclusion holds.
\end{proof}

\begin{proposition}
    There exists constants $N, L > 0$  such that for all  $r > \max\{N,R_2\}$, it holds that
    \begin{align*}
    &(12b^2r+8b^3r^3)\psi(r) f'(\psi(r))+(12b^2r+24b^3r^3)\psi^2(r)f''(\psi(r))
    +8b^3r^3\psi^3(r)f'''(\psi(r))+\frac{2(d-2)}{r^3} < L,\\
    &\left|4b^2r\psi(r) f'(\psi(r))+4b^2r\psi^2(r)f''(\psi(r))
    -\frac{2(d-2)}{r^3}\right| < L,
\end{align*}
where $\psi(r) = e^{br^2} $ for all $r \geq R_2$,
i.e.,
Assumption~\ref{ass:special hessian-smooth} holds.
\end{proposition}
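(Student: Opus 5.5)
The plan is to substitute the explicit Student-$t$ potential into the two expressions of Assumption~\ref{ass:special hessian-smooth} and expand them in the exponentially small quantity
\[
\varepsilon(r)=\frac{1}{1+\psi^2(r)}=\frac{1}{1+e^{2br^2}} .
\]
The growing polynomial factors $r$ and $r^3$ will then cancel exactly at leading order. Whatever survives is multiplied by $\varepsilon$ and is therefore bounded, in fact it tends to $0$.

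\textbf{Step 1: reduce to rational functions of $s=\psi^2$.} From \eqref{eq:student-t}, for the radial profile $f(t)=\frac{d+\kappa}{2}\log(1+t^2)$ we have
\[
f'(t)=(d+\kappa)\frac{t}{1+t^2},\qquad f''(t)=(d+\kappa)\frac{1-t^2}{(1+t^2)^2},\qquad f'''(t)=(d+\kappa)\frac{2t^3-6t}{(1+t^2)^3}.
\]
Put $t=\psi(r)$ and $s=\psi^2(r)$, so that $\varepsilon=1/(1+s)$. The three quantities appearing in the assumption become
\begin{align*}
\psi f'(\psi)&=(d+\kappa)(1-\varepsilon),\\
\psi^2 f''(\psi)&=(d+\kappa)(1-\varepsilon)(2\varepsilon-1),\\
\psi^3 f'''(\psi)&=(d+\kappa)\frac{2s^2(s-3)}{(1+s)^3}=(d+\kappa)\Bigl(2-\frac{2+6s+12s^2}{(1+s)^3}\Bigr).
\end{align*}
The last bracket equals $2+\mathcal{O}(\varepsilon)$, since $(2+6s+12s^2)/(1+s)^3=\mathcal{O}(1/s)$.

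\textbf{Step 2: cancellation in the first expression.} Group the first expression of Assumption~\ref{ass:special hessian-smooth} by powers of $r$.
\begin{itemize}
\item The coefficient of $8b^3r^3(d+\kappa)$ is $(1-\varepsilon)+3(1-\varepsilon)(2\varepsilon-1)+2+\mathcal{O}(\varepsilon)$. Its constant part is $1-3+2=0$, so it equals $\mathcal{O}(\varepsilon)$.
\item The coefficient of $12b^2r(d+\kappa)$ is $(1-\varepsilon)+(1-\varepsilon)(2\varepsilon-1)=2\varepsilon(1-\varepsilon)=\mathcal{O}(\varepsilon)$.
\end{itemize}
Since $\varepsilon\le e^{-2br^2}$, both $r^3\varepsilon$ and $r\varepsilon$ tend to $0$. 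The remaining term $2(d-2)/r^3$ is bounded for $r\ge 1$. Hence the first expression is bounded for $r$ large.

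\textbf{Step 3: the second expression.} Here
\[
4b^2r\bigl[\psi f'(\psi)+\psi^2f''(\psi)\bigr]=4b^2r(d+\kappa)\,2\varepsilon(1-\varepsilon)\longrightarrow 0 .
\]
Adding $-2(d-2)/r^3$ and taking absolute values leaves a bounded quantity. Choosing $N\ge\max\{1,R_2\}$ large enough and $L$ as the resulting bound therefore proves the proposition.

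The main obstacle is the third-derivative term. One must check that the constant parts of the $r^3$ coefficient cancel exactly, namely $1-3+2=0$. The reason it matters is that the bounds then rely on the $\mathcal{O}(\varepsilon)$ remainders genuinely carrying the exponential factor $e^{-2br^2}$, which beats any polynomial in $r$. So the exact algebraic form of $\psi^3 f'''(\psi)$ written in Step 1 is what the argument depends on.
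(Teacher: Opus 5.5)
Your proof is correct and follows the same route as the paper: substitute the explicit Student-$t$ derivatives of $f$, then observe that the leading coefficients of the $r^3$ and $r$ terms cancel ($8-24+16=0$ and $12-12=0$), so only terms vanishing as $r\to\infty$ remain. Your bookkeeping of the remainders as $\mathcal{O}(\varepsilon)$ with $\varepsilon\le e^{-2br^2}$ (which uses $b>0$, forced here since $\alpha=0$) makes rigorous the step the paper states only informally as a limit of the leading terms.
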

\begin{proof}
    First, for the first expression,
    \begin{align*}
       & (12b^2r+8b^3r^3)\psi(r) f'(\psi(r))+(12b^2r+24b^3r^3)\psi^2(r)f''(\psi(r))
    +8b^3r^3\psi^3(r)f'''(\psi(r))+\frac{2(d-2)}{r^3}\\
    =&(12b^2r+8b^3r^3)(d+\kappa)\frac{\psi^2(r)}{1+\psi^2(r)}
   +(12b^2r+24b^3r^3)(d+\kappa) \frac{\psi^2(r)(1-\psi^2(r))}{(1+\psi^2(r))^2}\\
   &\quad  +16b^3r^3(d+\kappa)\frac{\psi^4(r)(\psi^2(r)-3)}{(1+\psi^2(r))^3}+\frac{2(d-2)}{r^3}.
    \end{align*}
    As $r\to\infty$, this expression tends to
    \begin{align*}
     (12b^2r+8b^3r^3)(d+\kappa)-(12b^2r+24b^3r^3)(d+\kappa) +16b^3r^3(d+\kappa)=0.
    \end{align*}

    For the second expression,
    \begin{align*}
      & 4b^2r\psi(r) f'(\psi(r))+4b^2r\psi^2(r)f''(\psi(r))
    -\frac{2(d-2)}{r^3}\\
    =& 4b^2r (d+\kappa)\frac{\psi^2(r)}{1+\psi^2(r)}
    +4b^2r(d+\kappa) \frac{\psi^2(r)(1-\psi^2(r))}{(1+\psi^2(r))^2}
-\frac{2(d-2)}{r^3}\to 0
    \end{align*}
    as $r\to \infty$.

    Therefore, the conclusion holds.
\end{proof}

    \begin{theorem}\label{thm:QA-MLMC of heavy-tailed student-t} 
Using the quantum multilevel mean estimation method in Theorem~\ref{thm:unbiased A-MLMC under weaker one-sided Lipschitz} 
we can estimate the  expectation $\mathbb{E}_{\pi}[\varphi]$ unbiasedly
under the Student t-distribution up to additive error $\epsilon$ with success probability at least $0.99$. 
The expected total query complexity is $\widetilde{\mathcal{O}}( \epsilon^{-1})$.
\end{theorem}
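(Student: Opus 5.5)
The plan mirrors the proof of Theorem~\ref{thm:QA-MLMC of heavy-tailed with symmetric stable}. First, the Student $t$ potential \eqref{eq:student-t} is isotropic and smooth, so Assumption~\ref{ass:f} holds. Second, the three propositions just proved verify Assumptions~\ref{ass:special smooth}, \ref{ass:special dissipative} and \ref{ass:special hessian-smooth}. With $\alpha=0$ and $\beta=2$ these coincide with Assumptions~\ref{assumption:smooth}, \ref{assumption:dissipativity} and \ref{assumption:hessian-smooth}.

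Applying Lemmas~\ref{lemma of Lipschitz Gradient}, \ref{lemma of dissipativity} and \ref{lem: Hessian smooth under transform} then shows that $f_h$ is $L_h$-smooth, dissipative with constants $A_h=A$ and $B_h$, and $L_h$-Hessian-smooth. Being $L_h$-smooth, $f_h$ also satisfies the weaker one-sided Lipschitz condition with $\lambda=L_h$. We fix any spring coefficient $S>L_h/2$. The compact region $\|x\|\le\max\{N,R_2\}$ only affects constants, through continuity of $\nabla^2 f_h$ and $\nabla^3 f_h$.

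Next we reduce the estimation problem to the transformed target. If $Y\sim\pi_h\propto e^{-f_h}$, then $h(Y)\sim\pi$ by Definition~\ref{def:transformed density}, so
\[
\mathbb{E}_\pi[\varphi]=\mathbb{E}_{\pi_h}[\varphi\circ h].
\]
We run Algorithm~\ref{alg:Unbiased  Quantum Algorithm for Gibbs Expectation} on $f_h$, using the oracle $O_{\nabla f_h}$ built from $O_h,O_{\nabla f},O_m,O_a$ at constant cost per query. The observable is $\tilde\varphi=\varphi\circ h$, evaluated with one extra call to $O_h$. Theorem~\ref{thm:unbiased A-MLMC under weaker one-sided Lipschitz} then gives an unbiased estimator with variance at most $(\epsilon/10)^2$. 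Lemma~\ref{lem:variance and epsilon} converts this into additive error $\epsilon$ with probability $0.99$. The cost is $\widetilde{\mathcal{O}}(\epsilon^{-1})$, since $S,L_h,\lambda,d$ are $\epsilon$-independent constants.

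The main obstacle is that $\tilde\varphi$ is bounded but not globally Lipschitz. Since $g(r)=e^{br^2}$ in the tail, $h$ has an exponentially growing Jacobian, while the strong-error and variance bounds in Proposition~\ref{prop: strong order error of discretization} use the Lipschitz constant $K$ of the observable.

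My first attempt is to assume $\varphi$ is such that $\varphi\circ h$ is Lipschitz, for example an observable with $\nabla\varphi$ decaying fast enough, or one given directly on the transformed space. Otherwise I would rerun the coupling variance estimate with a localized Lipschitz bound. One bounds
\[
|\tilde\varphi(y^f)-\tilde\varphi(y^c)|\le K\,\sup_{z}\|J_h(z)\|\,\|y^f-y^c\|
\]
on the event that both paths stay in a ball of radius $R$. The complement is handled with boundedness of $\varphi$ and the uniform exponential moments from dissipativity (Lemma~\ref{lem:bounds of p moment}). Choosing $R=\mathcal{O}(\sqrt{\log\epsilon^{-1}})$ then loses only polylogarithmic factors, which are absorbed into $\widetilde{\mathcal{O}}$.
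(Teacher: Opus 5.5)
Your main line is the paper's own argument. The paper gives no separate proof of this theorem; it relies on the three preceding propositions (Assumptions~\ref{ass:special smooth}, \ref{ass:special dissipative}, \ref{ass:special hessian-smooth}), the transformation lemmas, and Theorem~\ref{thm:QA-MLMC of heavy-tailed}, i.e.\ Theorem~\ref{thm:unbiased A-MLMC under weaker one-sided Lipschitz} applied to $f_h$. You go further than the paper in one respect. You notice that the observable fed to the level-$\ell$ sampler is $\varphi\circ h$. You also notice that the bound giving $\beta=2$ needs this observable to be globally Lipschitz; that bound is the term $3^{p-1}K^p\E\|\widehat Y^f_{2N}-\widehat Y^c_{2N}\|^p$ in Lemma~\ref{lem:bound of two path with R-D}. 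For $r=\|y\|>R_2$ the Jacobian $J_h$ has eigenvalues $g'(r)=2bre^{br^2}$ and $e^{br^2}/r$, so $\varphi\circ h$ is not globally Lipschitz; the paper passes over this silently. Your first option (assume $\varphi\circ h$ Lipschitz) is what the paper tacitly does, but it is an extra hypothesis not in the statement. Your second option, meant to remove that hypothesis, does not work.

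\textbf{Why the localization fails.} On $\{\|y\|\le R\}$ the Lipschitz constant of $\varphi\circ h$ is of order $KRe^{bR^2}$. Taking $R=c\sqrt{\log\epsilon^{-1}}$ therefore costs a factor $\epsilon^{-bc^2}$, which is polynomial, not polylogarithmic. You cannot take $R$ smaller either. Lemma~\ref{lem:bounds of p moment} only yields sub-Gaussian tails $\mathbb{P}(\sup_n\|\widehat Y_n\|>R)\le e^{-c'\tilde\alpha R^2/d}$. Even the true tail of $\pi_h$ is only $e^{-b\kappa R^2}$: here $f_h\approx b\kappa r^2+(d-2)\log r$, so the radial density is $\propto re^{-b\kappa r^2}$. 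Balancing $e^{2bR^2}h_\ell^2$ against $e^{-b\kappa R^2}$ gives at best $V_\ell\approx h_\ell^{2\kappa/(\kappa+2)}$. That means $\beta<2\gamma$, and Theorem~\ref{thm:QMLMC2} then returns $\widetilde{\mathcal O}(\epsilon^{-1-2/(\kappa+2)})$, not $\widetilde{\mathcal O}(\epsilon^{-1})$.

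\textbf{This is not only a lossy estimate.} Take a bounded $1$-Lipschitz oscillating observable such as $\varphi(x)=\sin x_1$. The fine and coarse values decorrelate once $\|J_h\|h_\ell\gtrsim 1$. That event has probability roughly $\mathbb{P}_\pi(\|X\|\gtrsim h_\ell^{-1})\approx h_\ell^{\kappa}$, so for $\kappa<2$ one expects $V_\ell\gtrsim h_\ell^{\kappa}$ (up to logarithms) for this sampler.

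\textbf{What would give a correct theorem.} One route is to add and state an explicit hypothesis, for example $\|\nabla\varphi(x)\|=\mathcal{O}\bigl(1/(\|x\|\sqrt{\log\|x\|})\bigr)$ for large $\|x\|$; this makes $\varphi\circ h$ globally Lipschitz. The other route replaces the supremum over a ball by a H\"older argument using $\E\|J_h(Y)\|^{2q}$ for some $q>1$. That moment is finite under $\pi_h$ only when $\kappa>2$. It also needs exponential-moment bounds at the sharp rate $b\kappa$ for the discretized spring-coupled paths, and Lemma~\ref{lem:bounds of p moment} does not supply these.
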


To complement the theoretical result for the Student-$t$ distribution, 
we provide a numerical illustration based on the spring--coupled Langevin sampler. 
We generate samples using the fine-path dynamics with the spring term and 
apply the corresponding change-of-measure correction to approximate the target distribution.
\begin{figure}[htbp]
    \centering
    \includegraphics[width=0.5\textwidth]{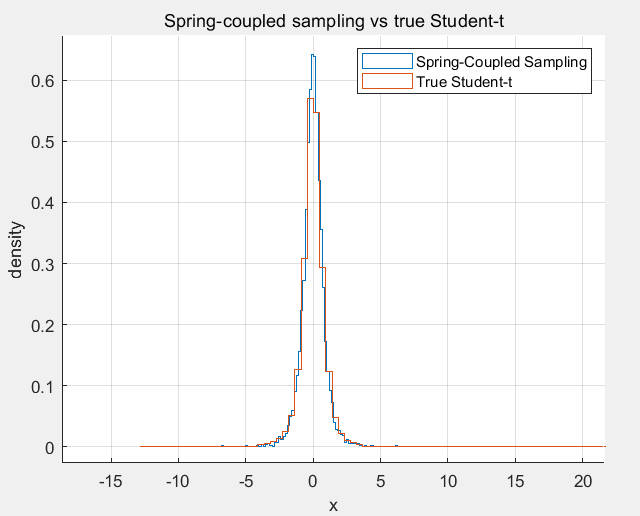}
    \caption{Histogram of samples generated by the spring–coupled Langevin sampler}
    \label{fig:spring-studentt}
\end{figure}

Figure~\ref{fig:spring-studentt} shows that the empirical distribution 
is in good agreement with the true Student-$t$ distribution, 
supporting the effectiveness of our approach in the heavy-tailed setting.

\section{Applications}\label{sec:application}
In Subsections~\ref{subsec:example of weaker one-sided} and 
\ref{subsec:example of heavy}, we presented several examples to illustrate 
which distributions satisfy the assumptions of the corresponding theorems 
under the dissipative setting 
(Theorem~\ref{thm:QA-MLMC under weaker one-sided Lipschitz}) 
and the heavy-tailed setting 
(Theorem~\ref{thm:QA-MLMC of heavy-tailed}), respectively.

In this section, we further provide concrete applications. 
Rather than only identifying admissible distributions, 
we also specify the corresponding test functions $\varphi$ 
in practical scenarios, such as statistics, machine learning, and finance.

In particular, the first four examples focus on applications of 
Theorem~\ref{thm:unbiased A-MLMC under weaker one-sided Lipschitz} 
under the dissipative condition, while the fifth example 
illustrates the application of 
Theorem~\ref{thm:QA-MLMC of heavy-tailed} 
in the heavy-tailed setting.
\subsection{Bayesian Logistic Regression}

We consider a Bayesian logistic regression model, 
which extends the classical logistic regression by incorporating 
the principle of Bayesian inference and treating the model parameters as random variables.

Given a dataset $\mathcal{D} = \{(x_i, y_i)\}_{i=1}^n$,  with $x_i \in \mathbb{R}^d$ and 
$y_i \in \{-1, +1\}$.
In a binary classification problem, 
the logistic regression model assumes that
\begin{align*}
    \mathbb{P}[y_i = 1 | x_i, \beta] = \sigma(x_i^\top \beta),
\end{align*}
where $\sigma(z) = \frac{1}{1 + e^{-z}}$ is the sigmoid function, and 
$\beta \in \mathbb{R}^d$ is the parameter vector to be inferred.

In the Bayesian framework, $\beta$ is a random variable and usually
with  prior distribution $\beta \sim \mathcal{N}(0, \Sigma)$.
The likelihood function is defined as
$$p(\mathcal{D} | \beta) = 
\prod_{i=1}^n \sigma(y_i x_i^\top \beta),$$
and the posterior distribution of $\beta$ is given by
$$p(\beta | \mathcal{D}) \propto p(\mathcal{D} | \beta)  p(\beta).$$

Consider a Gaussian prior $W \sim \mathcal{N}(0, \Sigma)$
with precision matrix $\Lambda = \Sigma^{-1} \succ 0$, and define the potential function
\begin{align*}
    f(w) = \frac{1}{2} w^\top \Lambda w 
    + \sum_{i=1}^n \ell(y_i x_i^\top w),
    \quad \ell(z) = \log(1 + e^{-z}).
\end{align*}
We aim to estimate the Bayesian predictive probability 
$$\mu = \mathbb{E}_\pi[\varphi(W)],$$
where $\pi \propto e^{-f(w)}$, 
$\varphi(w) = \sigma(x_{\mathrm{new}}^\top w)$ with $\sigma(t)= \frac{1}{1 + e^{-t}}$,
and $x_{\mathrm{new}}$ denotes a new data point.
The overdamped Langevin diffusion is given by
\begin{align*}
    \d W_t = -\nabla f(W_t) \d t + \sqrt{2} \d B_t,
\end{align*}
whose invariant distribution is $\pi$.

Compute the gradient and Hessian of $f$, we have
\begin{align*}
\nabla f(w) 
&= \Lambda w - \sum_{i=1}^n \sigma(-y_i x_i^\top w) y_i x_i^\top, \\
\nabla^2 f(w) 
&= \Lambda + \sum_{i=1}^n \sigma(y_i x_i^\top w)(1 - \sigma(y_i x_i^\top w)) y_i x_i^\top.
\end{align*}

Denote $m = \lambda_{\min}(\Lambda)$ 
and $M = \lambda_{\max}(\Lambda)$ 
as the smallest and largest eigenvalues of $\Lambda$, 
and assume that the data vectors $x_i$ are bounded, i.e., $\|x_i\| \leq R$ for all $i$.

\begin{proposition}[Dissipativity]
There exist constants $a>0$ and $b\geq0$ such that
$$
\langle w, \nabla f(w)\rangle \geq a \|w\|^2 - b.
$$
More precisely, $a = \frac{m}{2}, 
b = \frac{1}{2m}\left(\sum_{i=1}^n \|x_i\|\right)^2$.
\end{proposition}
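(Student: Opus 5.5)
The plan is to compute $\langle w,\nabla f(w)\rangle$ directly, splitting it into the quadratic prior part and the logistic-loss part. The first gives a strong positive quadratic term; the second is only linear in $\|w\|$ and can be absorbed with Young's inequality. Using the gradient formula above, and reading $\sigma(-y_ix_i^\top w)\,y_i x_i$ as a vector,
\begin{align*}
\langle w,\nabla f(w)\rangle = w^\top\Lambda w - \sum_{i=1}^n \sigma(-y_i x_i^\top w)\, y_i\, x_i^\top w .
\end{align*}

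For the prior term, since $\Lambda\succ 0$ with smallest eigenvalue $m$, the Rayleigh quotient gives $w^\top\Lambda w\geq m\|w\|^2$.

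For the likelihood term, note that $0<\sigma(\cdot)<1$ and $|y_i|=1$. Cauchy--Schwarz then bounds each summand in absolute value by $\|x_i\|\,\|w\|$. Summing,
\begin{align*}
\Big|\sum_{i=1}^n \sigma(-y_i x_i^\top w)\, y_i\, x_i^\top w\Big| \leq \Big(\sum_{i=1}^n\|x_i\|\Big)\|w\| =: S\|w\| .
\end{align*}

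Young's inequality in the form $S\|w\|\leq \frac{m}{2}\|w\|^2+\frac{S^2}{2m}$ now absorbs the linear term. This yields
\begin{align*}
\langle w,\nabla f(w)\rangle \geq m\|w\|^2 - \frac{m}{2}\|w\|^2 - \frac{1}{2m}\Big(\sum_{i=1}^n\|x_i\|\Big)^2 ,
\end{align*}
which is exactly the claim with $a=m/2$ and $b=\frac{1}{2m}(\sum_i\|x_i\|)^2$.

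The only delicate step is the uniform bound on the logistic gradient contribution. It works because the sigmoid factor is bounded by $1$, independently of $w$. No bound $R$ on the data is even needed beyond finiteness of $\sum_i\|x_i\|$. Everything else is routine.
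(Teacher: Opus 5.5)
Your proof is correct and follows the paper's argument essentially verbatim. You bound $w^\top\Lambda w \geq m\|w\|^2$, bound the logistic contribution by $\big(\sum_i\|x_i\|\big)\|w\|$ using $0<\sigma<1$ and $|y_i|=1$, and absorb it with Young's inequality to obtain $a=m/2$ and $b=\frac{1}{2m}\big(\sum_i\|x_i\|\big)^2$.
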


\begin{proof}
Since
\begin{align*}
    \langle w, \nabla f(w)\rangle 
    = w^\top \Lambda w - \sum_i \sigma(-y_i x_i^\top w) y_i x_i^\top w
    \geq m\|w\|^2 - \sum_i|y_i x_i^\top w|,
\end{align*}
using Young's inequality,
\begin{align*}
    \langle w, \nabla f(w)\rangle  
    \geq \frac{m}{2}\|w\|^2 - \frac{1}{2m}\left(\sum_{i=1}^n \|x_i\|\right)^2.
\end{align*}
\end{proof}

\begin{proposition}[Smoothness]
The gradient $\nabla f$ and Hessian $\nabla^2f$ are Lipschitz continuous, i.e.,
\begin{align*}
    \|\nabla f(w) - \nabla f(v)\| \leq \left(M + \frac{nR^2}{4}\right) \|w - v\|,\\
\|\nabla^2 f(w) - \nabla^2 f(v)\| \leq \frac{nR^3}{6\sqrt{3}} \|w - v\|.
\end{align*}
\end{proposition}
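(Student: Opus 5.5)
The plan is to prove both bounds by controlling derivatives, reading the Hessian formula with the rank-one terms as $x_i x_i^\top$ (since $y_i^2=1$, the factor $y_i$ squares away). Write $s(z)=\sigma(z)(1-\sigma(z))$, so that
\begin{align*}
\nabla^2 f(w)=\Lambda+\sum_{i=1}^n s(y_i x_i^\top w)\, x_i x_i^\top .
\end{align*}
Both Lipschitz statements then reduce to uniform bounds on $s$ and on $s'$.

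\emph{Gradient bound.} I will show $\|\nabla^2 f(w)\|\le M+\frac{nR^2}{4}$ for every $w$ and then apply the integral form of the mean value theorem,
\begin{align*}
\nabla f(w)-\nabla f(v)=\int_0^1 \nabla^2 f\bigl(v+t(w-v)\bigr)(w-v)\,\mathrm{d}t .
\end{align*}
The operator-norm bound comes from the triangle inequality and three facts:
\begin{itemize}
\item $\|\Lambda\|=M$;
\item $0\le s(z)\le \tfrac14$, since $p(1-p)\le\tfrac14$ for $p\in[0,1]$;
\item $\|x_i x_i^\top\|=\|x_i\|^2\le R^2$.
\end{itemize}

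\emph{Hessian bound.} Since $\Lambda$ cancels, we have
\begin{align*}
\nabla^2 f(w)-\nabla^2 f(v)=\sum_{i=1}^n \bigl(s(y_ix_i^\top w)-s(y_ix_i^\top v)\bigr)\,x_ix_i^\top .
\end{align*}
By the one-dimensional mean value theorem, each scalar coefficient satisfies
\begin{align*}
\bigl|s(y_ix_i^\top w)-s(y_ix_i^\top v)\bigr|\le \sup_z|s'(z)|\,\bigl|x_i^\top(w-v)\bigr|\le \sup_z|s'(z)|\,R\,\|w-v\| .
\end{align*}
Multiplying by $\|x_ix_i^\top\|\le R^2$ and summing over the $n$ terms gives $n R^3\sup_z|s'(z)|\,\|w-v\|$. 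It then remains to show $\sup_z|s'(z)|=\frac{1}{6\sqrt3}$.

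\emph{Main step: $\sup_z|s'(z)|$.} This is the only non-routine computation. Since $\sigma'=\sigma(1-\sigma)$, we get
\begin{align*}
s'(z)=\sigma(1-\sigma)(1-2\sigma).
\end{align*}
Substitute $u=1-2\sigma\in(-1,1)$, so that $\sigma(1-\sigma)=\frac{1-u^2}{4}$. The problem becomes maximizing $\frac{|u-u^3|}{4}$ over $u\in(-1,1)$. Setting $1-3u^2=0$ gives $u=1/\sqrt3$. The maximum value is therefore
\begin{align*}
\tfrac14\cdot\tfrac{1}{\sqrt3}\cdot\tfrac23=\frac{1}{6\sqrt3},
\end{align*}
which yields the stated constant $\frac{nR^3}{6\sqrt3}$.
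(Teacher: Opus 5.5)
Your proposal is correct and follows the paper's proof essentially line for line. For the gradient, both bound $\|\nabla^2 f\|$ uniformly via $\sigma(1-\sigma)\le\frac14$ and apply the mean value theorem. For the Hessian, both apply the mean value theorem to $s(z)=\sigma(z)(1-\sigma(z))$ using $\sup_z|s'(z)|=\frac{1}{6\sqrt3}$. You additionally derive that supremum, which the paper only asserts, and you correctly read the Hessian's rank-one terms as $x_ix_i^\top$ using $y_i^2=1$.
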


\begin{proof}
Since $\sigma(a)\sigma(-a) = \sigma(a)(1 - \sigma(a)) \in (0, 1/4]$,
\begin{align*}
    \|\nabla^2 f(w)\|
    \leq \|\Lambda\| + \frac{1}{4}\sum_{i=1}^n \|x_i\|^2
    = M + \frac{1}{4}\sum_{i=1}^n \|x_i\|^2.
\end{align*}
By the mean value theorem,
\begin{align*}
    \|\nabla f(w) - \nabla f(v)\|
    \leq \sup_{\theta} \|\nabla^2 f(\theta)\| \|w - v\|\leq \left(M + \frac{nR^2}{4}\right) \|w - v\|.
\end{align*}

On the other hand,
\begin{align*}
    \nabla^2 f(w) - \nabla^2 f(v)
    = \sum_{i=1}^n \big(g(y_i x_i^\top w) - g(y_i x_i^\top v)\big)x_i x_i^\top,
\end{align*}
where $g(z) = \sigma(z)\sigma(-z)$.
Since
\begin{align*}
    g'(z) = \sigma(z)\sigma(-z)(1 - 2\sigma(z))
\end{align*}
and $\max_z |g'(z)| = \frac{1}{6\sqrt{3}}$,
the mean value theorem yields
\begin{align*}
    |g(y_i x_i^\top w) - g(y_i x_i^\top v)|
    \leq \frac{1}{6\sqrt{3}}|y_i x_i^\top (w - v)|
    \leq \frac{R}{6\sqrt{3}}\|w - v\|.
\end{align*}
Thus
\begin{align*}
    \|\nabla^2 f(w) - \nabla^2 f(v)\|
\leq \frac{R}{6\sqrt{3}}
\sum_{i=1}^n \|x_i\|^2 \|w - v\|\leq \frac{nR^3}{6\sqrt{3}}
\sum_{i=1}^n  \|w - v\|.
\end{align*}
\end{proof}

\begin{proposition}[Properties of $\varphi$]
The function $\varphi(w) = \sigma(x_{\mathrm{new}}^\top w)$ is globally Lipschitz, i.e.,
\begin{align*}
    |\varphi(w) - \varphi(v)| \leq \frac{\|x_{\mathrm{new}}\|}{4} \|w - v\|.
\end{align*}
\end{proposition}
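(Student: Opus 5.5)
The statement concerns $\varphi(w)=\sigma(x_{\mathrm{new}}^\top w)$. The plan is to reduce it to a one-dimensional bound on $\sigma'$ and then apply the mean value theorem along the segment joining $w$ and $v$, as in the proof of the smoothness proposition above.

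\emph{Step 1: the derivative of the sigmoid.} From $\sigma(t)=(1+e^{-t})^{-1}$ one gets
\begin{align*}
\sigma'(t)=\frac{e^{-t}}{(1+e^{-t})^2}=\sigma(t)\bigl(1-\sigma(t)\bigr)=\sigma(t)\sigma(-t).
\end{align*}
Since $\sigma(t)\in(0,1)$ and $s(1-s)\le \frac14$ for $s\in[0,1]$, with equality at $s=\frac12$ (that is, $t=0$), we obtain $0<\sigma'(t)\le \frac14$ for all $t$. This is the same fact $\sigma(a)\sigma(-a)\in(0,1/4]$ already used in the smoothness proof.

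\emph{Step 2: the gradient of $\varphi$.} By the chain rule, $\nabla\varphi(w)=\sigma'(x_{\mathrm{new}}^\top w)\,x_{\mathrm{new}}$. Step 1 then gives $\|\nabla\varphi(w)\|\le \frac14\|x_{\mathrm{new}}\|$ uniformly in $w$.

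\emph{Step 3: the mean value theorem.} Define $\phi(\theta)=\varphi(v+\theta(w-v))$ for $\theta\in[0,1]$. Then
\begin{align*}
\varphi(w)-\varphi(v)=\phi'(\bar\theta)=\bigl\langle \nabla\varphi(v+\bar\theta(w-v)),\,w-v\bigr\rangle
\end{align*}
for some $\bar\theta\in(0,1)$. Applying Cauchy--Schwarz and Step 2 yields
\begin{align*}
|\varphi(w)-\varphi(v)|\le \frac{\|x_{\mathrm{new}}\|}{4}\|w-v\|.
\end{align*}
One could instead apply the scalar mean value theorem directly to $\sigma$ and bound $|x_{\mathrm{new}}^\top(w-v)|\le\|x_{\mathrm{new}}\|\|w-v\|$; the result is the same.

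No step is a real obstacle; the only care needed is the elementary bound $\sup_t\sigma'(t)=\frac14$ in Step 1. I would also note that $\varphi$ takes values in $(0,1)$, so it is bounded, and hence it satisfies the standing boundedness and Lipschitz assumptions on the observable with $K=\|x_{\mathrm{new}}\|/4$.
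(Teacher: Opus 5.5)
Your proposal is correct and follows essentially the same route as the paper: bound $\sigma'(t)=\sigma(t)(1-\sigma(t))\le \frac14$, use the chain rule to get $\|\nabla\varphi(w)\|\le \frac14\|x_{\mathrm{new}}\|$, and conclude with the mean value theorem along the segment plus Cauchy--Schwarz. Your added remark that $\varphi$ is bounded, so it meets the standing assumptions with $K=\|x_{\mathrm{new}}\|/4$, is a correct and useful observation.
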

\begin{proof}
Note that $\sigma'(t)=\sigma(t)\left(1-\sigma(t)\right)\leq \frac{1}{4}$, 
by the chain rule,
\begin{align*}
    \nabla \varphi(w) =\sigma'\left(x_{\mathrm{new}}^\top w\right) x_{\mathrm{new}}
    \Rightarrow 
    \|\nabla \varphi(w)\| \leq \frac{1}{4}\|x_{\mathrm{new}}\|.
\end{align*}
Applying the mean value theorem,
there exists $\theta$ on the segment $[v,w]$ such that
\begin{align*}
    |\varphi(w)-\varphi(v)|
= |\langle \nabla \varphi(\theta), w-v\rangle|
\leq \|\nabla \varphi(\theta)\|\|w-v\|
\leq \frac{\|x_{\mathrm{new}}\|}{4}\|w-v\|.
\end{align*}
\end{proof}

Therefore, by applying Theorem~\ref{thm:unbiased A-MLMC under weaker one-sided Lipschitz},
 we can estimate unbiasedly with a  expected complexity of $\widetilde{\mathcal{O}}(\epsilon^{-1})$.

\subsection{Bayesian Mixture Modeling }
In fact, in the previous example, the function $f$ is $m$-strongly convex.
We now consider a Bayesian model with a mixture-of-Gaussians prior and a logistic likelihood~\cite{Neal1992},
whose potential function $f$ is not strongly convex
but still satisfies the dissipativity condition and a weaker one-sided Lipschitz property.
This model provides a natural testbed for analyzing convergence and variance
under non-log-concave settings.

Let $\{(x_i,y_i)\}_{i=1}^n$ with $x_i\in\mathbb{R}^d$, $y_i\in\{-1,+1\}$.
Fix a positive definite covariance $\Sigma\succ0$ and denote $\Lambda=\Sigma^{-1}$.
Consider a finite Gaussian mixture prior with weights $(p_k)_{k=1}^K$, $p_k>0$, $\sum_k p_k=1$,
and component means $\{m_k\}_{k=1}^K\subset\mathbb{R}^d$.
Define the potential function
\begin{align*}
f(w)
= -\log\left(\sum_{k=1}^K p_k \exp\left(-\frac{1}{2} (w-m_k)^\top \Lambda (w-m_k)\right)\right)
+ \sum_{i=1}^n \ell(y_i x_i^\top w),
\end{align*}
where $\ell(z)=\log(1+e^{-z})$.
We aim  to estimate the expectation $\mu=\mathbb{E}_\pi[\varphi(W)]$ with
$\varphi(w)=\sigma(x_{\mathrm{new}}^\top w)$ and $\sigma(t)=\frac{1}{1+e^{-t}}$.

Assume that the component means and the feature vectors are all bounded:
$$\max_k \|m_k\| \leq M, \text{ and } \|x_i\| \leq R \text{ for all } i.$$
Denote 
\begin{align*}
    f_{\mathrm{mix}}(w)=-\log\left(\sum_{k=1}^K p_k \exp\left(-\frac{1}{2} (w-m_k)^\top \Lambda (w-m_k)\right)\right),
    \quad
g(w)=\sum_{i=1}^n \ell(y_i x_i^\top w),
\end{align*}
and define
\begin{align*}
    a_k(w)=p_k \exp\left(-\tfrac12 (w-m_k)^\top\Lambda(w-m_k)\right),\quad
\pi_k(w)=\frac{a_k(w)}{\sum_{k=1}^K a_k(w)}.
\end{align*}
Then the  gradient and Hessian can be computed as
\begin{align*}
\nabla f_{\mathrm{mix}}(w)
&= \Lambda\left(w-\bar m(w)\right),\\
\nabla^2 f_{\mathrm{mix}}(w)
&= \Lambda - \Lambda\sum_k \pi_k(w)(m_k-\bar m(w))(m_k-\bar m(w))^\top\Lambda,
\end{align*}
with $\bar m(w)=\sum_k \pi_k(w)m_k$.

\begin{proposition}[Smoothness]
    The gradient $\nabla f$ and Hessian $\nabla^2f$ are Lipschitz continuous, i.e.,
\begin{align*}
    \|\nabla f(w)-\nabla f(v)\|\leq \left(\|\Lambda\|+\frac{1}{4} \left\|\sum_{i=1}^n x_i x_i^\top\right\|\right)\|w-v\|,\\
    \|\nabla^2 f(w)-\nabla^2 f(v)\|\leq C\left(\|\Lambda\|^2 + nR^3\right)\|w-v\|,
\end{align*}
where $C$ is a finite constant depending only on $\{p_k,m_k\}_{k=1}^K$ and $\Lambda$.
\end{proposition}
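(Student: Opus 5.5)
The plan is to bound the Hessian and the third derivative tensor of $f = f_{\mathrm{mix}} + g$ uniformly in $w$, and then conclude both Lipschitz estimates by the mean value theorem, exactly as in the Bayesian logistic regression example. The logistic part $g$ is already handled there. Since $\ell''(z)=\sigma(z)\sigma(-z)\in(0,1/4]$, we get
\begin{align*}
0 \preceq \nabla^2 g(w) = \sum_i \sigma(y_ix_i^\top w)\sigma(-y_ix_i^\top w)\, x_ix_i^\top \preceq \tfrac14 \sum_i x_ix_i^\top .
\end{align*}
Since $|(\sigma\sigma(-\cdot))'|\le \frac{1}{6\sqrt3}$, we also get $\|\nabla^2 g(w)-\nabla^2 g(v)\|\le \frac{nR^3}{6\sqrt3}\|w-v\|$. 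So everything reduces to controlling the mixture part.

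For $f_{\mathrm{mix}}$ I will use the exponential-family structure. Write $\mathrm{Cov}_w := \sum_k \pi_k(w)(m_k-\bar m(w))(m_k-\bar m(w))^\top$, which is the covariance of the component means under the posterior weights $\pi_k(w)$. The displayed formula gives
\begin{align*}
\nabla^2 f_{\mathrm{mix}}(w)=\Lambda-\Lambda\,\mathrm{Cov}_w\,\Lambda,
\end{align*}
so $\nabla^2 f_{\mathrm{mix}}(w)\preceq \Lambda$, and $\|\mathrm{Cov}_w\|\le \max_k\|m_k-\bar m(w)\|^2\le 4M^2$. Adding the two pieces, $\nabla^2 f(w)\preceq \Lambda+\frac14\sum_i x_ix_i^\top$, which is the upper side of the claimed gradient bound.

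The lower side $\nabla^2 f_{\mathrm{mix}}(w)\succeq \Lambda-4M^2\Lambda^2$ is where I expect the main obstacle. The stated constant $\|\Lambda\|+\frac14\|\sum_i x_ix_i^\top\|$ controls the operator norm only if $\|\Lambda\,\mathrm{Cov}_w\,\Lambda\|\le 2\|\Lambda\|$, and this is not automatic. I would first try to verify it under a mild separation or size condition on $M\|\Lambda\|$. Failing that, I would state the honest constant $\max\{\|\Lambda\|,\,4M^2\|\Lambda\|^2\}+\frac14\|\sum_i x_ix_i^\top\|$. That constant still gives $L$-smoothness with a constant depending only on $\{p_k,m_k\}$, $\Lambda$ and the data, which is all that Theorem~\ref{thm:unbiased A-MLMC under weaker one-sided Lipschitz} needs.

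For Hessian Lipschitzness I will differentiate $\mathrm{Cov}_w$ directly. The key identity is $\nabla \pi_k(w)=\pi_k(w)\,\Lambda(m_k-\bar m(w))$, which gives $\nabla\bar m(w)=\mathrm{Cov}_w\Lambda$. Consequently the directional derivative of $\mathrm{Cov}_w$ along $u$ is the third central moment
\begin{align*}
\sum_k\pi_k(w)\,\langle \Lambda(m_k-\bar m),u\rangle\,(m_k-\bar m)(m_k-\bar m)^\top .
\end{align*}
Its operator norm is at most $8M^3\|\Lambda\|\,\|u\|$. Hence $\|D\nabla^2 f_{\mathrm{mix}}(w)[u]\|\le 8M^3\|\Lambda\|^3\|u\|$, uniformly in $w$. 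The mean value theorem then yields the bound $C(\|\Lambda\|^2+nR^3)\|w-v\|$, with $C=\max\{8M^3\|\Lambda\|,\,1/(6\sqrt3)\}$ depending only on $\{m_k\}$ and $\Lambda$.
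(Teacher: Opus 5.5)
The paper states this proposition without proof, so there is no argument to compare against. Your Hessian-Lipschitz part is correct and complete. The identity $\nabla\pi_k=\pi_k\Lambda(m_k-\bar m)$ holds, and so does the third-central-moment formula for $D\,\mathrm{Cov}_w[u]$. That formula agrees with differentiating $\sum_k\pi_k m_km_k^\top-\bar m\bar m^\top$, because $\sum_k\pi_k\langle m_k-\bar m,\Lambda u\rangle=0$. The bound $8M^3\|\Lambda\|^3$ and the reuse of the $\frac{1}{6\sqrt3}$ estimate for the logistic part also hold. Your $C=\max\{8M^3\|\Lambda\|,\,1/(6\sqrt3)\}$ is admissible, and it makes the true scaling $M^3\|\Lambda\|^3$ explicit. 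The $\|\Lambda\|^2$ in the statement is cosmetic, since $C$ is allowed to depend on $\Lambda$.

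For the first inequality your suspicion is justified, and the situation is worse than an obstacle. The inequality is false as stated, so your step ``first try to verify it'' cannot succeed without an added hypothesis. Here is a counterexample. Take $d=1$, $K=2$, $p_1=p_2=\tfrac12$, $m_{1,2}=\pm 2$, $\Lambda=1$, and a single data point with $|x_1|<2$. Then $\pi_2(w)/\pi_1(w)=e^{4w}$, $\bar m(w)=2\tanh(2w)$, and $f_{\mathrm{mix}}'(w)=w-2\tanh(2w)$, so
\[
f''(0)=1-4+\tfrac14 x_1^2,\qquad |f''(0)|=3-\tfrac14x_1^2>1+\tfrac14x_1^2=\|\Lambda\|+\tfrac14\|\textstyle\sum_i x_ix_i^\top\|.
\]
Since $f\in C^2$, $\nabla f$ cannot be Lipschitz with the stated constant. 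This is exactly your mechanism: here $\|\Lambda\,\mathrm{Cov}_0\,\Lambda\|=4>2\|\Lambda\|$.

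So drop the verification attempt and commit to a corrected statement. You can also sharpen your covariance bound. Since $\mathrm{Cov}_w\preceq\sum_k\pi_k m_km_k^\top$, you get $\|\mathrm{Cov}_w\|\le M^2$ rather than $4M^2$, and hence $\nabla^2 f(w)\succeq(\lambda_{\min}(\Lambda)-M^2\|\Lambda\|^2)I$. Consequently the stated constant is valid under an extra hypothesis such as $M^2\|\Lambda\|\le 1$; in the counterexample $M^2\|\Lambda\|=4$. In general, $\max\{\|\Lambda\|,\,M^2\|\Lambda\|^2\}+\tfrac14\|\sum_i x_ix_i^\top\|$ works. As you note, any finite smoothness constant suffices for the downstream application of Theorem~\ref{thm:unbiased A-MLMC under weaker one-sided Lipschitz}.
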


\begin{proposition}[Dissipativity]
There exist $a>0$ and $b\geq0$ such that
\begin{align*}
    \langle w,\nabla f(w)\rangle \geq a\|w\|^2 - b,\quad \forall w\in\mathbb{R}^d.
\end{align*}
\end{proposition}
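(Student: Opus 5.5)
The statement to prove is dissipativity of $f=f_{\mathrm{mix}}+g$, and the plan is to bound the two pieces of $\langle w,\nabla f(w)\rangle$ separately. The first piece is the mixture prior, which behaves like a Gaussian with a bounded shift. The second is the logistic likelihood, whose gradient is uniformly bounded. Young's inequality then absorbs the linear error terms into half of the quadratic term.

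\textbf{The mixture term.} From the gradient formula stated above, $\nabla f_{\mathrm{mix}}(w)=\Lambda(w-\bar m(w))$, where $\bar m(w)=\sum_k\pi_k(w)m_k$. This $\bar m(w)$ is a convex combination of the $m_k$, because $\pi_k(w)\ge 0$ and $\sum_k\pi_k(w)=1$. Hence $\|\bar m(w)\|\le \max_k\|m_k\|\le M$ for every $w$.

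Let $\lambda_{\min}>0$ be the smallest eigenvalue of $\Lambda$, which is positive since $\Sigma\succ0$. Then
\[
\langle w,\nabla f_{\mathrm{mix}}(w)\rangle = w^\top\Lambda w - w^\top\Lambda\bar m(w)\ \ge\ \lambda_{\min}\|w\|^2-\|\Lambda\|M\|w\|.
\]

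\textbf{The likelihood term.} We have $\nabla g(w)=-\sum_i\sigma(-y_ix_i^\top w)\,y_ix_i$. Each coefficient satisfies $\sigma(\cdot)\in(0,1)$ and $|y_i|=1$, so $\|\nabla g(w)\|\le\sum_i\|x_i\|\le nR$. By Cauchy--Schwarz, $\langle w,\nabla g(w)\rangle\ge -nR\|w\|$.

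\textbf{Combining.} Adding the two bounds gives
\[
\langle w,\nabla f(w)\rangle\ \ge\ \lambda_{\min}\|w\|^2-(\|\Lambda\|M+nR)\|w\|.
\]
Apply Young's inequality in the form $c\|w\|\le \frac{\lambda_{\min}}{2}\|w\|^2+\frac{c^2}{2\lambda_{\min}}$ with $c=\|\Lambda\|M+nR$. This yields the claim with
\[
a=\frac{\lambda_{\min}}{2},\qquad b=\frac{(\|\Lambda\|M+nR)^2}{2\lambda_{\min}}.
\]

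The main point to verify is the uniform boundedness of $\bar m(w)$, which the convex-combination observation settles. This is where the nonconvexity of the mixture becomes harmless for dissipativity, even though the Hessian $\nabla^2 f_{\mathrm{mix}}$ can have negative eigenvalues near the means.
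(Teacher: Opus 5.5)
Your proof is correct and follows the paper's argument: bound $\|\bar m(w)\|\le M$ using the convex-combination structure, bound the logistic gradient by $nR$, and absorb the linear terms with Young's inequality. The only difference is cosmetic: you combine the two linear terms and apply Young's inequality once, which gives the explicit constants $a=\lambda_{\min}/2$ and $b=(\|\Lambda\|M+nR)^2/(2\lambda_{\min})$, whereas the paper applies Young's inequality separately to each term.
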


\begin{proof}
Since $\|\bar m(w)\|\leq \max_k\|m_k\|\leq M$, we have
\begin{align*}
    \langle w,\nabla f_{\mathrm{mix}}(w)\rangle
    = w^\top \Lambda (w-\bar m(w)) 
    \geq &\lambda_{\min}(\Lambda)\|w\|^2 - \|\Lambda\|\|w\|\|\bar m(w)\|\\
    \geq& \lambda_{\min}(\Lambda)/2\|w\|^2-\|\Lambda\|^2 M^2/(2\lambda_{\min}(\Lambda)).
\end{align*}
Combining with $\langle w, \nabla g(w)\rangle \geq -nR\|w\|$ 
and using Young's inequality to absorb the linear term into a quadratic term yields
\begin{align*}
   \langle w,\nabla f(w)\rangle \geq a\|w\|^2 - b.
\end{align*}
\end{proof}

Consequently, by applying Theorem~\ref{thm:unbiased A-MLMC under weaker one-sided Lipschitz}, 
the desired quantity can be estimated unbiasedly with a  expected complexity of $\widetilde{\mathcal{O}}(\epsilon^{-1})$

\subsection{Robust Regression with Correntropy Loss}
In many real-world applications such as computer vision, wireless localization, and sensor calibration, the data may contain outliers or exhibit heavy-tailed noise, making classical quadratic losses unstable.
To address this, the Welsch (or correntropy) loss
\begin{align*}
     \phi(t) = 1 - \exp\left(-\frac{t^2}{2\sigma^2}\right)
\end{align*}
is often employed in place of the least-squares penalty.
It behaves quadratically for small residuals but saturates for large ones,
thereby effectively suppressing the influence of outliers.

Given data $(x_i,y_i)_{i=1}^n$ with $x_i\in\mathbb{R}^d$, $y_i\in\mathbb{R}$
Consider the regularized empirical risk 
\begin{align*}
 f(w) = \frac{1}{n}\sum_{i=1}^n \phi\left(y_i - x_i^\top w\right) 
 + \frac{\lambda_0}{2}\|w\|^2,
\end{align*}
and the  corresponding Langevin dynamics is
 \begin{align*}
    \d X_t = -\nabla f(X_t)\d t + \sqrt{2}\d W_t.
 \end{align*}

In this example, 
$\E_{\pi}[\varphi]$ 
represents the robust average of the parameters 
or the steady-state expectation or variance 
of model predictions. 
For instance, 
$\varphi(w)=x_{\text{test}}^\top w$ 
corresponds to the posterior mean prediction at a new test point. 
Thus, the Langevin-based estimator provides a robust alternative to pointwise optimization, 
yielding stable and uncertainty-aware regression predictions.

We assume the data are all bounded, i.e. there exists $R>0$ such that
\begin{align*}
\max_{1\leq i\leq n}\|x_i\| \leq R,\quad \max_{1\leq i\leq n}|y_i| \leq R.
\end{align*}

First, we can compute directly that
\begin{align*}
  \phi'(t) &= \frac{t}{\sigma^2}e^{-t^2/(2\sigma^2)},\\
  \phi''(t) &= \frac{1}{\sigma^2}e^{-t^2/(2\sigma^2)}\left(1-\frac{t^2}{\sigma^2}\right),\\
  \phi'''(t) &= -\frac{t}{\sigma^4}e^{-t^2/(2\sigma^2)}\left(3-\frac{t^2}{\sigma^2}\right).
\end{align*}
Note that 
\begin{align*}
    \lim_{s\to \infty}|s e^{-s^2/2}(3-s^2)|=0,
\end{align*}
there exists constant $C>0$ such that $|s e^{-s^2/2}(3-s^2)|\leq C$ for all $s$.
Hence $|\phi'''(t)|\leq\frac{C}{\sigma^3}$.

\begin{proposition}[Dissipativity]
    For all $w\in\mathbb{R}^d$, we have
\begin{align*}
    \langle w,\nabla f(w)\rangle\geq \tfrac{\lambda_0}{2}\|w\|^2 - \tfrac{R^4}{2\lambda_0\sigma^4}.
\end{align*}
\end{proposition}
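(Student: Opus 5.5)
The plan is to expand the inner product explicitly and then control the single non-coercive term \emph{one data point at a time}, tuning Young's inequality so that the leftover constant is exactly $\frac{R^4}{2\lambda_0\sigma^4}$.

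First I compute the gradient. Write $r_i=y_i-x_i^\top w$ and $s_i=x_i^\top w$, so that $r_i=y_i-s_i$. Then
$$\nabla f(w)=-\frac1n\sum_{i=1}^n \phi'(r_i)\,x_i+\lambda_0 w,$$
and therefore
$$\langle w,\nabla f(w)\rangle=\lambda_0\|w\|^2-\frac1n\sum_{i=1}^n \phi'(y_i-s_i)\,s_i .$$
It therefore suffices to prove, for each $i$, the per-sample bound
$$\phi'(y_i-s_i)\,s_i\le \frac{\lambda_0}{2}\|w\|^2+\frac{R^4}{2\lambda_0\sigma^4}.$$
Averaging this over $i$ then gives the claim.

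The key step is to avoid the crude bound $|\phi'(t)|\le|t|/\sigma^2$ combined with $|r_i|\le R+R\|w\|$. That route produces an extra additive $R^2/\sigma^2$ term. Instead I use the sign structure of $\phi'$. Set $E_i=e^{-r_i^2/(2\sigma^2)}\in(0,1]$. Then
$$\phi'(y_i-s_i)\,s_i=\frac{E_i}{\sigma^2}(y_i-s_i)s_i=\frac{E_i}{\sigma^2}\bigl(y_is_i-s_i^2\bigr)\le\frac{E_i}{\sigma^2}\,y_is_i\le\frac{R}{\sigma^2}|s_i|.$$
Here I drop $-s_i^2\le0$ and use $E_i\le1$ and $|y_i|\le R$. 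So the nonlinear term is bounded by a purely \emph{linear} quantity in $|s_i|$, with no constant term.

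Finally I absorb this linear term. Since $|s_i|\le\|x_i\|\|w\|\le R\|w\|$, we have $\frac{\lambda_0}{2}\|w\|^2\ge\frac{\lambda_0 s_i^2}{2R^2}$. By AM--GM,
$$\frac{\lambda_0 s_i^2}{2R^2}+\frac{R^4}{2\lambda_0\sigma^4}\ge 2\sqrt{\frac{\lambda_0 s_i^2}{2R^2}\cdot\frac{R^4}{2\lambda_0\sigma^4}}=\frac{R|s_i|}{\sigma^2}.$$
Combining this with the bound from the key step gives the per-sample bound, hence
$$\langle w,\nabla f(w)\rangle\ge\lambda_0\|w\|^2-\frac{\lambda_0}{2}\|w\|^2-\frac{R^4}{2\lambda_0\sigma^4},$$
which is the stated inequality with exactly the stated constant.

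The only delicate point is the key step. One must keep the factor $y_i-s_i$ intact and exploit $-s_i^2E_i\le 0$, rather than bounding $|r_i|$. Every other step is routine. The degenerate case $R=0$ is trivial, since then $s_i=0$ and the inner product equals $\lambda_0\|w\|^2$.
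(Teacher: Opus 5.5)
Your proof is correct. It departs from the paper's argument precisely where that argument fails.

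The paper bounds each data term two-sidedly, using $|\phi'(t)|\le|t|/\sigma^2$ and $|t_i(w)|\le R(1+\|w\|)$. This gives $\langle w,\nabla f(w)\rangle\ge(\lambda_0-\frac{R^2}{\sigma^2})\|w\|^2-\frac{R^2}{\sigma^2}\|w\|$. The same Young inequality you use then only yields $(\frac{\lambda_0}{2}-\frac{R^2}{\sigma^2})\|w\|^2-\frac{R^4}{2\lambda_0\sigma^4}$. The paper's last line silently drops the $-\frac{R^2}{\sigma^2}\|w\|^2$.

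So the price of the crude route is not an extra additive constant, as you describe it, but a loss in the quadratic coefficient. That route gives dissipativity only when $\lambda_0>2R^2/\sigma^2$. In that regime $\nabla^2 f(w)\succeq(\lambda_0-\frac{R^2}{\sigma^2})I\succ0$ (since $\phi''\ge-1/\sigma^2$), so $f$ is already strongly convex. The example, however, is meant for small $\lambda_0$, where the one-sided Lipschitz condition fails.

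Your one-sided observation $-E_is_i^2/\sigma^2\le0$ is exactly what removes the quadratic loss. It leaves only the linear term $\frac{R}{\sigma^2}|s_i|\le\frac{R^2}{\sigma^2}\|w\|$ to absorb, so the stated inequality holds for every $\lambda_0>0$.

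The same idea can be pushed further. Since $0<E_i\le1$ and $(y_i-s_i)s_i\le y_i^2/4$, one has $\phi'(r_i)s_i\le\frac{R^2}{4\sigma^2}$. Hence $\langle w,\nabla f(w)\rangle\ge\lambda_0\|w\|^2-\frac{R^2}{4\sigma^2}$, without using $\|x_i\|\le R$ at all.
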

\begin{proof}
Denote $t_i(w)=y_i-x_i^\top w$ and $g_i(w)=\nabla_w\phi\left(t_i(w)\right)=-\phi'(t_i)x_i$. 
Then
$$
  \langle w,\nabla f(w)\rangle=\frac{1}{n}\sum_{i=1}^n \langle w,g_i(w)\rangle
  +\lambda_0\|w\|^2.
$$
Using $|\phi'(t)|\leq |t|/\sigma^2$ and $|t_i(w)|\leq |y_i|+\|x_i\|\|w\|\leq R(1+\|w\|)$,
we have
\begin{align*}
      |\langle w,g_i(w)\rangle|\leq \|w\|\|x_i\||\phi'(t_i)|
  \leq \frac{R}{\sigma^2}\|w\||t_i(w)|
  \leq \frac{R^2}{\sigma^2}\|w\|+\frac{R^2}{\sigma^2}\|w\|^2.
\end{align*}

Hence
\begin{align*}
   \langle w,\nabla f(w)\rangle\geq 
   \left(\lambda_0-\frac{R^2}{\sigma^2}\right)\|w\|^2 - \frac{R^2}{\sigma^2}\|w\|.
\end{align*}

Using the inequality
$\frac{R^2}{\sigma^2}\|w\|\leq \frac{\lambda_0}{2}\|w\|^2+\frac{R^4}{2\lambda_0\sigma^4}$,
we have
\begin{align*}
    \langle w,\nabla f(w)\rangle\geq \tfrac{\lambda_0}{2}\|w\|^2 - \tfrac{R^4}{2\lambda_0\sigma^4}.
\end{align*}
\end{proof}

\begin{proposition}[$L$-Smoothness]
    For any $w,v\in\mathbb{R}^d$, we have
\begin{align*}
      \|\nabla f(w)-\nabla f(v)\|\leq \left(\lambda_0+\frac{R^2}{\sigma^2}\right)\|w-v\|.
\end{align*}
\end{proposition}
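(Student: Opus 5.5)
The plan is to bound the Hessian of $f$ uniformly in operator norm and then apply the mean value theorem. The data term and the regularizer are treated separately.

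\textbf{Hessian of the data term.} Write $t_i(w)=y_i-x_i^\top w$. Then
\[
\nabla f(w)=-\frac1n\sum_{i=1}^n \phi'(t_i(w))\,x_i+\lambda_0 w,
\qquad
\nabla^2 f(w)=\frac1n\sum_{i=1}^n \phi''(t_i(w))\,x_i x_i^\top+\lambda_0 I .
\]
From the explicit formula $\phi''(t)=\sigma^{-2}e^{-t^2/(2\sigma^2)}(1-t^2/\sigma^2)$, substitute $s=t/\sigma$. We need $\sup_s |e^{-s^2/2}(1-s^2)|\le 1$. For $s^2\le 1$ the quantity lies in $[0,1]$. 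For $s^2>1$, the function $u\mapsto (u-1)e^{-u/2}$ with $u=s^2$ is maximized at $u=3$, where it equals $2e^{-3/2}<1$. Hence $|\phi''(t)|\le \sigma^{-2}$ for all $t$.

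\textbf{Operator-norm bound.} By the triangle inequality and $\|x_ix_i^\top\|=\|x_i\|^2\le R^2$,
\[
\|\nabla^2 f(w)\|\le \lambda_0+\frac1n\sum_{i=1}^n \frac{R^2}{\sigma^2}=\lambda_0+\frac{R^2}{\sigma^2},
\]
uniformly in $w$.

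\textbf{Conclusion.} Since $f$ is $C^2$, the integral form of the mean value theorem gives
\[
\nabla f(w)-\nabla f(v)=\int_0^1 \nabla^2 f\big(v+\theta(w-v)\big)(w-v)\,\d\theta .
\]
Taking norms and using the uniform Hessian bound yields the claimed Lipschitz constant $\lambda_0+R^2/\sigma^2$.

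The only step that requires any care is the scalar inequality $|\phi''|\le\sigma^{-2}$, and it reduces to the one-variable maximization above. Everything else is the same routine argument used for the smoothness propositions in the preceding applications.
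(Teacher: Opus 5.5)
Your proposal is correct and follows essentially the same route as the paper: a uniform operator-norm bound $\|\nabla^2 f(w)\|\le \lambda_0+R^2/\sigma^2$ followed by the mean value theorem. The paper simply asserts $|\phi''|\le\sigma^{-2}$, whereas you verify it with the one-variable maximization ($2e^{-3/2}<1$); that check is correct and fills in the one step the paper leaves implicit.
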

\begin{proof}
 Note that
$$\nabla^2 f(w)=\frac{1}{n}\sum_i \phi''(t_i(w))x_i x_i^\top + \lambda_0 I,$$
 Since $|\phi''|\leq \sigma^{-2}$ and $\|x_i x_i^\top\|\leq \|x_i\|^2\leq R^2$, 
 we obtain the uniform bound $$\|\nabla^2 f(w)\|\leq \lambda_0+\frac{R^2}{\sigma^2}$$
  for all $w$. 
 The mean value theorem gives the claim.
\end{proof}

\begin{proposition}[$L$-Hessian-Smoothness]
For any $w,v\in\mathbb{R}^d$, we have
\begin{align*}
    \|\nabla^2 f(w)-\nabla^2 f(v)\|\leq \frac{CR^3}{\sigma^3}\|w-v\|.
\end{align*}
\end{proposition}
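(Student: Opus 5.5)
The plan is to bound the third derivative of $f$ uniformly and then apply the mean value theorem to the Hessian along the segment $[v,w]$, in the same way the $L$-smoothness proposition was obtained from a uniform Hessian bound. Write $t_i(w)=y_i-x_i^\top w$. Then
\begin{align*}
\nabla^2 f(w)=\frac{1}{n}\sum_{i=1}^n \phi''(t_i(w))\,x_i x_i^\top+\lambda_0 I .
\end{align*}
The regularization term $\lambda_0 I$ is constant, so it cancels in the difference, and
\begin{align*}
\nabla^2 f(w)-\nabla^2 f(v)=\frac{1}{n}\sum_{i=1}^n\big(\phi''(t_i(w))-\phi''(t_i(v))\big)\,x_i x_i^\top .
\end{align*}

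Next, I would apply the one-dimensional mean value theorem to $\phi''$ for each $i$. This gives $|\phi''(t_i(w))-\phi''(t_i(v))|\leq \sup_t|\phi'''(t)|\,|t_i(w)-t_i(v)|$. Since $t_i(w)-t_i(v)=-x_i^\top(w-v)$, we have $|t_i(w)-t_i(v)|\leq \|x_i\|\,\|w-v\|\leq R\|w-v\|$.

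The bound $|\phi'''(t)|\leq C/\sigma^3$ was established just before the proposition. It follows from the substitution $s=t/\sigma$, which gives $\phi'''(t)=-\sigma^{-3}s e^{-s^2/2}(3-s^2)$, together with the boundedness of $s e^{-s^2/2}(3-s^2)$.

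Combining these with $\|x_i x_i^\top\|=\|x_i\|^2\leq R^2$ and the triangle inequality for the operator norm gives
\begin{align*}
\|\nabla^2 f(w)-\nabla^2 f(v)\|\leq \frac{1}{n}\sum_{i=1}^n \frac{C}{\sigma^3}\,R\,\|w-v\|\,R^2=\frac{CR^3}{\sigma^3}\|w-v\| .
\end{align*}
This is exactly the claimed bound.

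There is no real obstacle here. The only point needing care is the uniform bound on $\phi'''$, which is already provided by the preceding computation, and making sure the constant $C$ is the same one used there. I also note that the averaging factor $1/n$ is what keeps the bound free of $n$, in contrast with the unnormalized logistic-regression example earlier.
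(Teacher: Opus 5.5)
Your proof is correct and is essentially the paper's argument: both use the uniform bound $|\phi'''|\le C/\sigma^3$, the bound $\|x_i\|\le R$, and the mean value theorem, and both reach the same constant $CR^3/\sigma^3$. The only difference is where the mean value theorem is applied. You apply the scalar version to $\phi''$ in each summand of $\nabla^2 f(w)-\nabla^2 f(v)$. The paper instead bounds the third-derivative tensor $\nabla^3 f(w)=\frac{1}{n}\sum_i \phi'''(t_i(w))(-x_i)\otimes x_i x_i^\top$ by $CR^3/\sigma^3$ and then uses the mean value inequality for $\nabla^2 f$. Your termwise version is slightly more explicit and avoids justifying the tensor operator-norm bound.
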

\begin{proof}
Differentiate once more gives
 $$\nabla^3 f(w)=\frac{1}{n}\sum_i \phi'''(t_i(w))(-x_i)\otimes x_i x_i^\top.$$
So
$$\|\nabla^3 f(w)\|\leq \frac{1}{n}\sum_i |\phi'''(t_i(w))|\|x_i\|^3 \leq \frac{CR^3}{\sigma^3}.$$ 
Hence, for any $w,v\in\mathbb{R}^d$, we have
\begin{align*}
    \|\nabla^2 f(w)-\nabla^2 f(v)\|\leq \frac{CR^3}{\sigma^3}\|w-v\|.
\end{align*}
\end{proof}

Similarly, applying Theorem~\ref{thm:unbiased A-MLMC under weaker one-sided Lipschitz}, 
the desired quantity can be estimated unbiasedly with a   expected complexity 
of $\widetilde{\mathcal{O}}(\epsilon^{-1})$.

\begin{remark}[Failure of One-sided Lipschitz]
Pick any index $i$ and let $|t|=|y_i-x_i^\top w|>\sigma$,
then $\phi''(t)<0$. 
So for suitable data and $w$, 
the matrix $\frac{1}{n}\sum_i \phi''(t_i)x_i x_i^\top$ 
becomes negative in some direction.
 If $\lambda_0$ is small enough, 
 $\lambda_{\min}\left(\nabla^2 f(w)\right)<0$ at such $w$. 
 Therefore $\inf_{w}\lambda_{\min}\left(\nabla^2 f(w)\right)\leq 0$, 
there is no $m>0$ such that 
$\langle w-v,\nabla f(w)-\nabla f(v)\rangle\geq m\|w-v\|^2$ 
holds for all $w,v\in\mathbb{R}^d$.
This implies that in this example, 
$f$ does not satisfy the one-sided Lipschitz condition, 
and is far from being strongly convex.
\end{remark}

\subsection{Cosine-Modulated Quadratic Well}
Consider the potential
\begin{align*}
   f(x)=\frac{1}{2}\|x\|^2+\lambda_0\sum_{i=1}^d 
\cos\left(\frac{x_i}{\sqrt{d}}\right), 
\end{align*}
where $\lambda_0>0$ controls the strength of the periodic perturbation.
This type of potential function arises in various contexts, such as
Bayesian posterior landscapes with oscillatory likelihood terms,
periodic physical systems like spin models or molecular potentials,
and synthetic benchmarks used to study the mixing behavior of
Langevin dynamics in multi-well energy landscapes\cite{Wu2025,jones2025differentiableneuralnetworkrepresentation}.

For the invariant measure $\pi(\mathrm{d}x)\propto e^{-f(x)}\mathrm{d}x,$
we are interested in estimating $\E_{\pi}[\varphi]$ for several $\varphi$.
For example, taking $\varphi(x)=x_j$ allows us to examine the mean position of the $j$-th coordinate, 
which reflects possible asymmetry among the wells of the potential. 
Moreover, $\varphi(x)=\mathbf{1}_{\{\|x\|\leq R\}}$ enables us to quantify the probability that the 
diffusion remains within a neighborhood of the origin, 
providing insight into its concentration and confinement properties.

\begin{proposition}[Dissipativity]
    For all $x\in\mathbb{R}^d$, we have
\begin{align*}
    \langle x,\nabla f(x)\rangle
         \geq \frac{1}{2}\|x\|^2 - \frac{1}{2}\lambda_0^2d.
\end{align*}
\end{proposition}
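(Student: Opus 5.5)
The plan is a direct computation of $\nabla f$, followed by a coordinatewise Young inequality that absorbs the bounded oscillatory part into half of the quadratic part.

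\textbf{Step 1 (gradient and inner product).} Differentiating coordinatewise gives
$$\partial_i f(x)=x_i-\frac{\lambda_0}{\sqrt d}\sin\!\left(\frac{x_i}{\sqrt d}\right).$$
Therefore
$$\langle x,\nabla f(x)\rangle=\|x\|^2-\frac{\lambda_0}{\sqrt d}\sum_{i=1}^d x_i\sin\!\left(\frac{x_i}{\sqrt d}\right).$$

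\textbf{Step 2 (bounding the perturbation).} I use $|\sin(\cdot)|\le 1$ and $d\ge 1$, so that $1/\sqrt d\le 1$. This gives
$$\left|\frac{\lambda_0}{\sqrt d}\,x_i\sin\!\left(\tfrac{x_i}{\sqrt d}\right)\right|\le \lambda_0|x_i|.$$
I then apply Young's inequality to each coordinate:
$$\lambda_0|x_i|\le \tfrac12 x_i^2+\tfrac12\lambda_0^2 .$$

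\textbf{Step 3 (summation).} Summing Step 2 over $i=1,\dots,d$ gives
$$\frac{\lambda_0}{\sqrt d}\sum_i x_i\sin\!\left(\tfrac{x_i}{\sqrt d}\right)\le \tfrac12\|x\|^2+\tfrac12\lambda_0^2 d .$$
Substituting this into Step 1 yields $\langle x,\nabla f(x)\rangle\ge \tfrac12\|x\|^2-\tfrac12\lambda_0^2 d$, which is exactly the claimed bound. In the notation of Definition~\ref{def:dissipative}, the constants are $\alpha=\tfrac12$ and $\beta=\tfrac12\lambda_0^2 d$.

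There is no real obstacle; the only choice is how crudely to bound the perturbation. A sharper alternative keeps the factor $1/\sqrt d$ and applies Young as $\frac{\lambda_0}{\sqrt d}|x_i|\le \tfrac12x_i^2+\frac{\lambda_0^2}{2d}$. This gives the $d$-independent constant $\tfrac12\lambda_0^2$. Since $\tfrac12\lambda_0^2\le\tfrac12\lambda_0^2 d$, that version also implies the statement. I would present the cruder route because it reproduces the stated constant directly.
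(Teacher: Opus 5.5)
Your proof is correct and follows the paper's argument: compute $\nabla f$, bound $|\sin|\le 1$, and absorb the linear perturbation into half of $\|x\|^2$ via Young's inequality. The only cosmetic difference is that the paper first bounds the perturbation by $\lambda_0\sqrt{d}\,\|x\|$ and applies Young once at the norm level, whereas you apply it coordinatewise; both give the constant $\tfrac12\lambda_0^2 d$, and your remark that keeping the factor $1/\sqrt{d}$ gives the $d$-independent constant $\tfrac12\lambda_0^2$ is also correct.
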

\begin{proof}
    We can first compute that
\begin{align*}
     \nabla f(x)
    &= x - \frac{\lambda_0}{\sqrt{d}}
       \left(\sin\left(\frac{x_1}{\sqrt{d}}\right),\ldots,\sin\left(\frac{x_d}{\sqrt{d}}\right)\right).
\end{align*}
Hence
    \begin{align*}
          \langle x,\nabla f(x)\rangle
      = \|x\|^2 - \frac{\lambda_0}{\sqrt{d}}
        \sum_{i=1}^d x_i\sin\left(\frac{x_i}{\sqrt{d}}\right)
      \geq \|x\|^2 - \lambda_0\sqrt{d}\|x\|.
    \end{align*}
    Using  Young's inequality 
    $\lambda_0{\sqrt{d}}\|x\|\leq \frac{1}{2}\|x\|^2 + \frac{1}{2}\lambda_0^2d$, we have
    \begin{align*}
        \langle x,\nabla f(x)\rangle
             \geq \frac{1}{2}\|x\|^2 - \frac{1}{2}\lambda_0^2d.
    \end{align*}
    So $f$ is dissipative with constants $a=\frac{1}{2}v$ and $b=\frac{1}{2}\lambda_0^2d$.
\end{proof}

\begin{proposition}[$L$-Smoothness]
    For any $x,y\in\mathbb{R}^d$, we have
\begin{align*}
      \|\nabla f(x)-\nabla f(y)\|\leq (1+\lambda_0)\|x-y\|.
\end{align*}
\end{proposition}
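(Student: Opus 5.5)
The plan is to bound the operator norm of the Hessian of $f$ uniformly. The $L$-smoothness bound then follows from the mean value theorem in its integral form, exactly as in the earlier examples.

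First, compute the Hessian explicitly. Since the perturbation $\lambda_0\sum_i \cos(x_i/\sqrt{d})$ is separable, the Hessian is diagonal:
\begin{align*}
\nabla^2 f(x) = I - \frac{\lambda_0}{d}\,\mathrm{diag}\left(\cos\left(\frac{x_1}{\sqrt{d}}\right),\ldots,\cos\left(\frac{x_d}{\sqrt{d}}\right)\right).
\end{align*}
Each diagonal entry is $1-\frac{\lambda_0}{d}\cos(x_i/\sqrt{d})$, which lies in $[1-\lambda_0/d,\,1+\lambda_0/d]$. Hence $\|\nabla^2 f(x)\|\le 1+\lambda_0/d\le 1+\lambda_0$ for every $x$, using $d\ge 1$. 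This already gives a slightly sharper constant than the one claimed.

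Next, write
\begin{align*}
\nabla f(x)-\nabla f(y)=\int_0^1 \nabla^2 f\bigl(y+t(x-y)\bigr)(x-y)\,\mathrm{d}t,
\end{align*}
and bound the norm by $\sup_z\|\nabla^2 f(z)\|\,\|x-y\|\le(1+\lambda_0)\|x-y\|$.

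A gradient-only route avoids second derivatives altogether. Write $\nabla f(x)-\nabla f(y)=(x-y)-\frac{\lambda_0}{\sqrt d}\bigl(\sin(x_i/\sqrt d)-\sin(y_i/\sqrt d)\bigr)_i$. Since $\sin$ is $1$-Lipschitz, each coordinate of the second term has absolute value at most $\frac{\lambda_0}{d}|x_i-y_i|$. The triangle inequality then gives the bound $(1+\lambda_0/d)\|x-y\|$.

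There is no real obstacle here. The only point to get right is the scaling factor: differentiating $\cos(x_i/\sqrt d)$ twice produces a factor $1/d$, and it is easy to carry $1/\sqrt d$ instead. Either scaling is dominated by $1+\lambda_0$ when $d\ge1$.
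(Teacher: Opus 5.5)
Your proof is correct and follows the paper's route: bound $\|\nabla^2 f(x)\|$ uniformly, then apply the mean value theorem. Your diagonal Hessian $\nabla^2 f(x)=I-\frac{\lambda_0}{d}\mathrm{diag}(\cos(x_i/\sqrt{d}))$ is the correct form of the paper's display, which writes $x_ix_i^\top$ where $e_ie_i^\top$ is meant, and it gives the sharper constant $1+\lambda_0/d$; your gradient-only argument via the $1$-Lipschitz property of $\sin$ is an equally valid, slightly more elementary alternative.
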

\begin{proof}
 Since
\begin{align*}
     \nabla^2 f(x)
    &= I_d - \frac{\lambda_0}{d}\sum_{i=1}^d\cos\left(\frac{x_i}{\sqrt{d}}\right)x_ix_i^\top,
\end{align*}
we have $\|\nabla^2 f(x)\|\leq 1+\lambda_0$.
So using  mean value theorem, we know that $f$ is $L$-smooth.
\end{proof}

\begin{proposition}[$L$-Hessian-Smoothness]
For any $x,y\in\mathbb{R}^d$, we have
\begin{align*}
    \|\nabla^2 f(x)-\nabla^2 f(y)\|\leq \frac{ \lambda_0}{\sqrt{d}}\|x-y\|.
\end{align*}
\end{proposition}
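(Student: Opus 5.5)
The plan is to compute the Hessian of $f$ explicitly, observe that it is a constant matrix plus a diagonal matrix, and bound the operator norm of the difference of two diagonal matrices by its largest diagonal entry. No mean value theorem on the third-derivative tensor is needed.

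First I differentiate the gradient computed in the dissipativity proposition, $\partial_i f(x)=x_i-\frac{\lambda_0}{\sqrt d}\sin\left(\frac{x_i}{\sqrt d}\right)$, once more. Since the $i$-th component depends only on $x_i$, the Hessian is
\begin{align*}
\nabla^2 f(x)=I_d-\frac{\lambda_0}{d}\,\mathrm{diag}\left(\cos\left(\frac{x_1}{\sqrt d}\right),\ldots,\cos\left(\frac{x_d}{\sqrt d}\right)\right).
\end{align*}
This is the diagonal matrix with entries $\cos(x_i/\sqrt d)$. The expression written in the $L$-smoothness proof, with $x_ix_i^\top$, should be read with $e_ie_i^\top$ in its place.

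Consequently, for any $x,y\in\mathbb{R}^d$ the identity cancels and
\begin{align*}
\nabla^2 f(x)-\nabla^2 f(y)=\frac{\lambda_0}{d}\,\mathrm{diag}\left(\cos\left(\frac{y_i}{\sqrt d}\right)-\cos\left(\frac{x_i}{\sqrt d}\right)\right)_{i=1}^d .
\end{align*}
The operator norm of a diagonal matrix equals the maximum absolute value of its diagonal entries. Because $\cos$ is $1$-Lipschitz, each entry is bounded by
\begin{align*}
\frac{\lambda_0}{d}\cdot\frac{|x_i-y_i|}{\sqrt d}\leq \frac{\lambda_0}{d^{3/2}}\|x-y\|.
\end{align*}
Hence $\|\nabla^2 f(x)-\nabla^2 f(y)\|\leq \lambda_0 d^{-3/2}\|x-y\|\leq \frac{\lambda_0}{\sqrt d}\|x-y\|$, using $d\geq 1$. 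This gives the stated constant, with room to spare.

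There is no genuine obstacle in this argument. The only point requiring care is to get the Hessian right as a diagonal matrix rather than the rank-one form that appears in the smoothness proof. Once the Hessian is diagonal, the coordinatewise Lipschitz bound on $\cos$ finishes the proof immediately.
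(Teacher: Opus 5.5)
Your proof is correct, and it takes a slightly different route from the paper's. The paper differentiates once more and writes $\nabla^3 f(x)=\frac{\lambda_0}{d^{3/2}}\sum_{i}\sin(x_i/\sqrt d)\,x_i\otimes x_ix_i^\top$. This formula has the same $x_i$-for-$e_i$ slip you caught in the Hessian; read literally, the tensor would grow like $\|x\|^3$. The paper then asserts $\|\nabla^3 f\|\le\lambda_0/\sqrt d$ and implicitly uses a mean value argument to turn this into a Lipschitz bound on $\nabla^2 f$. The constant $\lambda_0/\sqrt d$ is what the triangle inequality gives over the $d$ summands, each of norm at most $\lambda_0 d^{-3/2}$, without using that they act on orthogonal coordinates.

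You instead work directly with $\nabla^2 f(x)-\nabla^2 f(y)$. You use that this matrix is diagonal, so its operator norm is its largest entry, and apply the $1$-Lipschitz property of $\cos$ coordinatewise. This avoids third-order tensors and any mean value theorem for matrix-valued maps. It also yields the sharper constant $\lambda_0 d^{-3/2}$, of which the stated $\lambda_0/\sqrt d$ is a weakening. The paper's third-derivative route is the generic template it reuses for potentials whose Hessian is not diagonal, such as the correntropy and radial examples. Your argument exploits the separable structure specific to this potential, which is why it is both simpler and tighter here.
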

\begin{proof}
Differentiating once more gives
\begin{align*}
    \nabla^3 f(x)
  = \frac{\lambda_0}{d^{\frac{3}{2}}}\sum_{i=1}^d
  \sin\left(\frac{x_i}{\sqrt{d}}\right)
  x_i\otimes x_i x_i^\top,
\end{align*}
and thus $\|\nabla^3 f(x)\|\leq\frac{ \lambda_0}{\sqrt{d}}$,
so the Hessian is also smooth with constant $L=\frac{ \lambda_0}{\sqrt{d}}$.
\end{proof}

Similarly, applying Theorem~\ref{thm:unbiased A-MLMC under weaker one-sided Lipschitz}, 
the desired quantity can be estimated unbiasedly with a  expected complexity 
of $\widetilde{\mathcal{O}}(\epsilon^{-1})$.

\begin{remark}
    The eigenvalues of $\nabla^2 f(x)$ are
    \begin{align*}
        \lambda_i(x)=1-\frac{\lambda_0}{d}\cos\left(\frac{x_i}{\sqrt{d}}\right),
    \end{align*}
which oscillate between $1-\frac{\lambda_0}{d}$ and $1+\frac{\lambda_0}{d}$.
In particular, if $\lambda_0>d$, some $\lambda_i(x)$ become negative,
so $\nabla^2 f(x)$ is indefinite,
 $f$ fails the one-sided Lipschitz inequality
\begin{align*}
      \langle x-y,\nabla f(x)-\nabla f(y)\rangle
  \geq m\|x-y\|^2
\end{align*}
for some $m>0$.
\end{remark}

\subsection{Truncated Loss and Capped Payoffs in Finance}

Consider the heavy-tailed distribution $\pi$ given by the Student-$t$ model defined in \eqref{eq:student-t}. 
As discussed in Subsubsection~\ref{subsub:student-t}, in particular Theorem~\ref{thm:QA-MLMC of heavy-tailed student-t}, 
for any globally Lipschitz function $\varphi$, the expectation $\mathbb{E}_{\pi}[\varphi]$ can be estimated 
up to an additive error $\epsilon$ with complexity $\widetilde{\mathcal{O}}(\epsilon^{-1})$.

A representative example of such a function $\varphi$ in finance is the capped call option payoff, given by
$$
\varphi(x)=\min\{(\ell(x))^+,M\},
$$
where $\ell:\mathbb{R}^d\to\mathbb{R}$ is a Lipschitz loss function 
and $M>0$ is a fixed cap.

Such functionals arise naturally in several areas of finance. 
In risk management and insurance, $\ell(x)$ represents a loss 
and $\min\{(\ell(x))^+,M\}$ models truncated losses, 
reflecting capped exposure to extreme events. 
In derivative pricing, the same structure corresponds to the payoff of a capped call option,
 $\min\{(S_T-K)^+,L\}$, where $\ell(x)=S_T-K$ and $M=L$. 
Similar forms also appear in credit risk when modeling exposure-at-default,
 where losses are effectively bounded by contractual limits \cite{McNeil2005QuantitativeRM, Glasserman2003MonteCM}.

\section{Summary and Discussion}\label{sec:summary}

 In this paper, we refine existing quantum-accelerated multilevel Monte Carlo (QA-MLMC) methods 
by establishing explicit bounds on their mean-squared error (MSE) 
and reducing their dependence on the dimension in high-dimensional settings. 
In particular, our improved approach enables 
unbiased estimation in a broad range of scenarios.

The main contribution of this work is the development of several quantum algorithms 
for estimating Gibbs expectations under weakened assumptions, requiring only dissipativity. 
Compared with classical methods, these algorithms achieve a quadratic speedup, reducing the 
complexity from $\widetilde{\mathcal{O}}(\epsilon^{-2})$ to 
$\widetilde{\mathcal{O}}(\epsilon^{-1})$. Moreover, our framework provides unbiased estimators 
for $\E_\pi[\varphi]$, in contrast to existing quantum algorithms that typically yield biased 
estimators under stronger assumptions such as  one-sided Lipschitz conditions.

Furthermore, we extend our framework to heavy-tailed distributions. Through a careful design of 
transformation functions, we show that heavy-tailed distributions
 can be estimated as efficiently as light-tailed ones.
 We also present a range of examples arising from statistics, machine learning, 
financial modeling, and related areas, demonstrating the broad 
practical applicability of our methods.

This work shows how multilevel Monte Carlo methods can be systematically 
incorporated into quantum algorithms,  extending the range of 
techniques available for Gibbs expectation estimation. By combining 
multilevel Monte Carlo, unbiased estimation, and quantum mean estimation, 
we also provide further insight into the potential of quantum algorithms 
for stochastic processes and expectation estimation problems.

\medskip
\paragraph{\textbf{\textsf{Dependence on the dimension}}}
This work provides a detailed analysis of the dependence of the complexity on the dimension $d$ of the SDE. 
Under the assumption that $\|\widehat{Y}^c_0\|$, $\|\widehat{Y}^f_0\|$, and $\|\nabla f(0)\|$ are $\mathcal{O}(1)$, 
the resulting complexity is $\mathcal{O}(d)$ in the one-sided Lipschitz setting. 
In contrast, in the dissipative setting, the change-of-measure argument introduces exponential terms, 
yielding an overall complexity of $\mathcal{O}(d^{3/2})$.

Although this assumption is relatively strong, we emphasize that all our results remain valid when these quantities scale as $\mathcal{O}(d)$ or at other rates;  
in such cases, the dependence on $d$ increases accordingly. 
Further details can be found in Appendix~\hyperref[Appendix A: Strong Error Analysis]{A} and
how to further reduce the dimension dependence remains an interesting direction for future research.

On the other hand, it is worth noting that all arguments  in Appendix~\hyperref[Appendix A: Strong Error Analysis]{A}
apply equally to classical multilevel methods for SDEs. 
Compared with classical methods, our approach reduces the dependence on $d$ in two main aspects. 
First, the cost of simulating the SDE decreases from $\mathcal{O}(d)$ per step to $\mathcal{O}(1)$ oracle queries, 
since each classical update involves all $d$ coordinates. 
Second, if the multilevel variance scales as $\mathrm{Var}=\mathcal{O}(C(d))$, 
then its contribution to the overall complexity is $\mathcal{O}(C(d))$ in the classical setting, 
whereas in the QA-MLMC framework it becomes $\mathcal{O}(\sqrt{C(d)})$, 
yielding a quadratic improvement in the variance-dependent term.

Finally, we emphasize that it is important to distinguish between the two notions of dimension. 
In Theorem~\ref{thm:QMLMC2}, the dimension $r$ refers to that of the target random variable, 
which is different from the dimension $d$ of the underlying SDE considered here. 
In the Gibbs expectation problem studied in this work, 
the observable $\varphi$ is assumed to be scalar-valued, i.e., $r=1$.

\medskip
\paragraph{\textbf{\textsf{Limitations and future directions}}}
However, several limitations remain.
First, our analysis relies on idealized quantum oracles and does not explicitly account for 
implementation costs on quantum hardware. 
Second, the current framework still relies on certain structural assumptions, 
such as dissipativity and, in some cases, $L$-Hessian smoothness, 
which may not be satisfied in more general or highly irregular models.
Third, while our framework can handle certain heavy-tailed distributions via suitable transformations, 
the current construction does not cover all heavy-tailed settings.

These limitations suggest several directions for future work. 
It would be of interest to develop more implementation-aware quantum algorithms, 
further relax the structural assumptions, and extend the framework to a broader class of heavy-tailed distributions. 
Moreover, it would also be of interest to generalize the framework to more complex stochastic systems, 
such as solving stochastic partial differential equations (SPDEs) and stochastic optimization problems.

\section*{Acknowledgments}
JPL acknowledges support from Quantum Science and Technology--National Science and Technology Major Project under Grant No.~2024ZD0300500, Excellent Young Scientists Fund Program, start-up funding from Tsinghua University and Beijing Institute of Mathematical Sciences and Applications.

\bibliographystyle{abbrv}
\bibliography{refs} \label{bib}

\section*{Appendix A: Strong Error Analysis}
\label{Appendix A: Strong Error Analysis}
\addcontentsline{toc}{section}{Appendix A: Strong Error Analysis}
In the following, we assume that 
$f$ is $L$-smooth, $L$-Hessian smooth, 
dissipative with
$$
\langle x,\nabla f(x)\rangle \geq \tilde\alpha \|x\|^2-\tilde\beta,
$$
and satisfies a weaker one-sided Lipschitz condition 
with constant $\lambda$. 
In the following, we assume that $\|\widehat{Y}^c_0\|$, $\|\widehat{Y}^f_0\|$, and $\|\nabla f(0)\|$ are $\mathcal{O}(1)$.
Under these assumptions, we prove several properties of the modified SDE 
introduced in Subsection~\ref{subsec:Quantum-Accelerated Multilevel Monte Carlo with Change of Measure}.

The proof below is largely based on \cite{Fang2018MultilevelMC}, 
while we further supplement it with a detailed complexity analysis regarding
the parameters of interest, such as $L$ and $d$.

For simplicity, write
$$a(h)\lesssim b(h)$$
if there exists $h_0>0$ such that $a(h)\leq b(h)$ for all $0<h\leq h_0$, 
where $h_0$ may depend on deterministic constants such as 
$S,L,\tilde\alpha,\tilde\beta,\|\nabla f(0)\|$, but not on the Brownian samples.

On the other hand, 
the proofs of the following lemmas will involve many constants 
that are independent of the step size $h$ and the Brownian motion. 
To distinguish between different types of constants,
 we adopt the following notation.

We use $C_i$ to denote constants that are independent of all model parameters, 
including $L, S, \lambda, \tilde{\alpha}, m$, 
as well as the dimension $d$ and the evolution time $T$. 
In contrast, we use $C_{(i)}$ and $h_{(i)}$ to denote quantities that may depend on these parameters. 
We will also explicitly specify how $C_{(i)}$ and $h_{(i)}$ depend on the above quantities.
\begin{lemma}\label{lem:bounds of p moment}
For $p\geq 1$, 
there exist constants $C_1, C_2>0$  such that
\begin{align*}
\mathbb E\left[
\sup_{0\leq n\leq 2N}
\left(\|\widehat Y^f_{n}\|^2+\|\widehat Y^c_{n}\|^2\right)^{p/2}\right]
\leq C_2^p \tilde{\alpha}^{-p/2} d^{p/2}p^{p/2}.
\end{align*}
provided that $h$ is sufficiently small and satisfies
$$
h \leq C_1 \min\left\{S^{-1},\tilde{\alpha} L^{-2},\tilde{\alpha} \|\nabla f(0)\|^{-2},\tilde{\alpha}^{-1}\right\}.
$$
\end{lemma}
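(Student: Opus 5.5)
We will run a discrete Lyapunov-function argument on the joint quantity $Z_n=\|\widehat Y^f_n\|^2+\|\widehat Y^c_n\|^2$, using the dissipativity of $f$ and treating the spring terms as harmless because they are antisymmetric between the two paths. We carry this out on the fine time grid with step $h$. The coarse path is written as two half-steps whose drift and spring are frozen at even indices. Since $\widehat Y^c_{2n+1}$ is an intermediate point that is not needed for the final bound, we control $Z$ only at even indices $2n$ and then bound the odd indices by one extra step.

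\textbf{Step 1: a one-step inequality.} Expand $\|\widehat Y_{n+1}\|^2$ for each path:
\[
\|\widehat Y_{n+1}\|^2=\|\widehat Y_n\|^2+2h\langle \widehat Y_n,\,\text{drift}\rangle+h^2\|\text{drift}\|^2+2\sqrt2\langle \widehat Y_n+h\,\text{drift},\Delta W_n\rangle+2\|\Delta W_n\|^2 .
\]
\emph{Drift term.} Dissipativity gives $\langle y,-\nabla f(y)\rangle\le -\tilde\alpha\|y\|^2+\tilde\beta$. Summing the spring contributions over both paths gives $-S\|Y^f-Y^c\|^2\le 0$.
\emph{Quadratic term.} Bound $h^2\|\text{drift}\|^2$ by $h^2\big(C S^2\|Y^f-Y^c\|^2+2L^2\|y\|^2+2\|\nabla f(0)\|^2\big)$. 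The first piece is absorbed by the $-2hS\|Y^f-Y^c\|^2$ term when $h\le C_1/S$. The $L^2$ piece is absorbed into $-\tilde\alpha h\|y\|^2$ when $h\le C_1\tilde\alpha/L^2$.
\emph{Coarse path.} Its drift is evaluated at the earlier point $\widehat Y^c_{2n}$. We handle this by performing the expansion over the full coarse step $2h$, comparing $Z_{2n+2}$ with $Z_{2n}$ directly, so that both paths use data at index $2n$ and the fine path's intermediate drift is expanded around $\widehat Y^f_{2n}$.
\emph{Result.} We aim for
\[
Z_{2n+2}\le (1-\tilde\alpha h)Z_{2n}+C h(\tilde\beta+\|\nabla f(0)\|^2+d)+M_n+2\big(\|\Delta W_{2n}\|^2+\|\Delta W_{2n+1}\|^2-2hd\big),
\]
where $M_n$ is a martingale increment linear in the Brownian increments, with coefficient $O(\sqrt{Z_{2n}}+h\sqrt{Z_{2n}})$. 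The condition $h\le C_1\min\{\tilde\alpha\|\nabla f(0)\|^{-2},\tilde\alpha^{-1}\}$ keeps $1-\tilde\alpha h\in(0,1)$ and lets constants be merged.

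\textbf{Step 2: moments of the supremum.} Iterating the inequality yields
\[
Z_{2n}\le Z_0+C\tilde\alpha^{-1}(\tilde\beta+1+d)+\sup_k\Big|\sum_{j<k}(1-\tilde\alpha h)^{k-j}M_j\Big|+(\text{centered }\chi^2\text{ sums}).
\]
The sum here is not a martingale in $k$ because of the discount factor. We remove the discount by working with $e^{\tilde\alpha h n}Z_{2n}$, or equivalently with a stopped form.

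We then apply the Burkholder–Davis–Gundy inequality with the $p/2$ power. The quadratic variation of the martingale part is $\sum_j h\,Z_{2j}$ weighted by the discount. Combining BDG with Young's inequality gives $\mathbb E[\sup Z^{p/2}]\le \tfrac12\mathbb E[\sup Z^{p/2}]+(\cdots)$, and the right side is absorbed after a localization argument. The centered $\chi^2$ terms have $p/2$-th moment of order $(Cpd\,h\cdot\#\text{steps})^{p/2}$ after discounting, that is $(Cpd/\tilde\alpha)^{p/2}$. This is the origin of the $d^{p/2}p^{p/2}$ factor, which is the optimal BDG/Rosenthal constant growth $\sqrt p$. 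Using $\|\widehat Y_0\|,\|\nabla f(0)\|=\mathcal O(1)$ and taking $\tilde\beta$ as $\mathcal O(d)$, all terms fit into $C_2^p\tilde\alpha^{-p/2}d^{p/2}p^{p/2}$. The odd indices are bounded from the even ones by a single step, which costs only a constant factor under the step-size constraints.

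\textbf{Main obstacle.} The delicate part is making the supremum bound uniform in $N$. A naive Gronwall argument gives growth in $T$. The discounted-martingale device together with BDG must be arranged so that the $\sup$ does not pick up a $\log N$ or $T$ factor. If that fails, the first fallback is to use the exponential martingale $\exp(\theta e^{\tilde\alpha t}Z)$ bound of Mattingly type to obtain uniform-in-time tail estimates for the supremum, and then convert these to moments with the $p^{p/2}$ growth.
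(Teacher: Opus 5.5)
Your route is the paper's route: a one-step dissipativity inequality for $Z_n=\|\widehat Y^f_n\|^2+\|\widehat Y^c_n\|^2$ (with the spring terms cancelling in the sum), removal of the discount by multiplying by $e^{\gamma hn}$ with $\gamma\asymp\tilde\alpha$, BDG for the martingale part, and absorption via Young's inequality.

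The genuine gap is exactly the step you call the main obstacle. It cannot be closed by any arrangement of BDG, nor by the Mattingly-type fallback, because the bound with the supremum inside the expectation and $C_2$ independent of $T$ is false. Take $d=1$, $f(x)=\tfrac12x^2$ (so $\tilde\alpha=L=1$, $\nabla f(0)=0$, $\lambda=0$) and $\widehat Y^f_0=\widehat Y^c_0=0$. Then $(\widehat Y^f_{2n},\widehat Y^c_{2n})$ is a stable linear Gaussian AR(1) chain. Its sum has stationary variance of order one and geometrically decaying correlations. Hence $\mathbb{E}[\max_{n\le 2N}Z_n]\gtrsim\log T$ as $T=2Nh\to\infty$, the discrete analogue of $\mathbb{E}[\sup_{t\le T}X_t^2]\asymp\log T$ for Ornstein--Uhlenbeck.

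What the discounted argument actually delivers is a bound $\mathbb{E}[\sup_{m\le n}(e^{\gamma hm}Z_{2m})^{p/2}]\le(\cdots)\,e^{p\gamma hn/2}$. Dividing by $e^{p\gamma hn/2}$ controls $\mathbb{E}[Z_{2n}^{p/2}]$ for each fixed $n$. It does not control $\mathbb{E}[\sup_m Z_{2m}^{p/2}]$: for $m<n$ the weight $e^{\gamma hm}$ is smaller and cannot be divided out inside the supremum. Likewise, uniform-in-time exponential moments $\sup_n\mathbb{E}[e^{\theta Z_n}]<\infty$ only give $\mathbb{E}[\max_{n\le 2N}Z_n]\lesssim\theta^{-1}\log N$. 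The paper's own proof has the same defect: it derives the estimate for a fixed $n$ and then moves the supremum inside the expectation with an unjustified ``Hence''.

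So the fix is to change the statement, not to sharpen the supremum estimate. Your argument proves $\sup_{0\le n\le 2N}\mathbb{E}[Z_n^{q/2}]\le C_2^q\tilde\alpha^{-q/2}d^{q/2}q^{q/2}$ for every $q\geq 1$. As far as I can tell, this fixed-index form is all the later lemmas use: the moments of $\nabla f(\widehat Y^f_{2k})$, $\widehat Y^f_{2k}$, $\widehat Y^c_{2k}$ and $\widehat Y^f_{2k+1}-\widehat Y^f_{2k}$ only enter after the expectation has been moved inside sums over $k$. If a supremum is really needed, the tracked $q^{q/2}$ growth gives it with the sharp logarithmic loss. For $q\geq p$,
\[
\mathbb{E}\Big[\max_{n\le 2N}Z_n^{p/2}\Big]\le\Big(\sum_{n=0}^{2N}\mathbb{E}\big[Z_n^{q/2}\big]\Big)^{p/q}\le(2N+1)^{p/q}\,C_2^p\tilde\alpha^{-p/2}d^{p/2}q^{p/2}.
\]
Choosing $q=p\max\{1,\log(2N+1)\}$ gives $e\,C_2^p\tilde\alpha^{-p/2}d^{p/2}p^{p/2}\big(1+\log(2N+1)\big)^{p/2}$, which is only polylogarithmic in $\epsilon^{-1}$ in the applications. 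Separately, the stated right-hand side tacitly requires $\tilde\beta\lesssim d$ and $\|\widehat Y_0\|^2\lesssim d/\tilde\alpha$. You made the first explicit; the second should be made explicit as well.
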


\begin{proof}
We prove the result for $p\geq 4$; the case $1\leq p<4$ then follows from H\"older's inequality.

When $t=0$, the two numerical paths start from the same initial value
$\widehat Y^f_{0}=\widehat Y^c_{0}=X_0.$

For odd time points,
\begin{align*}
  \widehat Y^c_{{2n+1}}=&
\widehat Y^c_{{2n}}+S\left(\widehat Y^f_{{2n}}-\widehat Y^c_{{2n}}\right)h
-\nabla f(\widehat Y^c_{{2n}})h+\sqrt2\,\Delta W_{2n},\\
\widehat Y^f_{{2n+1}}
=&\widehat Y^f_{{2n}}+S\left(\widehat Y^c_{{2n}}-\widehat Y^f_{{2n}}\right)h
-\nabla f(\widehat Y^f_{{2n}})h+\sqrt2\,\Delta W_{2n}.
\end{align*}
Squaring the coarse update,
\begin{align*}
\|\widehat Y^c_{{2n+1}}\|^2
=\left\|Sh\,\widehat Y^f_{{2n}}+(1-Sh)\left(\widehat Y^c_{{2n}}
+\frac{-\nabla f(\widehat Y^c_{{2n}})h+\sqrt2\,\Delta W_{2n}}{1-Sh}\right)
\right\|^2.
\end{align*}
Since $\|\cdot\|^2$ is convex and $0\leq Sh< 1$ for sufficiently small $h=\mathcal{O}(S^{-1})$,
\begin{align*}
\|\widehat Y^c_{{2n+1}}\|^2
\leq & Sh\|\widehat Y^f_{{2n}}\|^2+(1-Sh)\left\|\widehat Y^c_{{2n}}
+\frac{-\nabla f(\widehat Y^c_{{2n}})h+\sqrt2\,\Delta W_{2n}}{1-Sh}\right\|^2\\
=&Sh\|\widehat Y^f_{{2n}}\|^2
+(1-Sh)\|\widehat Y^c_{{2n}}\|^2
+\frac{1}{1-Sh}\|-\nabla f(\widehat Y^c_{{2n}})h+\sqrt2\Delta W_{2n}\|^2
+2\langle Y^c_{{2n}},-\nabla f(\widehat Y^c_{{2n}})h+\sqrt2\Delta W_{2n}\rangle\\
\lesssim&
Sh\|\widehat Y^f_{{2n}}\|^2
+(1-Sh)\|\widehat Y^c_{{2n}}\|^2
+4\|\nabla f(\widehat Y^c_{{2n}})\|^2h^2
+8\|\Delta W_{2n}\|^2
-2\langle \widehat Y^c_{{2n}},\nabla f(\widehat Y^c_{{2n}})\rangle h
+2\sqrt2\langle \widehat Y^c_{{2n}},\Delta W_{2n}\rangle.
\end{align*}
By $L$-smoothness, we have $\|\nabla f(x)\|\leq L\|x\|+\|\nabla f(0)\|$.
Hence for any fixed $\eta>0$,
\begin{align*}
\|\nabla f(x)\|^2 h^2\leq2L^2 h^2\|x\|^2+2\|\nabla f(0)\|^2h^2
\lesssim\eta h\|x\|^2+\eta h
\end{align*}
for $h=\mathcal{O}(\min\{\eta L^{-2},\eta \|\nabla f(0)\|^{-2}\})$.
On the other hand, by dissipativity, we have
\begin{align*}
    4\|\nabla f(x)\|^2 h^2-2\langle x,\nabla f(x)\rangle h
    \leq -2(\tilde{\alpha}-2\eta)\|x\|^2h+(2\tilde{\beta}+4\eta)h.
\end{align*}
Combining the above estimates, fix constants
$\alpha\in(0,\tilde\alpha)$ and $\beta\in(\tilde\beta,\infty)$
with $\alpha,\eta=\mathcal{O}(\tilde\alpha)$,
we have
\begin{align*}
\|\widehat Y^c_{{2n+1}}\|^2
\lesssim&
Sh\|\widehat Y^f_{{2n}}\|^2
+(1-Sh-2\alpha h)\|\widehat Y^c_{{2n}}\|^2 +8\|\Delta W_{2n}\|^2
+2\beta h
+2\sqrt2\,\langle \widehat Y^c_{{2n}},\Delta W_{2n}\rangle
\end{align*}
when $h=\mathcal{O}\left(\min\left\{S^{-1},\tilde{\alpha} L^{-2},\tilde{\alpha} \|\nabla f(0)\|^{-2}\right\}\right)$.
Similarly,
\begin{align*}
\|\widehat Y^f_{{2n+1}}\|^2\lesssim&
Sh\|\widehat Y^c_{{2n}}\|^2+(1-Sh-2\alpha h)\|\widehat Y^f_{{2n}}\|^2 +8\|\Delta W_{2n}\|^2
+2\beta h+2\sqrt2\,\langle \widehat Y^f_{{2n}},\Delta W_{2n}\rangle.
\end{align*}

For even time points,
\begin{align*}
\widehat Y^c_{{2n+2}}=&\widehat Y^c_{{2n}}
+2S\left(\widehat Y^f_{{2n}}-\widehat Y^c_{{2n}}\right)h
-2\nabla f(\widehat Y^c_{{2n}})h+\sqrt2(\Delta W_{2n}+\Delta W_{2n+1}),\\
\widehat Y^f_{{2n+2}}=&\widehat Y^f_{{2n+1}}
+S\left(\widehat Y^c_{{2n+1}}-\widehat Y^f_{{2n+1}}\right)h
-\nabla f(\widehat Y^f_{{2n+1}})h+\sqrt2\,\Delta W_{2n+1}.
\end{align*}
Similarly, we have
\begin{align*}
\|\widehat Y^c_{{2n+2}}\|^2\lesssim&2Sh\|\widehat Y^f_{{2n}}\|^2
+(1-2Sh-4\alpha h)\|\widehat Y^c_{{2n}}\|^2 +8\|\Delta W_{2n}+\Delta W_{2n+1}\|^2
+4\beta h +2\sqrt2\,\langle \widehat Y^c_{{2n}},\Delta W_{2n}+\Delta W_{2n+1}\rangle,\\
\|\widehat Y^f_{{2n+2}}\|^2\lesssim&Sh\|\widehat Y^c_{{2n+1}}\|^2
+(1-Sh-2\alpha h)\|\widehat Y^f_{{2n+1}}\|^2 +8\|\Delta W_{2n+1}\|^2
+2\beta h+2\sqrt2\,\langle \widehat Y^f_{{2n+1}},\Delta W_{2n+1}\rangle.
\end{align*}
It implies that
\begin{align*}
\|\widehat Y^f_{{2n+2}}\|^2
\lesssim{}\,&
(1-2Sh-4\alpha h +\mathcal{O}(h^2))\|\widehat Y^f_{{2n}}\|^2
+(2Sh+\mathcal{O}(h^2))\|\widehat Y^c_{{2n}}\|^2
+8(1+Sh)\|\Delta W_{2n}\|^2+8\|\Delta W_{2n+1}\|^2\\
&\quad +4\beta h
+2\sqrt2 (1-Sh-2\alpha h)\langle \widehat Y^f_{{2n}},\Delta W_{2n}\rangle
+2\sqrt2 Sh \langle \widehat Y^c_{{2n}},\Delta W_{2n}\rangle
+2\sqrt2 \langle \widehat Y^f_{{2n+1}},\Delta W_{2n+1}\rangle+\mathcal{O}(h^2).
\end{align*}
So for any fixed $\gamma\in(0,\alpha)$ with $\gamma=\mathcal{O}({\alpha})=\mathcal{O}(\tilde{\alpha})$,
\begin{align*}
\|\widehat Y^c_{{2n+2}}\|^2+\|\widehat Y^f_{{2n+2}}\|^2
\lesssim&(1-4\gamma h)\left(\|\widehat Y^c_{{2n}}\|^2+\|\widehat Y^f_{{2n}}\|^2\right) 
+24\left(\|\Delta W_{2n}\|^2+\|\Delta W_{2n+1}\|^2\right)
+8\beta h \\
&\quad+2\sqrt{2}\langle (1+Sh)\widehat Y^c_{{2n}}
+(1-Sh-2\alpha h)\widehat Y^f_{{2n}},\Delta W_{2n}\rangle
+2\sqrt{2}\langle \widehat Y^c_{{2n}}+\widehat Y^f_{{2n+1}},\Delta W_{2n+1}\rangle.
\end{align*}

Let $e^{-4\gamma h}\phi_{{2n}} = (1+Sh)\widehat Y^c_{{2n}} + (1-Sh-2\alpha h)\widehat Y^f_{{2n}}$, 
$e^{-2\gamma h}\phi_{{2n+1}} = \widehat Y^c_{{2n}}+\widehat Y^f_{{2n+1}}.$
Since $1-4\gamma h\leq e^{-4\gamma h}$ and $e^{4\gamma h}\lesssim \sqrt{2}$ for small $h=\mathcal{O}(\gamma^{-1})=\mathcal{O}(\tilde{\alpha}^{-1})$, 
multiplying both sides by $e^{4 (n+1)\gamma h}$ gives
\begin{align*}
e^{4 (n+1)\gamma h}\left(\|\widehat Y^f_{{2n+2}}\|^2+\|\widehat Y^c_{{2n+2}}\|^2\right)
\lesssim&e^{4n\gamma h}\left(\|\widehat Y^f_{{2n}}\|^2+\|\widehat Y^c_{{2n}}\|^2\right) 
+48e^{4n\gamma h}\left(\|\Delta W_{2n}\|^2+\|\Delta W_{2n+1}\|^2\right) \\
&\quad+16\beta e^{4n\gamma h}h +4e^{4n\gamma h}\langle \phi_{{2n}},\Delta W_{2n}\rangle
+4 e^{2(2n+1)\gamma h}\langle \phi_{{2n+1}},\Delta W_{2n+1}\rangle.
\end{align*}
Summing over even time points yields
\begin{align}\label{eq:suming Y^f+Y^c}
e^{4n\gamma h}
\left(\|\widehat Y^f_{{2n}}\|^2+\|\widehat Y^c_{{2n}}\|^2\right)
\lesssim&
\|\widehat Y^f_{0}\|^2+\|\widehat Y^c_{0}\|^2 
+48\sum_{k=0}^{2n-1}e^{2k\gamma h}\|\Delta W_k\|^2
+16\beta\sum_{k=0}^{n-1}e^{4k\gamma h}h 
+4\sum_{k=0}^{2n-1}e^{2k\gamma h}\langle \phi_{k},\Delta W_k\rangle.
\end{align}

For odd time points, we have
\begin{align*}
    \|\widehat Y^f_{{2n+1}}\|^2+\|\widehat Y^c_{{2n+1}}\|^2
\lesssim&
(1-2\alpha h)\left(\|\widehat Y^f_{{2n}}\|^2+\|\widehat Y^c_{{2n}}\|^2\right)
+16\|\Delta W_{2n}\|^2
+2\sqrt2\left(\langle \widehat Y^c_{{2n}},\Delta W_{2n}\rangle
+\langle \widehat Y^f_{{2n}},\Delta W_{2n}\rangle\right)+4\beta h.
\end{align*}
Using Young's inequality, there exist constants
$\alpha_1>1$ and $\beta_1>\max(1,\alpha_1\beta)$
with $\alpha_1,\beta_1=\mathcal{O}(1)$,
such that
\begin{align*}
\|\widehat Y^f_{{2n+1}}\|^2+\|\widehat Y^c_{{2n+1}}\|^2
\lesssim&(1-2\gamma h)
\left(\alpha_1\left(\|\widehat Y^f_{{2n}}\|^2+\|\widehat Y^c_{{2n}}\|^2
+24\|\Delta W_{2n}\|^2\right)+4\beta_1 h\right).
\end{align*}
Multiplying by $e^{2(2n+1)\gamma h}$ and using~\eqref{eq:suming Y^f+Y^c}, we obtain
\begin{align}\label{eq:odd upper bound of Yf+Yc}
e^{2(2n+1)\gamma h}
\left(\|\widehat Y^f_{{2n+1}}\|^2+\|\widehat Y^c_{{2n+1}}\|^2\right)
\lesssim &\alpha_1\left(\|\widehat Y^f_{0}\|^2+\|\widehat Y^c_{0}\|^2\right) 
+48\alpha_1\sum_{k=0}^{2n}e^{2k\gamma h}\|\Delta W_k\|^2 \nonumber\\ 
&\quad+16\beta_1\sum_{k=0}^{n}e^{4k\gamma h}h 
+4\alpha_1\sum_{k=0}^{2n-1}e^{2k\gamma h}\langle \phi_{k},\Delta W_k\rangle.
\end{align}

Fixing $n$, we combine~\eqref{eq:suming Y^f+Y^c} and~\eqref{eq:odd upper bound of Yf+Yc},
raise both sides to the power $p/2$, and take expectations to obtain
\begin{align*}
\mathbb E\left[e^{ p n\gamma h}
\left(\|\widehat Y^f_{n}\|^2+\|\widehat Y^c_{n}\|^2\right)^{p/2}\right]
\lesssim4^{p/2-1}(48\alpha_1\beta_1)^{p/2}(I_1+I_2+I_3+I_4),
\end{align*}
where
\begin{align*}
I_1=&\mathbb E\left[\left(\|\widehat Y^f_{0}\|^2+\|\widehat Y^c_{0}\|^2\right)^{p/2}
\right],\\
%=2^{p/2}\|x_0\|^p,\\
I_2=&\left(\sum_{k=0}^{n-1}e^{2k\gamma h}h\right)^{p/2},\\
I_3=&\mathbb E\left[\left|\sum_{k=0}^{n-1}e^{2k\gamma h}\|\Delta W_k\|^2
\right|^{p/2}\right],\\
I_4=&\mathbb E\left[\left|
\sum_{k=0}^{n-1}e^{2k\gamma h}\langle \phi_{k},\Delta W_k\rangle\right|^{p/2}
\right].
\end{align*}

We now estimate these terms one by one.

For $I_2$,
\begin{align*}
I_2\leq\left(\int_0^{nh} e^{2\gamma t}\d t\right)^{p/2}
\leq(2\gamma)^{-p/2}e^{pn\gamma h }.
\end{align*}

For $I_3$, we use the weighted Jensen inequality:
for nonnegative $a_k$ and any $r>1$,
\begin{align*}
\left|\sum_k a_k b_k\right|^r\leq\left(\sum_k a_k\right)^{r-1}
\sum_k a_k |b_k|^r.
\end{align*}
Applying this with
$a_k=e^{2k\gamma h}h$,$b_k=\frac{\|\Delta W_k\|^2}{h}$ and $r=\frac p2,$
we obtain
\begin{align*}
I_3=\mathbb E\left[\left|\sum_{k=0}^{n-1}e^{2k\gamma h}h\,\frac{\|\Delta W_k\|^2}{h}
\right|^{p/2}\right]
\leq\left(\sum_{k=0}^{n-1}e^{2k\gamma h}h\right)^{p/2-1}\sum_{k=0}^{n-1}e^{2k\gamma  h}h\,
\mathbb E\left[\frac{\|\Delta W_k\|^p}{h^{p/2}}\right].
\end{align*}
Since
\begin{align*}
\mathbb E\left[\frac{\|\Delta W_k\|^p}{h^{p/2}}\right]
\leq d^{p/2} p!!\leq d^{p/2}p^{p/2},
\end{align*}
we have
\begin{align*}
I_3\leq d^{p/2}p^{p/2}
\left(\sum_{k=0}^{n-1}e^{2k\gamma  h}h\right)^{p/2}
\leq d^{p/2}p^{p/2}(2\gamma)^{-p/2}e^{pn\gamma h}.
\end{align*}

For $I_4$, write the discrete martingale as an It\^o integral:
\begin{align*}
\sum_{k=0}^{n-1}e^{2k\gamma  h}\langle \phi_{k},\Delta W_k\rangle
=\int_0^{{n}}e^{2\gamma \lfloor s/h\rfloor h}
\langle \phi_{\lfloor s/h\rfloor h}\d W_s\rangle.
\end{align*}
Then by the Burkholder--Davis--Gundy inequality in~\cite{BARLOW1982198}, there exists a universal constant $C_{\mathrm{BDG}}$ such that
\begin{align*}
I_4&\leq\mathbb E\left[\sup_{0\leq t\leq nh}
\left|\int_0^t e^{2\gamma \lfloor s/h\rfloor h}
\langle \phi_{\lfloor s/h\rfloor h},\d W_s\rangle\right|^{p/2}\right] \\
&\leq(C_{\mathrm{BDG}}p)^{p/4}\,\mathbb E\left[\left(
\sum_{k=0}^{n-1}e^{4k\gamma  h}\|\phi_{k}\|^2 h\right)^{p/4}\right].
\end{align*}
Recall that
$e^{-4\gamma h}\phi_{{2k}} = (1+Sh)\widehat Y^c_{{2k}} + (1-Sh-2\alpha h)\widehat Y^f_{{2k}}$, 
$e^{-2\gamma h}\phi_{{2k+1}} = \widehat Y^c_{{2k}}+\widehat Y^f_{{2k+1}}.$
Using Young's inequality, for $h=\mathcal{O}(\min\{S^{-1},\tilde\alpha^{-1}\})$,
\begin{align*}
\|\phi_{{2k}}\|^2\lesssim 8\left(\|\widehat Y^c_{{2k}}\|^2+\|\widehat Y^f_{{2k}}\|^2\right),
\quad \|\phi_{{2k+1}}\|^2\lesssim
4\left(\|\widehat Y^c_{{2k}}\|^2+\|\widehat Y^f_{{2k+1}}\|^2\right).
\end{align*}
Therefore,
\begin{align*}
I_4\lesssim&(12C_{\mathrm{BDG}}p)^{p/4}
\left(\sum_{k=0}^{n-1}e^{2k\gamma  h}h\right)^{p/4-1} 
\mathbb E\left[\sum_{k=0}^{n-1}e^{2k\gamma  h}h\,
e^{pn\gamma h/2}
\left(\|\widehat Y^c_{n}\|^2+\|\widehat Y^f_{n}\|^2\right)^{p/4}\right].
\end{align*}
For $\zeta>0$, applying Young's inequality,
\begin{align*}
I_4\leq&\frac1{4\zeta}
\mathbb E\left[e^{pn\gamma  h}
\left(\|\widehat Y^f_{n}\|^2+\|\widehat Y^c_{n}\|^2\right)^{p/2}\right] 
+\zeta\left(\frac{6C_{\mathrm{BDG}}}{\gamma}\right)^{p/2}p^{p/2}e^{pn\gamma h}.
\end{align*}
Note carefully that this step does not create any new explicit power of $d$
and all explicit $d$-dependence enters through $I_3$ above.

Finally, combining the estimates for $I_1$--$I_4$,
\begin{align*}
\mathbb E\left[e^{ p n\gamma h}
\left(\|\widehat Y^f_{n}\|^2+\|\widehat Y^c_{n}\|^2\right)^{p/2}\right]
\lesssim4^{p/2-1}(48\alpha_1\beta_1)^{p/2}
\Bigg\{\mathbb E\left[\left(\|\widehat Y^f_{0}\|^2+\|\widehat Y^c_{0}\|^2\right)^{p/2}\right]
+(2\gamma)^{-p/2}e^{pn\gamma h}\\
+d^{p/2}p^{p/2}(2\gamma)^{-p/2}e^{pn\gamma h}+\frac1{4\zeta}
\mathbb E\left[e^{p\gamma  nh}
\left(\|\widehat Y^f_{n}\|^2+\|\widehat Y^c_{n}\|^2\right)^{p/2}\right] 
+\zeta\left(\frac{6C_{\mathrm{BDG}}}{\gamma}\right)^{p/2}p^{p/2}e^{pn\gamma h}
\Bigg\}.
\end{align*}
Choosing $\zeta=4^{p/2-1}(48\alpha_1\beta_1)^{p/2}$,
we obtain
\begin{align*}
\mathbb E\left[
e^{p \gamma nh}
\left(\|\widehat Y^f_{n}\|^2+\|\widehat Y^c_{n}\|^2\right)^{p/2}\right]
\leq C_2^p \tilde{\alpha}^{-p/2} d^{p/2}p^{p/2}e^{pn\gamma h},
\end{align*}
provided that $h$ is sufficiently small and satisfies
$$
h \leq C_1 \min\left\{S^{-1},\tilde{\alpha} L^{-2},\tilde{\alpha} \|\nabla f(0)\|^{-2},\tilde{\alpha}^{-1}\right\}
$$
for some constants $C_1, C_2>0$.
Hence,
\begin{align*}
\mathbb E\left[
\sup_{0\leq n\leq 2N}
\left(\|\widehat Y^f_{n}\|^2+\|\widehat Y^c_{n}\|^2\right)^{p/2}\right]
\leq C_2^p \tilde{\alpha}^{-p/2} d^{p/2}p^{p/2}.
\end{align*}
\end{proof}

\begin{corollary}\label{cor: bound of nabla f}
   For any $p\geq 1$ and $h \leq C_1 \min\left\{S^{-1},\tilde{\alpha} L^{-2},\tilde{\alpha} \|\nabla f(0)\|^{-2},\tilde{\alpha}^{-1}\right\}$,  
    \begin{align*}
  &\mathbb E\left[\sup_{0\leq n\leq 2N}
\|\nabla f(\widehat Y^f_{n})\|^p\right]
\leq 2^{p-1}L^p C_2^p d^{p/2}p^{p/2}+2^{p-1}\|\nabla f(0)\|^p,\\
 & \mathbb E\left[\sup_{0\leq n\leq 2N}
\|\nabla f(\widehat Y^c_{n})\|^p\right]
\leq 2^{p-1}L^p C_2^p d^{p/2}p^{p/2}+2^{p-1}\|\nabla f(0)\|^p.
    \end{align*}
\end{corollary}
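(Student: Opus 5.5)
The bound follows from the linear growth of $\nabla f$ implied by $L$-smoothness, combined with the moment bound of Lemma~\ref{lem:bounds of p moment}. We argue for the fine path; the coarse path is identical with $\widehat Y^c_n$ in place of $\widehat Y^f_n$.

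\emph{Step 1 (pointwise growth).} By $L$-smoothness, for every $x$,
\[
\|\nabla f(x)\|\le \|\nabla f(0)\|+L\|x\|.
\]
Since $t\mapsto t^p$ is convex for $p\ge 1$, $(a+b)^p\le 2^{p-1}(a^p+b^p)$. Hence, for every $n$,
\[
\|\nabla f(\widehat Y^f_n)\|^p\le 2^{p-1}L^p\|\widehat Y^f_n\|^p+2^{p-1}\|\nabla f(0)\|^p .
\]
The right-hand side is monotone in $\|\widehat Y^f_n\|$. Taking $\sup_{0\le n\le 2N}$ and using $\|\widehat Y^f_n\|^p\le(\|\widehat Y^f_n\|^2+\|\widehat Y^c_n\|^2)^{p/2}$ gives
\[
\sup_{0\le n\le 2N}\|\nabla f(\widehat Y^f_n)\|^p\le 2^{p-1}L^p\sup_{0\le n\le 2N}\bigl(\|\widehat Y^f_n\|^2+\|\widehat Y^c_n\|^2\bigr)^{p/2}+2^{p-1}\|\nabla f(0)\|^p .
\]

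\emph{Step 2 (expectation).} Take expectations and apply Lemma~\ref{lem:bounds of p moment}, which is valid under exactly the stated step-size restriction $h\le C_1\min\{S^{-1},\tilde\alpha L^{-2},\tilde\alpha\|\nabla f(0)\|^{-2},\tilde\alpha^{-1}\}$. This yields
\[
\mathbb E\Bigl[\sup_{0\le n\le 2N}\|\nabla f(\widehat Y^f_n)\|^p\Bigr]\le 2^{p-1}L^pC_2^p\tilde\alpha^{-p/2}d^{p/2}p^{p/2}+2^{p-1}\|\nabla f(0)\|^p .
\]

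\emph{Main obstacle.} The only subtle point is the factor $\tilde\alpha^{-p/2}$. It appears in Lemma~\ref{lem:bounds of p moment} but not in the stated corollary. The intended reading is presumably that $\tilde\alpha$ is treated as an $\mathcal O(1)$ model constant, or absorbed into $C_2$. I would either absorb it explicitly into $C_2$, noting the resulting dependence on $\tilde\alpha$, or keep the factor $\tilde\alpha^{-p/2}$ in the displayed bound. No other difficulty arises.
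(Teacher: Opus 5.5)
Your proof is correct and follows the paper's own argument: the linear growth of $\nabla f$ from $L$-smoothness, the convexity bound $(a+b)^p\le 2^{p-1}(a^p+b^p)$, and then Lemma~\ref{lem:bounds of p moment}. Your remark about the factor $\tilde\alpha^{-p/2}$ is also right, since the paper's proof drops it without comment. The displayed bound therefore follows only if that factor is kept, or if $\tilde\alpha$ is bounded below by a constant. Absorbing it into $C_2$ would conflict with the paper's convention that the $C_i$ do not depend on $\tilde\alpha$.
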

\begin{proof}
    This follows directly from Lemma~\ref{lem:bounds of p moment} and the smoothness of $f$.
   \begin{align*}
\mathbb E\left[\sup_{0\leq n\leq 2N}
\|\nabla f(\widehat Y^f_{n})\|^p\right]
\leq&\mathbb E\left[
\sup_{0\leq n\leq 2N}
\left(L\|\widehat Y^f_{n}\|+\|\nabla f(0)\|\right)^{p}\right]\\
\leq & 2^{p-1}\mathbb E\left[
\sup_{0\leq n\leq 2N}
\left(L^p\|\widehat Y^f_{n}\|^p+\|\nabla f(0)\|^p\right)\right]\\
\leq& 2^{p-1}L^p C_2^p d^{p/2}p^{p/2}+2^{p-1}\|\nabla f(0)\|^p.
\end{align*} 
The proof for the coarse path is analogous.
\end{proof}

\begin{lemma}\label{lem:p-bounds of distance of two paths}
For $p\geq 1$, there exist constants $C_3, C_4, C_{(1)}, C_{(2)} > 0$,
with
$$
C_{(1)}=\mathcal{O}\left(\frac{L\sqrt{d} }{2S-\lambda}\right),\quad C_{(2)} = \mathcal{O}\left(\frac{L^2\sqrt{d}+Ld}{2S-\lambda}\right),
$$
and for all
 $$0<h\leq h_{(1)}:=C_3\min\left\{S^{-1},\tilde{\alpha} L^{-2},\tilde{\alpha} \|\nabla f(0)\|^{-2},\tilde{\alpha}^{-1}
(2S-\lambda)^{-1},\tilde{\alpha} S^{-2}, \|\nabla f(0)\|^{-2},(1-\lambda/S)^2
\right\},$$
the following holds.

\medskip

\noindent
\textbf{(1) when $ \widehat{Y}_{0}^f= \widehat{Y}_{0}^c$:} 
\begin{align*}
    \E\left[\sup_{0\leq n\leq N}
\|\widehat{Y}_{{2n}}^f- \widehat{Y}_{{2n}}^c\|^p \right]
\leq C_{(1)}^p p^{p/2}h^{p/2},
\quad 
 \E\left[ \sup_{0\leq n\leq N}
\|\widehat{Y}_{{2n}}^f- \widehat{Y}_{{2n}}^c\|^p \right]
\leq
C_{(2)}^p p^{p}h^{p}.
\end{align*}

\medskip

\noindent
\textbf{(2) when $ \widehat{Y}_{0}^f\neq \widehat{Y}_{0}^c$:} 
\begin{align*}
 \E\left[\|\widehat{Y}_{{2N}}^f- \widehat{Y}_{{2N}}^c\|^p \right]
 \leq
&C_4^p e^{-p \gamma T/2}\mathbb{E}\left[\|\widehat{Y}_{0}^f- \widehat{Y}_{0}^c\|^p\right]
+C_{(2)}^p p^{p}h^{p},
\end{align*}
where $\gamma\in(0,2S-\lambda)$ and $\gamma=\Theta(2S-\lambda)$.
\end{lemma}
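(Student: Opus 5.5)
The plan is to work with the difference process $\widehat{E}_n := \widehat{Y}^f_n - \widehat{Y}^c_n$ and derive a one-step recursion over each pair of fine steps $2n\to 2n+2$. Subtracting the coarse update from the two composed fine updates, the Brownian increments cancel exactly, because both paths are driven by $\Delta W_{2n}+\Delta W_{2n+1}$. This leaves
\begin{align*}
\widehat{E}_{2n+2} = \widehat{E}_{2n} - 4Sh\,\widehat{E}_{2n} - 2h\big(\nabla f(\widehat{Y}^f_{2n})-\nabla f(\widehat{Y}^c_{2n})\big) + \mathcal{R}_n ,
\end{align*}
where $\mathcal{R}_n$ collects the local mismatch terms. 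These are $h\big(\nabla f(\widehat{Y}^f_{2n}) - \nabla f(\widehat{Y}^f_{2n+1})\big)$ together with the spring terms evaluated at $2n+1$ instead of $2n$. Squaring, and using the weaker one-sided Lipschitz condition $\langle \nabla f(x)-\nabla f(y),x-y\rangle\ge -\lambda\|x-y\|^2$ plus $L$-smoothness for the $\mathcal{O}(h^2)$ terms, gives for $h\le h_{(1)}$
\begin{align*}
\|\widehat{E}_{2n+2}\|^2 \le (1-2\gamma h)\|\widehat{E}_{2n}\|^2 + 2\langle \widehat{E}_{2n},\mathcal{R}_n\rangle + C\|\mathcal{R}_n\|^2 ,
\end{align*}
with $\gamma=\Theta(2S-\lambda)$. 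The constraint $(1-\lambda/S)^2$ in $h_{(1)}$ enters here: it ensures the contraction factor $1-4Sh+2\lambda h+\mathcal{O}(h^2)$ really is below $1-2\gamma h$.

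Next, expand $\mathcal{R}_n$. Since $\widehat{Y}^f_{2n+1}-\widehat{Y}^f_{2n} = \mathcal{O}(h)\cdot(\text{drift}) + \sqrt{2}\Delta W_{2n}$, there are two kinds of terms. The first is a martingale-type part of size $\mathcal{O}(h)\,\|\Delta W_{2n}\|\sim h^{3/2}$, of the form $-\sqrt2 h\big(\nabla^2 f(\widehat{Y}^f_{2n})+S I\big)\Delta W_{2n}$, which has zero conditional mean. The second is a remainder of order $h^2$: drift products and, via $L$-Hessian-smoothness, the Taylor remainder $L\|\Delta W_{2n}\|^2 h\sim L d h^2$.

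For part (1), a crude treatment would simply bound $\|\mathcal{R}_n\|$. Summing the geometric recursion against $e^{2\gamma nh}$ then gives $\E\|\widehat{E}\|^p\lesssim (L\sqrt d/(2S-\lambda))^p h^{p/2}$, using the moment bounds of Lemma~\ref{lem:bounds of p moment} and Corollary~\ref{cor: bound of nabla f}; this is $C_{(1)}$. For the sharp $h^p$ bound, the martingale part must be kept inside the inner product $\langle\widehat{E}_{2n},\mathcal{R}_n\rangle$. It becomes a discrete stochastic integral, controlled by BDG exactly as $I_4$ was in Lemma~\ref{lem:bounds of p moment}, with quadratic variation $\sum h^3\|\widehat{E}\|^2$. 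The cross term is then absorbed by Young's inequality: it costs $\zeta^{-1}\E\sup\|\widehat E\|^p$ plus a term of order $h^{p}$, using the first, cruder estimate. The $h^2$ remainders, weighted by $\sum e^{-2\gamma(N-k)h}h\approx 1/\gamma$, contribute $\big((L^2\sqrt d + L d)h/(2S-\lambda)\big)^p$, which identifies $C_{(2)}$. The $p^{p/2}$ and $p^p$ factors come from the BDG constant and the Gaussian moments $\E\|\Delta W\|^{2p}\le (dph)^p$.

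Part (2) uses the same recursion with a nonzero start. Iterating gives the homogeneous term $(1-2\gamma h)^{N}\|\widehat E_0\|^2\le e^{-\gamma T}\|\widehat E_0\|^2$, which after raising to the power $p/2$ becomes $e^{-p\gamma T/2}$. Raising the sum to power $p/2$ splits the bound into this term plus the inhomogeneous part treated as in (1); $C_4$ absorbs the $2^{p}$-type constants.

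The main obstacle is the sharp $h^p$ estimate. It requires the martingale cancellation to be handled without losing a $\sup$, and the Young absorption must close with constants independent of $T$. I would handle this by working with the exponentially weighted quantity $e^{2\gamma nh}\|\widehat E_{2n}\|^2$, as in Lemma~\ref{lem:bounds of p moment}, and by tracking the $d$-dependence only through the Gaussian moments and the moment bounds of Lemma~\ref{lem:bounds of p moment}.
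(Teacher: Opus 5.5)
Your proposal follows the paper's proof essentially step for step. The paper also squares the two-fine-step difference recursion with the weaker one-sided Lipschitz bound to get a $(1-c\gamma h)$ contraction, then uses Cauchy--Schwarz on the cross term for the $h^{1/2}$ bound with $C_{(1)}$. For the sharp bound it Taylor-expands $\langle \widehat Y^f_{2n}-\widehat Y^c_{2n},\nabla f(\widehat Y^f_{2n+1})-\nabla f(\widehat Y^f_{2n})\rangle$ via the Hessian into drift, spring, a BDG-controlled martingale and a Hessian-Lipschitz remainder, absorbs the weighted supremum by Young's inequality with the crude bound feeding in, and reruns the same weighted recursion with a nonzero start for part (2).

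One small correction: in this scheme $\widehat Y^f_{2n+1}-\widehat Y^c_{2n+1}$ contains no Brownian increment, because the coarse path is also advanced with $\Delta W_{2n}$ at odd steps. So the martingale part is just $-\sqrt{2}\,h\,\nabla^2 f(\widehat Y^f_{2n})\Delta W_{2n}$, without the $SI$ term; this does not affect your argument.
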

\begin{proof}
The proof is given for $p\geq 4$; the result for 
$1\leq p <  4$ follows from H{\"o}lder's inequality.

The different updates on odd and even time points give
\begin{align*}
\widehat{Y}_{{2n+1}}^f-\widehat{Y}_{{2n+1}}^c 
=& (1-2Sh)(\widehat{Y}_{{2n}}^f- \widehat{Y}_{{2n}}^c)- 
(\nabla f(\widehat{Y}_{{2n}}^f) - \nabla f(\widehat{Y}_{{2n}}^c))h,\\
\widehat{Y}_{{2n+2}}^f- \widehat{Y}_{{2n+2}}^c
=& (1-Sh)(\widehat{Y}_{{2n+1}}^f -\widehat{Y}_{{2n+1}}^c)
-Sh(\widehat{Y}_{{2n}}^f-\widehat{Y}_{{2n}}^c)-(\nabla f(\widehat{Y}_{{2n+1}}^f)
- \nabla f(\widehat{Y}_{{2n}}^c))h,
\end{align*}
and then
\begin{align*}
    \widehat{Y}_{{2n+2}}^f -  \widehat{Y}_{{2n+2}}^c
 =&  (1-4Sh+2S^2h^2)(\widehat{Y}_{{2n}}^f- \widehat{Y}_{{2n}}^c) 
 - (2-Sh)(\nabla f(\widehat{Y}_{{2n}}^f) - \nabla f(\widehat{Y}_{{2n}}^c))h 
 - (\nabla f(\widehat{Y}_{{2n+1}}^f)- \nabla f(\widehat{Y}_{{2n}}^f))h.
\end{align*}
After squaring both sides and applying the $L$-smoothness and the weaker one-sided Lipschitz condition,
provided that $0\leq Sh<1$, we have
\begin{align*}
\|\widehat{Y}_{{2n+2}}^f- \widehat{Y}_{{2n+2}}^c\|^2  =&  (1-4Sh+2S^2h^2)^2 \|\widehat{Y}_{{2n}}^f- \widehat{Y}_{{2n}}^c\|^2 \\
& +(2-Sh)^2h^2\|\nabla f(\widehat{Y}_{{2n}}^f) - \nabla f(\widehat{Y}_{{2n}}^c)\|^2 +h^2\|\nabla f(\widehat{Y}_{{2n+1}}^f)- \nabla f(\widehat{Y}_{{2n}}^f)\|^2 \\
& - 2(1-4Sh+2S^2h^2)(2-Sh)h\langle \widehat{Y}_{{2n}}^f- \widehat{Y}_{{2n}}^c, \nabla f(\widehat{Y}_{{2n}}^f) - \nabla f(\widehat{Y}_{{2n}}^c) \rangle \\
& -2(1-4Sh+2S^2h^2)h \langle \widehat{Y}_{{2n}}^f- \widehat{Y}_{{2n}}^c,\nabla f(\widehat{Y}_{{2n+1}}^f)- \nabla f(\widehat{Y}_{{2n}}^f) \rangle 
\\
&+2(2-Sh)h^2 \langle \nabla f(\widehat{Y}_{{2n}}^f) - \nabla f(\widehat{Y}_{{2n}}^c), \nabla f(\widehat{Y}_{{2n+1}}^f)- \nabla f(\widehat{Y}_{{2n}}^f) \rangle \\
 \leq&  (1-4Sh+2S^2h^2)^2 \|\widehat{Y}_{{2n}}^f- \widehat{Y}_{{2n}}^c\|^2 \\
& +4L^2h^2\|\widehat{Y}_{{2n}}^f - \widehat{Y}_{{2n}}^c\|^2 +L^2h^2\|\widehat{Y}_{{2n+1}}^f- \widehat{Y}_{{2n}}^f\|^2 \\
& + 2(1-4Sh+2S^2h^2)(2-Sh) h\lambda\| \widehat{Y}_{{2n}}^f- \widehat{Y}_{{2n}}^c\|^2 \\
& -2(1-4Sh+2S^2h^2) h\langle \widehat{Y}_{{2n}}^f- \widehat{Y}_{{2n}}^c,\nabla f(\widehat{Y}_{{2n+1}}^f)- \nabla f(\widehat{Y}_{{2n}}^f) \rangle 
\\
& +2 L^2h^2\| \widehat{Y}_{{2n}}^f -\widehat{Y}_{{2n}}^c\|^2+2L^2h^2\|\widehat{Y}_{{2n+1}}^f- \widehat{Y}_{{2n}}^f\|^2\\
=&\left[1 - 4(2S-\lambda)h + (20S^2 - 18S\lambda + 6L^2)h^2 + 16S^2(\lambda - S)h^3 + 4S^3(S - \lambda)h^4\right]\|\widehat{Y}_{{2n}}^f- \widehat{Y}_{{2n}}^c\|^2\\
&+3L^2h^2\|\widehat{Y}_{{2n+1}}^f- \widehat{Y}_{{2n}}^f\|^2 
 -2(1-4Sh+2S^2h^2) h\langle \widehat{Y}_{{2n}}^f- \widehat{Y}_{{2n}}^c,\nabla f(\widehat{Y}_{{2n+1}}^f)- \nabla f(\widehat{Y}_{{2n}}^f) \rangle .
\end{align*}
Since $S>\lambda/2$, 
fix $\gamma\in(0,2S-\lambda)$ with $\gamma=\Theta(2S-\lambda)$,  we have
\begin{equation}
    \label{error:2}
    \begin{aligned}
        \|\widehat{Y}_{{2n+2}}^f- \widehat{Y}_{{2n+2}}^c\|^2 
        \lesssim& (1-4\gamma h) \|\widehat{Y}_{{2n}}^f- \widehat{Y}_{{2n}}^c\|^2 +3L^2h^2\|\widehat{Y}_{{2n+1}}^f- \widehat{Y}_{{2n}}^f\|^2 \\
        &-2(1-4Sh+2S^2h^2) h\langle \widehat{Y}_{{2n}}^f- \widehat{Y}_{{2n}}^c,\nabla f(\widehat{Y}_{{2n+1}}^f)- \nabla f(\widehat{Y}_{{2n}}^f) \rangle
    \end{aligned}  
\end{equation}
provided that $h$ is sufficiently small and satisfies
 $h=\mathcal{O}(\min\{S^{-1},(2S-\lambda)L^{-2}\})$.
Following this estimate, we use two different approaches to get different upper bounds.

Before proceeding, we collect two elementary bounds. First,
\begin{align}\label{eq:bound of sum e 2k gamma h}
\sum_{k=0}^{n-1} e^{2\gamma kh} h
\leq \frac{e^{2\gamma nh}-1}{e^{2\gamma h}-1}h
\leq (2\gamma)^{-1} e^{2\gamma nh},
\qquad
\sum_{k=0}^{n-1} e^{4\gamma kh} h
\leq (4\gamma)^{-1} e^{4\gamma nh}.
\end{align}
Second, since $p\geq 4$, the function $x\mapsto |x|^{p/2}$ is convex, and hence for any $w_k>0$,
\begin{align*}
\left|\sum_{k=0}^{n} w_k a_k\right|^{p/2}
\leq
\left(\sum_{k=0}^{n} w_k\right)^{p/2-1}
\sum_{k=0}^{n} w_k |a_k|^{p/2}.
\end{align*}
Applying  with 
$w_k=e^{2\gamma kh}h$ 
yields
\begin{align}\label{eq:jensen}
\left|\sum_{k=0}^{n} e^{2\gamma kh}h\cdot a_k\right|^{p/2}
\leq
\left(\sum_{k=0}^{n} e^{2\gamma kh}h \right)^{p/2-1}
\left(\sum_{k=0}^{n} e^{2\gamma kh}h\cdot |a_k|^{p/2}\right).
\end{align}
Similarly,
\begin{align}\label{eq:jensen_4}
\left|\sum_{k=0}^{n} e^{4\gamma kh}h\cdot a_k\right|^{p/4}
\leq
\left(\sum_{k=0}^{n} e^{4\gamma kh}h \right)^{p/4-1}
\left(\sum_{k=0}^{n} e^{4\gamma kh}h\cdot |a_k|^{p/4}\right).
\end{align}
These estimates will be used repeatedly throughout the remainder of the analysis.

We first establish a weaker result. By Cauchy inequality,
\begin{align*}
   2\left|\langle \widehat{Y}_{{2n}}^f- \widehat{Y}_{{2n}}^c,\nabla f(\widehat{Y}_{{2n+1}}^f)- \nabla f(\widehat{Y}_{{2n}}^f) \rangle\right| 
   \leq 2\gamma\|\widehat{Y}_{{2n}}^f- \widehat{Y}_{{2n}}^c\|^2
   +(2\gamma)^{-1}L^2\|\widehat{Y}_{{2n+1}}^f- \widehat{Y}_{{2n}}^f\|^2,
\end{align*}
we have
\begin{align*}
\|\widehat{Y}_{{2n+2}}^f- \widehat{Y}_{{2n+2}}^c\|^2 
\lesssim& (1-2\gamma h) \|\widehat{Y}_{{2n}}^f- \widehat{Y}_{{2n}}^c\|^2 
+\left((2\gamma)^{-1}+3h\right) L^2h\|\widehat{Y}_{{2n+1}}^f- \widehat{Y}_{{2n}}^f\|^2\\
\lesssim& (1-2\gamma h) \|\widehat{Y}_{{2n}}^f- \widehat{Y}_{{2n}}^c\|^2 
+\gamma^{-1}L^2h\|\widehat{Y}_{{2n+1}}^f- \widehat{Y}_{{2n}}^f\|^2
\end{align*}
for $h=\mathcal{O}(\gamma^{-1})$.
Then  we multiply by $e^{ ({2n+2})\gamma h}$ on both sides and $e^{2\gamma h}\lesssim 2$ gives
\begin{align*}
e^{ ({2n+2})\gamma h}\|\widehat{Y}_{{2n+2}}^f- \widehat{Y}_{{2n+2}}^c\|^2 \lesssim&  
e^{ {2n}\gamma h} \|\widehat{Y}_{{2n}}^f - \widehat{Y}_{{2n}}^c\|^2 +2\gamma^{-1}L^2e^{2n\gamma h }h
\|\widehat{Y}_{{2n+1}}^f- \widehat{Y}_{{2n}}^f\|^2 .
\end{align*}
Summing over multiple timesteps  gives
\begin{align*}
e^{2n\gamma  h}\|\widehat{Y}_{{2n}}^f- \widehat{Y}_{{2n}}^c\|^2 \lesssim &  
\|\widehat{Y}_{0}^f- \widehat{Y}_{0}^c\|^2
+2\gamma^{-1}L^2h \sum_{k=0}^{n-1} e^{ {2k}\gamma  h}\|\widehat{Y}_{{2k+1}}^f- \widehat{Y}_{{2k}}^f\|^2.
\end{align*}
Then, raising both sides to the power $p/2$, fixing $n\in[0,N]$, and taking expectations, we obtain
\begin{align}\label{eq:summing of weak}
&\E\left[ e^{\gamma p {n} h}\|\widehat{Y}_{{2n}}^f- \widehat{Y}_{{2n}}^c\|^p \right] \notag\\
\lesssim&
2^{p/2-1}\mathbb{E}\left[\|\widehat{Y}_{0}^f- \widehat{Y}_{0}^c\|^p\right]
+2^{p/2-1}(2\gamma^{-1})^{p/2} L^p \E\left[ \left( \sum_{k=0}^{n-1} e^{2k\gamma  h}h\|\widehat{Y}_{{2k+1}}^f- \widehat{Y}_{{2k}}^f\|^2  \right)^{p/2}
\right]\notag\\
\lesssim &
2^{p/2-1}\mathbb{E}\left[\|\widehat{Y}_{0}^f- \widehat{Y}_{0}^c\|^p\right]+
(4\gamma^{-1})^{p/2} L^p \left(\sum_{k=0}^{n-1} e^{2k\gamma  h} h\right)^{p/2-1}  
\E\left[  \sum_{k=0}^{n-1} e^{2k\gamma h }h\cdot \|\widehat{Y}_{{2k+1}}^f- \widehat{Y}_{{2k}}^f\|^p  
\right].
\end{align}
Note that $\E\left[ \|\Delta W_{2k}\|^p\right]\leq d^{p/2}h^{p/2}p^{p/2},$ we have
\begin{align}
\label{small increment}
\E\left[\|\widehat{Y}_{{2k+1}}^f- \widehat{Y}_{{2k}}^f\|^p \right] =& \E\left[\|-\nabla f(\widehat{Y}_{{2k}}^f)h+ S(\widehat{Y}_{{2k}}^c-\widehat{Y}_{{2k}}^f)h+\sqrt{2}\Delta W_{2k}\|^{p}\right] \nonumber\\
\leq & 2^{p-1}\E\left[ \|-\nabla f(\widehat{Y}_{{2k}}^f)+ S(\widehat{Y}_{{2k}}^c-\widehat{Y}_{{2k}}^f)\|^p\right]h^p + 2^{3p/2-1}\E\left[ \|\Delta W_{2k}\|^p\right]\nonumber\\
\lesssim&   2^{3p/2} p^{p/2} d^{p/2}h^{p/2}.
\end{align}
There, we use Corollary~\ref{cor: bound of nabla f} and Lemma~\ref{lem:bounds of p moment} to obtain
\begin{align}\label{eq: bound of nabla +S}
    \E\left[ \|-\nabla f(\widehat{Y}_{{2k}}^f)+ S(\widehat{Y}_{{2k}}^c-\widehat{Y}_{{2k}}^f)\|^p\right]
    \leq& 3^{p-1}\E\left[ \|\nabla f(\widehat{Y}_{{2k}}^f)\|^p+ S^p\|\widehat{Y}_{{2k}}^c\|^p+S^p\|\widehat{Y}_{{2k}}^f\|^p\right]\notag\\
\leq& 3^{p-1}\left(2^{p-1}L^p C^p d^{p/2}p^{p/2}+2^{p-1}\|\nabla f(0)\|^p+
S^pC^p \tilde{\alpha}^{-p/2} d^{p/2}p^{p/2}\right).
\end{align}
Therefore, \eqref{small increment} holds provided that $h$ is sufficiently small, namely
$h=\mathcal{O}\big(\min\{L^{-2},\,\tilde{\alpha}S^{-2},\,\|\nabla f(0)\|^{-2}\}\big),$
and that the conditions in Lemma~\ref{lem:bounds of p moment} are satisfied.

Combining \eqref{eq:summing of weak} with \eqref{eq:bound of sum e 2k gamma h} implies that
\begin{align}\label{eq:weaker bound of distance}
&\E\left[ e^{\gamma p {n} h}
\|\widehat{Y}_{{2n}}^f- \widehat{Y}_{{2n}}^c\|^p \right]\notag\\ 
\lesssim &2^{p/2-1}\mathbb{E}\left[\|\widehat{Y}_{0}^f- \widehat{Y}_{0}^c\|^p\right]+
(4\gamma^{-1})^{p/2}L^p\cdot (2\gamma)^{-p/2}e^{p\gamma nh }\cdot  2^{3p/2} p^{p/2} d^{p/2}h^{p/2}\notag\\
\lesssim& 
2^{p/2-1}\mathbb{E}\left[\|\widehat{Y}_{0}^f- \widehat{Y}_{0}^c\|^p\right]
+2^{2p} L^p\gamma^{-p}    p^{p/2}d^{p/2}h^{p/2} e^{\gamma p nh}.
\end{align}
This holds for all $0 \leq n \leq N$. 
In particular, when $\widehat{Y}_{0}^f= \widehat{Y}_{0}^c$, we have
\begin{align*}
    \E\left[\sup_{0\leq n\leq N}
\|\widehat{Y}_{{2n}}^f- \widehat{Y}_{{2n}}^c\|^p \right]\leq 2^{2p} L^p\gamma^{-p}    p^{p/2}d^{p/2}h^{p/2}.
\end{align*}

On the other hand, to derive a stronger estimate, we directly multiply by $e^{2(2n+2)\gamma h}$ on both sides of \eqref{error:2} and $e^{4\gamma h}\lesssim 2$ gives
\begin{align*}
e^{2(2n+2)\gamma h}\|\widehat{Y}_{{2n+2}}^f- \widehat{Y}_{{2n+2}}^c\|^2 
\lesssim&  e^{4n\gamma h} \|\widehat{Y}_{{2n}}^f- \widehat{Y}_{{2n}}^c\|^2 
+6e^{4n\gamma h}L^2h^2\|\widehat{Y}_{{2n+1}}^f- \widehat{Y}_{{2n}}^f\|^2\\
&\quad-2(1-4Sh+2S^2h^2)\,e^{2(2n+2)\gamma h} \langle \widehat{Y}_{{2n}}^f- \widehat{Y}_{{2n}}^c,\nabla f(\widehat{Y}_{{2n+1}}^f)- \nabla f(\widehat{Y}_{{2n}}^f) \rangle h.
\end{align*}
Summing over multiple timesteps  gives
\begin{align*}
e^{4n\gamma h}\|\widehat{Y}_{{2n}}^f- \widehat{Y}_{{2n}}^c\|^2
 \lesssim &\|\widehat{Y}_{0}^f- \widehat{Y}_{0}^c\|^2
 +  6h^2\sum_{k=0}^{n-1} e^{4k\gamma h}L^2\|\widehat{Y}_{{2k+1}}^f- \widehat{Y}_{{2k}}^f\|^2 \\
& -2(1-4Sh+2S^2h^2) \sum_{k=0}^{n-1} e^{2 ({2k+2})\gamma h}\langle \widehat{Y}_{{2k}}^f- \widehat{Y}_{{2k}}^c,\nabla f(\widehat{Y}_{{2k+1}}^f)- \nabla f(\widehat{Y}_{{2k}}^f) \rangle h.
\end{align*}
Then raising both sides to the power $p/2,$ fixing $n\in[0,N],$ taking expectation and by Jensen's inequality, we obtain
\begin{align*}
\E\left[ e^{2\gamma pn h }
\|\widehat{Y}_{{2n}}^f- \widehat{Y}_{{2n}}^c\|^p \right] \lesssim& 
 3^{p/2-1}6^{p/2}\left(\mathbb{E}\left[\|\widehat{Y}_{0}^f- \widehat{Y}_{0}^c\|^p\right]+I_1+I_2\right), 
\end{align*}
where
\begin{align*}
I_1 =& \E\left[\ \left|\sum_{k=0}^{n-1} e^{4k\gamma h}L^2\|\widehat{Y}_{{2k+1}}^f- \widehat{Y}_{{2k}}^f\|^2 h^2\right|^{p/2}\ \right],
\\
I_2=& \E\left[\left|  \sum_{k=0}^{n-1} e^{4k\gamma h}\langle \widehat{Y}_{{2k}}^f- \widehat{Y}_{{2k}}^c,\nabla f(\widehat{Y}_{{2k+1}}^f)- \nabla f(\widehat{Y}_{{2k}}^f) \rangle h \right|^{p/2}\ \right].
\end{align*}

For $I_1,$ Jensen's inequality~\eqref{eq:jensen} and the estimate~\eqref{small increment} give
\begin{align*}
I_1 \leq&  
\left(\sum_{k=0}^{n-1} e^{4k\gamma h}h\right)^{p/2-1}
\sum_{k=0}^{n-1} e^{4k\gamma h}h\cdot L^p\E\left[\|\widehat{Y}_{{2k+1}}^f- \widehat{Y}_{{2k}}^f\|^p\right]  h^{p/2}  \\
\lesssim& (4\gamma)^{-p/2}e^{2p\gamma  nh}\cdot L^p \cdot 2^p p^{p/2}d^{p/2}h^{p/2}\cdot h^{p/2}
\lesssim\gamma^{-p/2} L^p p^{p/2}d^{p/2} h^p e^{2p\gamma nh}.
\end{align*}

For $I_2,$ by the mean value theorem, there exists $\theta\in[0,1]$ such that
\begin{align*}
    \nabla f(\widehat{Y}_{{2k+1}}^f)- \nabla f(\widehat{Y}_{{2k}}^f)
    =\nabla^2f(\theta \widehat{Y}_{{2k}}^f+(1-\theta)\widehat{Y}_{{2k+1}}^f)(\widehat{Y}_{{2k+1}}^f-\widehat{Y}_{{2k}}^f).
\end{align*}
Hence
\begin{align*}
&\left|\langle \widehat{Y}_{{2k}}^f- \widehat{Y}_{{2k}}^c,\nabla f(\widehat{Y}_{{2k+1}}^f)- \nabla f(\widehat{Y}_{{2k}}^f) \rangle 
- \langle \widehat{Y}_{{2k}}^f- \widehat{Y}_{{2k}}^c,\nabla^2 f(\widehat{Y}_{{2k}}^f)(\widehat{Y}_{{2k+1}}^f- \widehat{Y}_{{2k}}^f) \rangle\right|\\
=& \left|\langle \widehat{Y}_{{2k}}^f- \widehat{Y}_{{2k}}^c,
\left(\nabla^2f(\theta \widehat{Y}_{{2k}}^f+(1-\theta)\widehat{Y}_{{2k+1}}^f)-\nabla^2 f(\widehat{Y}_{{2k}}^f)\right)(\widehat{Y}_{{2k+1}}^f-\widehat{Y}_{{2k}}^f)\rangle\right|
\leq L\|\widehat{Y}_{{2k}}^f- \widehat{Y}_{{2k}}^c\|\|\widehat{Y}_{{2k+1}}^f- \widehat{Y}_{{2k}}^f \|^2.
\end{align*}
Therefore, by Jensen's inequality, we have
\begin{align*}
I_2  \leq & 4^{p/2-1}(J_1 + J_2 +J_3+J_4),
\end{align*}
where
\begin{align*}
J_1=&\E\left[
\left|\sum_{k=0}^{n-1} e^{4k\gamma h}\langle \widehat{Y}_{{2k}}^f- \widehat{Y}_{{2k}}^c,\nabla^2 f(\widehat{Y}_{{2k}}^f)\nabla f(\widehat{Y}_{{2k}}^f) \rangle h^2 \right|^{p/2}\right],
\\
J_2 =&\E\left[
\left|\sum_{k=0}^{n-1} e^{4k\gamma h} S \langle \widehat{Y}_{{2k}}^f- \widehat{Y}_{{2k}}^c,\nabla^2 f(\widehat{Y}_{{2k}}^f)(\widehat{Y}_{{2k}}^c-\widehat{Y}_{{2k}}^f) \rangle h^2 \right|^{p/2}\right],
\\
J_3 =&\E\left[
\left|\sqrt{2}\sum_{k=0}^{n-1} e^{4k\gamma h}\langle \widehat{Y}_{{2k}}^f- \widehat{Y}_{{2k}}^c,\nabla^2 f(\widehat{Y}_{{2k}}^f) \Delta W_{2k} \rangle h \right|^{p/2}\right],\\
J_4 =&\E\left[
\left|\sum_{k=0}^{n-1} e^{4k\gamma h}L\|\widehat{Y}_{{2k}}^f- \widehat{Y}_{{2k}}^c\|\|\widehat{Y}_{{2k+1}}^f- \widehat{Y}_{{2k}}^f \|^2 h\ \right|^{p/2}\right].
\end{align*}

For $J_1$, applying the weighted Jensen inequality~\eqref{eq:jensen} 
together with Corollary~\ref{cor: bound of nabla f}, in particular
$$
\mathbb E\left[\sup_{0\leq n\leq 2N}
\|\nabla f(\widehat Y^f_{n})\|^p\right]
\lesssim 2^{p-1}L^p C^p d^{p/2}p^{p/2}+2^{p-1}\|\nabla f(0)\|^p,
$$
we obtain
\begin{align*}
J_1  &\leq    \E\left[ \left(\sum_{k=0}^{n-1} e^{2\gamma kh}h\right)^{p/2-1}
 \sum_{k=0}^{n-1} e^{2\gamma kh}h \cdot e^{p\gamma  {k}h}\| \widehat{Y}_{{2k}}^f- \widehat{Y}_{{2k}}^c\|^{p/2}\|\nabla^2 f(\widehat{Y}_{{2k}}^f)\nabla f(\widehat{Y}_{{2k}}^f) \|^{p/2}h^{p/2}\right]\\
 &\leq  \E\left[\left(\sup_{0\leq k\leq n-1 }e^{p\gamma k h}
\| \widehat{Y}_{{2k}}^f- \widehat{Y}_{{2k}}^c\|^{p/2}\right) 
\left(\sum_{k=0}^{n-1} e^{2\gamma kh}h\right)^{p/2-1} 
\sum_{k=0}^{n-1} e^{2\gamma kh}h 
\|\nabla^2 f(\widehat{Y}_{{2k}}^f)\nabla f(\widehat{Y}_{{2k}}^f) \|^{p/2}h^{p/2}\right]\\
& \leq\frac{1}{4\zeta}  
\E\left[\sup_{0\leq k\leq n-1}e^{2p\gamma  kh}
\| \widehat{Y}_{{2k}}^f- \widehat{Y}_{{2k}}^c\|^{p}\right] 
+ \zeta \E\left[ \left(\sum_{k=0}^{n-1} e^{2\gamma kh}h\right)^{p-1}
\sum_{k=0}^{n-1} e^{2\gamma kh}h \cdot\|\nabla^2 f(\widehat{Y}_{{2k}}^f)\nabla f(\widehat{Y}_{{2k}}^f) \|^{p}h^{p}\right]\\
&\lesssim \frac{1}{4\zeta}  
\E\left[\sup_{0\leq k\leq n-1}e^{2p\gamma  kh}
\| \widehat{Y}_{{2k}}^f- \widehat{Y}_{{2k}}^c\|^{p}\right]  
+ \zeta  (2\gamma)^{-p}e^{2p\gamma nh}\cdot L^p 
\cdot\left(2^{p-1}L^p C^p d^{p/2}p^{p/2}+2^{p-1}\|\nabla f(0)\|^p\right) \cdot h^p\\
&\lesssim \frac{1}{4\zeta}  
\E\left[\sup_{0\leq k\leq n-1}e^{2p\gamma  kh}
\| \widehat{Y}_{{2k}}^f- \widehat{Y}_{{2k}}^c\|^{p}\right] 
+ \zeta  \gamma^{-p} L^{2p}C^pd^{p/2}p^{p/2}  h^p e^{2p\gamma  {n}h}
+\zeta  \gamma^{-p} L^{p}\|\nabla f(0)\|^p  h^p e^{2p\gamma  {n}h}
\end{align*}
for any $\zeta>0.$ 

For 
\begin{align*}
    J_2 =&\E\left[
\left|\sum_{k=0}^{n-1} e^{4k\gamma h} S \langle \widehat{Y}_{{2k}}^f- \widehat{Y}_{{2k}}^c,\nabla^2 f(\widehat{Y}_{{2k}}^f)(\widehat{Y}_{{2k}}^c-\widehat{Y}_{{2k}}^f) \rangle h^2 \right|^{p/2}\right],
\end{align*}
applying the weaker bound established earlier in~\eqref{eq:weaker bound of distance}, namely
$$
\E\left[ e^{\gamma p {n} h}
\|\widehat{Y}_{{2n}}^f- \widehat{Y}_{{2n}}^c\|^p \right]
\lesssim
2^{p/2-1}\mathbb{E}\left[\|\widehat{Y}_{0}^f- \widehat{Y}_{0}^c\|^p\right]
+2^{2p} L^p\gamma^{-p}    p^{p/2}d^{p/2}h^{p/2} e^{\gamma p nh},
$$
we obtain
\begin{align*}
J_2  & \leq   \E\left[ \left(\sum_{k=0}^{n-1} e^{2\gamma kh}h\right)^{p/2-1}
 \sum_{k=0}^{n-1} e^{2\gamma kh}h\cdot  e^{p\gamma  {k}h}S^{p/2}\| \widehat{Y}_{{2k}}^f- \widehat{Y}_{{2k}}^c\|^{p/2}\|\nabla^2 f(\widehat{Y}_{{2k}}^f)(\widehat{Y}_{{2k}}^c-\widehat{Y}_{{2k}}^f)\|^{p/2}h^{p/2}\right]\\
&\leq  \E\left[\left(\sup_{0\leq k\leq n-1} e^{p\gamma  {k}h}
\| \widehat{Y}_{{2k}}^f- \widehat{Y}_{{2k}}^c\|^{p/2}\right)
 \left(\sum_{k=0}^{n-1} e^{2\gamma kh}h\right)^{p/2-1} \sum_{k=0}^{n-1} e^{2\gamma kh}h \cdot
S^{p/2}\|\nabla^2 f(\widehat{Y}_{{2k}}^f)(\widehat{Y}_{{2k}}^c-\widehat{Y}_{{2k}}^f) \|^{p/2}h^{p/2}\right]\\
& \leq\frac{1}{4\zeta}  
\E\left[\sup_{0\leq k\leq n-1}e^{2p\gamma  kh}
\| \widehat{Y}_{{2k}}^f- \widehat{Y}_{{2k}}^c\|^{p}\right] 
+ \zeta \E\left[ \left(\sum_{k=0}^{n-1} e^{2\gamma kh}h\right)^{p-1}
\sum_{k=0}^{n-1} e^{2\gamma kh}h \cdot S^p\|\nabla^2 f(\widehat{Y}_{{2k}}^f)(\widehat{Y}_{{2k}}^c-\widehat{Y}_{{2k}}^f) \|^{p}h^{p}\right]\\
&
\lesssim \frac{1}{4\zeta}  
\E\left[\sup_{0\leq k\leq n-1}e^{2p\gamma  kh}
\| \widehat{Y}_{{2k}}^f- \widehat{Y}_{{2k}}^c\|^{p}\right] \\
&\quad\quad+ \zeta(2\gamma)^{-p}e^{2p \gamma nh}\cdot S^p \cdot L^p 
\cdot \left(e^{-p\gamma  nh}\cdot 2^{p/2-1}\mathbb{E}\left[\|\widehat{Y}_{0}^f- \widehat{Y}_{0}^c\|^p\right]
+2^{2p} L^p\gamma^{-p}   p^{p/2}d^{p/2}h^{p/2} \right)\cdot h^p\\
&\lesssim \frac{1}{4\zeta}  
\E\left[\sup_{0\leq k\leq n-1}e^{2p\gamma  kh}
\| \widehat{Y}_{{2k}}^f- \widehat{Y}_{{2k}}^c\|^{p}\right] 
+\zeta2^{-p/2-1}\gamma^{-p} L^p S^p h^p
 e^{p \gamma nh}\mathbb{E}\left[\|\widehat{Y}_{0}^f- \widehat{Y}_{0}^c\|^p\right]\\
&\quad\quad+ \zeta 2^p\gamma^{-2p} L^{2p}S^p p^{p/2}d^{p/2} e^{2p\gamma nh} h^{3p/2}
\end{align*}
for any $\zeta>0.$

For 
\begin{align*}
    J_3 =&\E\left[
\left|\sqrt{2}\sum_{k=0}^{n-1} e^{4k\gamma h}\langle \widehat{Y}_{{2k}}^f- \widehat{Y}_{{2k}}^c,\nabla^2 f(\widehat{Y}_{{2k}}^f) \Delta W_{2k} \rangle h \right|^{p/2}\right],
\end{align*}
by the Burkholder-Davis-Gundy inequality in \cite{BARLOW1982198}
and the weighted Jensen inequality~\eqref{eq:jensen_4}, 
for any $\zeta>0,$
\begin{align*}
J_3 \leq & (C_{{\mathrm{BDG}}}\,p)^{p/4}
\E\left[\left|2\sum_{k=0}^{n-1} e^{8k\gamma h}\| \widehat{Y}_{{2k}}^f- \widehat{Y}_{{2k}}^c\|^2 \|\nabla^2 f(\widehat{Y}_{{2k}}^f)\|^2  h^3 \right|^{p/4}\right]\\
\leq& (C_{{\mathrm{BDG}}}\,p)^{p/4}
\E\left[2^{p/4}\left(
\sup_{0\leq k\leq n-1}e^{{ p\gamma kh}} \| \widehat{Y}_{{2k}}^f- \widehat{Y}_{{2k}}^c\|^{p/2}\right)  
\left(\sum_{k=0}^{n-1} e^{4k\gamma h}h\right)^{p/4-1} 
\sum_{k=0}^{n-1} e^{4k\gamma h} h\cdot\|\nabla^2 f(\widehat{Y}_{{2k}}^f)\|^{p/2}  h^{p/2} \right]\\
\leq & \frac{1}{4\zeta}  
\E\left[\sup_{0\leq k\leq n-1}e^{2p\gamma  kh}
\| \widehat{Y}_{{2k}}^f- \widehat{Y}_{{2k}}^c\|^{p}\right] 
+\zeta (2C_{{\mathrm{BDG}}}\,p)^{p/2}
\E\left[ 
\left(\sum_{k=0}^{n-1} e^{4k\gamma h}h\right)^{p/2-1} 
\sum_{k=0}^{n-1} e^{4k\gamma h} h\cdot\|\nabla^2 f(\widehat{Y}_{{2k}}^f)\|^{p}  h^{p} \right]\\
\leq &\frac{1}{4\zeta}  
\E\left[\sup_{0\leq k\leq n-1}e^{2p\gamma  kh}
\| \widehat{Y}_{{2k}}^f- \widehat{Y}_{{2k}}^c\|^{p}\right] 
  +\zeta (2C_{{\mathrm{BDG}}}\,p)^{p/2} \cdot (4\gamma)^{-p/2}e^{2p\gamma  nh}\cdot L^ph^p\\
\leq &\frac{1}{4\zeta}  
\E\left[\sup_{0\leq k\leq n-1}e^{2p\gamma  kh}
\| \widehat{Y}_{{2k}}^f- \widehat{Y}_{{2k}}^c\|^{p}\right] 
  +\zeta C_{{\mathrm{BDG}}}^{p/2}2^{-p/2} \gamma^{-p/2}p^{p/2} L^ph^p e^{2p\gamma  nh}.
\end{align*}

For 
\begin{align*}
    J_4 =&\E\left[
\left|\sum_{k=0}^{n-1} e^{4k\gamma h}L\|\widehat{Y}_{{2k}}^f- \widehat{Y}_{{2k}}^c\|\|\widehat{Y}_{{2k+1}}^f- \widehat{Y}_{{2k}}^f \|^2 h\ \right|^{p/2}\right],
\end{align*}
we have
\begin{align*}
J_4 \leq& \E\left[\left(\sum_{k=0}^{n-1} e^{2\gamma kh}h\right)^{p/2-1} 
\sum_{k=0}^{n-1} e^{2\gamma kh}h  \cdot e^{p \gamma {k} h}L^{p/2}\| \widehat{Y}_{{2k}}^f- \widehat{Y}_{{2k}}^c\|^{p/2}  \|\widehat{Y}_{{2k+1}}^f- \widehat{Y}_{{2k}}^f\|^{p}\right]\\
\leq&  \E\left[\left(\sup_{0\leq k\leq n-1} e^{p\gamma  {k}h}
\| \widehat{Y}_{{2k}}^f- \widehat{Y}_{{2k}}^c\|^{p/2}\right)
 \left(\sum_{k=0}^{n-1} e^{2\gamma kh}h\right)^{p/2-1} 
\sum_{k=0}^{n-1} e^{2\gamma kh}h \cdot L^{p/2} \|\widehat{Y}_{{2k+1}}^f- \widehat{Y}_{{2k}}^f\|^{p}\right]\\
\leq&  \frac{1}{4\zeta}  
\E\left[\sup_{0\leq k\leq n-1}e^{2p\gamma  kh}
\| \widehat{Y}_{{2k}}^f- \widehat{Y}_{{2k}}^c\|^{p}\right] 
+ \zeta\, \E\left[ \left(\sum_{k=0}^{n-1} e^{2\gamma k h}h\right)^{p-1}
\sum_{k=0}^{n-1} e^{2\gamma kh}h \cdot           
L^{p}\|\widehat{Y}_{{2k+1}}^f- \widehat{Y}_{{2k}}^f\|^{2p}
\right]
\end{align*}
 for any $\zeta>0.$ 
Here  we apply Jensen's inequality~\eqref{eq:jensen} three times.
By Lemma~\ref{lem:bounds of p moment},
\begin{align*}
    \mathbb{E}\left[\|\widehat{Y}_{{2k+1}}^f- \widehat{Y}_{{2k}}^f\|^{2p}\right]
    =&\mathbb{E}\left[\|S(\widehat{Y}_{{2k}}^c- \widehat{Y}_{{2k}}^f)h
    -\nabla f(\widehat{Y}_{{2k}}^f)h+\sqrt{2}\Delta W_{2k}\|^{2p}\right]\\
    \leq &3^{2p-1}\left(S^{2p}\mathbb{E}\left[\|\widehat{Y}_{{2k}}^c- \widehat{Y}_{{2k}}^f\|^{2p}\right] h^{2p}
   + \mathbb{E}\left[\|\nabla f(\widehat{Y}_{{2k}}^f)\|^{2p}\right]h^{2p} 
   +2^p\mathbb{E}\left[\|\Delta W_{2k}\|^{2p}\right]
\right).
\end{align*}
Similarly to~\eqref{eq: bound of nabla +S}, when 
$h=\mathcal{O}\big(\min\{L^{-2},\,\tilde{\alpha}S^{-2},\,\|\nabla f(0)\|^{-2}\}\big)$,
the bound
 $\E\left[ \|\Delta W_{2k}\|^{2p}\right]\leq 2^pd^{p}h^{p}p^{p}$
implies that 
\begin{align*}
     \mathbb{E}\left[\|\widehat{Y}_{{2k+1}}^f- \widehat{Y}_{{2k}}^f\|^{2p}\right]
     \lesssim 6^{2p}d^{p}h^{p}p^{p}.
\end{align*}
Hence
\begin{align*}
    J_4\lesssim&
    \frac{1}{4\zeta}  
\E\left[\sup_{0\leq k\leq n-1}e^{2p\gamma  kh}
\| \widehat{Y}_{{2k}}^f- \widehat{Y}_{{2k}}^c\|^{p}\right] 
+ \zeta\,(2\gamma)^{-p}e^{2p\gamma nh}\cdot L^p\cdot 6^{2p}d^{p}h^{p}p^{p}\\
\lesssim&\frac{1}{4\zeta}  
\E\left[\sup_{0\leq k\leq n-1}e^{2p\gamma  kh}
\| \widehat{Y}_{{2k}}^f- \widehat{Y}_{{2k}}^c\|^{p}\right] 
  + \zeta\,3^p \gamma^{-p} L^pd^{p}p^{p}e^{2p\gamma nh}h^{p}.
\end{align*}

In summary,
\begin{align}\label{eq: before two cases}
   & \E\left[ e^{2p\gamma n h }
\|\widehat{Y}_{{2n}}^f- \widehat{Y}_{{2n}}^c\|^p \right]\notag\\
\leq & 3^{p/2-1}6^{p/2}\left[\mathbb{E}\left[\|\widehat{Y}_{0}^f- \widehat{Y}_{0}^c\|^p\right]
+I_1+4^{p/2-1}(J_1 + J_2 +J_3+J_4)\right]\notag\\
\lesssim&
 3^{p-1}2^{p/2}\Bigg\{\mathbb{E}\left[\|\widehat{Y}_{0}^f- \widehat{Y}_{0}^c\|^p\right]
 %first term
 + 2^{p-2}\cdot\frac{1}{\zeta}  
\E\left[\sup_{0\leq k\leq n-1}e^{2p\gamma  kh}
\| \widehat{Y}_{{2k}}^f- \widehat{Y}_{{2k}}^c\|^{p}\right]\notag\\
&\quad\quad\quad\quad\quad\quad
+  e^{2p\gamma  {n}h} h^p\cdot I
+2^{p/2-3}\zeta \gamma^{-p} L^p S^p
 e^{p \gamma nh}\mathbb{E}\left[\|\widehat{Y}_{0}^f- \widehat{Y}_{0}^c\|^p \right]h^p
\Bigg\} 
\end{align}
with
\begin{align*}
   I= 
    %I_1
 \gamma^{-p/2} L^p p^{p/2}d^{p/2}
  %J_1
+2^{p-2} \zeta&\Bigg( \gamma^{-p} L^{2p}C^pd^{p/2}p^{p/2}   
+ \gamma^{-p} L^{p}\|\nabla f(0)\|^p  
 %J_2 less
+ 2^p\gamma^{-2p} L^{2p}S^p p^{p/2}d^{p/2} h^{p/2}\\
  %J_3
 &\quad\quad + C_{{\mathrm{BDG}}}^{p/2}2^{-p/2} \gamma^{-p/2}p^{p/2} L^p
  %J_4
+3^p \gamma^{-p} L^pd^{p}p^p\Bigg).
\end{align*}
When $\zeta$ is a constant, $\|\nabla f(0)\| = \mathcal{O}(L\sqrt{d})$, and 
$h = \mathcal{O}(\gamma^2 S^{-2})$, we have
\begin{align*}
    I=\mathcal{O}\left(\left(\frac{L^2\sqrt{d}+Ld}{\gamma}p\right)^p\right).
\end{align*}

In the following, we consider two cases.

In the first case, suppose that $\widehat{Y}_{0}^f= \widehat{Y}_{0}^c$. 
Then we have
\begin{align*}
   & \E\left[ e^{2p\gamma n h }
\|\widehat{Y}_{{2n}}^f- \widehat{Y}_{{2n}}^c\|^p \right]
\lesssim
 3^{p-1}2^{3p/2-2}
 \frac{1}{\zeta}  
\E\left[\sup_{0\leq k\leq n-1}e^{2p\gamma  kh}
\| \widehat{Y}_{{2k}}^f- \widehat{Y}_{{2k}}^c\|^{p}\right]
+ 3^{p-1}2^{p/2}  e^{2p\gamma  {n}h} h^p\cdot I,
\end{align*}
which implies that
\begin{align*}
   & \E\left[ 
\|\widehat{Y}_{{2n}}^f- \widehat{Y}_{{2n}}^c\|^p \right]
\lesssim
 3^{p-1}2^{3p/2-2}
 \frac{1}{\zeta}  
\E\left[\sup_{0\leq k\leq n-1}
\| \widehat{Y}_{{2k}}^f- \widehat{Y}_{{2k}}^c\|^{p}\right]
+ 3^{p-1}2^{p/2}  h^p\cdot I.
\end{align*}
Taking the supremum over $0 \leq n \leq N$, we obtain
\begin{align*}
   & \E\left[ \sup_{0\leq n\leq N}
\|\widehat{Y}_{{2n}}^f- \widehat{Y}_{{2n}}^c\|^p \right]
\lesssim
 3^{p-1}2^{3p/2-2}
 \frac{1}{\zeta}  
\E\left[\sup_{0\leq k\leq N}
\| \widehat{Y}_{{2k}}^f- \widehat{Y}_{{2k}}^c\|^{p}\right]
+ 3^{p-1}2^{p/2}   h^p\cdot I.
\end{align*}
Choosing $\zeta = 2^{3p/2-1}3^{p-1}$ gives
\begin{align*}
   & \E\left[ \sup_{0\leq n\leq N}
\|\widehat{Y}_{{2n}}^f- \widehat{Y}_{{2n}}^c\|^p \right]
\lesssim
 3^{p-1}2^{p/2+1}   h^p\cdot I.
\end{align*}

For the second case, suppose that $\widehat{Y}_{0}^f\neq \widehat{Y}_{0}^c$. 
Taking the supremum over $0 \leq n \leq N$ on both sides of~\eqref{eq: before two cases},
\begin{align*}
      &  \E\left[\sup_{0\leq n\leq N} e^{2p\gamma n h }
\|\widehat{Y}_{{2n}}^f- \widehat{Y}_{{2n}}^c\|^p \right]
\lesssim
3^{p-1}2^{3p/2-2}\cdot\frac{1}{\zeta}  
\E\left[\sup_{0\leq k\leq N}e^{2p\gamma  kh}
\| \widehat{Y}_{{2k}}^f- \widehat{Y}_{{2k}}^c\|^{p}\right]\\
&\quad + \left(3^{p-1}2^{p/2}+3^{p-1}2^{3p/2-3}\zeta \gamma^{-p} L^p S^p h^p
 e^{p \gamma nh}\right)\mathbb{E}\left[\|\widehat{Y}_{0}^f- \widehat{Y}_{0}^c\|^p\right]
+ 3^{p-1}2^{p/2}e^{2p\gamma  {n}h} h^p\cdot I.
\end{align*}
Similarly, we take $\zeta = 2^{3p/2-1}3^{p-1}$,
\begin{align*}
      &  \E\left[\sup_{0\leq n\leq N} e^{2p\gamma n h }
\|\widehat{Y}_{{2n}}^f- \widehat{Y}_{{2n}}^c\|^p \right]\\
\lesssim
&2\left(3^{p-1}2^{p/2}+3^{p-1}2^{3p/2-3}\zeta \gamma^{-p} L^p S^ph^p
 e^{p \gamma nh}\right)\mathbb{E}\left[\|\widehat{Y}_{0}^f- \widehat{Y}_{0}^c\|^p\right]
+ 3^{p-1}2^{p/2+1}e^{2p\gamma  {n}h} h^p\cdot I.
\end{align*}
In particular, when $n = N$, we have
\begin{align*}
 &\E\left[\|\widehat{Y}_{{2N}}^f- \widehat{Y}_{{2N}}^c\|^p \right]\\
\lesssim
&2\left(3^{p-1}2^{p/2}e^{-p \gamma T/2}+3^{p-1}2^{3p/2-3}\zeta \gamma^{-p} L^p S^ph^p
 \right)e^{-p \gamma T/2}\mathbb{E}\left[\|\widehat{Y}_{0}^f- \widehat{Y}_{0}^c\|^p\right]
+ 3^{p-1}2^{p/2+1} h^p\cdot I.
\end{align*}

Finally, we obtain the following result.

There exist constants $C_3, C_4, C_{(1)}, C_{(2)} > 0$
with
$$
C_{(1)}=\mathcal{O}\left(\frac{L\sqrt{d} }{2S-\lambda}\right),\quad C_{(2)} = \mathcal{O}\left(\frac{L^2\sqrt{d}+Ld}{2S-\lambda}\right),
$$
and for all
 $$0<h\leq h_{(1)}:=C_3\min\left\{S^{-1},\tilde{\alpha} L^{-2},\tilde{\alpha} \|\nabla f(0)\|^{-2},\tilde{\alpha}^{-1}
(2S-\lambda)^{-1},\tilde{\alpha} S^{-2}, \|\nabla f(0)\|^{-2},(1-\lambda/S)^2
\right\},$$
the following holds.

\medskip

\noindent
\textbf{(1) when $ \widehat{Y}_{0}^f= \widehat{Y}_{0}^c$:} 
\begin{align*}
    \E\left[\sup_{0\leq n\leq N}
\|\widehat{Y}_{{2n}}^f- \widehat{Y}_{{2n}}^c\|^p \right]
\leq C_{(1)}^p p^{p/2}h^{p/2},
\quad 
 \E\left[ \sup_{0\leq n\leq N}
\|\widehat{Y}_{{2n}}^f- \widehat{Y}_{{2n}}^c\|^p \right]
\leq
C_{(2)}^p p^{p}h^{p}.
\end{align*}

\medskip

\noindent
\textbf{(2) when $ \widehat{Y}_{0}^f\neq \widehat{Y}_{0}^c$:} 
\begin{align*}
 \E\left[\|\widehat{Y}_{{2N}}^f- \widehat{Y}_{{2N}}^c\|^p \right]
 \leq
&C_4^p e^{-p \gamma T/2}\mathbb{E}\left[\|\widehat{Y}_{0}^f- \widehat{Y}_{0}^c\|^p\right]
+C_{(2)}^p p^{p}h^{p},
\end{align*}
where $\gamma\in(0,2S-\lambda)$ and $\gamma=\Theta(2S-\lambda)$.

\end{proof}

It is worth noting that the proofs of the preceding two 
lemmas remain valid in the regime $S=0$ and $\lambda=-m<0$,
 and therefore extend to all algorithms and results considered in this paper. 
In particular, we have the following corollary.
\begin{corollary}\label{cor: p moment distance of one-sided}
Assume that $f$ is $L$-smooth, $L$-Hessian-smooth, and satisfies the one-sided Lipschitz condition with constant $m$. 
Consider the coupled scheme~\eqref{SDE:discrete couple}. 
Then there exist constants $h_{(2)},C_{(3)} > 0$ such that
$$
h_{(2)} = C_3 \min\left\{
m L^{-2},\,
m\|\nabla f(0)\|^{-2},\,
\tilde{\alpha}^{-1} m^{-1},\,
\|\nabla f(0)\|^{-2}
\right\},
$$
and
$$
C_{(3)} = \mathcal{O}\!\left(\frac{L^2\sqrt{d}+Ld}{m}\right).
$$
For any $0<h\leq h_{(2)}$ and $p\geq 1$, we have
\begin{align*}
 \E\left[\|\widehat{X}_{2N}^f- \widehat{X}_{2N}^c\|^p \right]
 \leq
 C_4^p e^{-p \gamma T/2}
 \mathbb{E}\left[\|\widehat{X}_{0}^f- \widehat{X}_{0}^c\|^p\right]
 + C_{(3)}^p p^{p} h^{p},
\end{align*}
where $\gamma \in (0,m)$ and $\gamma = \Theta(m)$.
\end{corollary}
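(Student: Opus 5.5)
The plan is to obtain Corollary~\ref{cor: p moment distance of one-sided} as the special case $S=0$, $\lambda=-m$ of Lemma~\ref{lem:p-bounds of distance of two paths}. This means rerunning the argument of Lemma~\ref{lem:bounds of p moment} and Lemma~\ref{lem:p-bounds of distance of two paths} for the plain coupled scheme~\eqref{SDE:discrete couple}, with the spring removed. With $S=0$, the modified updates for $\widehat Y^f,\widehat Y^c$ reduce exactly to~\eqref{SDE:discrete couple}, since both paths see the same Brownian increments and $\Delta W^c_{2n}=\Delta W^f_{2n}+\Delta W^f_{2n+1}$. The one-sided Lipschitz condition is the weaker one-sided Lipschitz condition with $\lambda=-m$, and it also gives dissipativity with $\tilde\alpha=\Theta(m)$ by~\eqref{eq:ing1}. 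I would check that every step of those proofs used $S>0$ only through the inequalities $0\le Sh<1$ and $2S-\lambda>0$, and that both survive: the first trivially, the second as $2S-\lambda=m>0$.

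\textbf{Moment bound.} I would first redo Lemma~\ref{lem:bounds of p moment} with $S=0$. The convex-combination step becomes trivial, because the coarse and fine squared norms decouple. The dissipativity/smoothness absorption $4\|\nabla f\|^2h^2-2\langle x,\nabla f\rangle h\le -2(\tilde\alpha-2\eta)\|x\|^2h+O(h)$ only needs $h=\mathcal O(\min\{\tilde\alpha L^{-2},\tilde\alpha\|\nabla f(0)\|^{-2}\})$. The Jensen and BDG estimates for $I_1$--$I_4$ go through unchanged. This yields the uniform moment bound $\mathcal O(\tilde\alpha^{-p/2}d^{p/2}p^{p/2})$ and the gradient bound of Corollary~\ref{cor: bound of nabla f}.

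\textbf{Contraction of the difference.} I would next use the two-step recursion for $\widehat X^f_{2n}-\widehat X^c_{2n}$ from the proof of Lemma~\ref{lem:p-bounds of distance of two paths}, which now reads $(\widehat X^f_{2n}-\widehat X^c_{2n})-2h(\nabla f(\widehat X^f_{2n})-\nabla f(\widehat X^c_{2n}))-h(\nabla f(\widehat X^f_{2n+1})-\nabla f(\widehat X^f_{2n}))$. Squaring and using $\langle\nabla f(x)-\nabla f(y),x-y\rangle\ge m\|x-y\|^2$ gives the analogue of~\eqref{error:2} with contraction factor $1-4mh+\mathcal O(L^2h^2)\le 1-4\gamma h$ for some $\gamma\in(0,m)$, $\gamma=\Theta(m)$, once $h=\mathcal O(mL^{-2})$. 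The remaining cross term is split as before into $J_1,J_3,J_4$; the term $J_2$ vanishes since $S=0$. Here $J_1$ is bounded via the gradient moments, $J_3$ via BDG, and $J_4$ via Hessian-Lipschitzness together with the one-step increment bound~\eqref{small increment}, which is where $L$-Hessian smoothness gives order $h$ rather than $h^{1/2}$. With $\gamma=\Theta(m)$ in place of $\Theta(2S-\lambda)$, the constant $I$ becomes $\mathcal O\bigl(((L^2\sqrt d+Ld)p/m)^p\bigr)$, giving $C_{(3)}=\mathcal O((L^2\sqrt d+Ld)/m)$.

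\textbf{Main obstacle.} The step I expect to need the most care is the case $\widehat X^f_0\ne\widehat X^c_0$. After multiplying by $e^{2p\gamma nh}$, the sup-term $\frac1\zeta\E[\sup_k e^{2p\gamma kh}\|\cdot\|^p]$ must be absorbed by choosing $\zeta$ a fixed constant as in Lemma~\ref{lem:p-bounds of distance of two paths}. Evaluating at $n=N$ then produces $C_4^pe^{-p\gamma T/2}\E\|\widehat X^f_0-\widehat X^c_0\|^p+C_{(3)}^pp^ph^p$. The point to verify is that the initial-difference contribution carries no $h$-dependent blowup. In the spring case it came from $J_2$, which is now absent, so this should in fact be simpler. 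Collecting the step-size restrictions (all $S$-dependent ones dropped, $\tilde\alpha\sim m$) gives the stated $h_{(2)}$.
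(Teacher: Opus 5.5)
Your proposal is correct and follows the paper's route exactly: the paper justifies Corollary~\ref{cor: p moment distance of one-sided} in one line, by noting that the proofs of Lemma~\ref{lem:bounds of p moment} and Lemma~\ref{lem:p-bounds of distance of two paths} remain valid with $S=0$ and $\lambda=-m$. You unpack precisely that: the spring scheme collapses to~\eqref{SDE:discrete couple}, $\gamma=\Theta(m)$, $J_2\equiv 0$, and the $S$-dependent step-size restrictions become vacuous, which yields the stated $h_{(2)}$. Your observation that without $J_2$ the initial-difference term actually decays like $e^{-p\gamma T}$, and hence a fortiori like the stated $e^{-p\gamma T/2}$, is consistent with the paper's bound.
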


The proofs of the following two lemmas rely solely on the change-of-measure technique introduced in Subsection~\ref{subsec:Quantum-Accelerated Multilevel Monte Carlo with Change of Measure}. 
In this setting, $\widehat{Y}_{0}^f = \widehat{Y}_{0}^c$, and hence the conclusion 
of Lemma~\ref{lem:p-bounds of distance of two paths}
can be simplified as
\begin{align*}
\E\left[ \sup_{0 \leq n \leq N}
\|\widehat{Y}_{2n}^f - \widehat{Y}_{2n}^c\|^p \right] 
\leq  C_{(1)}^p p^{p/2} h^{p/2}, \quad 
\E\left[ \sup_{0 \leq n \leq N}
\|\widehat{Y}_{2n}^f - \widehat{Y}_{2n}^c\|^{p} \right] 
\leq  C_{(2)}^p p^{p} h^p.
\end{align*}
Moreover,  we assume that $\varphi$ is $K$-globally Lipschitz and bounded below.
\begin{lemma}\label{lem:bound of R-D}
   For any $p\geq 1$,
there exist  constant $h_{(3)}=\min\left\{h_{(1)},\frac{1}{2ep(8p-1)TS^2C_{(1)}^2}\right\}>0$
 such that,
for all  $0<h\leq h_{(3)}$,
\begin{align*}
    \E\left[\left| \frac{\d \widehat{\mathbb{Q}}^c}{\d \mathbb{P}}\right|^p\right]
    \leq 2.
\end{align*}
\end{lemma}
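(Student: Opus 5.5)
Write $R^c_T=\frac{\mathrm d\widehat{\mathbb Q}^c}{\mathrm d\mathbb P}=\prod_{n=0}^{N-1}R_n$. The single-step formula derived earlier gives
$$R_n=\exp\Big(-\tfrac{\sqrt2}{2}\langle \Delta W^c_{2n},S_n\rangle-\tfrac14\|S_n\|^2\,2h\Big),$$
where $S_n=S(\widehat Y^f_{2n}-\widehat Y^c_{2n})$ and $\Delta W^c_{2n}=\Delta W_{2n}+\Delta W_{2n+1}$ is an $\mathcal N(0,2hI)$ increment independent of $\mathcal F_{2n}$. Thus $R^c_T=\exp(M_N-\tfrac12\langle M\rangle_N)$, where $M_N=-\sum_n\tfrac{\sqrt2}{2}\langle \Delta W^c_{2n},S_n\rangle$ is a discrete martingale with conditional variance increments $\tfrac12\|S_n\|^2\cdot 2h$. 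The plan is the standard exponential-martingale moment trick.

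\textbf{Step 1 (splitting the $p$-th power).} Write
$$(R^c_T)^p=\exp\big(pM_N-\tfrac{p}{2}\langle M\rangle_N\big)=\underbrace{\exp\big(pM_N-p^2\langle M\rangle_N\big)}_{A}\cdot\exp\big((p^2-\tfrac p2)\langle M\rangle_N\big).$$
By Cauchy--Schwarz,
$$\E[(R^c_T)^p]\le \E[A^2]^{1/2}\,\E\big[\exp((2p^2-p)\langle M\rangle_N)\big]^{1/2}.$$
The first factor is $\E[\exp(2pM_N-2p^2\langle M\rangle_N)]$, which is the expectation of a product of exact Gaussian exponential factors. Conditioning step by step (a supermartingale/tower argument, using that $S_n$ is $\mathcal F_{2n}$-measurable), this expectation equals $1$.

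\textbf{Step 2 (bounding the quadratic variation term).} We have $\langle M\rangle_N=\tfrac12\sum_n 2h\,S^2\|\widehat Y^f_{2n}-\widehat Y^c_{2n}\|^2\le TS^2\sup_n\|\widehat Y^f_{2n}-\widehat Y^c_{2n}\|^2$. We therefore need
$$\E\Big[\exp\big(c\,TS^2 D^2\big)\Big],\qquad D=\sup_{n}\|\widehat Y^f_{2n}-\widehat Y^c_{2n}\|,\quad c=2p^2-p.$$
Expanding the exponential as a series and using Lemma~\ref{lem:p-bounds of distance of two paths}(1) (with $\widehat Y^f_0=\widehat Y^c_0$) gives $\E[D^{2k}]\le C_{(1)}^{2k}(2k)^k h^k$. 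Hence
$$\E[e^{cTS^2D^2}]\le\sum_k\frac{(cTS^2C_{(1)}^2h)^k(2k)^k}{k!}\le\sum_k\big(2e\,cTS^2C_{(1)}^2h\big)^k,$$
using $k^k/k!\le e^k$. With $h\le\big(2ep(8p-1)TS^2C_{(1)}^2\big)^{-1}$, and keeping track of constants (possibly $c\le p(8p-1)/4$ after absorbing factors of $2$ from $\Delta W^c$ having variance $2h$), the ratio is at most $\tfrac12$. The series is then at most $2$, so the square root is at most $\sqrt2\le2$, giving $\E[(R^c_T)^p]\le 2$. The requirement $h\le h_{(1)}$ ensures the lemma applies.

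\textbf{Main obstacle.} The hardest part is matching constants so that the threshold is exactly $h_{(3)}$ as stated. In particular, the exponent $c$ must come out consistent with the factor $p(8p-1)$, which likely arises from Hölder with exponent 2 applied to $2p$ together with the variance $2h$. I would first redo Step 1 with a general Hölder pair $(q,q')$ and then choose $q=2$. A secondary point is justifying that $\E[D^{2k}]$ is valid for all $k$ uniformly under a single $h$-constraint; this holds because Lemma~\ref{lem:p-bounds of distance of two paths} gives explicit $p$-dependence with $p$-independent $h_{(1)}$.
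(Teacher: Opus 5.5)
Your proposal is correct and is essentially the paper's proof. Cauchy--Schwarz splits $(R^c_T)^p$ into an exact Gaussian exponential martingale with expectation one and an exponential of a multiple of $S^2h\sum_n\|\widehat Y^f_{2n}-\widehat Y^c_{2n}\|^2$; the latter is then bounded by a power series using the even-moment bounds of Lemma~\ref{lem:p-bounds of distance of two paths}(1) and Stirling. Your use of the $\sup_n$ form, where the paper uses $(\sum_n a_n)^k\le N^{k-1}\sum_n a_n^k$, is an immaterial difference. Your exponent $p(2p-1)$ is the correct one and is smaller than the paper's $p(8p-1)$, which arises because the paper writes the martingale term as $-\sqrt{2}pS\sum_n\langle\cdot,\cdot\rangle$ instead of $-\tfrac{\sqrt{2}}{2}pS\sum_n\langle\cdot,\cdot\rangle$. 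Hence under the stated $h_{(3)}$ your series ratio is at most $\tfrac{2p-1}{8p-1}<\tfrac14$, and the constant-matching you flagged as the main obstacle is not an issue.
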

\begin{proof}
    Since
  $$  \frac{\d \widehat{\mathbb{Q}}^c}{\d \mathbb{P}}= \prod_{n=0}^{N-1}  R \left( \widehat{Y}_{{2n+2}}^c ,\widehat{Y}_{{2n}}^c, S(\widehat{Y}_{{2n}}^f-\widehat{Y}_{{2n}}^c), 2h\right),$$
  we have
  \begin{align*}
    &\E\left[\left| \frac{\d \widehat{\mathbb{Q}}^c}{\d \mathbb{P}}\right|^p\right]
    =\E\left[\prod_{n=0}^{N-1}  \left|R \left( \widehat{Y}_{{2n+2}}^c ,\widehat{Y}_{{2n}}^c, S(\widehat{Y}_{{2n}}^f-\widehat{Y}_{{2n}}^c), 2h\right)\right|^p\right]\\
    =&\E\left[\exp\left(-\sqrt{2}pS\sum_{n=0}^{N-1}\langle\widehat{Y}_{{2n}}^f-\widehat{Y}_{{2n}}^c,\Delta W_{2n}+\Delta W_{2n+1}\rangle-\frac{1}{2}pS^2h\sum_{n=0}^{N-1}\|\widehat{Y}_{{2n}}^f-\widehat{Y}_{{2n}}^c\|^2 \right)\right]\\
      =&\E\Bigg[\exp\left(p(4p-1/2)S^2h\sum_{n=0}^{N-1}\|\widehat{Y}_{{2n}}^f-\widehat{Y}_{{2n}}^c\|^2\right)\\
      &\quad\quad\times\exp\left(-\sqrt{2}pS\sum_{n=0}^{N-1}\langle\widehat{Y}_{{2n}}^f-\widehat{Y}_{{2n}}^c,\Delta W_{2n}+\Delta W_{2n+1}\rangle-4p^2S^2h\sum_{n=0}^{N-1}\|\widehat{Y}_{{2n}}^f-\widehat{Y}_{{2n}}^c\|^2 \right)\Bigg]\\ 
\leq&\E\left[\exp\left(p(8p-1)S^2h\sum_{n=0}^{N-1}\|\widehat{Y}_{{2n}}^f-\widehat{Y}_{{2n}}^c\|^2\right)\right]^{\frac{1}{2}}\\
 &\quad\quad \times\E\left[\exp\left(-2\sqrt{2}pS\sum_{n=0}^{N-1}\langle\widehat{Y}_{{2n}}^f-\widehat{Y}_{{2n}}^c,\Delta W_{2n}+\Delta W_{2n+1}\rangle-8p^2S^2h\sum_{n=0}^{N-1}\|\widehat{Y}_{{2n}}^f-\widehat{Y}_{{2n}}^c\|^2 \right)\right]^{\frac{1}{2}}.
  \end{align*}
If we denote $u_n=-4\sqrt{h}pS\left(\widehat{Y}_{{2n}}^f-\widehat{Y}_{{2n}}^c\right)$ and $Z_n=\frac{1}{\sqrt{2h}}(\Delta W_{2n}+\Delta W_{2n+1})$, 
then $Z_n\sim \mathcal{N}(0,I_d)$. By the Gaussian exponential identity, we have
\begin{align*}
    &\E\left[\exp\left(-2\sqrt{2}pS\langle\widehat{Y}_{{2n}}^f-\widehat{Y}_{{2n}}^c,\Delta W_{2n}+\Delta W_{2n+1}\rangle-8p^2S^2h\|\widehat{Y}_{{2n}}^f-\widehat{Y}_{{2n}}^c\|^2 \right)\right]\\
    =&\E\left[\exp\left(\langle u_n,Z_n\rangle-\frac{1}{2}\|u_n\|^2\right)\right]=1.
\end{align*}
On the other hand,
\begin{align*}
   &\E\left[\exp\left(p(8p-1)S^2h\sum_{n=0}^{N-1}\|\widehat{Y}_{{2n}}^f-\widehat{Y}_{{2n}}^c\|^2\right)\right]\\
    \leq& \sum_{k=0}^\infty \frac{\E\left[\left(p(8p-1)S^2h\sum\limits_{n=0}^{N-1}\|\widehat{Y}_{{2n}}^f-\widehat{Y}_{{2n}}^c\|^2\right)^k\right]}{k!}\\
    \leq&\sum_{k=0}^\infty \frac{N^{k-1}\left(p(8p-1)S^2h\right)^k\sum\limits_{n=0}^{N-1}\E\left[\|\widehat{Y}_{{2n}}^f-\widehat{Y}_{{2n}}^c\|^{2k}\right]}{k!}.
\end{align*}
Using  Lemma~\ref{lem:p-bounds of distance of two paths} and the Stirling's approximation
$k!\geq \sqrt{2\pi}k^{k+1/2}e^{-k}$ for any $k\geq 1$, we have
\begin{align*}
   &\E\left[\exp\left(p(8p-1)S^2h\sum_{n=0}^{N-1}\|\widehat{Y}_{{2n}}^f-\widehat{Y}_{{2n}}^c\|^2\right)\right]\\
        \leq&1+
    \sum_{k=1}^\infty \frac{N^{k-1}\left(p(8p-1)S^2h\right)^k\cdot
    N C_{(1)}^{2k} (2k)^{k} h^{k}}{\sqrt{2\pi}k^{k+1/2}e^{-k}}\\
   \leq &1+\sum_{k=1}^\infty \frac{\left(e p(8p-1)TS^2C_{(1)}^2 h\right)^k}{\sqrt{2\pi k}}
   < 2,
\end{align*}
when $0<h\leq h_{(1)}$ and $e p(8p-1)TS^2C_{(1)}^2 h<1/2$.

Therefore, for any $p\geq 1$,
there exist constant $h_{(3)}=\min\left\{h_{(1)},\frac{1}{2ep(8p-1)TS^2C_{(1)}^2}\right\}>0$ such that,
for all  $0<h\leq h_{(3)}$,
\begin{align*}
    \E\left[\left| \frac{\d \widehat{\mathbb{Q}}^c}{\d \mathbb{P}}\right|^p\right]
    \leq 2.
\end{align*}
\end{proof}

\begin{lemma}\label{lem:bound of two path with R-D}
  There exists constants $C_5>0$ and
   \begin{align*}
    h_{(4)}=C_5\min\Bigg\{\frac{1}{S},\,\frac{\tilde{\alpha}}{S^2},\,\frac{\tilde{\alpha}}{L^2},\,\frac{\tilde{\alpha}}{2S-\lambda},\,\frac{\min\{1,\tilde\alpha\}}{\|\nabla f(0)\|^2},\,\left(1-\frac{\lambda}{S}\right)^2,\,\frac{(2S-\lambda)^2}{T^2S^2L^2d\,},\,\frac{2S-\lambda}{\sqrt{T}S(L^2+L\sqrt{d})}\,\Bigg\}
    \end{align*}   such that
     for any $p\geq 1$ and $0<h\leq h_{(4)}$, we have
\begin{align*}
        &\E\left[\left|\varphi(\widehat{Y}_{2N}^f)\frac{\d\widehat{\mathbb{Q}}^f}{\d \mathbb{P}}-\varphi(\widehat{Y}_{2N}^c)\frac{\d\widehat{\mathbb{Q}}^c}{\d \mathbb{P}}\right|^p\right]^{1/p}
    =\mathcal{O}\left(\sqrt{T}\cdot\frac{SLd(L+\sqrt{d})}{2S-\lambda}h\right).
\end{align*}
\end{lemma}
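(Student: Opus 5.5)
We decompose the level correction by adding and subtracting a mixed term. Abbreviate $R^f=\frac{\d\widehat{\mathbb{Q}}^f}{\d\mathbb{P}}$ and $R^c=\frac{\d\widehat{\mathbb{Q}}^c}{\d\mathbb{P}}$. Then
\begin{align*}
\varphi(\widehat Y^f_{2N})R^f-\varphi(\widehat Y^c_{2N})R^c
=\bigl(\varphi(\widehat Y^f_{2N})-\varphi(\widehat Y^c_{2N})\bigr)R^f+\varphi(\widehat Y^c_{2N})\bigl(R^f-R^c\bigr).
\end{align*}
We apply Minkowski and then H\"older with exponents $(2p,2p)$ to each term.

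\textbf{First term.} Use the $K$-Lipschitz property of $\varphi$. By Lemma~\ref{lem:p-bounds of distance of two paths}(1) with exponent $2p$, this term is bounded by $K C_{(2)}\,2p\,h\cdot\E[|R^f|^{2p}]^{1/(2p)}$. The moment of $R^f$ is $\le 2$: repeat the argument of Lemma~\ref{lem:bound of R-D} for the fine path. It has the same Gaussian-exponential martingale structure, with spring $S(\widehat Y^c_n-\widehat Y^f_n)$ and step $h$, and the odd-step distances $\widehat Y^c_{2n+1}-\widehat Y^f_{2n+1}$ are controlled through the one-step recursion by the even ones. So the first term is $\mathcal{O}\bigl(\frac{L^2\sqrt d+Ld}{2S-\lambda}h\bigr)$, which is dominated by the target bound.

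\textbf{Second term.} Since $\varphi$ is bounded, this reduces to $\E[|R^f-R^c|^{2p}]^{1/(2p)}$. Write $R^f-R^c=R^c(e^{D}-1)$ with $D=\log R^f-\log R^c$, and use $|e^D-1|\le |D|(1+e^{D})$, i.e. $|R^f-R^c|\le |D|(R^c+R^f)$. With H\"older again and the moment bounds on $R^f,R^c$, it remains to show
\begin{align*}
\E[|D|^{q}]^{1/q}=\mathcal{O}\Bigl(\sqrt T\,\tfrac{SLd(L+\sqrt d)}{2S-\lambda}h\Bigr).
\end{align*}
Using the explicit form $\log R=-\tfrac{\sqrt2}{2}\langle\Delta W,\widehat S\rangle-\tfrac14\|\widehat S\|^2h$ and grouping fine steps $2n,2n+1$ against coarse step $n$, $D$ splits into a martingale part and a drift part:
\begin{align*}
D=-\tfrac{\sqrt2}{2}S\sum_n\bigl\langle \delta_{2n+1}-\delta_{2n},\Delta W_{2n+1}\bigr\rangle+\bigl(\text{quadratic terms}\bigr),
\end{align*}
where $\delta_k=\widehat Y^c_k-\widehat Y^f_k$. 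Spring terms at the shared step $2n$ cancel exactly, since the fine spring is $-\delta_{2n}$ and the coarse spring is $+\delta_{2n}$. The sign convention needs care, and I would check it by tracking $\widehat{\mathbb{Q}}^c$ using $\Delta W_{2n}+\Delta W_{2n+1}$.

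\textbf{Martingale part.} The one-step recursion from the proof of Lemma~\ref{lem:p-bounds of distance of two paths} gives
\begin{align*}
\delta_{2n+1}-\delta_{2n}=\mathcal{O}(h)\bigl(S\|\delta_{2n}\|+L\|\delta_{2n}\|\bigr).
\end{align*}
So the martingale part has BDG quadratic variation of order $S^2(S+L)^2h^3\sum\|\delta_{2n}\|^2$. Bounding $\|\delta\|$ by the $\mathcal{O}(\sqrt h)$ estimate $C_{(1)}$, this gives something of order $\sqrt T\,S(S+L)C_{(1)}h^{3/2}$.

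\textbf{Quadratic part.} This part is $S^2h\sum\bigl(\|\delta_{2n}\|^2-\|\delta_{2n+1}\|^2\bigr)$ up to constants, which is $\mathcal{O}\bigl(S^2 T\,C_{(1)}C_{(2)}h^{3/2}\bigr)$ after the same increment bound. The cross terms $\langle\delta,\Delta W\rangle$ at order $Sh$ are again martingales controlled via $C_{(2)}h$ bounds, contributing $\sqrt T\,S\,C_{(2)}\sqrt d\,h$-type terms. This is where the stated $\sqrt T\,\frac{SLd(L+\sqrt d)}{2S-\lambda}h$ should come from, using $C_{(2)}=\mathcal{O}\bigl(\frac{L^2\sqrt d+Ld}{2S-\lambda}\bigr)$ and the $\sqrt d$ from the Gaussian increments.

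\textbf{Step-size conditions.} The extra conditions in $h_{(4)}$ are there so that the $h^{3/2}$ contributions are absorbed into the $h$-order term. The condition $h\le\frac{(2S-\lambda)^2}{T^2S^2L^2d}$ makes $T\,C_{(1)}S\sqrt h$ small. The condition $h\le \frac{2S-\lambda}{\sqrt T S(L^2+L\sqrt d)}$ makes $h\le h_{(3)}$ for the moment bounds.

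\textbf{Main obstacle.} The hardest step is the exact bookkeeping of $D$. The order-$\sqrt h$ pieces must cancel between fine and coarse, leaving only $\mathcal{O}(h)$ after the sharp $C_{(2)}h$ distance estimate, so that the naive $C_{(1)}\sqrt h$ bound never appears at leading order. I would first verify this cancellation in the scalar linear case $\nabla f(x)=-\lambda x$.
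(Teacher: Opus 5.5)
\textbf{Genuine error, but repairable.} Your second-term argument rests on a cancellation that does not happen.

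\textbf{The sign error.} Write $\delta_k=\widehat Y^c_k-\widehat Y^f_k$. The fine path uses spring $+S\delta_{2n}$ against $\Delta W_{2n}$ and $+S\delta_{2n+1}$ against $\Delta W_{2n+1}$. The coarse path uses $-S\delta_{2n}$ against $\Delta W_{2n}+\Delta W_{2n+1}$, with step $2h$. Inserting these into $\log R=-\tfrac{\sqrt2}{2}\langle\Delta W,\widehat S\rangle-\tfrac14\|\widehat S\|^2h$ gives
\begin{align*}
D=\log R^f-\log R^c=-\frac{\sqrt2}{2}S\sum_{n=0}^{N-1}\Bigl(2\langle\delta_{2n},\Delta W_{2n}\rangle+\langle\delta_{2n}+\delta_{2n+1},\Delta W_{2n+1}\rangle\Bigr)+\frac{S^2h}{4}\sum_{n=0}^{N-1}\bigl(\|\delta_{2n}\|^2-\|\delta_{2n+1}\|^2\bigr).
\end{align*}
Because the two springs have opposite signs, the Brownian terms add in the difference. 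The expression $-\tfrac{\sqrt2}{2}S\sum_n\langle\delta_{2n+1}-\delta_{2n},\Delta W_{2n+1}\rangle$ that you display is the martingale part of $\log R^f+\log R^c$, not of $D$. Consequently your estimate $\sqrt T\,S(S+L)C_{(1)}h^{3/2}$ for the martingale part is wrong. The true martingale part has size $S\sqrt T\,\|\delta\|$, and with the $C_{(1)}\sqrt h$ bound that would only give $\mathcal O(\sqrt h)$, which is the wrong order.

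\textbf{The repair.} The lemma still follows, because no cancellation is needed; the ``main obstacle'' you describe does not exist. Since $\widehat Y^f_0=\widehat Y^c_0$, part (1) of Lemma~\ref{lem:p-bounds of distance of two paths} gives $\E[\sup_n\|\delta_{2n}\|^q]^{1/q}\le C_{(2)}qh$. The odd indices then follow from the one-step recursion. By BDG, each of $\log R^f$ and $\log R^c$ is separately $\mathcal O(S\sqrt T\,C_{(2)}h+S^2TC_{(2)}^2h^2)$ in every $L^q$, which lies within the target for $h\le h_{(4)}$. So $D$ is handled by the triangle inequality. Your own remark that martingale terms are ``controlled via $C_{(2)}h$ bounds'' is exactly this fix, once it is applied to all of $D$.

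\textbf{Comparison with the paper.} This is the paper's route. It splits the difference into $\varphi(\widehat Y^f_{2N})(R^f-1)$, $\varphi(\widehat Y^f_{2N})-\varphi(\widehat Y^c_{2N})$ and $\varphi(\widehat Y^c_{2N})(1-R^c)$. It writes $R-1=e^{\xi(\mathcal H)}\mathcal H$ with $\mathcal H=\log R$, and controls $e^{\xi(\mathcal H)}$ through the exponential moment bound of Lemma~\ref{lem:bound of R-D}. It then bounds $\E|\mathcal H|^{4p}$ by BDG together with the $C_{(2)}$ estimate. Your split, using $|e^D-1|\le|D|(1+e^D)$, works equally well once $D$ is treated this way.

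\textbf{Smaller point.} The martingale $\sum_k\langle\delta_k,\Delta W_k\rangle$ has quadratic variation $\sum_k\|\delta_k\|^2h$, with no factor $d$. So the $\sqrt d$ in the target does not come from the Gaussian increments; in the paper it enters through a crude BDG step. The matching condition $h\lesssim\frac{2S-\lambda}{\sqrt T S(L^2+L\sqrt d)}$, i.e.\ $S\sqrt T\,C_{(2)}h\lesssim\sqrt d$, is what lets the quadratic part of $\mathcal H$ be absorbed. It is not what guarantees $h\le h_{(3)}$, as you suggest; that comes from the $TS^2C_{(1)}^2h\lesssim1$ condition.
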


\begin{proof}
    \begin{align*}
        &\E\left[\left|\varphi(\widehat{Y}_{2N}^f)\frac{\d\widehat{\mathbb{Q}}^f}{\d \mathbb{P}}-\varphi(\widehat{Y}_{2N}^c)\frac{\d\widehat{\mathbb{Q}}^c}{\d \mathbb{P}}\right|^p\right]\\
\leq &3^{p-1} \E\left[\left|\varphi(\widehat{Y}_{2N}^f)\frac{\d\widehat{\mathbb{Q}}^f}{\d \mathbb{P}}-\varphi(\widehat{Y}_{2N}^f)\right|^p\right]
+3^{p-1}\E\left[\left|\varphi(\widehat{Y}_{2N}^f)-\varphi(\widehat{Y}_{2N}^c)\right|^p\right]
+3^{p-1}\E\left[\left|\varphi(\widehat{Y}_{2N}^c)-\varphi(\widehat{Y}_{2N}^c)\frac{\d\widehat{\mathbb{Q}}^c}{\d \mathbb{P}}\right|^p\right]\\
\leq &3^{p-1}K^p\E\left[\|\widehat{Y}_{2N}^f-\widehat{Y}_{2N}^c\|^p\right]
+3^{p-1} \E\left[\left|\varphi(\widehat{Y}_{2N}^f)\right|^{2p}\right]^{1/2}\E\left[\left|\frac{\d\widehat{\mathbb{Q}}^f}{\d \mathbb{P}}-1\right|^{2p}\right]^{1/2}\\
&\quad +3^{p-1} \E\left[\left|\varphi(\widehat{Y}_{2N}^c)\right|^{2p}\right]^{1/2}\E\left[\left|\frac{\d\widehat{\mathbb{Q}}^c}{\d \mathbb{P}}-1\right|^{2p}\right]^{1/2}.
   \end{align*}

Denote
\begin{align*}
    \mathcal{H}=-\sqrt{2}S\sum_{n=0}^{N-1}\langle\widehat{Y}_{{2n}}^f-\widehat{Y}_{{2n}}^c,\Delta W_{2n}+\Delta W_{2n+1}\rangle
    -\frac{1}{2}S^2h\sum_{n=0}^{N-1}\|\widehat{Y}_{{2n}}^f-\widehat{Y}_{{2n}}^c\|^2,
\end{align*}
Then Taylor expansion gives that
$$e^x=1+e^{\xi(x)}x,\ \text{for some}\ \xi(x)\ \text{with}\ |\xi(x)|<|x|,$$
combine with H{\"o}lder's inequality,  we have
\begin{align*}
 &\E\left[\left|1-\frac{\d \widehat{\mathbb{Q}}^c}{\d \mathbb{P}}\right|^{2p}\right] 
=\E\left[\left( \exp(\xi(\mathcal{H}))|\mathcal{H}|\right)^{2p}\right]
\leq \E\left[\left| \exp(\xi(\mathcal{H}))\right|^{4p}\right]^{1/2}
\E\left[\left|\mathcal{H}\right|^{4p}\right]^{1/2}.
\end{align*}

First, by Lemma~\ref{lem:bound of R-D},
\begin{align*}
    \E\left[\left| \exp(\xi(\mathcal{H}))\right|^{4p}\right]
    \leq \E\left[\max\left(\exp(4p\mathcal{H}),1\right)\right]
    \leq  \E\left[\exp(4p\mathcal{H})\right]+1\lesssim 3.
\end{align*}

By Jensen's inequality and the Burkholder--Davis--Gundy inequality in~\cite{BARLOW1982198},
there exists a constant $C'>0$, independent of $d$, such that
when $h<\min\{h_{(1)}, \Theta(S^{-1}C_{(2)}^{-1}T^{-1/2}d^{1/2})\}$
\begin{align*}
    \E\left[\left|\mathcal{H}\right|^{4p}\right]
\leq &
2^{4p-1}\cdot 2^{2p}S^{4p}\E\left[\left|\sum_{n=0}^{N-1}\langle\widehat{Y}_{{2n}}^f-\widehat{Y}_{{2n}}^c,\Delta W_{2n}+\Delta W_{2n+1}\rangle\right|^{4p}\right]
+2^{4p-1}\cdot 2^{-4p}S^{8p}\E\left[\left|\sum_{n=0}^{N-1}\|\widehat{Y}_{{2n}}^f-\widehat{Y}_{{2n}}^c\|h\right|^{4p}\right]
\\
\leq&2^{6p-1}S^{4p}(C')^p p^{2p}\E\left[\left(\sum_{n=0}^{N-1}\|\widehat{Y}_{{2n}}^f-\widehat{Y}_{{2n}}^c\|^2 dh\right)^{2p}\right]
+2^{-1}S^{8p}T^{4p-1}\E\left[\sum_{n=0}^{N-1}\|\widehat{Y}_{{2n}}^f-\widehat{Y}_{{2n}}^c\|^{8p}h\right]
\\
\leq&2^{6p-1}S^{4p}(C')^p p^{2p}T^{2p-1}\E\left[\sum_{n=0}^{N-1}\|\widehat{Y}_{{2n}}^f-\widehat{Y}_{{2n}}^c\|^{4p}d^{2p} h\right]
+2^{-1}S^{8p}T^{4p-1}\E\left[\sum_{n=0}^{N-1}\|\widehat{Y}_{{2n}}^f-\widehat{Y}_{{2n}}^c\|^{8p}h\right]
\\
\lesssim&2^{6p-1}S^{4p}(C')^p p^{2p}T^{2p-1}\cdot T d^{2p} \cdot C_{(2)}^{4p}(4p)^{4p}h^{4p}
+2^{-1}S^{8p}T^{4p-1}\cdot T\cdot C_{(2)}^{8p}(8p)^{8p}h^{8p}
\\
\lesssim&2^{14p}S^{4p}(C')^pp^{6p}T^{2p}C_{(2)}^{4p}d^{2p}h^{4p}.
\end{align*}
So
\begin{align*}
    \E\left[\left|1-\frac{\d \widehat{\mathbb{Q}}^c}{\d \mathbb{P}}\right|^{2p}\right] 
    =\mathcal{O}(T^pS^{2p}d^{p}C_{(2)}^{2p}h^{2p}).
\end{align*}
Similarly,
\begin{align*}
    \E\left[\left|1-\frac{\d \widehat{\mathbb{Q}}^f}{\d \mathbb{P}}\right|^{2p}\right] 
   =\mathcal{O}(T^pS^{2p}d^{p}C_{(2)}^{2p}h^{2p}).
\end{align*}
Hence, 
by Lemma~\ref{lem:bounds of p moment} and Lemma~\ref{lem:p-bounds of distance of two paths},
\begin{align*}
   & \E\left[\left|\varphi(\widehat{Y}_{2N}^f)\frac{\d\widehat{\mathbb{Q}}^f}{\d \mathbb{P}}-\varphi(\widehat{Y}_{2N}^c)\frac{\d\widehat{\mathbb{Q}}^c}{\d \mathbb{P}}\right|^p\right]
    =\mathcal{O}(C_{(2)}^{p}h^{p})+
    \mathcal{O}(T^{p/2}S^{p}d^{p/2}C_{(2)}^{p}h^{p})
    =\mathcal{O}\left(\left(\sqrt{T}\cdot\frac{SLd(L+\sqrt{d})}{2S-\lambda}\right)^ph^p\right),
\end{align*}
and
\begin{align*}
        &\E\left[\left|\varphi(\widehat{Y}_{2N}^f)\frac{\d\widehat{\mathbb{Q}}^f}{\d \mathbb{P}}-\varphi(\widehat{Y}_{2N}^c)\frac{\d\widehat{\mathbb{Q}}^c}{\d \mathbb{P}}\right|^p\right]^{1/p}
    =\mathcal{O}\left(\sqrt{T}\cdot\frac{SLd(L+\sqrt{d})}{2S-\lambda}h\right).
\end{align*}
  There exists constants $C_5>0$ and
   \small{ \begin{align*}
    h_{(4)}=&C_5\min\Bigg\{S^{-1},\tilde{\alpha} L^{-2},\tilde{\alpha} \|\nabla f(0)\|^{-2},\tilde{\alpha}^{-1}
(2S-\lambda)^{-1},\tilde{\alpha} S^{-2}, \|\nabla f(0)\|^{-2},
(1-\lambda/S)^2,
T^{-1}S^{-2}C_{(1)}^{-2},S^{-1}C_{(2)}^{-1}T^{-1/2}d^{1/2}
\Bigg\}    \\
=&C_5\min\Bigg\{\frac{1}{S},\,\frac{\tilde{\alpha}}{S^2},\,\frac{\tilde{\alpha}}{L^2},\,\frac{\tilde{\alpha}}{2S-\lambda},\,\frac{\min\{1,\tilde\alpha\}}{\|\nabla f(0)\|^2},\,\left(1-\frac{\lambda}{S}\right)^2,\,\frac{(2S-\lambda)^2}{T^2S^2L^2d\,},\,\frac{2S-\lambda}{\sqrt{T}S(L^2+L\sqrt{d})}\,\Bigg\}
    \end{align*}}
\normalsize    such that
     for any $p\geq 1$ and $0<h\leq h_{(4)}$ this holds.
\end{proof}

\section*{Appendix B: Technical Details for Section~\ref{sec:heavy-tailed}}
\label{Appendix B: Technical Details for heavy tailed}
\addcontentsline{toc}{section}{Appendix B: Technical Details for Section~\ref{sec:heavy-tailed}}

\begin{lemma}\label{lem:construct of chi}
Let $0<R_1<R_2$. There exists a function $\chi:[0,\infty)\to[0,1]$ such that
\begin{itemize}
    \item $\chi(r)=1$ for  $r\in[0,R_1]$;
    \item $\chi(r)=0$ for $r\in[R_2,\infty)$;
    \item  $\chi^{(k)}(R_1)=\chi^{(k)}(R_2)=0$ for $k=1,2,3$ and $\chi'(r)\leq 0$ on $(R_1,R_2)$.
\end{itemize}
\end{lemma}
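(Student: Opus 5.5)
We will construct $\chi$ explicitly from the standard smooth step built out of $e^{-1/t}$. Define $\theta(t)=e^{-1/t}$ for $t>0$ and $\theta(t)=0$ for $t\le 0$. It is classical that $\theta\in C^\infty(\mathbb{R})$ and that every derivative of $\theta$ vanishes at $t=0$. Set
\begin{align*}
\sigma(t)=\frac{\theta(t)}{\theta(t)+\theta(1-t)},\qquad t\in\mathbb{R}.
\end{align*}
The denominator never vanishes, because for every $t$ at least one of $t$ and $1-t$ is positive. Hence $\sigma\in C^\infty$, $\sigma\equiv 0$ on $(-\infty,0]$, and $\sigma\equiv 1$ on $[1,\infty)$. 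We then define
\begin{align*}
\chi(r)=1-\sigma\!\left(\frac{r-R_1}{R_2-R_1}\right),\qquad r\ge 0.
\end{align*}

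The plan is to verify the three listed properties in turn.

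\textbf{Boundary values.} For $r\le R_1$ the argument of $\sigma$ is nonpositive, so $\chi(r)=1$. For $r\ge R_2$ the argument is at least $1$, so $\chi(r)=0$. Since $0\le\sigma\le 1$, the range of $\chi$ is contained in $[0,1]$.

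\textbf{Vanishing derivatives at $R_1$ and $R_2$.} The function $\chi$ is $C^\infty$. It is constant on a neighbourhood of $(-\infty,R_1]$ from the left and of $[R_2,\infty)$ from the right. By continuity of the derivatives, $\chi^{(k)}(R_1)=\chi^{(k)}(R_2)=0$ for all $k\ge 1$, and in particular for $k=1,2,3$.

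\textbf{Monotonicity.} For $0<t<1$ we compute
\begin{align*}
\sigma'(t)=\frac{\theta'(t)\theta(1-t)+\theta(t)\theta'(1-t)}{(\theta(t)+\theta(1-t))^2}.
\end{align*}
On this interval $\theta'(s)=s^{-2}e^{-1/s}>0$ for $s>0$, so $\sigma'(t)>0$. Therefore $\chi'(r)=-\sigma'(\cdot)/(R_2-R_1)\le 0$ on $(R_1,R_2)$.

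The only step that needs real care is the smoothness of $\theta$ at $0$. One proves by induction that $\theta^{(k)}(t)=P_k(1/t)e^{-1/t}$ for $t>0$, where $P_k$ is a polynomial. Since $e^{-1/t}$ decays faster than any power of $1/t$, this expression tends to $0$ as $t\to0^+$, so every one-sided derivative at $0$ matches the zero function on the left. We only need $C^3$, so it suffices to check this for $k\le 4$. Alternatively, one could avoid exponentials entirely and use a polynomial step such as $\sigma(t)=35t^4-84t^5+70t^6-20t^7$ on $[0,1]$, which has vanishing derivatives up to order $3$ at $0$ and at $1$; its monotonicity follows from $\sigma'(t)=140t^3(1-t)^3\ge0$.
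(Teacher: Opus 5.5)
Your proof is correct, but your main construction differs from the paper's. The polynomial fallback you mention at the end is exactly what the paper does.

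\textbf{The paper's route.} The paper looks for $p(t)=\chi(R_1+t(R_2-R_1))$ as a degree-$7$ polynomial. It imposes the eight Hermite conditions $p(0)=1$, $p(1)=0$, and $p^{(k)}(0)=p^{(k)}(1)=0$ for $k=1,2,3$, and solves the resulting linear system. This gives $p(t)=1-35t^4+84t^5-70t^6+20t^7$, which is precisely $1$ minus your polynomial $\sigma$. The paper then extends $\chi$ by $1$ on $[0,R_1]$ and by $0$ on $[R_2,\infty)$, and reads off monotonicity from $p'(t)=-140t^3(1-t)^3$.

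\textbf{What each route buys.}
\begin{itemize}
\item Your construction via $\theta(t)=e^{-1/t}$ gives a $C^\infty$ cutoff whose derivatives of every order vanish at $R_1$ and $R_2$. The endpoint conditions then follow from constancy plus continuity of derivatives, with no bookkeeping at the junctions. The cost is that it relies on the classical inductive argument that $e^{-1/t}$ is smooth at $0$.
\item The paper's polynomial is fully explicit and elementary. However, it is only $C^3$: its fourth derivative jumps at both junctions, since $p^{(4)}(0)=-840$ and $p^{(4)}(1)=840$. So one has to check the one-sided limits of $\chi',\chi'',\chi'''$ at $R_1$ and $R_2$ by hand.
\item $C^3$ is exactly enough here, because the lemma only asks for $k\le 3$ and the later argument only needs $g\in C^3$.
\end{itemize}
Either construction proves the lemma.
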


\begin{proof}
We construct $\chi$ explicitly.

First, introduce the rescaled variable
$$
t=\frac{r-R_1}{R_2-R_1}\in[0,1], \quad r\in [R_1,R_2].
$$
It is therefore sufficient to construct a function $p:[0,1]\to[0,1]$ such that
\begin{itemize}
     \item $p(0)=1$ and $p(1)=0$;
    \item  $p^{(k)}(0)=p^{(k)}(1)=0$ for $k=1,2,3$;
    \item  $p'(t)\leq 0$ on $(0,1)$.
\end{itemize}

Given such a function $p$, we can define $\chi$ as 
\begin{align*}
\chi(r)=
\begin{cases}
1, & 0\leq r\leq R_1,\\
p\left(\frac{r-R_1}{R_2-R_1}\right), & R_1<r<R_2,\\
0, & r\geq R_2.
\end{cases}
\end{align*}

Since we impose eight linear conditions on $p$ (the values of $p,p',p'',p'''$ at $0,1$),
it is natural to seek $p$ as a polynomial of degree 7:
\begin{align*}
    p(t)=a_0+a_1t+a_2t^2+a_3t^3+a_4t^4+a_5t^5+a_6t^6+a_7t^7.
\end{align*}

The conditions at $t=0$ give $a_0=1$ and $a_1=a_2=a_3=0$ immediately.
Hence
$$
p(t)=1+a_4t^4+a_5t^5+a_6t^6+a_7t^7.
$$

Consider the conditions at $t=1$ and we get
\begin{align*}
    \begin{cases}
        1+a_4+a_5+a_6+a_7=0,\\
        4a_4+5a_5+6a_6+7a_7=0,\\
        12a_4+20a_5+30a_6+42a_7=0,\\
        24a_4+60a_5+120a_6+210a_7=0.
    \end{cases}
\end{align*}

Solving this system yields
$$
a_4=-35,\quad a_5=84,\quad a_6=-70,\quad a_7=20.
$$
Therefore,
$$
p(t)=1-35t^4+84t^5-70t^6+20t^7.
$$

Define
\begin{align*}
\chi(r)=
\begin{cases}
1, & 0\leq r\leq R_1,\\
1-35\left(\dfrac{r-R_1}{R_2-R_1}\right)^4
+84\left(\dfrac{r-R_1}{R_2-R_1}\right)^5
-70\left(\dfrac{r-R_1}{R_2-R_1}\right)^6
+20\left(\dfrac{r-R_1}{R_2-R_1}\right)^7,
& R_1<r<R_2,\\
0, & r\geq R_2.
\end{cases}
\end{align*}
By construction,
$$
\chi(r)=1\quad \text{for }r\in[0,R_1],
\quad
\chi(r)=0\quad \text{for }r\in[R_2,\infty).
$$
Moreover, since 
$p(0)=1$ and $p(1)=0$,
$\chi$ is continuous at $r=R_1$ and $r=R_2$.

By the chain rule, for $r\in(R_1,R_2)$,
\begin{align*}
\chi'(r)=\frac{1}{R_2-R_1}p'(t),\quad
\chi''(r)=\frac{1}{(R_2-R_1)^2}p''(t),\quad
\chi'''(r)=\frac{1}{(R_2-R_1)^3}p'''(t),
 \end{align*}
where $t=(r-R_1)/(R_2-R_1)$.
Since $p'(0)=p''(0)=p'''(0)=0$, we have
\begin{align*}
    \lim_{r \to R_1^+} \chi'(r) =\lim_{r \to R_1^+} \chi''(r)=\lim_{r \to R_1^+} \chi'''(r)= 0.
\end{align*}
Similarly, $p'(1)=p''(1)=p'''(1)=0$ gives 
\begin{align*}
    \lim_{r \to R_2^-} \chi'(r) =\lim_{r \to R_2^-} \chi''(r)=\lim_{r \to R_2^-} \chi'''(r)= 0.
\end{align*}
It's clear that $\chi$ is constant outside $(R_1,R_2)$, 
therefore,
$$
\chi^{(k)}(R_1)=\chi^{(k)}(R_2)=0,
\quad k=1,2,3.
$$

Moreover, note that
$$
p'(t)=-140t^3+420t^4-420t^5+140t^6
      =-140t^3(1-t)^3<0
$$
for  $t\in(0,1)$.
Consequently, for every $r\in(R_1,R_2)$,
$$
\chi'(r)=\frac{1}{R_2-R_1}p'(t)\leq 0.
$$
This also guarantees that $0\leq \chi(r)\leq 1$ for $r\geq 0$.
\end{proof}

\begin{lemma}[Tail of $\pi$]\label{lem:pi}
As $|x|\to\infty$,
$$
\pi(x)\sim \frac{c\Gamma(1+a)\sin(\pi a/2)}{\pi}|x|^{-(1+a)}.
$$
\end{lemma}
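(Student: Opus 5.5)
We will derive the tail asymptotics from the Fourier representation
$$
\pi(x)=\frac{1}{\pi}\int_0^\infty e^{-ct^a}\cos(tx)\,\mathrm{d}t ,
$$
using a contour rotation (equivalently, an Abelian/Watson-type argument). By symmetry of $\pi$ it suffices to treat $x\to+\infty$. The case $a=2$ is Gaussian: there $\sin(\pi a/2)=0$ and the statement is read as $\pi(x)=o(|x|^{-3})$, which is immediate. We therefore assume $0<a<2$.

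\textbf{Step 1: write the cosine integral as a Laplace transform.} We write $\cos(tx)=\operatorname{Re}e^{itx}$, so $\pi(x)=\frac1\pi\operatorname{Re}\int_0^\infty e^{-ct^a}e^{itx}\,\mathrm{d}t$. Substituting $t=iu/x$ rotates the ray $[0,\infty)$ onto the positive imaginary axis. The integrand $e^{-ct^a}e^{itx}$ is analytic in the open first quadrant, taking the principal branch of $t^a$. On the arc $|t|=R$ it is bounded by $e^{-xR\sin\theta}e^{cR^a}$, and Jordan-lemma decay requires controlling $e^{-ct^a}$ there.

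The cleaner alternative, which we will pursue, is to integrate by parts once so that the leading singular behaviour of $e^{-ct^a}$ at $t=0$ is exposed. We have
$$
\int_0^\infty e^{-ct^a}\cos(tx)\,\mathrm{d}t=\frac{1}{x}\int_0^\infty ca\,t^{a-1}e^{-ct^a}\sin(tx)\,\mathrm{d}t ,
$$
and the boundary terms vanish. We then expand $e^{-ct^a}=1-ct^a+O(t^{2a})$ near $0$.

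\textbf{Step 2: isolate the leading term.} Using a smooth cutoff $\eta(t)$ equal to $1$ near $0$, we split the integral into three pieces:
\begin{itemize}
\item the term $ca\,t^{a-1}\eta(t)\sin(tx)$;
\item the remainder $ca\,t^{a-1}(e^{-ct^a}-1)\eta(t)\sin(tx)$;
\item the smooth, rapidly decaying part supported away from $0$.
\end{itemize}
For the leading term we use the classical identity
$$
\int_0^\infty t^{a-1}\sin(tx)\,\mathrm{d}t=\Gamma(a)\sin(\pi a/2)\,x^{-a},
$$
valid for $0<a<1$ and extended to $a<2$ by Abel summation. Removing the cutoff costs only $O(x^{-N})$, since $t^{a-1}(1-\eta)$ is smooth. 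This gives $\frac{1}{x}ca\Gamma(a)\sin(\pi a/2)x^{-a}=c\Gamma(1+a)\sin(\pi a/2)x^{-(1+a)}$, which is exactly the claimed constant after dividing by $\pi$.

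\textbf{Step 3: show the remaining pieces are lower order.} The remainder behaves like $t^{2a-1}$ near $0$. By the same homogeneity argument, an Erdélyi-type lemma for Fourier integrals of $t^{\nu-1}g(t)$ with $g$ smooth at $0^+$ gives a contribution $O(x^{-2a})$ to the inner integral, hence $O(x^{-1-2a})=o(x^{-1-a})$. The part away from $0$ is smooth with all derivatives integrable, so repeated integration by parts makes it $O(x^{-N})$.

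\textbf{Main obstacle.} The delicate point is rigorously justifying the Erdélyi lemma and the conditionally convergent integrals when $1\le a<2$. There $t^{a-1}$ is not decaying, and the remainder involves non-smooth powers $t^{ka-1}$. We will handle this by inserting a convergence factor $e^{-\delta t}$ and letting $\delta\to0$. Alternatively, we will rotate the contour to $t=e^{i\pi/2}s$ on the cut-off piece, where the oscillatory factor becomes $e^{-sx}$ and Watson's lemma applies directly. This works because $e^{-c(is)^a}$ is bounded on compact $s$-intervals.
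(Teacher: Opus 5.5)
Your proposal is correct. It shares only its first step with the paper and then takes a different, more careful route.

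Both arguments begin with the same integration by parts, $\pi(x)=\frac{ca}{\pi x}\int_0^\infty t^{a-1}e^{-ct^a}\sin(tx)\,\mathrm{d}t$. The paper then substitutes $u=tx$ to get $\pi(x)=\frac{ca}{\pi}x^{-(1+a)}\int_0^\infty e^{-c(u/x)^a}u^{a-1}\sin u\,\mathrm{d}u$. It moves the limit $x\to\infty$ inside the integral by dominated convergence. That route is shorter and more direct, but the step is not justified as written:
\begin{itemize}
\item Since $\sup_x e^{-c(u/x)^a}=1$, any majorant must dominate $u^{a-1}|\sin u|$, which is not integrable on $(0,\infty)$ for any $a>0$.
\item For $1\le a<2$, the limiting integral $\int_0^\infty u^{a-1}\sin u\,\mathrm{d}u$ does not even converge as an improper integral; it only makes sense as an Abel limit.
\end{itemize}

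Your decomposition supplies what is missing:
\begin{itemize}
\item The singular piece $t^{a-1}\eta(t)$ is evaluated exactly through the Abel-regularized identity $\lim_{\delta\to 0}\Gamma(a)\operatorname{Im}(\delta-ix)^{-a}=\Gamma(a)\sin(\pi a/2)\,x^{-a}$.
\item The piece $t^{a-1}(1-\eta)$ and the part of the integrand away from $0$ are both $O(x^{-N})$ by repeated integration by parts, with the factor $e^{-\delta t}$ disposing of boundary terms at infinity.
\item The remainder $t^{a-1}(e^{-ct^a}-1)\eta(t)$ is of lower order.
\end{itemize}
This gives a rigorous proof on the whole range $0<a<2$, together with a quantitative error term. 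You are also right to single out $a=2$: there the stated constant vanishes, so the lemma can only be read as $o(|x|^{-3})$, which the paper does not notice.

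Two small points. First, you do not need a general Erd\'elyi lemma for non-integer powers. With $r(t)=t^{a-1}(e^{-ct^a}-1)\eta(t)$ one has $r^{(k)}(t)=O(t^{2a-1-k})$ near $0$. Bounding $|\sin(tx)|\le tx$ on $[0,1/x]$ and integrating by parts twice on $[1/x,\infty)$ already gives $O\bigl(x^{-\min(2a,2)}\log x\bigr)=o(x^{-a})$, which is all you need.

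Second, your fallback of rotating the contour ``on the cut-off piece'' does not work with the smooth cutoff $\eta$, because $\eta$ is not analytic. Instead, deform the whole ray $[0,\infty)$ to $[0,iM]\cup[iM,iM+\infty)$. This is legitimate because the integrand is analytic in the open first quadrant and decays on vertical segments far to the right. The horizontal part is then $O(e^{-Mx})$, and Watson's lemma applies on $[0,iM]$, where $e^{-c(is)^a}$ stays bounded even for $a>1$.
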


\begin{proof}
First, by integration by parts,
  \begin{align*}
    \pi(x) =&\frac{1}{\pi}\int_0^\infty e^{-c t^a}\cos(tx) \d t\\
    =&\frac{1}{\pi}\int_0^\infty e^{-c t^a}\frac{\partial }{\partial t} \left(\frac{\sin(tx)}{x}\right)\d t\\
    =&\frac{ca}{\pi x}\int_0^\infty e^{-c t^a} t^{a-1}\sin(tx) \d t.
  \end{align*}

For $x>0$, substitute $u=tx$ to get
$$
\pi(x)=\frac{ca}{\pi}x^{-(1+a)}\int_0^\infty e^{-c(u/x)^a}u^{a-1}\sin u \d u.
$$
By dominated convergence theorem, we have 
\begin{align*}
  \lim_{x\rightarrow\infty} \frac{\pi(x)}{x^{-(1+a)}}
  =&\lim_{x\rightarrow\infty}\frac{ca}{\pi}\int_0^\infty e^{-c(u/x)^a}u^{a-1}\sin u \d u\\
  =&\frac{ca}{\pi}\int_0^\infty u^{a-1}\sin u \d u\\
  =&\frac{ca}{\pi}\Gamma(a)\sin(\pi/2).
\end{align*}

Similarly for $x<0$, hence 
$$
\pi(x)\sim \frac{c\Gamma(1+a)\sin(\pi a/2)}{\pi}|x|^{-(1+a)}.
$$
\end{proof}

\begin{lemma}[Tail of $\pi'$]\label{lem:pi'}
As $|x|\to\infty$,
$$
\pi'(x)\sim -\frac{c\Gamma(2+a)\sin(\pi a/2)}{\pi}
\frac{\operatorname{sign}(x)}{|x|^{2+ a}}.
$$
\end{lemma}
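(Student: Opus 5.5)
The plan is to follow the template of Lemma~\ref{lem:pi}: integrate by parts once to obtain a factor $x^{-1}$ and a convergent oscillatory integral, rescale $u=tx$, and pass to the limit by dominated convergence. Since $\pi$ is even, $\pi'$ is odd, so it suffices to treat $x>0$ and then insert $\operatorname{sign}(x)$.

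Start from $\pi'(x)=-\frac{1}{\pi}\int_0^\infty t e^{-ct^a}\sin(tx)\d t$. Write $\sin(tx)=-\partial_t\bigl(\cos(tx)/x\bigr)$ and integrate by parts. The boundary term at $t=0$ is $-\frac{1}{x}\,t e^{-ct^a}\cos(tx)\big|_{t=0}=0$, and the term at infinity vanishes. This gives
\[
\pi'(x)=-\frac{1}{\pi x}\int_0^\infty \bigl(1-ca\,t^{a}\bigr)e^{-ct^a}\cos(tx)\d t .
\]
The integrand now has two pieces. The piece $\int_0^\infty e^{-ct^a}\cos(tx)\d t$ equals $\pi\cdot\pi(x)$, which by Lemma~\ref{lem:pi} is asymptotic to $c\Gamma(1+a)\sin(\pi a/2)\,x^{-(1+a)}$.

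For the piece $\int_0^\infty t^a e^{-ct^a}\cos(tx)\d t$, substitute $u=tx$ to get
\[
x^{-(1+a)}\int_0^\infty u^a e^{-c(u/x)^a}\cos u\,\d u .
\]
Dominated convergence does not apply directly, because $u^a\cos u$ is not absolutely integrable. So one more integration by parts in $t$ is needed first, moving the derivative onto $t^a e^{-ct^a}$; this produces $t^{a-1}$-type terms. The limit then gives the Abel-regularized value $\int_0^\infty u^a\cos u\,\d u=\Gamma(1+a)\cos(\pi(1+a)/2)=-\Gamma(1+a)\sin(\pi a/2)$. The regularization is legitimate because $e^{-c(u/x)^a}$ is exactly an Abel damping factor.

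Collecting terms at order $x^{-(2+a)}$:
\[
\pi'(x)\approx -\frac{1}{\pi x}\Bigl[c\Gamma(1+a)\sin(\tfrac{\pi a}{2})+ca\,\Gamma(1+a)\sin(\tfrac{\pi a}{2})\Bigr]x^{-(1+a)}=-\frac{c(1+a)\Gamma(1+a)\sin(\pi a/2)}{\pi}\,x^{-(2+a)} .
\]
Since $(1+a)\Gamma(1+a)=\Gamma(2+a)$, this is the claimed asymptotic.

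The main obstacle is the second piece: justifying the limit of the conditionally convergent oscillatory integral. To avoid treating it separately, I would instead perform two integrations by parts on the original expression from the outset. This yields
\[
\pi'(x)=-\frac{1}{\pi x^2}\int_0^\infty \partial_t^2\bigl(te^{-ct^a}\bigr)\,\sin(tx)\d t ,
\]
where the boundary terms vanish using $\partial_t(te^{-ct^a})|_{t=0}=1$ and $\sin 0=0$. The leading singular behaviour of $\partial_t^2(te^{-ct^a})$ at $t=0$ is $-ca(a+1)t^{a-1}$. After rescaling, the integrand is dominated by $u^{a-1}$ near $0$, and for large $u$ the convergence comes from the sine, handled by splitting at $u=1$ with a further integration by parts on the tail. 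This mirrors the argument for Lemma~\ref{lem:pi} and gives the constant $c\,a(a+1)\Gamma(a)\sin(\pi a/2)/\pi=c\Gamma(2+a)\sin(\pi a/2)/\pi$ with the minus sign. The case $x<0$ follows from oddness of $\pi'$.
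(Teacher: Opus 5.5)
Your proposal is correct and is essentially the paper's argument. The paper's single integration by parts gives $\frac{ca}{\pi x}\int_0^\infty t^a e^{-ct^a}\cos(tx)\,\mathrm{d}t-\frac{ca}{\pi x^2}\int_0^\infty t^{a-1}e^{-ct^a}\sin(tx)\,\mathrm{d}t$. By the computation in Lemma~\ref{lem:pi}, the second term is exactly $-\pi(x)/x$, so your first decomposition is the same identity and uses the same Gamma-function values.

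You are right that dominated convergence cannot be applied to $\int_0^\infty u^a e^{-c(u/x)^a}\cos u\,\mathrm{d}u$, since the limiting integral does not even converge. The paper glosses over this. Your second route, one more integration by parts so that only an integral of the Lemma~\ref{lem:pi} type remains, is the natural repair.

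That second route has two slips.
\begin{itemize}
\item The prefactor should be $+\frac{1}{\pi x^2}$, not $-\frac{1}{\pi x^2}$. For $g(t)=te^{-ct^a}$ one has $\int_0^\infty g(t)\sin(tx)\,\mathrm{d}t=-x^{-2}\int_0^\infty g''(t)\sin(tx)\,\mathrm{d}t$. With your sign, the leading term $-ca(a+1)t^{a-1}$ would produce a positive limit.
\item For $1\le a<2$ the rescaled integrand $u^{a-1}\sin u$ does not decay, so ``the convergence comes from the sine'' is false there. A single tail integration by parts leaves a bound $Cu^{a-2}$, which is not integrable on $[1,\infty)$. You need a second one, which gives the bound $Cu^{a-3}$. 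The value $\int_0^\infty u^{a-1}\sin u\,\mathrm{d}u=\Gamma(a)\sin(\pi a/2)$ must then be justified by analytic continuation from $0<a<1$.
\end{itemize}

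Both you and the paper also tacitly need $a<2$: at $a=2$ the right-hand side vanishes and the statement is degenerate.
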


\begin{proof}
  Integration by parts, we have
  \begin{align*}
    \pi'(x)=&-\frac{1}{\pi}\int_0^\infty t e^{-c t^a}\sin(tx) \d t\\
    =&-\frac{1}{\pi}\int_0^\infty  e^{-c t^a}  \frac{\partial }{\partial t}\left(-\frac{t\cos(tx)}{x}+\frac{\sin(tx)}{x^2}\right) \d t\\
    =&\frac{ca}{\pi x}\int_0^\infty e^{-c t^a} t^a \cos(tx) \d t
    -\frac{ca}{\pi x^2}\int_0^\infty e^{-c t^a} t^{a-1}\sin(tx) \d t.
  \end{align*}

For $x>0$, use $u=tx$ and dominated convergence theorem, we have
\begin{align*}
    \lim_{x\rightarrow\infty} \frac{\pi'(x)}{x^{-(2+a)}}=&
    \frac{ca}{\pi }\int_0^\infty u^a \cos u \d u
    -\frac{ca}{\pi}\int_0^\infty  u^{a-1}\sin u \d u
    =-\frac{c\Gamma(2+a)\sin(\pi a/2)}{\pi}.
\end{align*}

Similarly for $x<0$ and we can obtain
$$
\pi'(x)\sim -\frac{c\Gamma(2+a)\sin(\pi a/2)}{\pi}
\frac{\operatorname{sign}(x)}{|x|^{2+a}}.
$$
\end{proof}

\begin{lemma}[Tail of $\pi''$]\label{lem:pi''}
As $|x|\to\infty$,
$$
\pi''(x)\sim \frac{c\Gamma(3+a)\sin\frac{\pi a}{2}}{\pi}|x|^{-(3+a)}.
$$
\end{lemma}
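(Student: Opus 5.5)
We follow the pattern of Lemmas~\ref{lem:pi} and~\ref{lem:pi'}: we write $\pi''$ as an oscillatory integral, move powers of $x$ out by integrating by parts, rescale $u=tx$, and pass to the limit by dominated convergence. Differentiating the inverse Fourier representation twice under the integral, which is justified since $t^2e^{-ct^a}$ is integrable, gives
$$
\pi''(x)=-\frac{1}{\pi}\int_0^\infty t^2e^{-ct^a}\cos(tx)\d t .
$$
Since $\pi''$ is even, it suffices to treat $x>0$.

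The first step is a single integration by parts using $\cos(tx)=\partial_t\bigl(\sin(tx)/x\bigr)$. The boundary terms vanish at $t=0$ and at $t=\infty$, so
$$
\pi''(x)=\frac{1}{\pi x}\int_0^\infty \bigl(2t-cat^{a+1}\bigr)e^{-ct^a}\sin(tx)\d t .
$$
The smooth Gaussian-type part $2te^{-ct^a}$ contributes only a term decaying faster than any power of $x$ beyond the $t^{a+1}$ term. To see this, integrate by parts repeatedly: the only nonsmooth behaviour at $t=0$ comes from $t^a$.

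A cleaner route is to expand $e^{-ct^a}=1-ct^a+O(t^{2a})$ near $0$. The leading tail comes from the term $-ct^2\cdot t^a$, because the pure polynomial $t^2$ contributes only $o(x^{-N})$ after smooth cutoff. We then use the generalized Fourier integral
$$
\int_0^\infty t^{a+2}\cos(tx)\d t=\Gamma(a+3)\cos\!\Bigl(\tfrac{\pi(a+3)}{2}\Bigr)x^{-(a+3)}=\Gamma(a+3)\sin\!\Bigl(\tfrac{\pi a}{2}\Bigr)x^{-(a+3)},
$$
understood in the Abel sense, which is exactly what the factor $e^{-c(u/x)^a}$ provides after rescaling. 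Multiplying by $-\frac{1}{\pi}\cdot(-c)$ gives $\frac{c\Gamma(3+a)\sin(\pi a/2)}{\pi}x^{-(3+a)}$, which is the claimed constant.

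To make this rigorous in the style of Lemma~\ref{lem:pi}, we would substitute $u=tx$ to write
$$
\pi''(x)=-\frac{1}{\pi}x^{-3}\int_0^\infty u^2e^{-c(u/x)^a}\cos u\,\d u .
$$
We then subtract the $a$-independent part, whose Abel limit is $\int_0^\infty u^2\cos u\,\d u=0$, leaving $x^{-3}\int_0^\infty u^2\bigl(e^{-c(u/x)^a}-1\bigr)\cos u\,\d u$. Next we integrate by parts in $u$ enough times to gain absolute convergence, and apply dominated convergence to $x^{a}\bigl(e^{-c(u/x)^a}-1\bigr)\to -cu^a$.

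The main obstacle is this last step. The integrand $u^{a+2}\cos u$ is not absolutely integrable, so dominated convergence cannot be applied directly. We plan to handle it by moving the oscillation onto derivatives through three integrations by parts against $\cos u$ and $\sin u$. This produces integrands of order $u^{a-1}$ times bounded factors, and these are dominated by $u^{a-1}\min(1,(u/x)^{\cdot})$-type bounds as in Lemma~\ref{lem:pi}. Finally we evaluate the resulting limit with the Gamma-function identities $\int_0^\infty u^{s-1}\sin u\,\d u=\Gamma(s)\sin(\pi s/2)$ and $\int_0^\infty u^{s-1}\cos u\,\d u=\Gamma(s)\cos(\pi s/2)$, extended by analytic continuation.
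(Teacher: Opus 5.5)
Your leading-order computation is correct. Pairing the singular term $-ct^{a+2}$ of $t^2e^{-ct^a}$ with the Abel value $\Gamma(a+3)\cos\frac{\pi(a+3)}{2}\,x^{-(a+3)}$ gives exactly $\frac{c\Gamma(3+a)}{\pi}\sin\frac{\pi a}{2}$. Using a single Gamma integral is also tidier than the paper's route. The paper integrates $t^2\cos(tx)$ completely and sums three Gamma values, coming from $\frac1x\int e^{-ct^a}t^{a+1}\sin(tx)\,\mathrm{d}t$, $\frac{2}{x^2}\int e^{-ct^a}t^{a}\cos(tx)\,\mathrm{d}t$ and $-\frac{2}{x^3}\int e^{-ct^a}t^{a-1}\sin(tx)\,\mathrm{d}t$ (the coefficient printed there as $-3$ should be $-2$).

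However, your intermediate claim that $2te^{-ct^a}$ contributes only faster-decaying terms is false. The function $te^{-ct^a}=t-ct^{a+1}+\cdots$ is not smooth at $t=0$. Its $-2ct^{a+1}$ piece contributes $\frac{2c\Gamma(a+2)}{\pi}\sin\frac{\pi a}{2}\,x^{-(a+3)}$, which is a fraction $\frac{2}{a+2}$ of the answer. Keeping only the $-cat^{a+1}$ term would give $ca\Gamma(a+2)$ instead of $c\Gamma(a+3)$. Drop that paragraph.

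Your rigorous step also fails as written, in two places.

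First, $\int_0^\infty u^2\bigl(e^{-c(u/x)^a}-1\bigr)\cos u\,\mathrm{d}u$ diverges, since its integrand behaves like $-u^2\cos u$. So you cannot subtract first and integrate by parts afterwards. Instead, integrate the convergent $\int_0^\infty u^2e^{-c(u/x)^a}\cos u\,\mathrm{d}u$ by parts three times; all boundary terms vanish. This gives $x^{a+3}\pi''(x)=-\frac1\pi\int_0^\infty G_x'''(u)\sin u\,\mathrm{d}u$ with $G_x(u)=x^au^2\bigl(e^{-c(u/x)^a}-1\bigr)$, and the subtraction is automatic because $(u^2)'''=0$.

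Second, three integrations are not enough. For $k\ge 3$ one has $|G_x^{(k)}(u)|\le C_k u^{a+2-k}$ uniformly in $x$, since $\frac{\mathrm{d}^m}{\mathrm{d}v^m}e^{-cv^a}=O(v^{a-m})$ uniformly for $v>0$, $m\ge1$. After three integrations you therefore only have the majorant $Cu^{a-1}$, which is not integrable on $[1,\infty)$. The limit integrand $-c\,a(a+1)(a+2)u^{a-1}\sin u$ is not Lebesgue integrable either. So dominated convergence fails exactly where you anticipated, and the domination ``as in Lemma~\ref{lem:pi}'' does not exist.

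To repair it, split at $u=1$. On $[0,1]$ the majorant $Cu^{a-1}$ suffices. On $[1,\infty)$, two further integrations by parts give the majorant $Cu^{a-3}$, which is integrable for $a<2$ (at $a=2$ the constant vanishes anyway). The boundary terms at $u=1$ converge. Running the same decomposition on $u^{a+2}e^{-\varepsilon u}$ and letting $\varepsilon\to0$ identifies the limit with the Abel value you used. Alternatively, Erd\'elyi's lemma on Fourier integrals with algebraic endpoint behaviour makes your expansion route rigorous in one step.

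The paper's proof has the same defect. It applies dominated convergence to $e^{-c(u/x)^a}u^{a+1}\sin u$ and similar integrands, which have no integrable majorant, and its Lemma~\ref{lem:pi} does the same. The repair above is needed there too.
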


\begin{proof}
Using the second derivative of the inverse Fourier formula,
$$
\pi''(x)=-\frac{1}{\pi}\int_0^\infty t^2 e^{-c t^a}\cos(tx)\d t.
$$

  Integration by parts, we have
\begin{align*}
  \pi''(x)=&-\frac{1}{\pi}\int_0^\infty t^2 e^{-c t^a}\cos(tx)\d t\\
  =&-\frac{1}{\pi}\int_0^\infty e^{-c t^a} \frac{\partial }{\partial t}\left(\frac{t^2\sin(tx)}{x}+\frac{2t\cos(tx)}{x^2}-\frac{2\sin(tx)}{x^3}\right) \d t\\
  =&-\frac{ca}{\pi}\Big\{\frac{1}{x} \int_0^\infty e^{-c t^a}t^{a+1}\sin(tx)dt
+\frac{2}{x^2}\int_0^\infty e^{-c t^a}t^{a}\cos(tx)dt
-\frac{2}{x^3}\int_0^\infty e^{-c t^a}t^{a-1}\sin(tx)dt\Big\}.
\end{align*}

For $x>0$, use $u=tx$ and dominated convergence theorem, we have
\begin{align*}
    \lim_{x\rightarrow\infty} \frac{\pi''(x)}{x^{-(3+a)}}=&
  -\frac{ca}{\pi}\left(\int_0^\infty u^{a+1}\sin u \d u
+2\int_0^\infty u^{a}\cos u \d u
-3\int_0^\infty u^{a-1}\sin u \d u \right).
\end{align*}

Since
\begin{align*}
\int_0^\infty u^{a+1}\sin u \d u&=\Gamma(2+a)\sin\frac{\pi(2+a)}{2}=-(1+a)\Gamma(1+a)\sin\frac{\pi a}{2},\\
\int_0^\infty u^{a}\cos u \d u &=\Gamma(1+a)\cos\frac{\pi(1+a)}{2}=-\Gamma(1+a)\sin\frac{\pi a}{2},\\
\int_0^\infty u^{a-1}\sin u \d u&=\Gamma(a)\sin\frac{\pi a}{2}=\frac{1}{a}\Gamma(1+a)\sin\frac{\pi a}{2}.
\end{align*}
We can compute that 
$$
\pi''(x)\sim  \frac{c\Gamma(3+a)\sin\frac{\pi a}{2}}{\pi}x^{-(3+a)}.
$$
as $x\to+\infty$.

Similarly for $x<0$, we have 
$$
\pi''(x)\sim  \frac{c\Gamma(3+a)\sin\frac{\pi a}{2}}{\pi} |x|^{-(3+a)},\qquad |x|\to\infty.
$$
\end{proof}

\begin{lemma}[Tail of $\pi'''$]\label{lem:pi'''}
As $|x|\to\infty$,
$$
\pi'''(x)\sim -\frac{c\Gamma(4+a)\sin\bigl(\tfrac{\pi a}{2}\bigr)}{\pi}
\operatorname{sign}(x)|x|^{-(4+a)}.
$$
\end{lemma}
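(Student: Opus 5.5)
We follow the same route as Lemmas~\ref{lem:pi}--\ref{lem:pi''}. Start from the representation obtained by differentiating the inverse Fourier formula three times under the integral sign, which is justified by dominated convergence since $t^3e^{-ct^a}$ is integrable:
$$
\pi'''(x)=\frac{1}{\pi}\int_0^\infty t^3 e^{-ct^a}\sin(tx)\,\mathrm{d}t .
$$
Since $\pi$ is even, $\pi'''$ is odd, so it suffices to treat $x>0$ and then insert $\operatorname{sign}(x)$.

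Next we integrate by parts to move all derivatives off the oscillatory factor. We choose an antiderivative $A(t)$ of $t^3\sin(tx)$ that vanishes at $t=0$:
$$
A(t)=-\frac{t^3\cos(tx)}{x}+\frac{3t^2\sin(tx)}{x^2}+\frac{6t\cos(tx)}{x^3}-\frac{6\sin(tx)}{x^4}.
$$
The boundary terms vanish at $0$ and at $\infty$. This gives
$$
\pi'''(x)=\frac{ca}{\pi}\int_0^\infty t^{a-1}e^{-ct^a}A(t)\,\mathrm{d}t.
$$
Substituting $u=tx$ turns each of the four terms into $x^{-(4+a)}$ times an integral of the form $\int_0^\infty e^{-c(u/x)^a}u^{a+k}\,(\sin\text{ or }\cos)\,u\,\mathrm{d}u$ with $k\in\{2,1,0,-1\}$.

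The main obstacle is passing $x\to\infty$ inside these integrals. The limiting integrals, such as $\int_0^\infty u^{a+2}\cos u\,\mathrm{d}u$, are not absolutely convergent, so plain dominated convergence does not apply. This gap is already present in the earlier lemmas. We plan to handle it by inserting an Abel-type regularization: interpret each limit as
$$
\lim_{\delta\to0^+}\int_0^\infty e^{-\delta u}u^{s-1}e^{iu}\,\mathrm{d}u=\Gamma(s)e^{i\pi s/2},
$$
and justify the exchange by one further integration by parts in $u$ on $[1,\infty)$. After that step the integrand decays like $u^{a+k-1}$ times a bounded factor, and we repeat until the power is integrable. The cutoff $e^{-c(u/x)^a}$ only contributes derivatives of size $O(x^{-a}u^{a-1})$, which vanish uniformly as $x\to\infty$. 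An alternative is to cite Abelian theorems for Fourier transforms.

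Finally we evaluate the constants using $\int_0^\infty u^{s-1}\sin u\,\mathrm{d}u=\Gamma(s)\sin\frac{\pi s}{2}$ and $\int_0^\infty u^{s-1}\cos u\,\mathrm{d}u=\Gamma(s)\cos\frac{\pi s}{2}$. With $s=a+3,a+2,a+1,a$, the four integrals are
$$
(a+2)(a+1)\Gamma(a+1)\cos\frac{\pi a}{2}\cdot(-1)\cdot(-1),\quad \ldots
$$
and each one reduces to a multiple of $\Gamma(1+a)\sin\frac{\pi a}{2}$ by shifting the argument of the trigonometric factor. We then check that
$$
\frac{ca}{\pi}\Big[-I_{a+2}^{\cos}+3I_{a+1}^{\sin}+6I_{a}^{\cos}-6I_{a-1}^{\sin}\Big]=-\frac{c\Gamma(4+a)\sin\frac{\pi a}{2}}{\pi}.
$$
This is consistent with the pattern of the $\pi$, $\pi'$ and $\pi''$ results, where each extra derivative multiplies the constant by $-(k+a)\operatorname{sign}(x)$. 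As a cross-check, we will confirm that the constant equals $-(3+a)$ times the constant in Lemma~\ref{lem:pi''}.
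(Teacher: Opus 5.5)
Your proposal is the paper's proof: the same representation $\pi'''(x)=\frac{1}{\pi}\int_0^\infty t^3e^{-ct^a}\sin(tx)\,\mathrm{d}t$, the same antiderivative $A(t)$, the same four integrals after $u=tx$, and the same Gamma-function evaluations. You are also right that the paper's appeal to dominated convergence at the limit step is unjustified, since integrals such as $\int_0^\infty u^{a+2}\cos u\,\mathrm{d}u$ do not even converge. Your repeated integration by parts on $[1,\infty)$ is therefore a genuine repair rather than a different route.

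Two points need tightening.

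\emph{The cutoff terms.} As stated, "derivatives of the cutoff vanish uniformly" is not enough, because those terms multiply non-integrable powers of $u$. What works is this: after each integration by parts, every term has the form $\psi(u/x)\,u^{r}e^{\pm iu}$ with $\psi$ bounded and continuous at $0$, and $\psi(0)=0$ for the cutoff-generated terms. Once $r<-1$, dominated convergence applies. The resulting limit is regulator-independent, so it coincides with the Abel value $\Gamma(s)e^{i\pi s/2}$.

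\emph{A slip in your display.} $\int_0^\infty u^{a+2}\cos u\,\mathrm{d}u$ equals $\Gamma(a+3)\cos\frac{\pi(a+3)}{2}=(a+1)(a+2)\Gamma(a+1)\sin\frac{\pi a}{2}$, with a sine, not $\cos\frac{\pi a}{2}$. Your final constant $-\frac{c\Gamma(4+a)\sin(\pi a/2)}{\pi}$ is nonetheless correct.
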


\begin{proof}
Using the second derivative of the inverse Fourier formula,
$$
\pi'''(x)=\frac{1}{\pi}\int_0^\infty t^3 e^{-c t^a}\sin(tx)\d t.
$$

  Integration by parts, we have
\begin{align*}
  \pi'''(x)=&\frac{1}{\pi}\int_0^\infty t^3 e^{-c t^a}\sin(tx)\d t\\
  =&\frac{1}{\pi}\int_0^\infty e^{-c t^a}
\frac{\partial}{\partial t}\left(
-\frac{t^3\cos(tx)}{x}
+ \frac{3t^2\sin(tx)}{x^2}
+ \frac{6t\cos(tx)}{x^3}
- \frac{6\sin(tx)}{x^4}
\right)\d t\\
 =&\frac{ca}{\pi}\Big[-\frac{1}{x} \int_0^\infty e^{-c t^a}t^{a+2}\cos(tx)\d t
+\frac{3}{x^2}\int_0^\infty e^{-c t^a}t^{a+1}\sin(tx)\d t\\
 &\quad \quad+\frac{6}{x^3}\int_0^\infty e^{-c t^a}t^{a}\cos(tx)\d t
 -\frac{6}{x^4}\int_0^\infty e^{-c t^a}t^{a-1}\sin(tx)\d t \Big].
\end{align*}

For $x>0$, use $u=tx$ and dominated convergence theorem, we have
\begin{align*}
    \lim_{x\rightarrow\infty} \frac{\pi'''(x)}{x^{-(4+a)}}=&
  \frac{ca}{\pi}\left(-\int_0^\infty u^{a+2}\cos u \d u
+3\int_0^\infty u^{a+1}\sin u\d u
+ 6\int_0^\infty u^{a}\cos u \d u
 -6\int_0^\infty u^{a-1}\sin u \d u \right).
\end{align*}

Since
\begin{align*}
   \int_0^\infty u^{a+2}\cos u \d u &=\Gamma(3+a)\cos\frac{\pi(3+a)}{2}=(1+a)(2+a)\Gamma(1+a)\sin\frac{\pi a}{2},\\
\int_0^\infty u^{a+1}\sin u \d u&=\Gamma(2+a)\sin\frac{\pi(2+a)}{2}=-(1+a)\Gamma(1+a)\sin\frac{\pi a}{2},\\
\int_0^\infty u^{a}\cos u \d u &=\Gamma(1+a)\cos\frac{\pi(1+a)}{2}=-\Gamma(1+a)\sin\frac{\pi a}{2},\\
\int_0^\infty u^{a-1}\sin u \d u&=\Gamma(a)\sin\frac{\pi a}{2}=\frac{1}{a}\Gamma(1+a)\sin\frac{\pi a}{2}.
\end{align*}
We can compute that 
$$
\pi'''(x)\sim  -\frac{c\Gamma(4+a)\sin\frac{\pi a}{2}}{\pi}x^{-(4+a)}.
$$
as $x\to+\infty$.

Similarly for $x<0$, we have 
$$
\pi'''(x)\sim \frac{c\Gamma(4+a)\sin\frac{\pi a}{2}}{\pi} |x|^{-(4+a)},\qquad |x|\to\infty.
$$
\end{proof}

\end{document}